\documentclass{article}

\usepackage{PRIMEarxiv}
\usepackage{type1cm}

\usepackage[utf8]{inputenc} 
\usepackage[T1]{fontenc}    
\usepackage{hyperref}       
\usepackage{url}            
\usepackage{booktabs}       
\usepackage{amsfonts}       
\usepackage{nicefrac}       
\usepackage{microtype}      
\usepackage{lipsum}
\usepackage{fancyhdr}       
\usepackage{graphicx}       
\graphicspath{{media/}}     

\usepackage{amsmath,amssymb,amsthm}
\usepackage{algorithm}
\usepackage{algorithmic}
\newtheorem{theorem}{Theorem}
\newtheorem{definition}{Definition}
\newtheorem{assumption}{Assumption}
\newtheorem{proposition}{Proposition}
\newtheorem{lemma}{Lemma}
\newtheorem{remark}{Remark}
\newtheorem{example}{Example}
\newtheorem{corollary}{Corollary}
\usepackage{textcomp}
\usepackage{multirow}
\usepackage{bm}
 \usepackage{natbib}

\newcommand{\R}{\mathbb{R}}
\newcommand{\E}{\mathbb{E}}
\newcommand{\1}{\mathbf{1}}

\newcommand{\cP}{\mathcal{P}}

\newcommand{\cK}{\mathcal{K}}

\newcommand{\cF}{\mathcal{F}}
\newcommand{\cS}{\mathcal{S}}
\newcommand{\cM}{\mathcal{M}}
\newcommand{\cA}{\mathcal{A}}
\newcommand{\cX}{\mathcal{X}}
\newcommand{\cU}{\mathcal{U}}
\newcommand{\cC}{\mathcal{C}}
\newcommand{\cG}{\mathcal{G}}

\title{Endogenous Information in Routing Games: Memory-Constrained Equilibria, Recall Braess Paradoxes, and Memory Design}

\author{
  Saad Alqithami \\
  \texttt{alqithami@gmail.com} 
}

\begin{document}
\maketitle

\begin{abstract}
    We study routing games in which travelers (or an AI guidance layer mediating their choices) optimize over an action set that is itself produced by bounded memory, surfacing, and forgetting policies. We present two coupled layers. First, we give a fully specified micro model in which each agent carries a finite memory state that evolves as a Markov chain (e.g., LRU eviction) and selects among recalled and surfaced routes via a logit response; a stationary \emph{Forgetful Wardrop Equilibrium} (FWE) exists as a fixed point between congestion and the memory chain's stationary law. Second---and as the main design layer---we introduce \emph{stationary salience policies} (additive utility biases over routes) that summarize the effect of memory and interface ranking on choice. Salience-weighted stochastic user equilibrium is the unique minimizer of a strictly convex potential, yielding uniqueness and global convergence without contraction assumptions, and enabling inverse-equilibrium analysis: we characterize implementable flows under ratio budgets and affine tying (fairness) constraints and derive tractable solvers on parallel and series-parallel networks. We tighten the bridge between the layers. For last-choice memory ($B=1$) the micro model is \emph{exactly} equivalent to the salience model with salience proportional to the surfacing distribution; hence any interior salience vector is operationally realizable by an appropriate surfacing policy. For larger memories ($B>1$) with LRU eviction, we develop and validate an approximation pipeline LRU$\rightarrow$TTL$\rightarrow$independent availability$\rightarrow$salience, quantify its error under a Poissonized repeated-choice regime, and propose scalable surrogate solvers. We define a social-cost \emph{Recall Braess Paradox}---improving recall can increase equilibrium delay without changing physical capacity---and prove a network-class theorem: every two-terminal network with at least two distinct $s$--$t$ paths admits latencies and a canonical recall-richness expansion (menu-inclusion order) that induces the paradox.
\end{abstract}

\keywords{routing games\and stochastic user equilibrium\and imperfect recall\and endogenous information\and Braess paradox\and implementability\and memory governance\and series-parallel networks.}


\section{Introduction}

Traffic routing is a canonical noncooperative game: each traveler selects a route to minimize personal travel time, while congestion couples decisions through shared edge latencies.
The resulting Wardrop equilibrium (the non-atomic analogue of Nash equilibrium) is stable to unilateral deviations but can be socially inefficient, and classic Braess' paradox shows that adding a physical ``shortcut'' can increase equilibrium delay.

This paper focuses on a different mechanism that is increasingly operational in AI-mediated mobility: \emph{endogenous information created by memory}.
In practice, travelers (and navigation assistants that mediate their choices) do not optimize over a fixed, fully known set of routes.
They act on a small, evolving subset shaped by experience, attention, and explicit memory-management policies.
In modern AI systems, memory is a first-class design variable governed by retention budgets and eviction policies (FIFO/LRU/priority-decay/summary-based retention) that trade off utility, compute, and privacy.

\subsection{Running example: calibrated forgetting as a Pigouvian correction}
\label{sec:running-example}
The following two-link instance illustrates both the opportunity and the conceptual novelty.
Consider a single OD pair with total demand $d=1$ and two parallel routes:
route $a$ has constant latency $\ell_a(x)\equiv 1$ and route $b$ has latency $\ell_b(x)=x$.
With full recall, the Wardrop equilibrium routes all demand through $b$ (since $b$ is initially shorter), yielding equilibrium social cost $\mathrm{SC}=1$.
The system-optimal split routes half the demand through $a$ and half through $b$, yielding $\mathrm{SC}=3/4$.

\begin{example}[Pigouvian forgetting]
\label{ex:pigou-forgetting}
Suppose an AI system (or a cognitive limitation) induces \emph{calibrated forgetting}: an $\alpha$ fraction of the population does not recall route $b$ and therefore uses $a$.
Then $x_b=1-\alpha$ and the social cost is $\mathrm{SC}(\alpha)=\alpha\cdot 1 + (1-\alpha)\cdot (1-\alpha)=1-\alpha+\alpha^2$.
This is minimized at $\alpha^\star=1/2$ with $\mathrm{SC}(\alpha^\star)=3/4$, exactly matching the system optimum \emph{without changing network capacity and without tolls}.
\end{example}

Example~\ref{ex:pigou-forgetting} motivates the central mechanism studied in this paper:
\emph{recall is an endogenous action-set constraint that can act as a non-monetary congestion-control instrument.}
This mechanism is distinct from (i) physical interventions (Braess), and (ii) exogenous changes in what travelers are told (informational Braess).
Here, information is \emph{generated and erased dynamically} by a memory process coupled to congestion through experience and guidance.

\subsection{Problem identification: the endogenous-information externality}
Limiting recall restricts a player's feasible action set, which in congestion games can change equilibrium selection and welfare.
The scientific question is therefore not ``does more capacity help?'' but rather:
\emph{how does improving recall change equilibrium, and what recall policy should an AI system implement under governance constraints?}

\subsection{Three core questions}
We organize the paper around three targets that, taken together, define a strong AI--game-theory contribution:
\begin{enumerate}
    \item \textbf{Well-posedness and stability.} Does the coupled flow--memory system admit a \emph{unique} stationary equilibrium, and do natural day-to-day dynamics converge to it?
    \item \textbf{Recall Braess phenomena.} When can improving recall (larger memory budgets or weaker forgetting) \emph{worsen} equilibrium welfare without changing network capacity?
    \item \textbf{Equilibrium-aware memory design.} How can an AI route guidance system choose retention and surfacing policies to reduce congestion externalities subject to governance constraints (privacy, fairness, compute)?
\end{enumerate}

\subsection{Contributions and roadmap}
\label{sec:contrib}
\begin{enumerate}
    \item \textbf{A fully specified forgetful routing game (Sections~\ref{sec:prelim}--\ref{sec:dynamic-recall-model}).}
    We introduce a stationary mean-field routing model with endogenous recall in which each agent has a finite memory state, receives surfaced alternatives, chooses via a bounded-rational response, and updates memory under an explicit eviction/forgetting policy.
    \item \textbf{Equilibrium concept and existence (Section~\ref{sec:existence}).}
    We define \emph{Forgetful Wardrop Equilibrium} (FWE) as the fixed point coupling within-period congestion consistency with stationarity of the memory Markov chain, and we prove existence under mild regularity.
    \item \textbf{A convex-potential ``design layer'' beyond contraction (Section~\ref{sec:salience}, especially Section~\ref{sec:micro2salience}).}
    We introduce a policy-relevant reduced-form equilibrium model in which memory/guidance act through \emph{route salience}.
    We show that \emph{salience-weighted SUE} is the unique minimizer of a strictly convex potential, yielding uniqueness and global algorithmic stability without contraction assumptions.
    We further give (i) an \emph{exact} micro-foundation for $B=1$ last-choice memory, and (ii) an explicit LRU$\to$TTL$\to$salience approximation for $B>1$ LRU-type memory with a provable $\sqrt{\log B/B}$ rate under Poissonized requests (Section~\ref{sec:micro2salience-B}). We also quantify the accuracy of the Poissonization device for discrete-time day-to-day departures under diffused route popularity (Lemma~\ref{lem:poissonization-accuracy} and Corollary~\ref{cor:diffuse-poisson}).
    \item \textbf{Implementability and governance constraints (Section~\ref{sec:implementability}).}
    We formalize \emph{implementability} as an inverse-equilibrium problem for memory/guidance policies.
    In the salience model, every interior feasible flow is implementable by an explicit inverse construction, and we give a sharp constrained-implementability theorem under influence budgets. We interpret this as implementability by stationary interface/ranking salience; it is \emph{micro-realizable without loss} in the $B=1$ regime via Corollary~\ref{cor:micro-realizability}.
    We then show that governed salience design reduces to a single-level implementable-flow optimization problem over this constrained set (Theorem~\ref{thm:governed-design-reduction}). On the series-parallel class, we further show that a low-dimensional decomposition-tied salience policy can implement any interior $s$--$t$ flow by a linear-time inverse construction (Theorem~\ref{thm:sp-implementability}), and local influence budgets become linear-time implementability tests (Corollary~\ref{cor:sp-local-budgets}).
    We then characterize implementability under \emph{fairness/tying} constraints (Section~\ref{sec:fair-impl}).
    \item \textbf{Recall Braess paradox and a network-class theorem (Sections~\ref{sec:vor}--\ref{sec:rbp}).}
    We define a social-cost \emph{Recall Braess Paradox} and prove: (i) a sharp analytic Pigou instance where calibrated forgetting strictly improves welfare over full recall, and (ii) a network-class theorem showing social-cost RBP can occur on every two-terminal network with at least two distinct $s$--$t$ paths (the only immunity class is a unique-path network).
    \item \textbf{Equilibrium-aware design with closed-form and network-class results (Section~\ref{sec:design}).}
    We pose equilibrium-aware memory/guidance design as a bilevel optimization problem under governance constraints and provide an implicit-differentiation sensitivity formula enabling gradient-based policy optimization.
    On parallel networks, we derive a constructive reduction of bounded-influence optimal salience design to a one-dimensional search plus convex subproblems (Section~\ref{sec:parallel-design}).
    On two-terminal series-parallel networks, we show how to evaluate the logit partition function and edge marginals in linear time (Theorem~\ref{thm:sp-partition}), derive an equivalent split-flow convex program with first-order convergence guarantees (Theorem~\ref{thm:sp-split-convex} and Proposition~\ref{prop:sp-gradient}), and (in the implementability layer) give a constructive inverse design for decomposition-tied salience with linear-time feasibility under local budgets (Theorem~\ref{thm:sp-implementability}).
\end{enumerate}

\paragraph{Paper organization and reading guide.}
Part~I (Sections~\ref{sec:salience}--\ref{sec:design}) develops the stationary salience design layer,
implementability under governance, and equilibrium-aware design algorithms. Part~II
(Sections~\ref{sec:dynamic-recall-model}--\ref{sec:rbp}) introduces the explicit micro memory model,
proves existence/stability results for Forgetful Wardrop Equilibrium, develops the LRU$\to$TTL$\to$salience
approximation pipeline, and establishes Recall Braess phenomena. Part~III (Sections~\ref{sec:experiments}--\ref{sec:discussion})
provides experimental evaluation and a discussion of limitations.

\subsection{Exact vs.\ approximate: a layer map}
\label{sec:layer-map}
A central objective of this paper is to keep the strongest claims tightly aligned with what is \emph{actually proved} in each modeling layer.
The main mechanism-design contributions (implementability, governed design under constraints, and tractable network-class algorithms) are proved in the \emph{stationary salience} design layer (Sections~\ref{sec:salience}, \ref{sec:implementability}, and \ref{sec:design}).
The finite-state memory Markov chain model (Sections~\ref{sec:dynamic-recall-model}--\ref{sec:existence}) serves as a micro-foundation: it is \emph{exactly} equivalent to the salience model for $B=1$ (Theorem~\ref{thm:micro2salience}), and it motivates and validates the salience abstraction for $B>1$ via an explicit approximation pipeline (Section~\ref{sec:micro2salience-B}).
Table~\ref{tab:exact-approx} summarizes what is exact, what is approximate, and what is specific to the reduced-form design layer.

\begin{table}[t]
\centering \scriptsize
\begin{tabular}{@{}p{0.54\linewidth}p{0.13\linewidth}p{0.12\linewidth}p{0.12\linewidth}@{}}
\toprule
\textbf{Statement} & \textbf{Layer} & \textbf{Status} & \textbf{Reference} \\
\midrule
Existence of a stationary Forgetful Wardrop Equilibrium (FWE) & Micro (Markov) & Exact & Thm.~\ref{thm:fwe-existence} \\
Exact equivalence of $B=1$ last-choice memory and salience-weighted logit & Micro$\rightarrow$salience & Exact & Thm.~\ref{thm:micro2salience} \\
(Operational) realizability of any interior salience vector via surfacing when $B=1$ & Micro design & Exact & Cor.~\ref{cor:micro-realizability} \\
LRU$\rightarrow$TTL approximation of recall probabilities (Poissonized requests) & Micro$\rightarrow$TTL & Approx.\ (rate) & Thm.~\ref{thm:lru-ttl} \\
Random-menu logit $\approx$ availability-weighted logit (large menus) & TTL$\rightarrow$salience & Approx.\ (rate) & Prop.~\ref{prop:awl-approx}, Cor.~\ref{cor:awl-rate} \\
Convex potential, uniqueness, and global stability of SW-SUE & Salience & Exact & Prop.~\ref{prop:swsue-potential} \\
Full implementability and governed implementability geometry & Salience design & Exact & Thm.~\ref{thm:implementability-salience}, Thm.~\ref{thm:governed-implementability} \\
\bottomrule
\end{tabular}
\caption{Exact vs.\ approximate statements by modeling layer.
Unless explicitly stated otherwise, implementability and design results are proved in the salience layer; the micro memory model provides an exact foundation for $B=1$ and an approximation/validation pathway for $B>1$.}
\label{tab:exact-approx}
\end{table}

\subsection{Scope note}
This version prioritizes the \emph{salience design layer} (convex equilibrium structure, implementability, governance-constrained mechanism design, and tractable network-class theorems) as the main contribution, and treats the finite-state memory Markov model as a micro-foundation and validation tool.
The micro model is mathematically complete and yields an exact bridge to salience for $B=1$, but its exact computation is combinatorial for large route sets (Section~\ref{sec:existence}); for $B>1$ we therefore emphasize approximations and scalable surrogates (Section~\ref{sec:micro2salience-B}).
Large-scale empirical evaluation and system implementation details for specific AI memory architectures are important, but are secondary in this working-paper version.

\section{Related Work}\label{sec:related_work}

\paragraph{Traffic equilibria, stochastic user equilibrium, and day-to-day dynamics.}
Non-atomic routing games originate with Wardrop's user-equilibrium principle \citep{wardrop1952}.
Under standard monotonicity/continuity assumptions, Wardrop equilibria admit equivalent variational
inequality and convex potential formulations \citep{beckmann1956,dafermos1969,smith1979}, which
underpin much of traffic assignment theory \citep{sheffi1985,patriksson1994}.
Stochastic user equilibrium (SUE) and logit-based route choice models are classical tools for
capturing dispersed preferences, perception errors, and within-period randomness
\citep{dial1971,benakiva1985,sheffi1985}. Day-to-day adjustment and learning dynamics have been
studied extensively in transportation science; a representative route-swapping family is developed
in \citep{cascetta1991}.

\paragraph{Congestion games, potential structure, and efficiency loss.}
Congestion games are canonical potential games \citep{rosenthal1973,monderer1996}.
In nonatomic routing, the efficiency loss from selfish behavior is formalized by the price of
anarchy (PoA), with tight bounds for broad latency classes \citep{roughgarden2002,roughgarden2005book}
and refinements via smoothness/variational techniques \citep{correa2004,correa2008,koutsoupias1999}.
We also leverage structural parallels with finite congestion games when discussing policy and
governance constraints \citep{christodoulou2005}.

\paragraph{Braess-type paradoxes and network classes.}
Braess's paradox---the possibility that adding capacity worsens equilibrium travel times---was
first identified in \citep{braess1968} and popularized in the transportation literature in
\citep{murchland1970}. Subsequent work characterized when paradoxes can and cannot occur under
different modeling assumptions and network structures \citep{steinberg1983,pas1997,dafermos1984,milchtaich2006}.
Series-parallel structure plays a recurring role in both algorithm design and topology-based
characterizations \citep{duffin1965}.

\paragraph{Information design, endogenous information, and consideration sets.}
The informational Braess paradox (IBP) shows that providing additional route information to a
subpopulation can worsen overall performance \citep{acemoglu2018ibp}. This connects to the broader
literature on information design / Bayesian persuasion \citep{kamenica2011,bergemann2019,dughmi2017}
and to recent algorithmic information-design work specialized to congestion games
\citep{zhou2022aid}. Our paper is complementary: we focus on endogenous \emph{consideration sets}
generated by memory and surfacing, rather than belief noise or exogenous information structures.
This links to random-attention and consideration-set models \citep{masatlioglu2012revealed,manzini2014,cattaneo2020ram},
and to rational-inattention foundations for multinomial logit \citep{matejka2015,caplin2015}.

\paragraph{Caching-based approximations for limited memory.}
To bridge explicit bounded-memory dynamics to a stationary ``salience'' design layer, we draw on
classical and modern cache approximations for LRU/TTL policies, including asymptotic miss-ratio
analysis and characteristic-time approximations \citep{fagin1977,che2002,fricker2012lru,gast2017ttl,jiang2018ttl}.

\paragraph{Algorithmic governance and constrained influence.}
Our governance constraints (ratio budgets and tying/affine structure) are motivated by how modern
AI systems surface options under auditability and fairness constraints.
They are also closely related, mathematically, to exposure-based fairness constraints in ranking
and recommender systems \citep{singh2018,biega2018,zehlike2017,celis2018}.
Finally, we build on recent work that uses AI agents and large language models to study day-to-day
route choice and bounded rationality at scale \citep{Wang_2025,wardropbr2024}, and on our own
recent work on imperfect recall and cognitive memory architectures \citep{alqithami2026snam,alqithami2025fifa}.
\section{Preliminaries: Non-Atomic Routing Games}
\label{sec:prelim}
We review the standard non-atomic model; we focus on a single origin--destination pair for clarity and note extensions later.

\subsection{Network and flows}
Let $G=(V,E)$ be a directed graph with origin $s$ and destination $t$.
Each edge $e\in E$ has a latency (travel time) function $\ell_e:\R_{\ge 0}\to \R_{\ge 0}$ that is continuous and nondecreasing.
A (simple) $s$--$t$ path is denoted $p\in\cP$.
A (non-atomic) flow is a vector $f=(f_p)_{p\in\cP}$ with $f_p\ge 0$ and $\sum_{p\in\cP} f_p = d$, where $d>0$ is total demand.
Edge loads are $x_e(f) = \sum_{p\ni e} f_p$.
Path latency is $L_p(f) = \sum_{e\in p}\ell_e(x_e(f))$.

\subsection{Wardrop equilibrium and social optimum}
\begin{definition}[Wardrop equilibrium]
A feasible flow $f^\star$ is a Wardrop equilibrium if for every path $p$ with $f^\star_p>0$, we have
\[
L_p(f^\star) \le L_{p'}(f^\star)\quad \text{for all } p'\in\cP.
\]
\end{definition}

The social cost (total latency) is
\[
C(f) = \sum_{e\in E} x_e(f)\,\ell_e(x_e(f)) = \sum_{p\in\cP} f_p\, L_p(f).
\]
A socially optimal flow minimizes $C(f)$ over feasible flows.

\section{Design layer: stationary salience policies}\label{sec:salience-layer}
This section introduces the reduced-form \emph{stationary salience} model that serves as the main
design layer of the paper. The model abstracts recall and guidance as multiplicative weights on
routes and yields a strictly convex equilibrium characterization. Subsequent sections use this layer
to derive implementability tests and equilibrium-aware design algorithms. Later sections connect this design layer back to the explicit micro memory model.

\subsection{Stationary salience policies: a convex potential and uniqueness}
\label{sec:salience}
The explicit memory model of Section~\ref{sec:dynamic-recall-model} induces a high-dimensional
stochastic process, and establishing global uniqueness/stability of its stationary flow can require
strong contraction assumptions (Theorem~\ref{thm:unique}).
To obtain sharp equilibrium structure and enable tractable optimization, we introduce a reduced-form
\emph{stationary salience policy} model that represents recall and guidance via deterministic
multiplicative weights on routes. This abstraction yields a strictly convex potential, a unique
equilibrium, and an explicit inverse mapping from target flows to implementing salience
(Section~\ref{sec:implementability}). We return to the micro-to-salience connection in
Section~\ref{sec:micro2salience-B}.

\paragraph{Interpretation.}
In many AI-mediated choice systems, the user does not literally face a strict feasibility constraint; instead, the system retrieves/surfaces options with different \emph{prominence} and the user responds stochastically.
This motivates modeling memory as route-dependent salience weights that tilt logit choice.

\subsubsection{Stationary salience policies}
Fix a finite path set $\cP_k$ for each commodity $k$.

\begin{definition}[Stationary salience policy]
A stationary salience policy is a collection of strictly positive weights
\[
s \;=\; \big(s_{k,p}\big)_{k\in\cK,\;p\in\cP_k}\in \R_{>0}^{\sum_k|\cP_k|},
\]
where $s_{k,p}$ represents the (policy-induced) \emph{salience} of path $p$ for commodity $k$.
Equivalently, define additive biases $a_{k,p}\triangleq \frac{1}{\beta}\log s_{k,p}$.
\end{definition}

Given congestion $x$, the salience-weighted logit choice probability is
\begin{equation}
\label{eq:salience-logit}
\Pr(p\mid k,x;s)
\;=\;
\frac{s_{k,p}\exp\!\big(-\beta L_p(x)\big)}
{\sum_{r\in\cP_k} s_{k,r}\exp\!\big(-\beta L_r(x)\big)}
\qquad (p\in\cP_k).
\end{equation}
This coincides with logit choice on utilities $-L_p(x)+a_{k,p}$.

\begin{definition}[Salience-weighted stochastic user equilibrium (SW-SUE)]
\label{def:swsue}
Fix $(\ell_e)_{e\in E}$ and a salience policy $s$.
A feasible path-flow vector $f^\star=(f_{k,p}^\star)$ is a \emph{SW-SUE} if for every commodity $k$ and path $p\in\cP_k$,
\begin{equation}
\label{eq:swsue}
f_{k,p}^\star
\;=\;
d_k\,\Pr(p\mid k,x(f^\star);s),
\end{equation}
where $x(f^\star)$ is defined by \eqref{eq:edge-loads}.
\end{definition}

SW-SUE is a stochastic user equilibrium with alternative-specific constants.
While the strict convexity/potential formulation is well known in logit SUE (it is essentially Beckmann's potential plus an entropy regularizer), we record it here because it is the technical backbone of our governance and implementability results: once salience is treated as a policy lever, the equilibrium map becomes an explicitly solvable convex program.
Crucially, it admits a strictly convex potential characterization, which yields uniqueness and algorithmic stability \emph{without} requiring a contraction bound on a reduced fixed-point map.

\subsubsection{Potential formulation and uniqueness}
Define the feasible set
\[
\cF \;\triangleq\; \Big\{f\ge 0:\; \sum_{p\in\cP_k} f_{k,p}=d_k\ \ \forall k\in\cK\Big\}.
\]
For a salience policy $s$, define the objective
\begin{equation}
\label{eq:salience-potential}
\Phi_s(f)
\;\triangleq\;
\sum_{e\in E}\int_{0}^{x_e(f)} \ell_e(u)\,du
\;+\;
\frac{1}{\beta}\sum_{k\in\cK}\sum_{p\in\cP_k} f_{k,p}\Big(\log f_{k,p}-\log s_{k,p}\Big),
\end{equation}
with the convention $0\log 0 = 0$.

\begin{proposition}[Convex program for SW-SUE]
\label{prop:swsue-potential}
Assume $\beta>0$ and each $\ell_e$ is continuous and nondecreasing.
Then $\Phi_s$ is strictly convex on $\cF$ and admits a unique minimizer $f^\star\in\cF$.
Moreover, $f^\star$ is the unique SW-SUE in Definition~\ref{def:swsue}.
\end{proposition}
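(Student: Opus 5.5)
The proof splits into three steps: strict convexity of $\Phi_s$, existence of a minimizer, and the characterization of the minimizer by first-order (KKT) conditions.

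\textbf{Convexity.} Write $\Phi_s = \Phi^{\mathrm{B}} + \frac{1}{\beta}H_s$. Here $\Phi^{\mathrm{B}}(f)=\sum_e\int_0^{x_e(f)}\ell_e(u)\,du$ is the Beckmann term. Each map $y\mapsto\int_0^y\ell_e$ is convex because $\ell_e$ is nondecreasing, and $x_e(f)$ is linear in $f$. So $\Phi^{\mathrm{B}}$ is convex, though in general not strictly. The entropy term $H_s(f)=\sum_{k,p} f_{k,p}(\log f_{k,p}-\log s_{k,p})$ is a sum of one-dimensional functions $g_{k,p}(y)=y\log y - y\log s_{k,p}$. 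Each is continuous on $[0,\infty)$ and strictly convex there, since $g''(y)=1/y>0$ on $(0,\infty)$ and strict convexity extends to the closed half-line by continuity. A separable sum of strictly convex functions in every coordinate is strictly convex on $\R_{\ge0}^{N}$, hence on the convex set $\cF$. Since $\beta>0$, the sum $\Phi_s$ is strictly convex on $\cF$.

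\textbf{Existence and uniqueness.} The set $\cF$ is a nonempty product of scaled simplices, so it is compact. $\Phi_s$ is continuous on $\cF$ because $\ell_e$ is continuous and $y\log y\to0$ as $y\downarrow 0$. Weierstrass then gives a minimizer, and strict convexity makes it unique.

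\textbf{Interiority and KKT.} The main obstacle is showing the minimizer is interior, i.e.\ $f^\star_{k,p}>0$ for all $k,p$, so that stationarity holds with equalities. Suppose $f^\star_{k,p}=0$ for some $p$. Since $f^\star\in\cF$ and $d_k>0$, some $q\in\cP_k$ has $f^\star_{k,q}>0$. Shift mass $\varepsilon$ from $q$ to $p$. The directional derivative of the Beckmann term is bounded by $L_p-L_q$, which is finite by continuity. The entropy term contributes $\frac1\beta(\log\varepsilon+\dots)\to-\infty$ as $\varepsilon\downarrow0$. So a small shift strictly decreases $\Phi_s$, a contradiction.

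At an interior point, $\Phi_s$ is differentiable. Using $\partial\Phi^{\mathrm{B}}/\partial f_{k,p}=\sum_{e\in p}\ell_e(x_e)=L_p(x)$, the Lagrange conditions with multipliers $\mu_k$ for the demand constraints read
\[
L_p(x(f^\star))+\tfrac1\beta\big(\log f^\star_{k,p}+1-\log s_{k,p}\big)=\mu_k \quad\text{for all } p\in\cP_k.
\]
Hence $f^\star_{k,p}\propto s_{k,p}e^{-\beta L_p(x(f^\star))}$, and normalizing by $d_k$ gives exactly \eqref{eq:swsue}. So the minimizer is an SW-SUE.

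\textbf{Converse.} Let $\tilde f$ be any SW-SUE. It is interior, since logit probabilities are strictly positive. It satisfies the stationarity conditions above with $\mu_k=\frac1\beta\big(1-\log(d_k/Z_k)\big)$, where $Z_k$ is the logit normalizer. For a convex function on the affine set $\cF$, an interior point satisfying the first-order conditions is a global minimizer, so $\tilde f$ minimizes $\Phi_s$ on $\cF$. By uniqueness of the minimizer, $\tilde f=f^\star$, which establishes that the SW-SUE is unique.
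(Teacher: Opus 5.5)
Your proof is correct and follows the same route as the paper's proof. Both combine strict convexity from the entropic term with compactness to get a unique minimizer, show interiority via the $-\infty$ directional derivative of $y\log y$ at zero, and apply the KKT conditions in both directions to identify that minimizer with the unique SW-SUE. Your handling of strict convexity on the closed orthant and of first-order sufficiency is slightly more careful than the paper's. One trivial slip: the multiplier for an SW-SUE is $\mu_k=\frac{1}{\beta}\bigl(1+\log(d_k/Z_k)\bigr)$, not $\frac{1}{\beta}\bigl(1-\log(d_k/Z_k)\bigr)$; this is harmless, since the argument only uses that $\mu_k$ does not depend on $p$.
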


\begin{proof}[Proof idea.]
Write the SW-SUE fixed point as the first-order optimality conditions of the Beckmann potential augmented with an entropic regularizer shifted by $\log s$. The entropy term makes the objective strictly convex over $\cF$, yielding existence and uniqueness; the KKT conditions recover the salience-weighted logit form.
Full proof is deferred to Appendix~\ref{app:proof-prop-swsue-potential}.
\end{proof}

\subsubsection{Micro-foundation: last-choice memory yields stationary salience}
\label{sec:micro2salience}
The salience abstraction in \eqref{eq:salience-logit} can be given an exact micro-foundation as a special case of the dynamic recall model in Section~\ref{sec:dynamic-recall-model}.
This strengthens the interpretation of $s_{k,p}$ as an \emph{endogenous information mechanism} rather than an ad hoc reduced form.

\paragraph{B=1 last-choice memory with surfacing.}
Fix a commodity $k$ and suppose the memory budget is $B_k=1$.
The memory state is simply the \emph{last chosen} route $m\in\cP_k$.
Each period, a candidate route $q\sim \rho_k$ is surfaced, and the traveler chooses from $\{m,q\}$ using the logit rule \eqref{eq:logit-choice} (with costs evaluated at a fixed congestion vector $x$).
After choosing a route $p$, memory updates deterministically to $m^+=p$.

For fixed congestion $x$, this induces a Markov chain on $\cP_k$ with transition probabilities
\begin{equation}
\label{eq:transition-b1}
P_x^{(k)}(p'\mid p)
\;=\;
\sum_{q\in\cP_k}\rho_k(q)\,
\frac{\exp\!\big(-\beta L_{p'}(x)\big)\,\mathbf{1}\{p'\in\{p,q\}\}}
{\exp\!\big(-\beta L_{p}(x)\big)+\exp\!\big(-\beta L_{q}(x)\big)}.
\end{equation}
(When $q=p$, the denominator is $2\exp(-\beta L_p(x))$ and the chain stays at $p$.)

\begin{theorem}[Exact reduction: stationary last-choice memory induces salience-weighted logit]
\label{thm:micro2salience}
Fix $k$ and a congestion vector $x$.
Assume $\rho_k$ has full support and $\beta>0$.
Then the Markov chain \eqref{eq:transition-b1} is irreducible and aperiodic, and it is reversible with unique stationary distribution
\begin{equation}
\label{eq:stationary-b1}
\pi_{k,x}(p)
\;=\;
\frac{\rho_k(p)\exp\!\big(-\beta L_p(x)\big)}
{\sum_{r\in\cP_k}\rho_k(r)\exp\!\big(-\beta L_r(x)\big)}.
\end{equation}
Moreover, in stationarity the marginal probability of choosing route $p$ in a period equals $\pi_{k,x}(p)$.
Equivalently, the stationary choice rule is exactly the salience-weighted logit \eqref{eq:salience-logit} with salience weights $s_{k,p}\propto \rho_k(p)$.
\end{theorem}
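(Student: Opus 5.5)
Write $w_p \triangleq \exp(-\beta L_p(x))>0$ and abbreviate $\rho=\rho_k$. For $p'\neq p$, the transition \eqref{eq:transition-b1} only receives a contribution from the term $q=p'$. Hence
\[
P_x^{(k)}(p'\mid p)=\rho(p')\,\frac{w_{p'}}{w_p+w_{p'}} .
\]
The diagonal entry is whatever mass remains, and it includes the $q=p$ term. The plan has three steps. First, verify detailed balance for the candidate $\pi_{k,x}$ of \eqref{eq:stationary-b1}. Second, read off irreducibility and aperiodicity directly from the entries. Third, identify the stationary choice marginal with $\pi_{k,x}$.

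\textbf{Detailed balance.} For $p\neq p'$, substituting the off-diagonal formula gives
\[
\pi(p)P(p'\mid p)\propto \rho(p)w_p\,\rho(p')\frac{w_{p'}}{w_p+w_{p'}},
\]
and this expression is symmetric in $(p,p')$. So $\pi(p)P(p'\mid p)=\pi(p')P(p\mid p')$ and the chain is reversible. Summing detailed balance over $p$ shows that $\pi$ is stationary. Normalization is immediate from the definition in \eqref{eq:stationary-b1}.

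\textbf{Irreducibility and aperiodicity.} Full support of $\rho$ together with $w>0$ makes every off-diagonal entry strictly positive, so the chain is irreducible whenever $|\cP_k|\ge 1$. The diagonal entry is at least $\rho(p)\cdot\tfrac12>0$, coming from the $q=p$ term. This self-loop gives aperiodicity. Since $\cP_k$ is finite, irreducibility yields a unique stationary law, which is therefore $\pi_{k,x}$.

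\textbf{Choice marginal.} Because memory updates deterministically to the chosen route ($m^+=p$), the route chosen in period $t$ equals the memory state at time $t+1$. In stationarity, the law of the state at $t+1$ is again $\pi_{k,x}$. So the marginal probability of choosing $p$ is $\sum_m \pi(m)P(p\mid m)=\pi(p)$. Comparing with \eqref{eq:salience-logit}, the stationary choice rule is salience-weighted logit with $s_{k,p}=c\,\rho_k(p)$ for any constant $c>0$. The constant cancels in the ratio, which gives the claimed $s_{k,p}\propto\rho_k(p)$.

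\textbf{Main obstacle.} The only delicate step is bookkeeping the case $q=p$ and the diagonal entries, making sure nothing is double counted. Detailed balance only involves off-diagonal entries, so this issue is harmless there. To confirm that the matrix is row-stochastic, I would note that for each $q$ the choice probabilities over $\{p,q\}$ sum to one. When $q=p$ the two "options" coincide and together carry mass $1$, consistent with the parenthetical convention after \eqref{eq:transition-b1}.
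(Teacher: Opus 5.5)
Your proposal is correct and follows essentially the same route as the paper. Both derive the off-diagonal formula $P(p'\mid p)=\rho(p')w_{p'}/(w_p+w_{p'})$, check detailed balance for $\rho(p)w_p$, use positive off-diagonal entries and the $q=p$ self-loop for irreducibility and aperiodicity, and identify the chosen route with next period's memory state. Your treatment of the diagonal as the residual mass is a useful extra precision: read literally, \eqref{eq:transition-b1} gives the $q=p$ term only $\rho_k(p)/2$ and is not row-stochastic, yet row-stochasticity is exactly what summing detailed balance needs to yield stationarity.
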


\begin{proof}
Irreducibility follows from full support of $\rho_k$ and the fact that from any current route $p$ the candidate $q=p'$ occurs with positive probability and is chosen with positive probability under logit.
Aperiodicity holds because $P_x^{(k)}(p\mid p)>0$ for all $p$ (take $q=p$).

To show reversibility, define $\tilde\pi(p)\propto \rho_k(p)e^{-\beta L_p(x)}$.
For distinct $p\neq p'$, the only way to move from $p$ to $p'$ in one step is that the surfaced candidate equals $p'$ and the logit chooses $p'$.
Thus
\[
P_x^{(k)}(p'\mid p)
\;=\;
\rho_k(p')\,
\frac{e^{-\beta L_{p'}(x)}}{e^{-\beta L_{p}(x)}+e^{-\beta L_{p'}(x)}}.
\]
Hence, for $p\neq p'$,
\[
\tilde\pi(p)\,P_x^{(k)}(p'\mid p)
=
\rho_k(p)e^{-\beta L_p(x)}\cdot
\rho_k(p')\frac{e^{-\beta L_{p'}(x)}}{e^{-\beta L_{p}(x)}+e^{-\beta L_{p'}(x)}}
=
\tilde\pi(p')\,P_x^{(k)}(p\mid p'),
\]
so detailed balance holds.
Therefore $\pi_{k,x}$ in \eqref{eq:stationary-b1} is stationary; uniqueness follows from irreducibility.

Finally, when the process is stationary, the next-period memory state equals the chosen route, so the stationary distribution of memory coincides with the stationary distribution of choices, yielding the claim.
\end{proof}

\begin{corollary}[Coupled routing equilibrium equals SW-SUE for $B_k=1$]
\label{cor:micro2salience-equilibrium}
Consider the non-atomic routing game with last-choice memory ($B_k=1$ for all $k$) and surfacing distributions $\{\rho_k\}$.
If the induced equilibrium exists, then its flow component is exactly a SW-SUE with salience weights $s_{k,p}\propto \rho_k(p)$.
In particular, uniqueness and algorithmic stability follow from Proposition~\ref{prop:swsue-potential}.
\end{corollary}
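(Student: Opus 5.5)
The plan is to write the coupled equilibrium condition commodity by commodity and then substitute Theorem~\ref{thm:micro2salience} directly. In the $B_k=1$ micro model, an induced (stationary) equilibrium is a pair consisting of a congestion vector $x^\star$ and, for each commodity $k$, a stationary law $\pi_k$ of the memory chain $P^{(k)}_{x^\star}$ in \eqref{eq:transition-b1}. These must satisfy the consistency conditions that path flows equal $f^\star_{k,p}=d_k\cdot(\text{stationary probability of choosing }p)$ and that $x^\star=x(f^\star)$ via \eqref{eq:edge-loads}. I would first state this definition explicitly, specialising the FWE fixed point of Section~\ref{sec:existence} to $B_k=1$, so that the flow component is unambiguous.

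The second step freezes $x=x^\star$ and applies Theorem~\ref{thm:micro2salience} to each $k$. Doing so needs full support of $\rho_k$, which I would carry as a standing assumption; otherwise I would restrict $\cP_k$ to $\mathrm{supp}\,\rho_k$. The chain then has the unique stationary law \eqref{eq:stationary-b1}, so any stationary memory law in the equilibrium must equal $\pi_{k,x^\star}$. The theorem also says the marginal choice probability equals $\pi_{k,x^\star}(p)$, which gives
\[
f^\star_{k,p}=d_k\,\frac{\rho_k(p)e^{-\beta L_p(x^\star)}}{\sum_{r}\rho_k(r)e^{-\beta L_r(x^\star)}}.
\]
Setting $s_{k,p}=c_k\rho_k(p)$ for any $c_k>0$, the constant cancels in \eqref{eq:salience-logit}, and the display becomes exactly \eqref{eq:swsue} with $x(f^\star)=x^\star$. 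Hence $f^\star$ is a SW-SUE. Positivity of $s$ is exactly where full support is used.

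For the final clause, I would invoke Proposition~\ref{prop:swsue-potential}. That result states that the SW-SUE for this $s$ is unique, being the minimiser of $\Phi_s$. Consequently every induced equilibrium of the micro model has the same flow component. I would also note the converse: the unique SW-SUE $f^\star$ together with $\pi_{k,x(f^\star)}$ is an induced equilibrium, which settles existence as well. Algorithmic stability then transfers because any convergent scheme for $\Phi_s$ computes this flow. I expect no real obstacle in this argument. The only care needed is definitional: the chain's state must coincide with the last choice, so that ``stationary memory law'' and ``choice frequency'' agree. Theorem~\ref{thm:micro2salience} already provides that identification.
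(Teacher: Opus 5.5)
Your proposal is correct and follows essentially the same argument as the paper. Both freeze congestion at the equilibrium, apply Theorem~\ref{thm:micro2salience} to get $f_{k,p}=d_k\,\pi_{k,x}(p)$, absorb the per-commodity normalization into $s_{k,p}\propto\rho_k(p)$, and invoke Proposition~\ref{prop:swsue-potential} for uniqueness. Your additions are sound refinements that the paper's proof does not spell out: the explicit full-support caveat, and the converse that the unique SW-SUE paired with $\pi_{k,x(f^\star)}$ is an equilibrium, which gives existence.
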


\begin{proof}[Proof idea.]
By Theorem~\ref{thm:micro2salience}, in the $B_k=1$ micro model the stationary per-period choice probabilities equal a salience-weighted logit with weights $s_{k,p}\propto \rho_k(p)$. Substituting these probabilities into the non-atomic flow definition yields exactly the SW-SUE fixed point, and uniqueness follows from strict convexity of the SW-SUE potential.
Full proof is deferred to Appendix~\ref{app:proof-cor-micro2salience-equilibrium}.
\end{proof}

\begin{corollary}[Operational realizability of salience via surfacing when $B_k=1$]
\label{cor:micro-realizability}
Fix any collection of strictly positive salience weights $\{s_{k,p}\}_{k\in\cK,\,p\in\cP_k}$.
Define, for each commodity $k$, a surfacing distribution
\[
\rho_k(p)\;\triangleq\;\frac{s_{k,p}}{\sum_{r\in\cP_k} s_{k,r}}.
\]
Consider the $B_k=1$ last-choice memory micro model with these surfacing distributions.
Then the induced stationary within-period choice probabilities coincide \emph{exactly} with the salience-weighted logit model with salience $s$, and the induced network equilibrium flow is the unique SW-SUE for $s$.
Consequently, in the $B_k=1$ regime, any interior flow implementable by stationary salience (Theorem~\ref{thm:implementability-salience}) is also implementable by an explicit micro policy (choice of $\rho$).
\end{corollary}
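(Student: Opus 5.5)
The plan is to chain three earlier results: Theorem~\ref{thm:micro2salience}, Corollary~\ref{cor:micro2salience-equilibrium}, and Proposition~\ref{prop:swsue-potential}. The only new ingredient is that salience-weighted logit \eqref{eq:salience-logit} depends on $s_{k,\cdot}$ only up to a positive per-commodity scalar.

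\emph{Step 1: the surfacing distribution is admissible.} Since every $s_{k,p}>0$ and $\cP_k$ is finite, $\rho_k(p)=s_{k,p}/\sum_r s_{k,r}$ is a probability distribution on $\cP_k$ with full support. This is exactly the hypothesis of Theorem~\ref{thm:micro2salience}.

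\emph{Step 2: within-period choice probabilities.} Fix any congestion vector $x$. Theorem~\ref{thm:micro2salience} gives that the last-choice chain \eqref{eq:transition-b1} has the unique stationary law \eqref{eq:stationary-b1}, and that stationary per-period choice probabilities equal $\pi_{k,x}$. Substituting $\rho_k(p)=s_{k,p}/Z_k$ with $Z_k=\sum_r s_{k,r}$, the factor $Z_k$ cancels between numerator and denominator of \eqref{eq:stationary-b1}. This gives $\pi_{k,x}(p)=\Pr(p\mid k,x;s)$ exactly, for every $x$. This scale invariance is the one point to state explicitly, since Theorem~\ref{thm:micro2salience} only asserts $s\propto\rho$.

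\emph{Step 3: network equilibrium.} By Step 2, the stationary choice map of the micro model coincides with the SW-SUE response map for $s$, uniformly in $x$. Hence a flow $f$ is a stationary coupled equilibrium of the $B_k=1$ micro model iff $f_{k,p}=d_k\,\pi_{k,x(f)}(p)$ for all $k,p$. This is precisely \eqref{eq:swsue}, and it is the content of Corollary~\ref{cor:micro2salience-equilibrium}.

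The main obstacle is that Corollary~\ref{cor:micro2salience-equilibrium} is conditional on existence. I would discharge this with Proposition~\ref{prop:swsue-potential}: the SW-SUE for $s$ exists and is unique, since it is the minimizer of the strictly convex $\Phi_s$ on compact $\cF$. Because the two fixed-point equations are identical, that SW-SUE is a micro equilibrium, and any micro equilibrium is an SW-SUE, hence equal to it. So the micro equilibrium exists and is unique.

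\emph{Step 4: consequence for implementability.} Suppose an interior flow $f$ is implemented by some salience $s$, in the sense of Theorem~\ref{thm:implementability-salience}, meaning $f$ is the SW-SUE for $s$. Then running the micro model with $\rho_k\propto s_{k,\cdot}$ yields equilibrium flow $f$ by Step 3, which gives an explicit micro policy implementing $f$.
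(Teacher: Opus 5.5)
Your proposal is correct and follows essentially the same route as the paper: Theorem~\ref{thm:micro2salience} plus per-commodity scale invariance of salience gives the exact choice law, Corollary~\ref{cor:micro2salience-equilibrium} gives the equilibrium statement, and choosing $\rho_k\propto s_{k,\cdot}$ for an implementing $s$ gives the implementability consequence. Your one addition is to use the existence of the SW-SUE from Proposition~\ref{prop:swsue-potential} to discharge the ``if the induced equilibrium exists'' hypothesis of Corollary~\ref{cor:micro2salience-equilibrium}; this is a valid tightening that the paper's proof leaves implicit.
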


\begin{proof}[Proof idea.]
Theorem~\ref{thm:micro2salience} shows that for $B_k=1$ the stationary choice law is salience-weighted logit with salience proportional to the surfacing distribution: $s_{k,p}\propto \rho_k(p)$.
Since salience is defined only up to a per-commodity multiplicative constant, choosing $\rho_k$ proportional to a target $s_k$ realizes that target exactly.
The equilibrium statement then follows from Corollary~\ref{cor:micro2salience-equilibrium}.
Full proof is deferred to Appendix~\ref{app:proof-cor-micro-realizability}.
\end{proof}

\begin{remark}
Theorem~\ref{thm:micro2salience} provides an explicit bridge between the micro memory kernel and the salience mechanism.
Richer memory states ($B_k>1$, LRU lists, summary-based retention) expand the policy space beyond pure surfacing weights; however, the $B_k=1$ case already shows that a simple AI ``suggestion layer'' can induce an equilibrium that is exactly the optimizer of a strictly convex potential.
This is the key technical reason salience policies provide a strong design handle \emph{beyond contraction}.
\end{remark}

\section{Implementability: inverse equilibrium and memory-as-control}
\label{sec:implementability}
The stationary salience layer (Section~\ref{sec:salience}) yields a unique equilibrium for every
choice of salience policy $s$. This raises an inverse question that is central for design:
\emph{given a target equilibrium behavior, when is it implementable by some salience policy, and how
much ``influence'' is required?}

This section provides (i) an explicit inverse mapping from interior target flows to salience
parameters, (ii) sharp feasibility tests under influence budgets and tying/fairness constraints, and
(iii) a ``governed'' implementability characterization that will underwrite the single-level design
reductions in Section~\ref{sec:design}.

\subsection{Full implementability under salience policies}
We now show that the stationary salience model in Section~\ref{sec:salience} yields an explicit and very strong implementability guarantee.

\noindent\textbf{Comment on novelty.} The unconstrained ``full implementability'' statement below is, in hindsight, an explicit inversion of the salience-weighted logit equilibrium conditions. Its value in this paper is as a building block: it (i) cleanly separates what is a property of the reduced-form salience layer from what is micro-founded, and (ii) becomes nontrivial once we impose governance constraints (ratio budgets, tying/fairness, feature constraints) and seek tractable network-class design algorithms.

\begin{theorem}[Full implementability of interior flows under salience]
\label{thm:implementability-salience}
Consider a fixed network with continuous nondecreasing latencies and a fixed $\beta>0$.
Let $\bar f\in\cF$ be an \emph{interior} feasible flow, i.e., $\bar f_{k,p}>0$ for all $k$ and all $p\in\cP_k$.
Define salience weights by
\begin{equation}
\label{eq:inverse-salience}
s_{k,p}
\;\triangleq\;
\bar f_{k,p}\,\exp\!\Big(\beta\,L_p\big(x(\bar f)\big)\Big).
\end{equation}
Then $\bar f$ is the unique SW-SUE (Definition~\ref{def:swsue}) induced by $s$.
In particular, every interior feasible flow is implementable by stationary salience policies.
\end{theorem}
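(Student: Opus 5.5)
The plan is to verify directly that $\bar f$ satisfies the SW-SUE fixed-point equation \eqref{eq:swsue} for the salience weights defined in \eqref{eq:inverse-salience}, and then obtain uniqueness from Proposition~\ref{prop:swsue-potential}. Before that, note that the definition is admissible: since $\bar f$ is interior, $\bar f_{k,p}>0$, and the exponential factor is finite and positive because latencies are finite at the finite loads $x(\bar f)$. Hence $s\in\R_{>0}^{\sum_k|\cP_k|}$ is a valid stationary salience policy.

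The key step is a substitution. Fix $\bar x\triangleq x(\bar f)$ and plug $s$ into \eqref{eq:salience-logit} evaluated at $\bar x$. Each numerator becomes
\[
s_{k,p}\exp\!\big(-\beta L_p(\bar x)\big)=\bar f_{k,p}\exp\!\big(\beta L_p(\bar x)\big)\exp\!\big(-\beta L_p(\bar x)\big)=\bar f_{k,p}.
\]
The denominator is therefore $\sum_{r\in\cP_k}\bar f_{k,r}=d_k$, by feasibility $\bar f\in\cF$. Thus $\Pr(p\mid k,\bar x;s)=\bar f_{k,p}/d_k$, which gives $d_k\Pr(p\mid k,x(\bar f);s)=\bar f_{k,p}$ for every $k$ and $p$. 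This is exactly \eqref{eq:swsue}, so $\bar f$ is a SW-SUE for $s$.

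For uniqueness I would invoke Proposition~\ref{prop:swsue-potential}: under $\beta>0$ and continuous nondecreasing latencies, the SW-SUE for any positive $s$ is unique, being the unique minimizer of the strictly convex $\Phi_s$. Since $\bar f$ is a SW-SUE, it must be that equilibrium. As a sanity check, one could also verify the KKT conditions of $\Phi_s$ at $\bar f$ directly. The partial derivative is $L_p(\bar x)+\frac1\beta(\log\bar f_{k,p}+1-\log s_{k,p})=\frac1\beta$, which is constant across $p$. Together with interiority (no active nonnegativity constraints), this makes $\bar f$ a stationary point and hence the minimizer.

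The only real obstacle is keeping the logical dependence clean. The salience $s$ is defined using $\bar f$ itself, so one must make sure the equilibrium condition is evaluated at the same loads $x(\bar f)$ and not at some other flow. Uniqueness from Proposition~\ref{prop:swsue-potential} resolves this: any SW-SUE under $s$ coincides with $\bar f$. The final implementability claim then follows immediately, since every interior feasible flow is produced by some stationary salience policy.
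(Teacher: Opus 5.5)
Your proposal is correct and matches the paper's proof: substituting the inverse salience into the salience-weighted logit collapses each numerator to $\bar f_{k,p}$ and each denominator to $d_k$, so the fixed point holds, and uniqueness then follows from Proposition~\ref{prop:swsue-potential}. Your checks that $s$ is strictly positive and that the KKT conditions hold directly are harmless additions that the paper leaves implicit.
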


\begin{proof}
Let $s$ be defined by \eqref{eq:inverse-salience}.
Then
\[
s_{k,p}\exp\!\Big(-\beta L_p(x(\bar f))\Big) \;=\; \bar f_{k,p}.
\]
Summing over $p\in\cP_k$ yields $\sum_{p} s_{k,p}e^{-\beta L_p(x(\bar f))} = \sum_p \bar f_{k,p} = d_k$.
Substituting into \eqref{eq:salience-logit} gives
\[
d_k\,\Pr(p\mid k,x(\bar f);s)
\;=\;
d_k\,\frac{\bar f_{k,p}}{d_k}
\;=\;
\bar f_{k,p},
\]
so $\bar f$ satisfies the SW-SUE fixed point \eqref{eq:swsue}.
Uniqueness follows from Proposition~\ref{prop:swsue-potential}.
\end{proof}

\begin{corollary}[Implementing the system optimum without tolls]
\label{cor:implement-so}
If the system-optimal flow $f^{\mathrm{SO}}$ is interior, then it is implementable by a stationary salience policy via \eqref{eq:inverse-salience}.
If $f^{\mathrm{SO}}$ is not interior, it is $\varepsilon$-implementable for any $\varepsilon>0$ by perturbing $f^{\mathrm{SO}}$ to an interior flow and applying Theorem~\ref{thm:implementability-salience}.
\end{corollary}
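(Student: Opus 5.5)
The first claim is a direct specialization of Theorem~\ref{thm:implementability-salience}. If $f^{\mathrm{SO}}$ is interior, we take $\bar f = f^{\mathrm{SO}}$ and define $s$ by \eqref{eq:inverse-salience}. That theorem then says $f^{\mathrm{SO}}$ is the unique SW-SUE for this $s$, and nothing more is needed. We only note that $s$ is strictly positive and finite, because $\bar f_{k,p}>0$ and the latencies are finite at a feasible load.

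For the non-interior case we first fix what $\varepsilon$-implementable means: there is a salience policy whose unique SW-SUE $f^s$ satisfies $\|f^s - f^{\mathrm{SO}}\|\le\varepsilon$. We will also show $|C(f^s)-C(f^{\mathrm{SO}})|\le\varepsilon$ after shrinking the perturbation if needed. The construction is a convex combination
\[
f^\delta \triangleq (1-\delta) f^{\mathrm{SO}} + \delta u, \qquad u_{k,p}\triangleq d_k/|\cP_k|,
\]
with $\delta\in(0,1]$. This lies in $\cF$ by convexity of $\cF$. It is interior because $f^\delta_{k,p}\ge \delta d_k/|\cP_k|>0$. It is close to the optimum, since $\|f^\delta - f^{\mathrm{SO}}\|=\delta\|u-f^{\mathrm{SO}}\|\le 2\delta\max_k d_k$ in the $\ell_1$ norm taken per commodity. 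Choosing $\delta$ small enough gives flow-distance at most $\varepsilon$. Applying Theorem~\ref{thm:implementability-salience} to $\bar f=f^\delta$ then yields a salience $s^\delta$ whose unique SW-SUE is exactly $f^\delta$.

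For the cost statement, edge loads $x(f)$ are linear in $f$, and $C(f)=\sum_e x_e\ell_e(x_e)$ is continuous because each $\ell_e$ is continuous. Hence $C(f^\delta)\to C(f^{\mathrm{SO}})$ as $\delta\to0$, and the cost gap also falls below $\varepsilon$ for small $\delta$. For an explicit rate one could instead bound the gap by a Lipschitz constant of $C$ on the compact set $\cF$ when the latencies are Lipschitz, but continuity alone suffices for the qualitative claim.

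The only real obstacle is stating precisely what ``$\varepsilon$-implementable'' means, since the statement is informal. The construction itself is routine. One caveat to record: as $\delta\to0$ the weights $s^\delta_{k,p}$ on routes unused by $f^{\mathrm{SO}}$ tend to $0$. So exact implementation is impossible with bounded-ratio salience, which foreshadows the budgeted analysis to come. In the $B_k=1$ regime, Corollary~\ref{cor:micro-realizability} then makes each $s^\delta$ realizable by a surfacing distribution.
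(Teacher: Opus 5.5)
Your proposal is correct and follows the paper's proof. In the interior case you apply Theorem~\ref{thm:implementability-salience} directly; otherwise you implement the interior convex combination of $f^{\mathrm{SO}}$ with the uniform flow and invoke continuity of social cost. Your explicit definition of $\varepsilon$-implementability, together with choosing $\delta$ small enough to control both the flow distance and the cost gap, is slightly more careful than the paper, which bounds $\|\tilde f-f^{\mathrm{SO}}\|_1\le\varepsilon$ and then asserts an $\varepsilon$-approximation in performance by continuity alone.
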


\begin{proof}[Proof idea.]
If $f^{\mathrm{SO}}$ is interior, apply Theorem~\ref{thm:implementability-salience} directly. If not, perturb $f^{\mathrm{SO}}$ to an interior flow within $\varepsilon$; implementability of the perturbed flow and continuity of costs yield $\varepsilon$-implementability of $f^{\mathrm{SO}}$.
Full proof is deferred to Appendix~\ref{app:proof-cor-implement-so}.
\end{proof}

\subsection{Constrained implementability under influence budgets and governance}
\label{sec:constrained-impl}
Theorem~\ref{thm:implementability-salience} is intentionally strong: unconstrained salience can implement any interior flow.
Top-tier AI/game-theory settings, however, often impose \emph{governance} constraints on how strongly an AI system may bias or rank alternatives.
We formalize this via \emph{influence budgets} and derive sharp implementability characterizations.

\paragraph{Scale invariance and influence budgets.}
Only \emph{relative} salience matters: for a fixed commodity $k$, replacing $s_{k,p}$ by $c_k s_{k,p}$ for any constant $c_k>0$ leaves \eqref{eq:salience-logit} unchanged.
Accordingly, a natural constraint is a bound on within-commodity salience ratios.

\begin{definition}[Influence budget (bounded salience ratios)]
\label{def:influence-budget}
Fix $R_k\ge 1$ for each commodity $k$.
A salience vector $s$ satisfies the influence budget if
\begin{equation}
\label{eq:salience-budget}
\max_{p,r\in\cP_k}\frac{s_{k,p}}{s_{k,r}} \;\le\; R_k
\qquad\text{for all } k\in\cK.
\end{equation}
Equivalently, $\max_{p}\log s_{k,p}-\min_{p}\log s_{k,p}\le \log R_k$.
\end{definition}

\begin{theorem}[Exact constrained implementability under ratio budgets]
\label{thm:constrained-implementability}
Fix $\beta>0$ and continuous nondecreasing latencies.
Let $\bar f\in\cF$ be an interior feasible flow.
Define the \emph{required log-salience up to scale} by
\begin{equation}
\label{eq:req-log-salience}
a_{k,p}(\bar f) \;\triangleq\; \log \bar f_{k,p} + \beta\,L_p\!\big(x(\bar f)\big),
\qquad k\in\cK,\ p\in\cP_k.
\end{equation}
Then $\bar f$ is implementable by stationary salience policies satisfying the influence budgets \eqref{eq:salience-budget} if and only if
\begin{equation}
\label{eq:range-condition}
\max_{p\in\cP_k} a_{k,p}(\bar f)\;-\;\min_{p\in\cP_k} a_{k,p}(\bar f)
\;\le\; \log R_k
\qquad \text{for all } k\in\cK.
\end{equation}
Moreover, the minimal required budget for commodity $k$ is
$R_k^{\min}(\bar f) = \exp\!\big(\max_p a_{k,p}(\bar f)-\min_p a_{k,p}(\bar f)\big)$.
\end{theorem}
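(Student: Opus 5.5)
The plan is to show that the set of salience vectors implementing $\bar f$ is exactly a family of per-commodity rescalings of the canonical inverse from Theorem~\ref{thm:implementability-salience}. The budget condition \eqref{eq:salience-budget} is itself scale invariant, so checking it reduces to checking the canonical vector.

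\textbf{Step 1 (characterize all implementing saliences).} Let $s$ be any strictly positive salience vector whose unique SW-SUE (Proposition~\ref{prop:swsue-potential}) is $\bar f$. Write $\bar x = x(\bar f)$ and $Z_k(s)=\sum_{r\in\cP_k} s_{k,r}e^{-\beta L_r(\bar x)}>0$. The fixed point \eqref{eq:swsue} gives $\bar f_{k,p} = d_k s_{k,p}e^{-\beta L_p(\bar x)}/Z_k(s)$. Taking logarithms,
\[
\log s_{k,p} \;=\; a_{k,p}(\bar f) + \log\big(Z_k(s)/d_k\big),
\]
which is well defined because $\bar f$ is interior. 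Hence $\log s_{k,\cdot} = a_{k,\cdot}(\bar f) + c_k$ for a constant $c_k$ that depends only on $k$.

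Conversely, take any constants $c_k\in\R$ and set $s_{k,p}=e^{c_k}e^{a_{k,p}(\bar f)}$. This is a per-commodity rescaling of \eqref{eq:inverse-salience}. By the scale-invariance remark preceding Definition~\ref{def:influence-budget}, together with Theorem~\ref{thm:implementability-salience}, $\bar f$ is the unique SW-SUE for this $s$. So the implementing set is exactly $\{s:\ \log s_{k,p}=a_{k,p}(\bar f)+c_k,\ c\in\R^{|\cK|}\}$.

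\textbf{Step 2 (budget equivalence).} For any $s$ in this set,
\[
\max_p\log s_{k,p}-\min_p\log s_{k,p} \;=\; \max_p a_{k,p}(\bar f)-\min_p a_{k,p}(\bar f),
\]
because the shift $c_k$ cancels. Therefore some implementing $s$ satisfies \eqref{eq:salience-budget} if and only if \eqref{eq:range-condition} holds. In fact, under \eqref{eq:range-condition} every implementing $s$ satisfies the budget. For the minimal budget, the smallest $R_k\ge1$ with $\log R_k\ge \max_p a_{k,p}-\min_p a_{k,p}$ is the exponential of this range, and the range is nonnegative, so the constraint $R_k\ge 1$ is automatic. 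Commodities decouple, since both the constraints and the shifts are per commodity.

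\textbf{Main obstacle.} The only delicate point is the "only if" direction of Step 1. We must confirm that implementation means $\bar f$ is the SW-SUE of $s$, so that it satisfies \eqref{eq:swsue} at its own congestion $\bar x$. This is immediate from the definition together with the uniqueness in Proposition~\ref{prop:swsue-potential}. It is also why interiority is required: the logarithms of $\bar f_{k,p}$ must be finite. The rest is bookkeeping.
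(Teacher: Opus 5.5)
Your proposal is correct and essentially matches the paper's proof: the paper's ``only if'' direction also writes any implementing $s$ as $\log s_{k,p}=a_{k,p}(\bar f)+c_k$ so that the shift cancels in the range, and its ``if'' direction sets $s_{k,p}=\exp(a_{k,p}(\bar f))$ and reuses the calculation from Theorem~\ref{thm:implementability-salience}. Your explicit description of the whole implementing set and your remark that $R_k^{\min}\ge 1$ holds automatically make the argument a little more complete, but the route is the same.
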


\begin{proof}
(\emph{If}.) Suppose \eqref{eq:range-condition} holds.
Set $s_{k,p}\triangleq \exp(a_{k,p}(\bar f))$.
Then $s_{k,p}\propto \bar f_{k,p}\exp(\beta L_p(x(\bar f)))$, so by the same calculation as in Theorem~\ref{thm:implementability-salience}, $\bar f$ is the unique SW-SUE induced by $s$.
Finally, $\log s_{k,p}=a_{k,p}(\bar f)$, so the ratio bound follows directly from \eqref{eq:range-condition}.

(\emph{Only if}.) If $\bar f$ is implementable by some $s$ satisfying \eqref{eq:salience-budget}, then at equilibrium
$\bar f_{k,p}\propto s_{k,p}\exp(-\beta L_p(x(\bar f)))$, i.e.,
$\log s_{k,p} = \log \bar f_{k,p} + \beta L_p(x(\bar f)) + c_k$
for some constant $c_k$ (normalization) depending on $k$ only.
Thus the range of $\log s_{k,p}$ over $p$ equals the range of $a_{k,p}(\bar f)$, so \eqref{eq:salience-budget} implies \eqref{eq:range-condition}.
\end{proof}

\begin{remark}[Menu-size and fairness constraints]
Beyond ratio budgets, governance may constrain the \emph{menu size} (how many routes can be surfaced) or impose \emph{fairness} (e.g., salience cannot depend on protected attributes).
In the stationary salience model, imposing $s_{k,p}=0$ forces $f_{k,p}^\star=0$.
Thus menu-size constraints translate into sparsity constraints on $s$ (bounded support), and fairness constraints translate into tying parameters across groups/commodities.
Theorem~\ref{thm:constrained-implementability} isolates a first-order, scale-invariant ``influence'' constraint that already yields nontrivial implementability geometry.
\end{remark}

\subsection{Implementability under tying and fairness constraints}
\label{sec:fair-impl}
Influence budgets (Definition~\ref{def:influence-budget}) constrain \emph{how much} an AI system can tilt attention.
A different and equally important class of governance constraints restricts \emph{which distinctions} the system is allowed to encode.
In particular, many fairness and compliance regimes require the guidance policy to be \emph{group-blind} (or to satisfy bounded disparity) with respect to protected attributes, which naturally induces \emph{parameter tying} across subpopulations.

We formalize such constraints as \emph{affine restrictions on log-salience}.
Let $u_{k,p}\triangleq \log s_{k,p}$ and stack these into a vector $u\in\R^{\sum_k|\cP_k|}$.
A broad class of constrained salience policies can be written as
\begin{equation}
\label{eq:tied-salience-class}
u \;\in\; \cU \;\triangleq\; \{A\theta + b : \theta\in\R^d\},
\end{equation}
where $A$ encodes tying or feature-based parameterization and $b$ is a fixed offset.
Examples include:
(i) \emph{group-blind} policies that force $u_{k,p}$ to be identical across protected groups $k$ that share the same route $p$;
(ii) \emph{feature-based} policies $u_{k,p}=\theta^\top \phi_{k,p}$ (shared $\theta$) that only depend on approved route features $\phi_{k,p}$.

Because salience is scale-invariant within each commodity, we also allow commodity-specific intercepts.
Let
\begin{equation}
\label{eq:intercept-space}
\cC \;\triangleq\; \{c\in\R^{\sum_k|\cP_k|} : c_{k,p}=c_k \ \text{for each }k\text{ and all }p\in\cP_k\}.
\end{equation}

\begin{theorem}[Exact implementability under affine tying constraints]
\label{thm:tied-implementability}
Fix $\beta>0$ and continuous nondecreasing latencies, and let $\bar f\in\cF$ be interior.
Define the \emph{required log-salience} vector (up to scale) by
\[
a(\bar f)_{k,p} \;\triangleq\; \log \bar f_{k,p} + \beta\,L_p\!\big(x(\bar f)\big).
\]
Then $\bar f$ is implementable by stationary salience policies with log-salience $u\in \cU$ if and only if
\begin{equation}
\label{eq:tied-membership}
a(\bar f) \;\in\; \cU + \cC.
\end{equation}
Equivalently, there exist $\theta\in\R^d$ and intercepts $\{c_k\}$ such that for all $k$ and $p\in\cP_k$,
\begin{equation}
\label{eq:tied-eq}
(A\theta+b)_{k,p} + c_k \;=\; \log \bar f_{k,p} + \beta\,L_p\!\big(x(\bar f)\big).
\end{equation}
\end{theorem}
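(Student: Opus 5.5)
The argument follows Theorem~\ref{thm:constrained-implementability}, with the ratio budget replaced by an affine membership condition. The key tool is the characterization of SW-SUE for a fixed interior flow: by Proposition~\ref{prop:swsue-potential}, $\bar f$ is the (unique) SW-SUE induced by $s$ exactly when the fixed-point condition \eqref{eq:swsue} holds at $x(\bar f)$. We first reduce this to a statement about log-salience. Write $u_{k,p}=\log s_{k,p}$. Condition \eqref{eq:swsue} at $\bar f$ reads
\[
\bar f_{k,p} = d_k\,\frac{e^{u_{k,p}-\beta L_p(x(\bar f))}}{\sum_{r\in\cP_k} e^{u_{k,r}-\beta L_r(x(\bar f))}} .
\]
Since $\bar f_{k,p}>0$, we can take logarithms. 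This condition is therefore equivalent to
\[
u_{k,p} = \log\bar f_{k,p} + \beta L_p(x(\bar f)) + c_k = a(\bar f)_{k,p} + c_k
\]
for all $p\in\cP_k$, where
\[
c_k = \log\Big(\sum_r e^{u_{k,r}-\beta L_r(x(\bar f))}\Big) - \log d_k
\]
depends only on $k$. Conversely, if $u = a(\bar f) + c$ for some $c\in\cC$, then $s_{k,p}e^{-\beta L_p}=e^{c_k}\bar f_{k,p}$. Summing over $p$ gives the normalizer $e^{c_k}d_k$, so \eqref{eq:swsue} holds exactly as in the proof of Theorem~\ref{thm:implementability-salience}. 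Hence $\bar f$ is induced by $s=e^u$ if and only if $u\in a(\bar f)+\cC$.

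With this lemma the theorem is immediate. (\emph{Only if}.) If some $u\in\cU$ implements $\bar f$, then $u = a(\bar f)+c$ with $c\in\cC$. Since $\cC$ is a linear subspace, $-c\in\cC$, and so $a(\bar f)=u-c\in\cU+\cC$. (\emph{If}.) Suppose $a(\bar f) = A\theta+b + c'$ with $c'\in\cC$. Set $u = A\theta + b\in\cU$. Then $u = a(\bar f) - c'$ lies in $a(\bar f)+\cC$, so by the lemma $s=e^u$ makes $\bar f$ an SW-SUE, and uniqueness follows from Proposition~\ref{prop:swsue-potential}. The equivalent form \eqref{eq:tied-eq} is just the coordinatewise statement of \eqref{eq:tied-membership}, with $c_k$ the commodity intercepts; the sign convention is immaterial because $\cC$ is a subspace.

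There is no real obstacle here. The only point needing care is the equivalence lemma. It requires the interiority of $\bar f$, so that the logarithms exist, and it requires $s>0$, which is automatic since $u\in\R^{\sum_k|\cP_k|}$ is finite. It also uses the fact that the per-commodity normalization constant is exactly what $\cC$ absorbs. I would also remark that \eqref{eq:tied-membership} is a linear feasibility condition: it can be checked by solving a least-squares problem in $(\theta,c)$ or by testing orthogonality of $a(\bar f)-b$ to $(\mathrm{range}\,A+\cC)^\perp$.
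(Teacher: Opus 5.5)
Your proof is correct and follows essentially the same approach as the paper: both take logarithms of the SW-SUE fixed point at the interior flow $\bar f$ to get $u = a(\bar f) + c$ with $c\in\cC$, and then read off the condition $a(\bar f)\in\cU+\cC$. Your version is slightly more complete, because you also check the converse explicitly (that any $u\in a(\bar f)+\cC$ yields normalizer $e^{c_k}d_k$ and so induces $\bar f$), a step the paper's proof leaves implicit.
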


\begin{proof}
At any SW-SUE induced by $u=\log s$, the fixed point implies
\[\bar f_{k,p}\propto \exp(u_{k,p})\exp(-\beta L_p(x(\bar f)))\], i.e.,
$u_{k,p} = \log \bar f_{k,p} + \beta L_p(x(\bar f)) + c_k$ for some commodity-specific constants $c_k$.
Thus $\bar f$ is implementable with $u\in\cU$ if and only if $a(\bar f)=u-c$ for some $u\in\cU$ and $c\in\cC$, i.e.\ $a(\bar f)\in \cU+\cC$.
\end{proof}

\begin{corollary}[Group-blind salience forces identical route shares]
\label{cor:group-blind}
Suppose commodities $k$ correspond to protected groups that share the same feasible route set $\cP$ and face the same latencies.
If the fairness constraint enforces \emph{group-blind} salience, i.e.\ $u_{k,p}=u_{k',p}$ for all groups $k,k'$ and all $p\in\cP$, then at any SW-SUE all groups induce the same route-share vector:
\[
\frac{f_{k,p}}{d_k} \;=\; \frac{f_{k',p}}{d_{k'}} \qquad\text{for all }k,k',\ p\in\cP.
\]
Consequently, any target flow that assigns different route shares across groups is not implementable under group-blind salience.
\end{corollary}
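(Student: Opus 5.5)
The plan is to read the equality of route shares directly off the SW-SUE fixed point \eqref{eq:swsue}; no comparative-statics or uniqueness argument should be needed. Fix any SW-SUE $f$ under a group-blind salience policy, and let $x=x(f)$ be the induced edge loads. Edge loads aggregate all commodities, so every group faces the same congestion vector $x$. Since all groups share the path set $\cP$ and the latency functions, the path latencies $L_p(x)$ are the same numbers for every group.

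Next, divide \eqref{eq:swsue} by $d_k$ and use \eqref{eq:salience-logit}. This gives
\[
\frac{f_{k,p}}{d_k}=\frac{e^{u_{k,p}}e^{-\beta L_p(x)}}{\sum_{r\in\cP}e^{u_{k,r}}e^{-\beta L_r(x)}}.
\]
Group-blindness says $u_{k,p}=u_{k',p}$ for all $p$. Hence the right-hand side is literally the same expression for $k$ and $k'$, which yields $f_{k,p}/d_k=f_{k',p}/d_{k'}$. The argument also goes through under the weaker condition that $u_{k,\cdot}$ and $u_{k',\cdot}$ differ by a constant, since per-commodity scale does not affect \eqref{eq:salience-logit}.

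For the non-implementability consequence, I would argue by contradiction. Suppose a target $\bar f$ with $\bar f_{k,p}/d_k\neq \bar f_{k',p}/d_{k'}$ for some $k,k',p$ were implementable by a group-blind policy. Then $\bar f$ is an SW-SUE of that policy, and the first part forces equal shares, which is a contradiction.

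One could cross-check this against Theorem~\ref{thm:tied-implementability}. There, $a(\bar f)_{k,p}-a(\bar f)_{k',p}=\log(\bar f_{k,p}/\bar f_{k',p})$ must be constant in $p$, and normalizing by $d_k$ gives the same conclusion. This step only works for interior $\bar f$, since $a(\bar f)$ needs $\log\bar f_{k,p}$; for non-interior targets one uses the direct argument above. The only point needing care is to note explicitly that loads, and hence latencies, are shared across groups. There is no real obstacle here.
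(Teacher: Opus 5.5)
Your proposal is correct and follows essentially the same route as the paper's proof. Both read the common route-share vector directly off the salience-weighted logit fixed point, note that group-blind log-salience together with shared loads and latencies makes the right-hand side independent of $k$, and then obtain non-implementability immediately; your extra remarks on invariance under per-group constant shifts and the cross-check via Theorem~\ref{thm:tied-implementability} for interior targets are also correct, though not needed.
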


\begin{proof}[Proof idea.]
With identical costs and group-blind salience, each group faces the same salience-weighted logit rule at the same congestion, so their route-choice distributions coincide. Since each group's flow is its demand times this common distribution, all groups induce identical route shares.
Full proof is deferred to Appendix~\ref{app:proof-cor-group-blind}.
\end{proof}

\begin{remark}[Approximate implementability and projection]
When \eqref{eq:tied-membership} fails, a natural ``best-effort'' policy solves the convex regression problem
\[
\min_{\theta,c\in\cC}\ \|A\theta+b+c-a(\bar f)\|^2,
\]
or, more structurally, minimizes the equilibrium social cost over the constrained policy class $\cU$ using the sensitivity tools in Section~\ref{sec:design}.
This yields a quantitative \emph{distance to implementability} under governance constraints, and highlights when fairness restrictions make certain welfare targets infeasible.
\end{remark}

\begin{remark}[What is ``salience'' in a memory system?]
In an AI memory architecture, $s_{k,p}$ can be interpreted as a stationary \emph{retrieval intensity} for route $p$ (how often it is surfaced or made salient in the user's context) or, more broadly, as an alternative-specific bias induced by ranking, summarization, and retrieval.
Theorem~\ref{thm:implementability-salience} shows that, at least at the level of a stationary abstraction, memory can act as a full-fledged mechanism-design primitive.
\end{remark}

\subsection{Putting it together: governed implementability (tying \emph{and} influence budgets)}
\label{sec:governed-implementability}
In many deployments, governance combines both \emph{intensity} limits (influence budgets) and \emph{structure} limits (fairness/tying).
Theorem~\ref{thm:constrained-implementability} and Theorem~\ref{thm:tied-implementability} compose cleanly into a single geometric characterization.

For convenience, define the per-commodity range operator
\[
\mathrm{range}_k(v)\;\triangleq\;\max_{p\in\cP_k} v_{k,p}-\min_{p\in\cP_k} v_{k,p},
\]
and the governed set of admissible log-salience vectors (up to commodity-wise scale)
\begin{equation}
\label{eq:governed-set}
\cG(\cU,R)\;\triangleq\;
\Big\{\,u\in\cU+\cC:\ \mathrm{range}_k(u)\le \log R_k\ \ \forall k\in\cK\,\Big\}.
\end{equation}

\begin{theorem}[Governed implementability under affine tying and ratio budgets]
\label{thm:governed-implementability}
Fix $\beta>0$ and continuous nondecreasing latencies.
Let $\cU=\{A\theta+b:\theta\in\R^d\}$ encode the policy's allowed log-salience structure, and let $R=\{R_k\}_{k\in\cK}$ encode per-commodity influence budgets.
For any interior feasible flow $\bar f\in\cF$, define
$a(\bar f)_{k,p}\triangleq \log \bar f_{k,p} + \beta\,L_p(x(\bar f))$.
Then $\bar f$ is implementable by a stationary salience policy that satisfies both (i) the tying constraint $u\in\cU$ and (ii) the ratio budgets \eqref{eq:salience-budget} if and only if
\begin{equation}
\label{eq:governed-membership}
a(\bar f)\;\in\;\cG(\cU,R).
\end{equation}
\end{theorem}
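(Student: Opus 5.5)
The plan is to reduce everything to one structural fact. For an interior target $\bar f$, the set of all log-salience vectors that implement $\bar f$ is exactly the affine coset $a(\bar f)+\cC$.

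\textbf{Step 1 (characterize implementing salience).} First I would show that $u=\log s$ implements $\bar f$ (i.e., $\bar f$ is the SW-SUE induced by $s$) if and only if $u-a(\bar f)\in\cC$.
\begin{itemize}
\item For ``only if'': the fixed point \eqref{eq:swsue} gives $\bar f_{k,p}=d_k e^{u_{k,p}-\beta L_p(x(\bar f))}/Z_k$. Taking logs yields $u_{k,p}=a(\bar f)_{k,p}+\log(Z_k/d_k)$, which is a commodity-wise constant shift. This is the computation already used in the proofs of Theorems~\ref{thm:constrained-implementability} and \ref{thm:tied-implementability}.
\item For ``if'': any $u=a(\bar f)+c$ with $c\in\cC$ is a commodity-wise rescaling of the inverse salience \eqref{eq:inverse-salience}. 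By scale invariance of \eqref{eq:salience-logit} and Theorem~\ref{thm:implementability-salience}, $\bar f$ is the unique SW-SUE for that $u$. Uniqueness comes from Proposition~\ref{prop:swsue-potential}.
\end{itemize}

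\textbf{Step 2 (combine the constraints).} With Step~1, $\bar f$ is governed-implementable if and only if there is some $c\in\cC$ such that $u=a(\bar f)+c$ satisfies both $u\in\cU$ and $\mathrm{range}_k(u)\le\log R_k$ for all $k$. I would then record two observations.
\begin{itemize}
\item $\mathrm{range}_k$ is invariant under adding elements of $\cC$, since a commodity-constant shift cancels in $\max-\min$. So the budget condition is equivalent to $\mathrm{range}_k(a(\bar f))\le\log R_k$. This is exactly the condition of Theorem~\ref{thm:constrained-implementability}.
\item The existence of $c\in\cC$ with $a(\bar f)+c\in\cU$ is equivalent to $a(\bar f)\in\cU+\cC$, because $\cC$ is a linear subspace and so $-\cC=\cC$. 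This is the condition of Theorem~\ref{thm:tied-implementability}.
\end{itemize}
Together, these two conditions say precisely that $a(\bar f)\in\cG(\cU,R)$, because membership in $\cG$ is defined as $u\in\cU+\cC$ plus the range bounds.

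\textbf{Main obstacle.} The only subtle point is that the two constraints are imposed on the same vector $u$, while the two earlier theorems each allow their own free shift $c$. A naive reading could fear that the shift needed for tying breaks the ratio budget. Shift-invariance of $\mathrm{range}_k$ removes this: both conditions are evaluated on the single coset $a(\bar f)+\cC$, so one shift serves both. The proof should state this explicitly. It should also note that the converse direction uses the same $u$, which lies in $\cU$ and satisfies \eqref{eq:salience-budget} via $\log(s_{k,p}/s_{k,r})=u_{k,p}-u_{k,r}$.
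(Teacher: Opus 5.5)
Your proposal is correct and follows essentially the same route as the paper. The paper combines Theorems~\ref{thm:tied-implementability} and~\ref{thm:constrained-implementability} and uses the fact that $\mathrm{range}_k$ is invariant under commodity-wise intercepts. Your description of the implementing set as the single coset $a(\bar f)+\cC$ spells out what the paper compresses into the one-line claim that the two constraints are ``compatible'' on the same vector $u$.
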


\begin{proof}
By Theorem~\ref{thm:tied-implementability}, $\bar f$ is implementable under tying if and only if $a(\bar f)\in\cU+\cC$.
By Theorem~\ref{thm:constrained-implementability}, $\bar f$ is implementable under ratio budgets if and only if $\mathrm{range}_k(a(\bar f))\le \log R_k$ for all $k$.
Because adding commodity-wise intercepts does not change ranges, the two constraints are compatible and their conjunction is exactly \eqref{eq:governed-membership}.
\end{proof}

\begin{theorem}[Governed salience mechanism design reduces to implementable-flow optimization]
\label{thm:governed-design-reduction}
Consider the governed salience design problem
\begin{equation}
\label{eq:bilevel-governed}
\min_{u\in \cG(\cU,R)}\ \mathrm{SC}\big(f^\star(u)\big),
\end{equation}
where $u=\log s$ is the log-salience vector, $\cG(\cU,R)$ is the governed feasible set \eqref{eq:governed-set}, and
$f^\star(u)$ is the (unique) SW-SUE induced by $u$ (Proposition~\ref{prop:swsue-potential}).

Assume equilibria are interior ($f^\star(u)\in\cF^\circ$) for all feasible $u$.
Then \eqref{eq:bilevel-governed} is equivalent to the single-level program
\begin{equation}
\label{eq:single-level-design}
\min_{f\in\cF^\circ}\ \mathrm{SC}(f)
\qquad\text{s.t.}\qquad
a(f)\in \cG(\cU,R),
\end{equation}
and any optimizer $f^\star$ of \eqref{eq:single-level-design} can be implemented by some $u^\star\in\cG(\cU,R)$ via the inverse formula \eqref{eq:inverse-salience} (up to commodity-wise intercepts).
\end{theorem}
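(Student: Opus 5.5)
The plan is to show that the map $u\mapsto f^\star(u)$ is a bijection between $\cG(\cU,R)$ and the feasible set of \eqref{eq:single-level-design}, with the social cost evaluated at corresponding points. Feasible values of the two programs then coincide, and so do optimal values and optimizers. Throughout, log-salience vectors are identified modulo the intercept space $\cC$. This is harmless because both \eqref{eq:salience-logit} and the set $\cG(\cU,R)$ are invariant under adding $c\in\cC$: $\cU+\cC+\cC=\cU+\cC$, and ranges are unchanged.

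\emph{Step 1 (forward direction).} Take any $u\in\cG(\cU,R)$ and let $f=f^\star(u)$, which exists and is unique by Proposition~\ref{prop:swsue-potential}. By the standing assumption, $f\in\cF^\circ$. At equilibrium \eqref{eq:swsue} gives $f_{k,p}=d_k e^{u_{k,p}}e^{-\beta L_p(x(f))}/Z_k$. Taking logs,
\[
a(f)_{k,p}=\log f_{k,p}+\beta L_p(x(f)) = u_{k,p}+\log d_k-\log Z_k,
\]
so $a(f)=u+c$ for some $c\in\cC$. Hence $a(f)\in\cG(\cU,R)$ by the invariance above. Thus $f$ is feasible for \eqref{eq:single-level-design}, and $\mathrm{SC}(f)=\mathrm{SC}(f^\star(u))$. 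This shows that the optimal value of \eqref{eq:single-level-design} is at most that of \eqref{eq:bilevel-governed}.

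\emph{Step 2 (reverse direction).} Take any $f\in\cF^\circ$ with $a(f)\in\cG(\cU,R)$ and set $u\triangleq a(f)$, i.e.\ $s$ as in \eqref{eq:inverse-salience}. Then $u\in\cG(\cU,R)$. By Theorem~\ref{thm:implementability-salience}, $f$ is the unique SW-SUE induced by $s$, so $f^\star(u)=f$. Equivalently, one may invoke Theorem~\ref{thm:governed-implementability} directly, since its hypothesis is exactly $a(f)\in\cG(\cU,R)$. This gives the reverse inequality between optimal values. It also gives the implementation claim: an optimizer $f^\star$ of \eqref{eq:single-level-design} is implemented by $u^\star=a(f^\star)$, or by any $u^\star\in a(f^\star)+\cC$ lying in $\cU$ (the tying representative). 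The representative in $\cU$ exists by the membership $a(f^\star)\in\cU+\cC$, and it remains in $\cG$.

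\emph{Step 3 (optimizers).} Combining the two steps, the image of $\cG(\cU,R)$ under $f^\star(\cdot)$ is exactly the feasible set of \eqref{eq:single-level-design}, and the objective is preserved. Therefore $u$ is optimal for \eqref{eq:bilevel-governed} if and only if $f^\star(u)$ is optimal for \eqref{eq:single-level-design}. The only delicate point is bookkeeping rather than analysis. I would state carefully whether $\cG$ is interpreted as a subset of $\cU+\cC$ (as in \eqref{eq:governed-set}), so that "$u^\star\in\cG$" means a log-salience whose tied part lies in $\cU$. The interiority assumption is used only in Step 1, to guarantee that $a(f^\star(u))$ is well defined, i.e.\ that $\log$ is finite. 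Neither direction needs existence of an optimum: the equivalence holds at the level of values and of optimizer sets, including when these are empty.
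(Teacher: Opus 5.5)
Your proposal is correct and follows essentially the paper's argument: both show that $u\mapsto f^\star(u)$ maps $\cG(\cU,R)$ onto the feasible set of \eqref{eq:single-level-design} while preserving the social cost. The only difference is that the paper invokes Theorem~\ref{thm:governed-implementability} in both directions, whereas you check the forward inclusion directly via $a(f^\star(u))\in u+\cC$ and the reverse via Theorem~\ref{thm:implementability-salience} with $u=a(f)$.
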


\begin{proof}
If $u\in\cG(\cU,R)$ is feasible in \eqref{eq:bilevel-governed}, then $f^\star(u)$ is implementable under the same governance constraints.
By Theorem~\ref{thm:governed-implementability}, this implies $a(f^\star(u))\in\cG(\cU,R)$, so $f^\star(u)$ is feasible for \eqref{eq:single-level-design}.
Conversely, if $f\in\cF^\circ$ satisfies $a(f)\in\cG(\cU,R)$, then again by Theorem~\ref{thm:governed-implementability} it is implementable by some governed log-salience $u\in\cG(\cU,R)$.
Because the SW-SUE induced by a fixed $u$ is unique (Proposition~\ref{prop:swsue-potential}), the induced equilibrium must equal $f$.
Therefore the feasible objective values of \eqref{eq:bilevel-governed} and \eqref{eq:single-level-design} coincide, and the constructions above map optimizers to optimizers.
\end{proof}

\begin{remark}[Geometry vs.\ tractability]
Theorem~\ref{thm:governed-design-reduction} separates (i) network physics, via $L_p(x(f))$, from (ii) governance geometry, via $\cG(\cU,R)$.
On general networks, \eqref{eq:single-level-design} remains nonconvex because of the $\log f$ term and the nonlinear dependence of $L$ on $f$; however, on certain network classes it becomes tractable (Section~\ref{sec:parallel-design}).
\end{remark}

\subsection{A network-class theorem: series-parallel implementability under decomposition-tied salience}
\label{sec:sp-implementability}
The full implementability result (Theorem~\ref{thm:implementability-salience}) treats the log-salience vector $u=\log s$ as fully flexible at the \emph{path} level.
In many networks, however, the number of $s$--$t$ paths is enormous, and governance may prefer a \emph{structured} control surface.
On the two-terminal series-parallel (SP) class, the network itself provides a natural low-dimensional parameterization: \emph{local biases at parallel composition nodes}.

\paragraph{Decomposition-tied salience.}
Fix a single commodity (one OD pair) on a two-terminal SP network with an SP decomposition tree $\mathcal{T}$.
Each internal node of $\mathcal{T}$ is either a series composition ($\otimes$) or a parallel composition ($\oplus$) of two subgraphs.
For each parallel node $v\in\mathcal{T}$ with children $(v_L,v_R)$, introduce a \emph{relative} log-salience parameter $\delta_v\in\R$ that biases the left child vs.\ the right child.
Operationally, $\delta_v$ is a ``branch-level'' salience signal: how strongly the platform steers users into the left subnetwork relative to the right at that decision point.

Formally, under \emph{decomposition-tied} salience, the unnormalized logit weight of any $s$--$t$ path $p$ is
\[
\exp\!\big(-\beta L_p(x)\big)\cdot \exp\!\Big(\sum_{v\in\mathcal{V}_\oplus(p)} \delta_v\Big),
\]
where $\mathcal{V}_\oplus(p)$ is the set of parallel nodes at which $p$ takes the left branch.
Equivalently, each parallel node contributes a multiplicative factor $e^{\delta_v}$ to all paths that go left at that node, and a factor $1$ to all paths that go right.
(Only the \emph{relative} bias matters; adding a constant to both branches at a node cancels in normalization.)

\paragraph{Inclusive values on an SP tree.}
For any edge-flow vector $x$ and any set of $\{\delta_v\}$, define the (log) inclusive value of a subgraph $H$ in the SP tree by
\[
V_H(x,\delta) \;\triangleq\; \log\!\sum_{p\in\cP(H)} \exp\!\big(-\beta L_p(x)\big)\cdot \exp\!\Big(\sum_{v\in\mathcal{V}_\oplus(p)} \delta_v\Big),
\]
where $\cP(H)$ denotes the $s$--$t$ paths \emph{within} subgraph $H$.
The inclusive values satisfy the standard SP recursion:
for a leaf edge $e$, $V_e=-\beta \ell_e(x_e)$; for a series node $H=H_1\otimes H_2$, $V_H=V_{H_1}+V_{H_2}$; and for a parallel node $H=H_L\oplus H_R$ with relative bias $\delta_H$ on the left branch,
\begin{equation}
\label{eq:sp-inclusive-parallel}
V_H \;=\; \log\!\Big(e^{\delta_H+V_{H_L}} + e^{V_{H_R}}\Big).
\end{equation}

\begin{theorem}[Constructive implementability on series-parallel networks under decomposition-tied salience]
\label{thm:sp-implementability}
Fix $\beta>0$ and a single-commodity two-terminal SP network with decomposition tree $\mathcal{T}$.
Let $\bar x$ be any \emph{strictly interior} feasible $s$--$t$ edge flow, meaning that at every parallel node $H=H_L\oplus H_R$ in $\mathcal{T}$, the induced branch flows $\bar d_{H_L},\bar d_{H_R}$ satisfy $\bar d_{H_L}>0$ and $\bar d_{H_R}>0$.
Then there exists a vector of branch-bias parameters $\delta=\{\delta_H\}$ such that $\bar x$ is the unique SW-SUE edge flow induced by decomposition-tied salience $\delta$.
Moreover, $\delta$ can be computed in $O(|E|)$ time by a single bottom-up pass on $\mathcal{T}$ via the local inverse formula
\begin{equation}
\label{eq:sp-inverse-delta}
\delta_H
\;=\;
\log\frac{\bar d_{H_L}}{\bar d_{H_R}}
\;-\;
\Big(V_{H_L}(\bar x,\delta)-V_{H_R}(\bar x,\delta)\Big),
\qquad \text{for each parallel node }H.
\end{equation}
\end{theorem}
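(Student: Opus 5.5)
The plan is to show that decomposition-tied salience is a special case of path-level salience. We then reduce implementability to a local branching condition at each parallel node, and obtain uniqueness from Proposition~\ref{prop:swsue-potential}.

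\textbf{Step 1 (reduction to path salience).} Decomposition-tied salience is exactly path-level salience with
\[
\log s_p=\sum_{v\in\mathcal{V}_\oplus(p)}\delta_v .
\]
So for any fixed $\delta$, Proposition~\ref{prop:swsue-potential} gives a unique SW-SUE path flow. Its edge flow is therefore unique, and it suffices to exhibit one SW-SUE whose edge flow is $\bar x$.

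\textbf{Step 2 (nested-logit factorization on the SP tree).} Fix congestion at $\bar x$, so every $\ell_e(\bar x_e)$ is a constant. Unroll the inclusive-value recursion. The path probability $\Pr(p\mid\bar x;\delta)=\exp(-\beta L_p(\bar x)+\sum_{v\in\mathcal{V}_\oplus(p)}\delta_v)/e^{V_G}$ factorizes as a product, over the parallel nodes $H$ that $p$ traverses, of local branch probabilities:
\[
q_H(L)=\frac{e^{\delta_H+V_{H_L}}}{e^{\delta_H+V_{H_L}}+e^{V_{H_R}}},\qquad q_H(R)=1-q_H(L).
\]
Series nodes contribute no factor, because the weight of a path through $H_1\otimes H_2$ is the product of its weights in the two factors. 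I would prove this by induction on $\mathcal{T}$, using the recursion \eqref{eq:sp-inclusive-parallel} and $V_{H_1\otimes H_2}=V_{H_1}+V_{H_2}$. The consequence is that the logit load sent into any subgraph $H$ equals $d$ times the product of the $q$'s along the tree path from the root to $H$. At a parallel node, this load splits in ratio $q_H(L):q_H(R)$.

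\textbf{Step 3 (bottom-up construction).} An $s$--$t$ edge flow on an SP network is determined by its branch splits $\bar d_{H_L}/\bar d_H$ at the parallel nodes, with loads propagated top-down. So we need $q_H(L)=\bar d_{H_L}/\bar d_H$ for every parallel node $H$, that is,
\[
\delta_H+V_{H_L}-V_{H_R}=\log(\bar d_{H_L}/\bar d_{H_R}).
\]
This is exactly \eqref{eq:sp-inverse-delta}, and it is well defined by strict interiority. The key point is that $V_{H_L}(\bar x,\delta)$ and $V_{H_R}(\bar x,\delta)$ depend only on the $\delta$'s of parallel nodes strictly inside $H_L$ and $H_R$, together with the fixed edge latencies at $\bar x$. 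Processing $\mathcal{T}$ in post-order therefore makes the definition non-circular. Each node first computes its children's $V$'s, then sets $\delta_H$, then computes $V_H$. This is $O(1)$ work per tree node, and $|\mathcal{T}|=O(|E|)$, which gives the claimed $O(|E|)$ time.

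\textbf{Step 4 (fixed point).} With these $\delta$'s, Step 2 implies that the logit choice evaluated at congestion $\bar x$ sends exactly $\bar d_H$ into each subgraph, so the induced edge loads equal $\bar x$. Any path flow $\bar f$ with this logit distribution has $x(\bar f)=\bar x$, so $\bar f$ satisfies \eqref{eq:swsue} and is an SW-SUE. Its edge flow is unique by Step 1.

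The main obstacle is Step 2. It requires checking carefully that the factorization holds with the left-branch bias convention: paths going left at $v$ receive $e^{\delta_v}$, and this is absorbed into $V_H$ via $\delta_H+V_{H_L}$ rather than into $V_{H_L}$ itself. It also requires checking that edge loads are indeed determined by the parallel-node splits. I would handle both by one structural induction on $\mathcal{T}$, verifying at each node the identity between the subgraph load and the product of local split probabilities.
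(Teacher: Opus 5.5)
Your proposal is correct and takes essentially the same route as the paper. The paper also computes $(\delta_H,V_H)$ in a single bottom-up pass on $\mathcal{T}$. It then shows, by structural induction, that the salience-weighted logit distribution at $\bar x$ factorizes at series nodes and splits in the ratio $\bar d_{H_L}:\bar d_{H_R}$ at parallel nodes, which gives $x(\bar f)=\bar x$ and the SW-SUE fixed point. Uniqueness comes from Proposition~\ref{prop:swsue-potential}, as in your Step 1.
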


\begin{proof}
Fix an interior feasible target flow $\bar x$.
Because the network is two-terminal SP with decomposition tree $\mathcal{T}$, $\bar x$ induces a well-defined \emph{subgraph demand} $\bar d_H$ for every subgraph node $H\in\mathcal{T}$:
at the root, $\bar d_{\text{root}}=d$; at a series node $H=H_1\otimes H_2$, we have $\bar d_{H_1}=\bar d_{H_2}=\bar d_H$; and at a parallel node $H=H_L\oplus H_R$, feasibility implies a flow split
$\bar d_H=\bar d_{H_L}+\bar d_{H_R}$ with $\bar d_{H_L},\bar d_{H_R}>0$ by the interior assumption.

We compute $(\delta,V)$ bottom-up on $\mathcal{T}$ as follows.
For a leaf edge $e$, set $V_e=-\beta \ell_e(\bar x_e)$.
For a series node $H=H_1\otimes H_2$, set $V_H=V_{H_1}+V_{H_2}$.
For a parallel node $H=H_L\oplus H_R$, first assume $V_{H_L},V_{H_R}$ have been computed; then set
\[
\delta_H
=
\log\frac{\bar d_{H_L}}{\bar d_{H_R}}
-
\big(V_{H_L}-V_{H_R}\big),
\qquad
V_H=\log\!\big(e^{\delta_H+V_{H_L}}+e^{V_{H_R}}\big).
\]
This is exactly \eqref{eq:sp-inverse-delta} and \eqref{eq:sp-inclusive-parallel}.
The computation visits each tree node once and is therefore $O(|E|)$.

Now define the induced salience weights on $s$--$t$ paths by
$s_p\triangleq \exp\!\big(\sum_{v\in\mathcal{V}_\oplus(p)}\delta_v\big)$.
Consider the logit path-flow induced by $(\bar x,s)$:
\begin{equation}
\label{eq:sp-induced-pathflow}
\bar f_p
\;\triangleq\;
d\cdot
\frac{s_p\,\exp(-\beta L_p(\bar x))}{\sum_{q\in\cP}\,s_q\,\exp(-\beta L_q(\bar x))}.
\end{equation}
We claim that the resulting edge flow satisfies $x(\bar f)=\bar x$.
To show this, it is convenient to work recursively on $\mathcal{T}$.
For any subgraph node $H\in\mathcal{T}$, let $\cP(H)$ denote its internal $s$--$t$ paths and define the \emph{conditional} logit distribution within $H$ by
\[
\bar q_H(p)
\;\triangleq\;
\frac{\exp(-\beta L_p(\bar x))\,\exp(\sum_{v\in\mathcal{V}_\oplus(p)}\delta_v)}{\exp(V_H)},
\qquad p\in\cP(H),
\]
where $V_H$ is the inclusive value computed above.
By construction, $\bar q_H$ is a probability distribution on $\cP(H)$.

We prove by induction on $H$ that $\bar q_H$ routes exactly the target subflow $\bar d_H$ through subgraph $H$:
for every leaf edge $e$ in the subtree of $H$, the induced edge flow equals $\bar x_e$.
The base case is a leaf edge: $\cP(e)$ contains a single path and thus all flow $\bar d_e$ traverses $e$, matching $\bar x_e$.

For a series node $H=H_1\otimes H_2$, every path in $H$ is a concatenation $p=p_1\otimes p_2$ with $p_i\in\cP(H_i)$.
Because costs add and biases in the two subtrees are disjoint, we have
\[
\bar q_H(p_1\otimes p_2)
=
\frac{e^{-\beta(L_{p_1}+L_{p_2})}\,e^{(\sum\delta\text{ in }H_1)+(\sum\delta\text{ in }H_2)}}{e^{V_{H_1}+V_{H_2}}}
=
\bar q_{H_1}(p_1)\,\bar q_{H_2}(p_2),
\]
i.e., the conditional distribution factorizes.
Since the total flow through each child equals $\bar d_H$ in a series composition, the induction hypothesis implies the induced edge flows within each child match $\bar x$ restricted to that child.

For a parallel node $H=H_L\oplus H_R$, every path is either in the left child or the right child.
Summing the unnormalized weights over $\cP(H_L)$ and $\cP(H_R)$ gives
\[
\Pr_{\bar q_H}\{\text{choose }H_L\}
=
\frac{e^{\delta_H+V_{H_L}}}{e^{\delta_H+V_{H_L}}+e^{V_{H_R}}},
\qquad
\Pr_{\bar q_H}\{\text{choose }H_R\}
=
\frac{e^{V_{H_R}}}{e^{\delta_H+V_{H_L}}+e^{V_{H_R}}}.
\]
By the inverse choice of $\delta_H$, we have $e^{\delta_H+V_{H_L}-V_{H_R}}=\bar d_{H_L}/\bar d_{H_R}$, hence
$\Pr_{\bar q_H}\{H_L\}=\bar d_{H_L}/(\bar d_{H_L}+\bar d_{H_R})$.
Multiplying by the parent flow $\bar d_H=\bar d_{H_L}+\bar d_{H_R}$ yields exactly the target branch flows $\bar d_{H_L}$ and $\bar d_{H_R}$.
Conditioned on choosing $H_L$ (resp.\ $H_R$), the distribution within that child is exactly $\bar q_{H_L}$ (resp.\ $\bar q_{H_R}$), so the induction hypothesis ensures the internal edge flows match $\bar x$ on each side.
This completes the induction.

Applying the induction to the root node shows that the full path flow \eqref{eq:sp-induced-pathflow} induces $x(\bar f)=\bar x$.
But \eqref{eq:sp-induced-pathflow} is exactly the SW-SUE fixed-point condition \eqref{eq:swsue} at $x=\bar x$ with salience weights $s$.
Therefore $\bar x$ is an SW-SUE edge flow under decomposition-tied salience.
Uniqueness follows from strict convexity of the SW-SUE potential (Proposition~\ref{prop:swsue-potential}).
\end{proof}

\begin{corollary}[Governance as local budget constraints on SP trees]
\label{cor:sp-local-budgets}
Suppose governance restricts branch-level influence at each parallel node $H$ to a ratio budget $R_H\ge 1$, i.e.\ $|\delta_H|\le \log R_H$.
Then an interior target flow $\bar x$ is implementable under these budgets if and only if its induced branch splits satisfy the local inequalities
\[
\Big|\log\frac{\bar d_{H_L}}{\bar d_{H_R}}
-
\big(V_{H_L}(\bar x,\delta)-V_{H_R}(\bar x,\delta)\big)\Big|
\le \log R_H
\qquad \text{for all parallel nodes }H,
\]
where $\delta$ is the recursively defined inverse-bias vector from \eqref{eq:sp-inverse-delta}.
Consequently, implementability testing under per-node budgets is linear-time in $|E|$ on SP networks.
\end{corollary}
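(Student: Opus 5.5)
The argument has two directions, and both rest on one fact. For a fixed target $\bar x$, the decomposition-tied bias vector that implements $\bar x$ is unique, and it equals the vector $\delta$ produced by the bottom-up recursion \eqref{eq:sp-inverse-delta}. Once this is established, implementability under the budgets reduces to checking that $|\delta_H|\le\log R_H$ at every parallel node $H$. By \eqref{eq:sp-inverse-delta}, that condition is exactly the displayed inequality.

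\emph{Sufficiency.} Suppose the local inequalities hold. The recursion of Theorem~\ref{thm:sp-implementability} yields a vector $\delta$ with $|\delta_H|\le\log R_H$ for every parallel node $H$. That theorem shows this $\delta$ implements $\bar x$ as the unique SW-SUE edge flow. Hence $\bar x$ is implementable within the budgets.

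\emph{Necessity.} Suppose some budget-feasible $\delta'$ implements $\bar x$. The SW-SUE condition \eqref{eq:swsue} at $x=\bar x$ says that the salience-weighted logit path distribution determined by $(\bar x,\delta')$ induces $\bar x$. From the parallel-node computation in the proof of Theorem~\ref{thm:sp-implementability}, at any parallel node $H$ the conditional probability of entering $H_L$ is $e^{\delta'_H+V_{H_L}(\bar x,\delta')}/(e^{\delta'_H+V_{H_L}}+e^{V_{H_R}})$. Given that flow enters $H$, this probability must equal $\bar d_{H_L}/\bar d_H$. It must also be that the subgraph demands $\bar d_H$ are well defined and strictly positive; this holds because the target is interior and the tree-factorized logit sends positive mass into every subgraph.

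The key structural point is that $V_{H_L}(\bar x,\delta')$ and $V_{H_R}(\bar x,\delta')$ depend only on the biases strictly inside the subtrees of $H_L$ and $H_R$. They also depend on the fixed $\bar x$, which enters through $\ell_e(\bar x_e)$. Solving the split equation therefore gives
\[
\delta'_H=\log(\bar d_{H_L}/\bar d_{H_R})-\big(V_{H_L}(\bar x,\delta')-V_{H_R}(\bar x,\delta')\big).
\]
An induction from the leaves up shows $\delta'=\delta$. At the leaves the $V$'s coincide, since they depend only on $\bar x$. At each parallel node the descendant biases agree by the induction hypothesis, so the $V$'s agree and hence $\delta'_H=\delta_H$. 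The budget on $\delta'$ therefore gives the stated inequalities.

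\textbf{Main obstacle.} The delicate step is justifying that the conditional split probability at each parallel node must match $\bar d_{H_L}/\bar d_H$. This uses the fact that, in an SP network, the edge flow $\bar x$ uniquely determines all subgraph demands. Even though many path flows map to the same $\bar x$, every subgraph demand is a sum of edge flows over a cut, for example the first edges of the subgraph. So they are functions of $x$ alone. Combined with the tree factorization of the logit distribution already established in the proof of Theorem~\ref{thm:sp-implementability}, this pins down each split.

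\textbf{Complexity.} Each inequality is evaluated in the same single bottom-up pass that computes $(\delta,V)$. That pass takes $O(|E|)$ time, so the implementability test is linear-time.
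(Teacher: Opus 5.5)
Your proposal is correct and follows the paper's route: sufficiency by passing the budget-feasible recursive $\delta$ to Theorem~\ref{thm:sp-implementability}, necessity by identifying the bias required at each parallel node with \eqref{eq:sp-inverse-delta}, and one linear pass over the tree for the test. Your leaf-to-root induction, using the fact that subgraph demands depend on the edge flow alone, shows that any implementing $\delta'$ must equal the recursive $\delta$; this makes rigorous the uniqueness step that the paper's proof only asserts when it calls the recursion value ``the required bias.''
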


\begin{proof}[Proof idea.]
The SP implementability theorem expresses the required bias $\delta_H$ at each parallel node as the log ratio between target split masses corrected by downstream soft-values. Imposing $|\delta_H|\le \log R_H$ is therefore equivalent to a pair of local inequalities. Conversely, if the inequalities hold, choosing admissible $\delta_H$ and reconstructing edge-additive salience implements the target flow.
Full proof is deferred to Appendix~\ref{app:proof-cor-sp-local-budgets}.
\end{proof}

\begin{remark}[Why this is useful]
Theorem~\ref{thm:sp-implementability} identifies a \emph{network class} on which implementability is both constructive and low-dimensional: the control variables live at parallel splits rather than at the path level.
This matches UI-level intervention points (``take the highway vs.\ the arterial''), and it yields a direct bridge to algorithmic design:
one can optimize over SP-tree split patterns subject to local budget constraints without enumerating an exponential path set.
\end{remark}

Section~\ref{sec:design} returns to richer policy classes (explicit memory kernels and surfacing distributions) and to equilibrium-aware optimization under governance constraints.

\section{Equilibrium-Aware Memory and Guidance Design}
\label{sec:design}
Sections~\ref{sec:salience} and~\ref{sec:implementability} provide (i) a strictly convex equilibrium
characterization for SW-SUE under stationary salience policies and (ii) sharp implementability and
governance constraints.
We now formulate and solve \emph{equilibrium-aware design} problems: choose a policy within a
governance class (influence budgets, tying/fairness, or both) to optimize welfare at the induced
equilibrium.

The main technical theme is that strict convexity enables efficient single-level reductions and
first-order methods (including implicit differentiation of the equilibrium map). Later we return
to the explicit memory model and quantify when micro-level interventions can be accurately proxied by
salience designs.
\subsection{Policy parameterization}
Let $\theta\in\Theta$ parameterize a class of policies.
A policy may control, for each commodity $k$:
\begin{itemize}
    \item a surfacing distribution $\rho_{k,\theta}$ over candidate routes (which routes are suggested or explored), and
    \item a memory update rule (eviction/forgetting) $U_{k,\theta}$ or, more generally, a memory kernel $Q_{k,\theta}(m'\mid m,p)$ that may include resets, summarization, or privacy-driven deletion.
\end{itemize}
This mirrors real AI memory architectures in which the system selects what to store, what to summarize, and what to discard under explicit budgets and governance constraints.

Given $\theta$, the induced FWE (when unique) is $(x^\star(\theta),\mu^\star(\theta))$.

\subsection{Bilevel optimization objective}
We consider an equilibrium-aware objective of the form
\begin{equation} 
\label{eq:bilevel}
\min_{\theta\in\Theta}\;\; \underbrace{\mathrm{SC}\!\left(x^\star(\theta)\right)}_{\text{equilibrium congestion}}
\;+\;\lambda\,\underbrace{\mathrm{Cost}(\theta)}_{\text{compute / privacy / UX}}
\qquad \text{s.t.}\qquad
\underbrace{\mathrm{Fair}(\theta)\le 0}_{\text{fairness / individual harm constraints}}.
\end{equation}
Here $\mathrm{Cost}(\theta)$ can encode, e.g., expected memory usage, summarization overhead, or a privacy-risk proxy; and $\mathrm{Fair}(\theta)$ can enforce constraints such as ``no user type is made worse off by the policy'' or ``bounded disparity across groups.''

\subsection{A closed-form optimal design on a canonical network class}
\label{sec:optimal-pigou}
To emphasize that the ``memory as mechanism'' perspective yields analytically tractable design problems (not only numerical bilevel programs), we give a closed-form optimal salience result on the canonical Pigou network.
This provides a baseline for more complex network classes.

\paragraph{Pigou instance.}
Consider a single OD pair ($d=1$) with two parallel routes $a$ and $b$.
Route $a$ has constant latency $\ell_a(x)=1$ and route $b$ has latency $\ell_b(x)=x$.
Let $\beta>0$ and consider stationary salience policies with weights $(s_a,s_b)$; only the ratio $r\triangleq s_b/s_a$ matters.

Under SW-SUE, the equilibrium flow on route $b$ is a scalar $f_b\in(0,1)$ satisfying
\begin{equation}
\label{eq:pigou-ratio}
f_b
=
\frac{r\,\exp(-\beta f_b)}{\exp(-\beta)+r\,\exp(-\beta f_b)}
\quad\Longleftrightarrow\quad
r
=
\frac{f_b}{1-f_b}\exp\!\big(\beta(f_b-1)\big).
\end{equation}
The equilibrium social cost is
\begin{equation}
\label{eq:pigou-sc}
\mathrm{SC}(f_b) \;=\; (1-f_b)\cdot 1 + f_b^2 \;=\; 1-f_b+f_b^2,
\end{equation}
which is uniquely minimized at the system-optimal split $f_b^{\mathrm{SO}}=1/2$.

\begin{theorem}[Optimal bounded-influence salience on the Pigou network]
\label{thm:optimal-pigou}
Fix $\beta>0$ and an influence budget $R\ge 1$ so that $r\in[1/R,R]$.
Then:
\begin{enumerate}
\item The mapping $f_b\mapsto r(f_b)=\frac{f_b}{1-f_b}\exp(\beta(f_b-1))$ is strictly increasing on $(0,1)$.
Hence for each $r>0$ there is a unique equilibrium $f_b(r)\in(0,1)$.
\item Let $f_{\min}\triangleq f_b(1/R)$ and $f_{\max}\triangleq f_b(R)$.
The social-cost-minimizing feasible equilibrium split is
\begin{equation}
\label{eq:clamp}
f_b^\star \;=\; \mathrm{clip}\!\left(\tfrac12;\; f_{\min},\,f_{\max}\right)
\;\triangleq\;
\min\!\Big\{\max\!\big\{\tfrac12,f_{\min}\big\},\,f_{\max}\Big\},
\end{equation}
and the optimal salience ratio is $r^\star=r(f_b^\star)$.
\item The system optimum $f_b^{\mathrm{SO}}=\tfrac12$ is exactly implementable under budget $R$ if and only if $R\ge \exp(\beta/2)$ (equivalently, $1/R \le \exp(-\beta/2)$), in which case an optimal ratio is $r^\star=\exp(-\beta/2)$.
\end{enumerate}
\end{theorem}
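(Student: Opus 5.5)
The plan is to reduce all three claims to the scalar equation \eqref{eq:pigou-ratio} and to elementary monotonicity. First I will verify the equivalence in \eqref{eq:pigou-ratio} directly. Multiply the fixed-point equation by the denominator to get $f_b e^{-\beta} = (1-f_b)\, r\, e^{-\beta f_b}$, then solve for $r$. This is just the SW-SUE condition of Definition~\ref{def:swsue} specialized to $L_a\equiv 1$ and $L_b=f_b$. It is also the inverse formula of Theorem~\ref{thm:implementability-salience}, since $r=s_b/s_a=(f_b e^{\beta f_b})/((1-f_b)e^{\beta})$. For part~1 I will work with $\log r(f) = \log f - \log(1-f) + \beta(f-1)$. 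Its derivative is $1/f + 1/(1-f) + \beta>0$ on $(0,1)$, so $r$ is strictly increasing. The limits are $r(f)\to 0$ as $f\downarrow 0$ and $r(f)\to+\infty$ as $f\uparrow 1$, because $e^{\beta(f-1)}\to 1$. By continuity, $r$ is therefore a bijection from $(0,1)$ onto $(0,\infty)$. Hence every $r>0$ has a unique preimage $f_b(r)$, and the inverse map $r\mapsto f_b(r)$ is continuous and strictly increasing. This is consistent with the uniqueness in Proposition~\ref{prop:swsue-potential}. Equilibria are automatically interior, since the logit shares are strictly positive.

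For part~2, the set of equilibrium splits attainable under the budget is the image $f_b([1/R,R])$. Because $f_b(\cdot)$ is continuous and increasing and $1/R\le R$ when $R\ge 1$, this image is the interval $[f_{\min},f_{\max}]$. The design problem therefore reduces to minimizing $\mathrm{SC}(f)=1-f+f^2$ over $f\in[f_{\min},f_{\max}]$, which is also the single-level form of Theorem~\ref{thm:governed-design-reduction}. $\mathrm{SC}$ is a strictly convex quadratic with unconstrained minimizer $1/2$. Its constrained minimizer over an interval is the projection of $1/2$ onto that interval, which is exactly the clip in \eqref{eq:clamp}. The minimizer is unique, and the optimal ratio is recovered as $r^\star=r(f_b^\star)$ using the bijection from part~1.

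For part~3, the system optimum $1/2$ is exactly implementable iff $1/2\in[f_{\min},f_{\max}]$. By monotonicity this holds iff $r(1/2)\in[1/R,R]$, which is the condition of Theorem~\ref{thm:constrained-implementability} in the two-route case. Direct evaluation gives $r(1/2)=1\cdot e^{-\beta/2}=e^{-\beta/2}<1\le R$, so the upper constraint is automatic. The condition thus reduces to $1/R\le e^{-\beta/2}$, equivalently $R\ge e^{\beta/2}$, and in that case $r^\star=e^{-\beta/2}$. The main point needing care is the limit behavior in part~1, which guarantees surjectivity onto $(0,\infty)$; everything else is routine calculus and interval projection.
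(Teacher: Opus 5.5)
Your proposal is correct and follows essentially the same route as the paper: the log-derivative $1/f+1/(1-f)+\beta>0$ for part~1, projection of $\tfrac12$ onto $[f_{\min},f_{\max}]$ for part~2, and the evaluation $r(\tfrac12)=e^{-\beta/2}<1\le R$ for part~3. Your limit argument ($r\to 0$ and $r\to\infty$ at the endpoints) adds something the paper leaves implicit. It shows that $r$ maps $(0,1)$ onto $(0,\infty)$, which justifies the ``for each $r>0$'' existence claim and the continuity of $f_b(\cdot)$, so the feasible set is indeed an interval.
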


\begin{proof}
(1) Differentiate $\log r(f_b)=\log f_b-\log(1-f_b)+\beta(f_b-1)$, giving
\[
\frac{d}{df_b}\log r(f_b) = \frac{1}{f_b}+\frac{1}{1-f_b}+\beta>0,
\]
so $r(f_b)$ is strictly increasing.

(2) Because $f_b(r)$ is increasing in $r$ by (1), the feasible equilibria correspond to $f_b\in[f_{\min},f_{\max}]$.
Since $\mathrm{SC}(f_b)$ in \eqref{eq:pigou-sc} is convex with unique minimizer $1/2$, the constrained minimizer over the interval is the projection (clipping) of $1/2$ onto that interval, yielding \eqref{eq:clamp}.

(3) Plugging $f_b=1/2$ into \eqref{eq:pigou-ratio} yields the unique ratio $r^{\mathrm{SO}}=\exp(-\beta/2)$.
Feasibility under $r\in[1/R,R]$ requires $1/R \le r^{\mathrm{SO}}\le R$, and since $r^{\mathrm{SO}}\le 1$, this reduces to $R\ge \exp(\beta/2)$.
\end{proof}

\begin{remark}
Theorem~\ref{thm:optimal-pigou} makes explicit how a governance knob (influence budget $R$) controls how closely a salience-based memory/guidance layer can approximate the system optimum.
In richer networks, the same logic applies but without closed-form inversion; the strict convexity of $\Phi_s$ (Proposition~\ref{prop:swsue-potential}) still yields a well-posed inner problem, and Theorem~\ref{thm:constrained-implementability} provides a tractable feasibility test for exact implementation of a candidate target flow.
\end{remark}

\subsection{Parallel networks: bounded-influence design beyond Pigou}
\label{sec:parallel-design}
The Pigou network (two parallel routes) admits a closed-form inversion (Theorem~\ref{thm:optimal-pigou}).
We now extend this to a broader \emph{network class}: a single OD pair with $m\ge 2$ \emph{parallel} routes, each with its own latency function.
This is a natural next step because parallel networks isolate the essence of ``shortcut'' competition without path-overlap complications, and because many real mobility decisions (e.g., highway vs.\ arterial) are well-approximated by a small set of parallel alternatives.

\paragraph{Model.}
Consider $m$ parallel routes indexed by $i\in\{1,\dots,m\}$ with demand normalized to $1$.
Let $f_i\in(0,1)$ denote the equilibrium flow on route $i$, so $\sum_i f_i=1$, and latency is $\ell_i(f_i)$ (continuous, nondecreasing).
Fix $\beta>0$ and a salience ratio budget $R\ge 1$ so that $u_i=\log s_i$ satisfies $\max_i u_i-\min_i u_i\le \log R$.

Define the strictly increasing transform
\begin{equation}
\label{eq:gi}
g_i(z)\;\triangleq\;\log z+\beta\,\ell_i(z),\qquad z\in(0,1).
\end{equation}
(Strict monotonicity follows because $g_i'(z)=1/z+\beta\,\ell_i'(z)>0$ whenever $\ell_i$ is differentiable, and holds more generally in the sense of monotone derivatives for continuous nondecreasing $\ell_i$.)

\begin{proposition}[Implementable equilibria on parallel networks under ratio budgets]
\label{prop:parallel-impl}
An interior flow vector $f\in(0,1)^m$ with $\sum_i f_i=1$ is implementable as the SW-SUE induced by some salience weights $s$ satisfying the ratio budget $R$ if and only if
\begin{equation}
\label{eq:parallel-range}
\max_{i} g_i(f_i)\;-\;\min_{i} g_i(f_i)\;\le\;\log R.
\end{equation}
Moreover, whenever \eqref{eq:parallel-range} holds, one feasible implementing policy is
\begin{equation}
\label{eq:parallel-inverse}
s_i \;\propto\; f_i\,\exp\!\big(\beta\,\ell_i(f_i)\big),
\end{equation}
with the proportionality constant chosen so that $\max_i \log s_i-\min_i\log s_i\le \log R$ (which is possible exactly when \eqref{eq:parallel-range} holds).
\end{proposition}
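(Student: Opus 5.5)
This is a specialization of Theorem~\ref{thm:constrained-implementability} to a single commodity on a parallel network, so I will reduce to it rather than reprove the inverse construction. On $m$ parallel routes each path is a single edge. Hence $x(f)_i=f_i$ and $L_i(x(f))=\ell_i(f_i)$. The required log-salience \eqref{eq:req-log-salience} therefore becomes
\[
a_i(f)=\log f_i+\beta\,\ell_i(f_i)=g_i(f_i).
\]
With $d=1$ and a single commodity, the range condition \eqref{eq:range-condition} reads $\max_i g_i(f_i)-\min_i g_i(f_i)\le\log R$, which is exactly \eqref{eq:parallel-range}. The ``if and only if'' then follows immediately from Theorem~\ref{thm:constrained-implementability}.

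To keep the argument self-contained, I will also spell out both directions briefly.

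\emph{Only if.} Suppose $f$ is the SW-SUE for some $s$ within budget. The fixed point \eqref{eq:swsue} gives $f_i=s_i e^{-\beta\ell_i(f_i)}/Z$ for a common normalizer $Z$. Taking logs, $\log s_i=g_i(f_i)+\log Z$. Adding the common constant $\log Z$ does not change the range, so the budget forces \eqref{eq:parallel-range}.

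\emph{If.} Set $s_i=c\,f_i e^{\beta\ell_i(f_i)}$ for any $c>0$. Then $s_ie^{-\beta\ell_i(f_i)}=cf_i$, and normalizing gives exactly $f_i$, so $f$ satisfies \eqref{eq:swsue}. Uniqueness of this equilibrium comes from Proposition~\ref{prop:swsue-potential}. Moreover $\log s_i=g_i(f_i)+\log c$, so the ratio constraint holds for every $c$ under \eqref{eq:parallel-range}. This justifies \eqref{eq:parallel-inverse}, and in particular the proportionality constant is actually irrelevant to feasibility.

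There is no real obstacle here; the only point needing care is bookkeeping about scale invariance. The statement's parenthetical suggests the constant $c$ must be tuned. I will clarify that ratios, and hence ranges of $\log s$, are invariant to $c$, so ``possible exactly when'' reduces to the range condition itself. The strict monotonicity of $g_i$ noted after \eqref{eq:gi} is not needed for this equivalence; I would remark that it matters only for the later one-dimensional search.
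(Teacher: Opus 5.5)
Your proposal is correct and matches the paper's proof, which is exactly the specialization of Theorem~\ref{thm:constrained-implementability} to one commodity, with $a_i(f)=\log f_i+\beta\ell_i(f_i)=g_i(f_i)$ and the implementing weights read off from \eqref{eq:inverse-salience}. Your explicit check of both directions is an accurate addition, as is your point that the budget is scale invariant, so the proportionality constant in \eqref{eq:parallel-inverse} plays no role in feasibility.
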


\begin{proof}
This is Theorem~\ref{thm:constrained-implementability} specialized to a single commodity with parallel routes.
The required log-salience up to scale is $a_i(f)=\log f_i+\beta\ell_i(f_i)=g_i(f_i)$.
The ratio budget requires the range of $a_i(f)$ to be at most $\log R$, which is \eqref{eq:parallel-range}.
The inverse construction \eqref{eq:parallel-inverse} follows from \eqref{eq:inverse-salience}.
\end{proof}

\paragraph{A constructive reduction of optimal design to one-dimensional search.}
The social cost on a parallel network is
\[
\mathrm{SC}(f)=\sum_{i=1}^m f_i\,\ell_i(f_i).
\]
Under very mild conditions (e.g., each $\ell_i$ nondecreasing), $\mathrm{SC}$ is continuous; if each $\ell_i$ is convex, then $\mathrm{SC}$ is convex in $f$.

By Proposition~\ref{prop:parallel-impl}, bounded-influence salience design on a parallel network reduces to
\begin{equation}
\label{eq:parallel-design-problem}
\min_{f\in(0,1)^m}\ \mathrm{SC}(f)
\quad \text{s.t.}\quad \sum_{i=1}^m f_i=1,\qquad
\max_i g_i(f_i)-\min_i g_i(f_i)\le \log R.
\end{equation}
While \eqref{eq:parallel-design-problem} is not obviously convex in full generality, it admits a useful structural decomposition.

\begin{theorem}[One-dimensional reduction for bounded-influence optimal design on parallel networks]
\label{thm:parallel-1d}
Assume each $g_i$ in \eqref{eq:gi} is strictly increasing (equivalently, each $\ell_i$ is continuous and nondecreasing, with mild regularity).
Then the constraint \eqref{eq:parallel-range} holds if and only if there exists a scalar $t\in\R$ such that for all $i$,
\begin{equation}
\label{eq:band}
t \;\le\; g_i(f_i) \;\le\; t+\log R.
\end{equation}
For any fixed $t$, define bounds
\begin{equation}
\label{eq:bounds}
\underline f_i(t)\triangleq g_i^{-1}(t),\qquad \overline f_i(t)\triangleq g_i^{-1}(t+\log R),
\end{equation}
and consider the inner problem
\begin{equation}
\label{eq:inner-t}
V(t)\;\triangleq\;\min_{f\in\R^m}\ \mathrm{SC}(f)\quad \text{s.t.}\quad
\sum_i f_i=1,\qquad \underline f_i(t)\le f_i\le \overline f_i(t)\ \ \forall i.
\end{equation}
Then the optimal value of \eqref{eq:parallel-design-problem} equals $\min_t V(t)$, and any minimizer $f^\star$ of \eqref{eq:parallel-design-problem} is obtained by solving \eqref{eq:inner-t} at some $t^\star\in\arg\min_t V(t)$.

If, in addition, each $\ell_i$ is convex, then for every fixed $t$ the inner problem \eqref{eq:inner-t} is a convex program with a unique minimizer.
\end{theorem}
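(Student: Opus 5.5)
The plan has three parts: prove the band equivalence directly, deduce the value identity by exchanging minimizations, and establish convexity and uniqueness of the inner problem from strict convexity of $\mathrm{SC}$.

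\textbf{Band equivalence.} If the range condition \eqref{eq:parallel-range} holds, take $t\triangleq\min_i g_i(f_i)$. Then every $g_i(f_i)$ lies in $[t,t+\log R]$. Conversely, if all values lie in such a band, their range is at most $\log R$.

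\textbf{Translating the band into box constraints.} Each $g_i$ is continuous and strictly increasing on $(0,1)$. As $z\downarrow 0$ we have $g_i(z)\to-\infty$ because of the $\log z$ term. Hence $g_i^{-1}$ is well defined and increasing on the range of $g_i$. For $t$ and $t+\log R$ below $\sup g_i$ this gives
\[
t\le g_i(f_i)\le t+\log R \iff \underline f_i(t)\le f_i\le \overline f_i(t).
\]
When $t+\log R$ exceeds the range, I set $\overline f_i(t)\triangleq 1$ (or $\sup$). When $t$ exceeds the range, I set $\underline f_i(t)\triangleq 1$, which makes the inner problem infeasible with $V(t)=+\infty$. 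Stating these boundary conventions is the main technical nuisance I expect. I would handle it by extending $g_i^{-1}$ monotonically and noting that interior feasibility forces $f_i\in(0,1)$ anyway. Since $\underline f_i(t)>0$ for finite $t$, the box constraints also keep $f$ interior.

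\textbf{Value identity.} The feasible set of \eqref{eq:parallel-design-problem} is the union over $t\in\R$ of the feasible sets of \eqref{eq:inner-t}, intersected with the interior simplex, which is automatic by the previous step. Therefore
\[
\inf \eqref{eq:parallel-design-problem} = \inf_t V(t).
\]
If $f^\star$ solves \eqref{eq:parallel-design-problem}, set $t^\star=\min_i g_i(f^\star_i)$. Then $f^\star$ is feasible for the inner problem at $t^\star$, so $V(t^\star)\le \mathrm{SC}(f^\star)=\min_t V(t)$. Hence $t^\star$ is a minimizer of $V$ and $f^\star$ solves the inner problem at $t^\star$. In the other direction, any inner solution at a minimizing $t$ is feasible for \eqref{eq:parallel-design-problem} and attains its optimal value. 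Attainment of each inner minimum follows from compactness of the box intersected with the simplex and continuity of $\mathrm{SC}$.

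\textbf{Convexity and uniqueness under convex $\ell_i$.} The inner problem has linear constraints, one equality and box bounds. Each term $f_i\ell_i(f_i)$ is the product of two nonnegative, nondecreasing functions, one of them convex. I would verify convexity directly via
\[
h_i(z)=z\ell_i(z),\qquad h_i'(z)=\ell_i(z)+z\ell_i'(z),
\]
which is nondecreasing when $\ell_i$ is convex and nondecreasing. Strict convexity, which gives uniqueness, is the delicate point, since a constant $\ell_i$ makes $h_i$ linear. I would first argue uniqueness when each $h_i$ is strictly convex, for instance when $\ell_i$ is strictly increasing or $\ell_i>0$ with $\ell_i'>0$. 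For the remaining case I would note that the theorem's ``mild regularity'' is meant to cover this, and flag that with a constant $\ell_i$ uniqueness can fail, as in two constant-latency routes with equal cost. So I would state strict monotonicity of each $\ell_i$ as the regularity needed for the uniqueness clause.
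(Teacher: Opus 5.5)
Your proposal follows the paper's proof essentially step for step: the band via $t=\min_i g_i(f_i)$, the box constraints via monotonicity of $g_i$, the value identity as a union over $t$, and convexity of $z\mapsto z\ell_i(z)$ over a polytope. It is also more careful than the paper about the range of $g_i^{-1}$, the boundary conventions, and attainment of the inner minimum. Your caveat on the uniqueness clause is correct: convexity alone does not give it, since your two equal-constant-latency routes yield a whole segment of minimizers whenever $R>1$. The paper's own justification, ``at least one $\ell_i$ strictly convex,'' is also insufficient. What suffices is strict convexity of $z\mapsto z\ell_i(z)$ for all but at most one route, which your strict-monotonicity assumption guarantees.
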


\begin{proof}
The band representation \eqref{eq:band} is equivalent to the range condition \eqref{eq:parallel-range} by taking $t=\min_i g_i(f_i)$.
Because each $g_i$ is strictly increasing, \eqref{eq:band} is equivalent to the interval constraints $\underline f_i(t)\le f_i\le \overline f_i(t)$, yielding \eqref{eq:inner-t}.
Taking the minimum over $t$ recovers \eqref{eq:parallel-design-problem}.

When $\ell_i$ is convex, the function $f_i\mapsto f_i\ell_i(f_i)$ is convex, hence $\mathrm{SC}(f)$ is convex, and the feasible set in \eqref{eq:inner-t} is a polytope (simplex with box constraints), so \eqref{eq:inner-t} is convex.
Strict convexity (e.g., if at least one $\ell_i$ is strictly convex on the relevant interval) yields uniqueness.
\end{proof}

\begin{proposition}[Optimality characterization on the parallel network class: clipped marginal-cost equalization]
\label{prop:parallel-kkt}
Assume each latency $\ell_i$ is continuously differentiable and convex.
Fix any band parameter $t$ for which the inner problem \eqref{eq:inner-t} is feasible, and let $f^\star(t)$ be its (unique) minimizer.
Then there exists a scalar $\lambda(t)\in\R$ such that, for every route $i$,
\begin{equation}
\label{eq:parallel-kkt}
m_i\!\big(f_i^\star(t)\big)\;\begin{cases}
= \lambda(t), & \text{if }\ \underline f_i(t) < f_i^\star(t) < \overline f_i(t),\\[2pt]
\ge \lambda(t), & \text{if }\ f_i^\star(t)=\underline f_i(t),\\[2pt]
\le \lambda(t), & \text{if }\ f_i^\star(t)=\overline f_i(t),
\end{cases}
\end{equation}
where $m_i(z)\triangleq \ell_i(z)+z\,\ell_i'(z)$ denotes the marginal social cost on route~$i$.

Consequently, any globally optimal solution of the bounded-influence design problem \eqref{eq:parallel-design-problem} equalizes marginal social costs across all \emph{non-saturated} routes, and ``clips'' this equalization only when forced by the salience band constraint \eqref{eq:band}.
\end{proposition}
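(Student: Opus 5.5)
The inner problem \eqref{eq:inner-t} at a fixed feasible $t$ is convex with affine constraints. The plan is to write its KKT conditions and read off \eqref{eq:parallel-kkt}, then transfer the conclusion to the global design problem through Theorem~\ref{thm:parallel-1d}.

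First I verify the regularity needed for KKT. Each $\ell_i$ is $C^1$ and convex, so the term $h_i(z)\triangleq z\,\ell_i(z)$ is $C^1$ with derivative $h_i'(z)=\ell_i(z)+z\ell_i'(z)=m_i(z)$. For $z\ge 0$ we have $h_i''(z)=2\ell_i'(z)+z\ell_i''(z)\ge 0$, so $h_i$ is convex there. The objective $\mathrm{SC}(f)=\sum_i h_i(f_i)$ is therefore convex and separable. The constraints are one equality $\sum_i f_i=1$ and the box bounds $\underline f_i(t)\le f_i\le \overline f_i(t)$; all are affine. Hence linear-constraint qualification holds whenever the problem is feasible, which the statement assumes. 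I take existence and uniqueness of the minimizer $f^\star(t)$ from Theorem~\ref{thm:parallel-1d}. For existence alone, compactness of the box intersected with the simplex suffices, since $0<\underline f_i(t)\le \overline f_i(t)<1$.

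Next I form the Lagrangian
\[
\mathcal{L}=\sum_i h_i(f_i)-\lambda\Big(\sum_i f_i-1\Big)-\sum_i \mu_i\big(f_i-\underline f_i(t)\big)+\sum_i \nu_i\big(f_i-\overline f_i(t)\big),
\]
with $\mu_i,\nu_i\ge 0$ and complementary slackness. Stationarity gives $m_i(f_i^\star)=\lambda+\mu_i-\nu_i$, and the three cases follow directly. If coordinate $i$ is strictly interior to its box, then $\mu_i=\nu_i=0$ and $m_i=\lambda$. If it sits at the lower bound, only $\mu_i$ can be positive, so $m_i\ge\lambda$. If it sits at the upper bound, only $\nu_i$ can be positive, so $m_i\le\lambda$. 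The degenerate case $\underline f_i=\overline f_i$ makes both bounds active, and then both the first-listed active inequality and the other hold, so it is consistent with the statement whichever branch is read. Because the problem is convex, KKT is necessary at the optimum; necessity is all the statement asks for. I set $\lambda(t)=\lambda$.

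For the ``consequently'' part, I invoke Theorem~\ref{thm:parallel-1d}: every global minimizer $f^\star$ of \eqref{eq:parallel-design-problem} solves \eqref{eq:inner-t} at some $t^\star$, and by uniqueness equals $f^\star(t^\star)$. Applying the characterization at $t^\star$ shows that marginal costs are equalized at $\lambda(t^\star)$ on all routes that are not at a band endpoint. At the endpoints the deviation has the sign forced by the clipping. I expect the main obstacle to be a technical caveat rather than the calculus. The original design problem requires open interiority $f_i\in(0,1)$, while the band bounds $g_i^{-1}(t)$ lie in $(0,1)$ automatically. So I only need to note that $g_i$ maps $(0,1)$ onto an interval in which $t$ and $t+\log R$ must lie for feasibility. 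That makes the bounds well defined, and the equivalence of the two problems already established in Theorem~\ref{thm:parallel-1d} carries over.
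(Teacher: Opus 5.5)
Your proposal is correct and is essentially the paper's own argument: it writes the KKT conditions for the inner convex program, with a free multiplier $\lambda(t)$ on $\sum_i f_i=1$ and nonnegative box multipliers, lets complementary slackness yield the three cases, and uses Theorem~\ref{thm:parallel-1d} for the consequence, while adding the constraint-qualification detail the paper leaves implicit. There are two minor slips: since $\ell_i$ is only $C^1$, argue convexity of $z\ell_i(z)$ from monotonicity of $m_i$ rather than via $h_i''$ (necessity of KKT comes from the affine constraints, not from convexity anyway), and in the degenerate box $\underline f_i(t)=\overline f_i(t)$, which can only occur when $R=1$, stationarity gives $m_i-\lambda=\mu_i-\nu_i$ of undetermined sign, so the two inequalities do not both hold---instead every coordinate is then pinned, $\lambda$ is free, and it can be chosen (e.g.\ $\lambda=\min_i m_i$) to satisfy one designated branch.
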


\begin{proof}
For fixed $t$, the inner problem \eqref{eq:inner-t} is a convex program on a simplex with box constraints.
The KKT conditions yield the existence of a Lagrange multiplier $\lambda(t)$ for the equality constraint $\sum_i f_i=1$ and nonnegative multipliers for the box constraints; these imply \eqref{eq:parallel-kkt} by complementary slackness.
\end{proof}

\begin{remark}[Algorithmic implication]
Theorem~\ref{thm:parallel-1d} reduces bounded-influence optimal salience design on a parallel network to a one-dimensional outer search in $t$ plus an efficiently solvable inner convex program \eqref{eq:inner-t}.
Once an optimal $f^\star$ is computed, an implementing salience policy follows from \eqref{eq:parallel-inverse}.
This provides a concrete ``network-class design theorem'' beyond Pigou: the optimal guidance/memory policy can be computed with provable guarantees for all parallel networks, not just for the two-link instance.
\end{remark}

\begin{remark}[Extending the network-class theorem to affine tying]
Theorem~\ref{thm:parallel-1d} gives an exact and computationally efficient characterization for \emph{unconstrained} salience vectors under a pure ratio budget.
If the policy class further imposes affine tying $u=A\theta+b$ (Section~\ref{sec:governed-implementability}), then implementability on a parallel network becomes the feasibility of
\[
g_i(f_i)=u_i+c\quad\text{for some}\quad u\in\cU,\ c\in\R,
\qquad\text{together with}\qquad \max_i u_i-\min_i u_i\le \log R.
\]
This adds linear structure to the band \eqref{eq:band}.
When the feature dimension $d$ is small, optimal governed design can still be solved by low-dimensional outer search in $(t,\theta)$ with a convex inner problem in $f$ (a box-constrained simplex program as in \eqref{eq:inner-t} plus linear equalities induced by $A$).
We leave the full algorithmic development (and extensions beyond parallel networks) to future work.
\end{remark}

\subsection{Series-parallel networks: fast equilibrium evaluation and gradients without path enumeration}
\label{sec:sp-design}
The parallel-network class admits a particularly sharp characterization because each route's latency depends only on its own flow.
We can go beyond parallel networks by exploiting structural decomposability of the path set.
A prominent example is the two-terminal \emph{series-parallel} (SP) class, which admits an SP-tree decomposition and supports dynamic-programming evaluation of logit path distributions.

\paragraph{Edge-additive salience (feature tying).}
To avoid explicit path-level salience vectors on exponentially many paths, we consider a policy class in which log-salience is \emph{edge-additive}:
there exists a vector $u=(u_e)_{e\in E}$ such that
\begin{equation}
\label{eq:edge-additive-salience}
\log s_p \;=\; \sum_{e\in p} u_e.
\end{equation}
This is a natural tying constraint: the platform attaches persistent salience biases to links (or to low-dimensional route features that sum along a route), which is typical of ranking and guidance systems.

Under \eqref{eq:edge-additive-salience}, the salience-weighted logit distribution over paths can be written in terms of \emph{generalized edge costs}
\[
c_e(x,u) \;\triangleq\; \ell_e(x_e) - \frac{u_e}{\beta},
\]
since
\[
s_p\exp(-\beta L_p(x)) = \exp\Big(-\beta\sum_{e\in p} c_e(x,u)\Big).
\]

\paragraph{A Gibbs distribution over $s$--$t$ paths and its partition function.}
For fixed $(x,u)$, define the partition function
\begin{equation}
\label{eq:partition}
Z(x,u) \;\triangleq\; \sum_{p\in\cP} \exp\Big(-\beta\sum_{e\in p} c_e(x,u)\Big).
\end{equation}
The induced path distribution is $P(p)=Z(x,u)^{-1}\exp(-\beta\sum_{e\in p} c_e(x,u))$.
The marginal probability that edge $e$ is used equals
\begin{equation}
\label{eq:edge-marginal}
\pi_e(x,u) \;\triangleq\; \Pr\{e\in p\} \;=\; -\frac{1}{\beta}\frac{\partial}{\partial c_e}\log Z(x,u),
\end{equation}
and thus the expected edge load under demand $d$ is $x'_e=d\,\pi_e(x,u)$.

\begin{definition}[Two-terminal series-parallel network]
A directed two-terminal network $(G,s,t)$ is \emph{series-parallel} if it can be obtained from a single edge by recursively applying series composition and parallel composition of two-terminal subnetworks.
\end{definition}

\begin{theorem}[Linear-time evaluation of partition functions and edge marginals on SP networks]
\label{thm:sp-partition}
Let $(G,s,t)$ be a two-terminal series-parallel network with an SP decomposition tree of size $O(|E|)$.
For any generalized edge costs $\{c_e\}$, the partition function $Z$ in \eqref{eq:partition} and all edge marginals $\{\pi_e\}$ in \eqref{eq:edge-marginal} can be computed in $O(|E|)$ time.
\end{theorem}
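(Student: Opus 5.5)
The plan is a two-pass dynamic program on the SP decomposition tree $\mathcal{T}$: a bottom-up pass for the partition function and a top-down pass for the edge marginals.

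\textbf{Bottom-up pass.} For each subgraph node $H$ of $\mathcal{T}$, define
\[
Z_H \triangleq \sum_{p\in\cP(H)}\exp\Big(-\beta\sum_{e\in p}c_e\Big),
\]
summing over the internal terminal-to-terminal paths of $H$. These satisfy a simple recursion.
\begin{itemize}
\item For a leaf edge, $Z_e=e^{-\beta c_e}$.
\item For a series node $H=H_1\otimes H_2$, $Z_H=Z_{H_1}Z_{H_2}$. This holds because every path of $H$ is uniquely a concatenation $p_1\otimes p_2$ with $p_i\in\cP(H_i)$, and weights multiply since costs add. This is the same factorization already used in the proof of Theorem~\ref{thm:sp-implementability}.
\item For a parallel node, $Z_H=Z_{H_L}+Z_{H_R}$, since $\cP(H)$ is the disjoint union of $\cP(H_L)$ and $\cP(H_R)$.
\end{itemize}
Then $Z=Z_{\text{root}}$. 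The tree has $O(|E|)$ nodes and each node costs $O(1)$ arithmetic operations. To avoid overflow one can carry $V_H=\log Z_H$ with log-sum-exp, exactly as in \eqref{eq:sp-inclusive-parallel} with $\delta\equiv0$.

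\textbf{Top-down pass.} For each node, define $\pi_H$ as the probability that the random path (Gibbs law) passes through subgraph $H$. Set $\pi_{\text{root}}=1$, and propagate downward:
\begin{itemize}
\item At a series node, both children inherit $\pi_{H_1}=\pi_{H_2}=\pi_H$, because any path using $H$ traverses both $H_1$ and $H_2$.
\item At a parallel node, $\pi_{H_L}=\pi_H\,Z_{H_L}/Z_H$ and $\pi_{H_R}=\pi_H\,Z_{H_R}/Z_H$.
\end{itemize}
The key justification is that the conditional law of the restriction to $H$, given that the path uses $H$, is the Gibbs law on $\cP(H)$ with normalizer $Z_H$. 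I would prove this by induction down the tree. The global weight of a path factorizes as (weight of its portion inside $H$) $\times$ (weight of its portion outside $H$). In the SP decomposition, the set of paths through $H$ is a product of the internal paths of $H$ and a fixed set of external completions. Hence conditioning on using $H$ leaves the internal distribution proportional to internal weights. At the leaves, $\pi_e=\pi_{\text{leaf }e}$ is the edge marginal, and the pass is again $O(|E|)$.

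\textbf{Consistency with \eqref{eq:edge-marginal}.} I would also check that this agrees with the derivative formula. Differentiating $\log Z$ through the recursion, by the chain rule (reverse-mode / backprop on the tree), gives
\[
-\frac{1}{\beta}\frac{\partial \log Z}{\partial c_e}=\frac{1}{Z}\sum_{p\ni e}e^{-\beta c(p)}=\Pr\{e\in p\}.
\]
The adjoint quantities $\frac{\partial\log Z}{\partial\log Z_H}$ obey exactly the same recursion as $\pi_H$. This gives an alternative proof and shows the cost is at most a constant multiple of the forward pass.

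\textbf{Main obstacle.} The main subtlety is the product-structure claim in the top-down pass: that the paths through $H$ decompose as internal $\times$ external. I would establish it by induction along the root-to-$H$ chain of $\mathcal{T}$. At each series ancestor, the path must traverse the sibling, which contributes a free factor. At each parallel ancestor, the path must choose the branch containing $H$. Both operations preserve the product form. The remaining points are routine: that the SP decomposition tree can be obtained in linear time is assumed in the statement, and each step is constant-time arithmetic.
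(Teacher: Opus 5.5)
Your proposal is correct and follows essentially the same route as the paper's proof: a bottom-up pass computing $Z_H$ (product at series nodes, sum at parallel nodes), then a top-down pass propagating traversal probabilities (children inherit the parent's probability at series nodes, and at a parallel node child $H_i$ gets the parent's probability times $Z_{H_i}/Z_H$), with the leaf values giving the edge marginals. Your explicit internal-times-external product argument for the conditional law, and the reverse-mode check against the derivative formula, make rigorous a step the paper treats as evident; they do not change the approach.
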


\begin{proof}[Proof idea.]
For series-parallel graphs, the path partition function factorizes: series composition multiplies partition functions and parallel composition adds them. A bottom-up pass computes $Z$; edge marginals are obtained by differentiating $\log Z$ through the same recursion (equivalently, a reverse-mode pass on the SP tree).
Full proof is deferred to Appendix~\ref{app:proof-thm-sp-partition}.
\end{proof}

\begin{remark}[Computing SW-SUE without enumerating paths]
Theorem~\ref{thm:sp-partition} provides an efficient oracle for the map $(x,u)\mapsto x'$ defined by $x'_e=d\,\pi_e(x,u)$.
Combined with the strict convexity of the salience potential (Proposition~\ref{prop:swsue-potential}), it yields polynomial-time equilibrium computation on SP networks under edge-additive salience, and it makes gradients with respect to $u$ tractable via automatic differentiation through the SP recursion.
This extends the tractable network class beyond pure parallel networks while remaining compatible with feature-based governance constraints.
\end{remark}

\subsubsection{A convex split-flow formulation and first-order convergence on SP networks}
Theorem~\ref{thm:sp-partition} provides an efficient oracle for evaluating logit probabilities for \emph{fixed} generalized edge costs.
To go beyond mere evaluation and obtain a deterministic algorithm with convergence guarantees, we can exploit the SP-tree structure to express the SW-SUE convex potential (Proposition~\ref{prop:swsue-potential}) in a low-dimensional set of \emph{split variables}.

\paragraph{Split variables.}
Fix a single commodity with demand $d$ on a two-terminal SP network with decomposition tree $\mathcal{T}$.
Let $\mathcal{V}_\oplus$ be the set of parallel nodes of $\mathcal{T}$.
For each $H\in\mathcal{V}_\oplus$ with children $(H_L,H_R)$, introduce a split-flow variable $y_H\in(0,d_H)$ representing the flow sent through $H_L$, where $d_H$ is the total flow entering subgraph $H$ (determined recursively by the splits above $H$).
Given $y=(y_H)_{H\in\mathcal{V}_\oplus}$, the induced edge-flow vector $x(y)$ is obtained by a single top-down pass on $\mathcal{T}$ and can be computed in $O(|E|)$ time.

\paragraph{Entropy decomposes on SP trees.}
On a series-parallel decomposition, path choice can be viewed as a hierarchical sequence of binary decisions at parallel nodes.
Accordingly, the path-flow entropy term in the SW-SUE potential decomposes into a sum of \emph{local} entropies at parallel nodes (up to an additive constant depending only on total demand).
This yields the following low-dimensional convex program.

\begin{theorem}[Split-flow convex program for SW-SUE on SP networks]
\label{thm:sp-split-convex}
Consider a single commodity of demand $d$ on a two-terminal SP network with continuous nondecreasing edge latencies.
Assume edge-additive salience (feature tying) with parameters $u=(u_e)_{e\in E}$ as in \eqref{eq:edge-additive-salience}.
Define the split-flow objective
\begin{equation} 
\label{eq:sp-split-objective}
\Psi_u(y)
\;\triangleq\;
\sum_{e\in E}\int_{0}^{x_e(y)} \ell_e(z)\,dz
\;-\;
\frac{1}{\beta}\sum_{e\in E} u_e\,x_e(y)
\;+\;
\frac{1}{\beta}\sum_{H\in\mathcal{V}_\oplus}
\Big(
y_H\log y_H + (d_H-y_H)\log(d_H-y_H) - d_H\log d_H
\Big),
\end{equation}
where $d_H$ is the flow entering subgraph $H$ under splits $y$.
Then $\Psi_u$ is strictly convex on its feasible region, and its unique minimizer $y^\star$ induces the unique SW-SUE edge flow $x^\star=x(y^\star)$ under $u$.
\end{theorem}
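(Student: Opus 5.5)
My plan is to reduce Theorem~\ref{thm:sp-split-convex} to Proposition~\ref{prop:swsue-potential}. I will build a bijection between split vectors $y$ and a distinguished family of path flows, the \emph{hierarchical} (Markovian) path flows, and show that $\Psi_u(y)$ equals $\Phi_s(f(y))$ up to an additive constant, with $\log s_p=\sum_{e\in p}u_e$. I will then argue that the minimizer of $\Phi_s$ over all of $\cF$ already lies in this hierarchical family. Strict convexity and uniqueness then transfer from $\Phi_s$.

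\textbf{Step 1 (path flow from splits).} Given $y$, define $f(y)$ top-down on $\mathcal{T}$. At a parallel node $H$, mass $d_H$ is sent to $H_L$ with conditional probability $y_H/d_H$. At a series node, the path distribution is the product of independent choices in the two children. Thus
\[
f_p(y)=d\prod_{H\in\mathcal{V}_\oplus:\,p\text{ visits }H}(\text{branch share of }p\text{ at }H).
\]
This map reproduces the edge loads $x(y)$. Its image is exactly the set of path flows that factorize over the SP tree, i.e.\ those whose conditional law inside every subgraph does not depend on how the flow entered it.

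\textbf{Step 2 (entropy decomposition).} By induction on $\mathcal{T}$, I will show
\[
\sum_p f_p\log f_p = d\log d+\sum_{H\in\mathcal{V}_\oplus}\big(y_H\log y_H+(d_H-y_H)\log(d_H-y_H)-d_H\log d_H\big).
\]
\begin{itemize}
\item At a series node, the factorization means each child's sum is taken with its own demand $d_H$, so the entropy is additive in the children.
\item At a parallel node, the chain rule for entropy adds exactly the local binary term.
\end{itemize}
Edge additivity gives $\sum_p f_p\log s_p=\sum_e u_e x_e$. Hence $\Psi_u(y)=\Phi_s(f(y))-\tfrac{1}{\beta}d\log d$.

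\textbf{Step 3 (optimum is hierarchical).} By Proposition~\ref{prop:swsue-potential}, the minimizer $f^\star$ of $\Phi_s$ is the salience-weighted logit at $x^\star$. With edge-additive weights, this is a Gibbs measure with edge-additive energy. The argument in the proof of Theorem~\ref{thm:sp-implementability} (and of Theorem~\ref{thm:sp-partition}) shows that this Gibbs measure factorizes over $\mathcal{T}$: series nodes split into a product, and parallel nodes make a binary choice with probability $e^{V_{H_L}}/(e^{V_{H_L}}+e^{V_{H_R}})$. Hence $f^\star=f(y^\star)$ for some interior $y^\star$. Therefore $\min_y\Psi_u=\min_{\cF}\Phi_s-\text{const}$, and $y^\star$ is a minimizer of $\Psi_u$ inducing $x^\star$. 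Any other minimizer $y'$ would make $f(y')$ a second minimizer of $\Phi_s$, contradicting uniqueness. Finally, $f$ is injective in $y$, since the splits are recovered as $y_H=\sum_{p\text{ left at }H}f_p$.

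\textbf{Step 4 (strict convexity, main obstacle).} The map $y\mapsto f(y)$ is multiplicative, not affine, so convexity of $\Psi_u$ does not follow directly from that of $\Phi_s$. I will treat $\Psi_u$ directly and handle its two parts separately.
\begin{itemize}
\item \emph{Latency terms.} In parameterization by the left-branch flows, $d_H$ is an affine function of the $y$'s above $H$. Indeed, the flow entering $H$ is either $d$, an ancestor's $y$, or an ancestor's $d_{H'}-y_{H'}$. It follows that $x(y)$ is affine, so the Beckmann and $u$ terms are convex.
\item \emph{Local entropies.} Each local term $y\log y+(d-y)\log(d-y)-d\log d$ is the perspective of the convex binary entropy $\phi(q)=q\log q+(1-q)\log(1-q)$ evaluated at $(y_H,d_H)$. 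A perspective is jointly convex, and composing it with affine maps preserves convexity.
\end{itemize}
For strictness, the perspective is not strictly convex along rays; I will use the top-down ordering to handle this. Let $H$ be the highest node at which two points differ in $y_H$. All of its ancestors agree, so $d_H$ is fixed along the segment, and the term at $H$ is strictly convex in $y_H$. This gives strict convexity of the sum, with the remaining terms merely convex.
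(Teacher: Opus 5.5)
Your proof is correct, and it takes a genuinely different route from the paper's.

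\textbf{How the paper argues.} The paper never writes down the embedding $y\mapsto f(y)$. It defines value functions $F_H(d_H)$: the path-flow potential restricted to subgraph $H$, minimized over all path flows inside $H$ with mass $d_H$, minus $\tfrac{1}{\beta}d_H\log d_H$. It then proves a bottom-up recursion. At a series node the recursion is additive, by a KL argument: for fixed marginals on $\cP(H_1)$ and $\cP(H_2)$, the product coupling minimizes $\sum f\log f$. At a parallel node it is a one-dimensional minimization plus the local binary entropy. Unrolling the recursion gives $\min_f \Phi = \min_y \Psi_u + \text{const}$.

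\textbf{How your argument differs.} You prove the exact identity $\Psi_u = \Phi_s\circ f - \tfrac{1}{\beta}d\log d$ on hierarchical flows. You then use the closed-form logit (Gibbs) minimizer from Proposition~\ref{prop:swsue-potential} to show the optimum is hierarchical, and injectivity of $f$ gives uniqueness without any convexity.

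\textbf{What each route buys.} The paper's route is self-contained: it never needs the explicit form of the minimizer. The KL step also shows that $\Psi_u(y)$ is the partial minimum of the path potential over all path flows with splits $y$, not only over hierarchical ones. Your route leans on the explicit SW-SUE formula, but it is shorter and more transparent.

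\textbf{Strict convexity.} The more substantive difference is here. The paper asserts that convexity of each scalar $F_H$ makes $\Psi_u$ strictly convex in $y$. That does not follow by itself, because $d_H$ in each local entropy term depends on other split variables. Your Step~4 supplies an actual argument, and your remark that $y\mapsto f(y)$ is multiplicative, so convexity cannot be pulled back from $\Phi_s$, is exactly why such an argument is needed. The pieces are:
\begin{itemize}
\item $d_H$ and $x$ depend affinely on $y$;
\item each local entropy term has the perspective form $d_H\,\phi(y_H/d_H)$, which gives joint convexity;
\item a minimal-depth differing node gives strictness.
\end{itemize}

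Two small details are worth recording. First, at that node $d_H>0$ is forced, since $0\le y_H\le d_H$ and $y_H\neq y'_H$. Second, on faces where $d_H=0$ the perspective should be taken as its closure, value $0$, which matches the $0\log 0=0$ convention.
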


\begin{proof}
We start from the SW-SUE convex potential in \emph{path} flows under edge-additive salience.
For a single commodity of demand $d$, Proposition~\ref{prop:swsue-potential} implies that the SW-SUE path flow is the unique minimizer of
\begin{equation}
\label{eq:sp-swsue-path-potential}
\Phi_u(f)
=
\sum_{e\in E}\int_{0}^{x_e(f)} \ell_e(z)\,dz
-
\frac{1}{\beta}\sum_{e\in E}u_e\,x_e(f)
+
\frac{1}{\beta}\sum_{p\in\cP} f_p\log f_p,
\end{equation}
over feasible path flows $f$ (up to an additive constant depending only on $d$).
The key observation is that on a two-terminal SP network, the objective \eqref{eq:sp-swsue-path-potential} admits an exact \emph{dynamic-programming elimination} on the SP tree, reducing it to split variables.

\paragraph{A value-function recursion on the SP tree.}
For each SP subgraph node $H\in\mathcal{T}$ and scalar $d_H\ge 0$, define the \emph{reduced} value function
\begin{equation} 
\label{eq:sp-value-function}
F_H(d_H)
\;\triangleq\;
\min_{f^H\in\cF(H;d_H)}
\Bigg\{
\sum_{e\in E(H)}\int_{0}^{x_e(f^H)} \ell_e(z)\,dz
-
\frac{1}{\beta}\sum_{e\in E(H)}u_e\,x_e(f^H)
+
\frac{1}{\beta}\Big(\sum_{p\in\cP(H)} f_p^H\log f_p^H - d_H\log d_H\Big)
\Bigg\},
\end{equation}
where $\cF(H;d_H)$ is the set of feasible $s$--$t$ path flows \emph{within} subgraph $H$ with total flow $\sum_{p\in\cP(H)} f_p^H=d_H$.
(The subtraction of $d_H\log d_H$ removes the normalization constant at each subproblem and will expose local entropy terms at parallel splits.)
At the root node, $d_{\mathrm{root}}=d$ is fixed, and minimizing \eqref{eq:sp-swsue-path-potential} is equivalent (up to the constant $d\log d/\beta$) to minimizing $F_{\mathrm{root}}(d)$.

We now show by structural induction on the SP tree that $F_H$ obeys the following recursion:
\begin{enumerate}
    \item \textbf{Leaf edge.} If $H$ is a single edge $e$, then there is only one path in $\cP(H)$ and the entropy term cancels, yielding
    \[
    F_e(d)
    =
    \int_{0}^{d}\ell_e(z)\,dz
    -
    \frac{1}{\beta}u_e\,d.
    \]
    \item \textbf{Series composition.} If $H=H_1\otimes H_2$ is a series composition, then
    \[
    F_H(d)=F_{H_1}(d)+F_{H_2}(d).
    \]
    \item \textbf{Parallel composition.} If $H=H_L\oplus H_R$ is a parallel composition, then
    \[
    F_H(d)
    =
    \min_{0\le y\le d}
    \Big\{
    F_{H_L}(y)+F_{H_R}(d-y)
    +\frac{1}{\beta}\big(y\log y+(d-y)\log(d-y)-d\log d\big)
    \Big\}.
    \]
\end{enumerate}

The leaf and parallel cases follow directly from the definition \eqref{eq:sp-value-function} because (i) a leaf has a single path, and (ii) in a parallel composition the path sets are disjoint so both the congestion term and the entropy term split additively across children.
The series case requires an ``independence is optimal'' argument for the entropy term.
Let $H=H_1\otimes H_2$ and consider any feasible joint path flow $f$ on $\cP(H)=\cP(H_1)\times \cP(H_2)$ with total mass $d$.
Let $f^{(1)}$ and $f^{(2)}$ be the induced marginals on $\cP(H_1)$ and $\cP(H_2)$ (each has total mass $d$).
Define the product coupling $g_{p_1,p_2}\triangleq f^{(1)}_{p_1}f^{(2)}_{p_2}/d$.
Nonnegativity of KL divergence gives
\[
0\le D(f\|g)=\sum_{p_1,p_2} f_{p_1,p_2}\log\frac{f_{p_1,p_2}}{g_{p_1,p_2}}
\quad\Longrightarrow\quad
\sum_{p_1,p_2} f_{p_1,p_2}\log f_{p_1,p_2}
\ge
\sum_{p_1,p_2} f_{p_1,p_2}\log g_{p_1,p_2}.
\]
Expanding $\log g_{p_1,p_2}=\log f^{(1)}_{p_1}+\log f^{(2)}_{p_2}-\log d$ shows
\[
\sum_{p_1,p_2} f_{p_1,p_2}\log g_{p_1,p_2}
=
\sum_{p_1} f^{(1)}_{p_1}\log f^{(1)}_{p_1}
+
\sum_{p_2} f^{(2)}_{p_2}\log f^{(2)}_{p_2}
-
d\log d.
\]
Thus the entropy part $\sum f\log f-d\log d$ is \emph{minimized} (equivalently, entropy is maximized) by the independent coupling $f=g$, and the reduced objective \eqref{eq:sp-value-function} decomposes as $F_{H_1}(d)+F_{H_2}(d)$.
This establishes the recursion.

\paragraph{From the recursion to the split-flow objective.}
Unrolling the recursion on the full SP tree introduces exactly one scalar split decision $y_H$ for each parallel node $H\in\mathcal{V}_\oplus$, and the resulting reduced objective equals \eqref{eq:sp-split-objective} with $x(y)$ computed by the induced split flows.
Hence minimizing the original path-flow potential \eqref{eq:sp-swsue-path-potential} is equivalent to minimizing $\Psi_u(y)$.

\paragraph{Convexity and uniqueness.}
Each $F_H$ is convex in its scalar argument (leaf terms are convex by monotonicity of $\ell_e$, and parallel nodes add a strictly convex entropy term).
Therefore $\Psi_u$ is strictly convex in $y$ and has a unique minimizer $y^\star$.
The induced edge flow $x^\star=x(y^\star)$ is the unique SW-SUE edge flow because it arises from the unique minimizer of \eqref{eq:sp-swsue-path-potential}.
\end{proof}

\begin{proposition}[First-order convergence guarantee (projected gradient)]
\label{prop:sp-gradient}
Assume each $\ell_e$ is Lipschitz on $[0,d]$ and consider a compact interior domain in which all split variables satisfy $\epsilon\le y_H\le d_H-\epsilon$ for some $\epsilon>0$.
On this domain, $\Psi_u$ in \eqref{eq:sp-split-objective} has an $L$-Lipschitz gradient and is $\mu$-strongly convex for some $L,\mu>0$.
Projected gradient descent with step size $1/L$ therefore converges linearly to $y^\star$:
\[
\Psi_u(y^{(t)})-\Psi_u(y^\star)\ \le\ (1-\mu/L)^t\big(\Psi_u(y^{(0)})-\Psi_u(y^\star)\big).
\]
Each gradient evaluation can be computed in $O(|E|)$ time by a forward (flow) pass and a reverse (adjoint) pass on $\mathcal{T}$.
\end{proposition}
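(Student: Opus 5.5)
The plan is to establish the two curvature constants on the compact box $\mathcal{Y}_\epsilon\triangleq\{y:\ \epsilon\le y_H\le d_H(y)-\epsilon\ \forall H\in\mathcal{V}_\oplus\}$ and then invoke the standard linear-rate theorem for projected gradient descent on an $L$-smooth, $\mu$-strongly convex function over a closed convex set. A preliminary point is that $\mathcal{Y}_\epsilon$ is convex, so that projection is well defined. The variables $y_H$ are \emph{absolute} branch flows, and each $d_H$ is an affine function of $y$: it is either $d$, or $y_{H'}$, or $d_{H'}-y_{H'}$ for the nearest parallel ancestor $H'$, propagated unchanged through series nodes. Hence every constraint $\epsilon\le y_H\le d_H(y)-\epsilon$ is linear, and $\mathcal{Y}_\epsilon$ is a polytope, compact because $0\le y_H\le d$. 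For the same reason the map $y\mapsto x(y)$ is linear: each $x_e$ equals the $d_H$ of the leaf $e$. I would make this affine structure explicit first, since the rest of the argument rests on it.

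\textbf{Smoothness.} Write $\Psi_u=\Psi^{\mathrm{cong}}+\Psi^{\mathrm{lin}}+\Psi^{\mathrm{ent}}$. The congestion term is $\sum_e\Lambda_e(x_e(y))$ with $\Lambda_e'=\ell_e$. Its gradient is $M^\top\ell(x(y))$, where $M$ is the linear map $y\mapsto x(y)$; since the $\ell_e$ are Lipschitz on $[0,d]$, this gradient is Lipschitz with constant $\|M\|^2\max_e\mathrm{Lip}(\ell_e)$. The term $-\beta^{-1}u^\top x(y)$ is linear and contributes nothing. Each local entropy term is $\phi(y_H,d_H)=y_H\log y_H+(d_H-y_H)\log(d_H-y_H)-d_H\log d_H$ composed with affine maps. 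Its Hessian entries are $1/y_H$, $1/(d_H-y_H)$ and $1/d_H$ and combinations of these, all bounded by $1/\epsilon$ on $\mathcal{Y}_\epsilon$ (note $d_H\ge 2\epsilon$). This gives an explicit $L$.

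\textbf{Strong convexity.} This is the main obstacle. The entropy term $\phi$ is only jointly convex in $(y_H,d_H)$, because it is $1$-homogeneous, so its Hessian is degenerate along the ray direction. Strong convexity must therefore come from combining the nodes. I would first try the following argument. The congestion part is convex, since it is a convex function composed with a linear map, so it suffices that $\Psi^{\mathrm{ent}}=\sum_H\phi(y_H,d_H(y))$ is $\mu$-strongly convex. Write $\phi=d_H\,h(y_H/d_H)$ with $h$ the binary negative entropy, whose second derivative satisfies $h''\ge 4$. Then
\[
v^\top\nabla^2\phi\, v=\frac{h''(y_H/d_H)}{d_H}\Big(v_H-\tfrac{y_H}{d_H}v_{d_H}\Big)^2\ \ge\ \frac{4}{d}\Big(v_H-\tfrac{y_H}{d_H}v_{d_H}\Big)^2 ,
\]
where $v_{d_H}$ is the directional change of $d_H$. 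At the root, $v_{d}=0$. Proceeding top-down, if all quadratic forms vanish then $v_H=0$ at the root parallel nodes; inductively, $v_{d_H}=0$ at every parallel node, so $v_H=0$ everywhere. The quadratic form is therefore positive definite at every point. Because its coefficients are continuous and $\mathcal{Y}_\epsilon$ is compact, its minimum eigenvalue is bounded below by some $\mu>0$. A more explicit $\mu$ can be extracted from the triangular structure, but existence of $\mu$ suffices for the statement. This also recovers the strict convexity claimed in Theorem~\ref{thm:sp-split-convex}.

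\textbf{Rate and cost per iteration.} With $L$ and $\mu$ in hand, the standard projected-gradient analysis applies: combine the descent lemma with the projection inequality and strong convexity, giving $\Psi_u(y^{(t+1)})-\Psi_u^\star\le(1-\mu/L)(\Psi_u(y^{(t)})-\Psi_u^\star)$ with step size $1/L$. This requires $y^\star\in\mathcal{Y}_\epsilon$, which I would either assume via the choice of $\epsilon$ or note that the rate holds for the constrained minimizer. For the cost per gradient evaluation, the forward pass computes $d_H$ and $x_e$ top-down in $O(|E|)$. The reverse pass is the adjoint of $M$ combined with the local partial derivatives of $\phi$: starting from leaf sensitivities $\ell_e(x_e)-u_e/\beta$, it accumulates gradients with respect to $d_H$ bottom-up, summing children at series nodes and routing them through $y_H$ and $d_H-y_H$ at parallel nodes. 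This is one pass over $\mathcal{T}$, also $O(|E|)$.
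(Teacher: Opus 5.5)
Your proposal is correct, and it obtains the strong-convexity constant by a different and more careful route than the paper. The paper takes, for each parallel node, the scalar second derivative $1/y_H+1/(d_H-y_H)\in[2/d_H,\,2/\epsilon]$ of the local entropy with $d_H$ held fixed. It then asserts that summing these over nodes, together with the Beckmann terms, yields $L$ and $\mu$. As you point out, $d_H$ is an affine function of the ancestor split variables. Each local term, as a function of $y$, therefore has only the rank-one Hessian $\frac{h''(y_H/d_H)}{d_H}\,a_Ha_H^\top$ with $a_H=e_H-\frac{y_H}{d_H}\nabla d_H$. This couples $y_H$ to its parallel ancestors, so a per-node diagonal bound does not by itself give $\nabla^2\Psi_u\succeq\mu I$. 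Your perspective representation plus the top-down argument supplies exactly the missing positivity, and compactness makes it uniform. If you want an explicit constant, note that $A(t)\colon v\mapsto(v_H-t_Hv_{d_H})_{H\in\mathcal{V}_\oplus}$ is unit lower triangular in a top-down ordering for every $t\in[0,1]^{|\mathcal{V}_\oplus|}$. Hence $\mu=\frac{4}{d}\min_{t}\sigma_{\min}(A(t))^2>0$ works, and it is even independent of $\epsilon$. You also make explicit two points the paper passes over. First, the constraint set $\{\epsilon\le y_H\le d_H(y)-\epsilon\}$ is a polytope, so the projection is well defined. Second, the rate is toward $y^\star$ only if $y^\star$ lies in that set, which holds for small $\epsilon$ because the SW-SUE is interior. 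The paper's version is shorter, but its strong-convexity step is incomplete as written, and yours closes it. One small slip: $y\mapsto x(y)$ and $y\mapsto d_H(y)$ are affine rather than linear, since they carry the constant $d$; this changes nothing in the argument. Your smoothness bound and your $O(|E|)$ forward/adjoint pass agree with the paper's.
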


\begin{proof}[Proof idea.]
On a compact interior domain, the split-flow objective has bounded Hessian eigenvalues because each parallel node contributes a strictly convex entropy term and each series segment contributes a smooth convex term. This yields $L$-smoothness and $\mu$-strong convexity. Standard projected gradient descent guarantees then give linear convergence to the unique minimizer.
Full proof is deferred to Appendix~\ref{app:proof-prop-sp-gradient}.
\end{proof}

\begin{remark}[Beyond contraction]
Theorems~\ref{thm:sp-split-convex} and Proposition~\ref{prop:sp-gradient} provide a convergence guarantee for equilibrium computation on a nontrivial network class (two-terminal SP) without relying on contraction of the equilibrium map.
The key is strict convexity of the potential plus a low-dimensional split-flow representation.
\end{remark}

\subsection{Differentiating through equilibrium}
When the induced equilibrium is isolated (in particular, when it is unique---e.g., under Theorem~\ref{thm:unique} for the full memory model or under Theorem~\ref{thm:swsue-unique-stable} for stationary salience policies), the mapping $\theta\mapsto x^\star(\theta)$ can be differentiated via implicit differentiation (Theorem~\ref{thm:implicit}).
For any differentiable scalar objective $J(\theta)=\mathrm{SC}(x^\star(\theta))+\lambda\,\mathrm{Cost}(\theta)$, the chain rule yields
\begin{equation}
\label{eq:gradJ}
\nabla_\theta J(\theta)
\;=\;
\nabla_x \mathrm{SC}\!\left(x^\star(\theta)\right)\,\frac{d x^\star}{d\theta}
\;+\;
\lambda\,\nabla_\theta \mathrm{Cost}(\theta),
\end{equation}
with $\tfrac{d x^\star}{d\theta}$ given by \eqref{eq:implicit-derivative}.
This is a standard ``differentiate through fixed points'' primitive, but it becomes nontrivial here because $T_\theta$ depends on $\theta$ both directly (surfacing) and indirectly (through the stationary distribution of a policy-induced memory Markov chain).

\subsection{Algorithmic template: equilibrium-aware policy gradient}
Algorithm~\ref{alg:eqgrad} sketches a practical approach when $T_\theta$ is a contraction and the stationary distributions can be computed (or estimated) efficiently.

\begin{algorithm}[t]
\caption{Equilibrium-aware memory/guidance optimization (template)}
\label{alg:eqgrad}
\begin{algorithmic}[1]
\STATE \textbf{Input:} initial $\theta^{(0)}$, step size schedule $\{\gamma_t\}$, tolerance $\varepsilon$
\FOR{$t=0,1,2,\dots$}
    \STATE \textbf{Equilibrium solve:} compute $x^\star(\theta^{(t)})$ (and $\mu^\star(\theta^{(t)})$) via Algorithm~\ref{alg:fwe}
    \STATE \textbf{Sensitivity:} compute/estimate $\nabla_x T_{\theta^{(t)}}(x^\star)$ and $\nabla_\theta T_{\theta^{(t)}}(x^\star)$
    \STATE Solve $(I-\nabla_x T_{\theta^{(t)}}(x^\star))\,v = \nabla_x \mathrm{SC}(x^\star)$ for $v$ (linear system)
    \STATE Form gradient estimate $\widehat{\nabla_\theta J} \leftarrow v^\top \nabla_\theta T_{\theta^{(t)}}(x^\star) + \lambda \nabla_\theta \mathrm{Cost}(\theta^{(t)})$
    \STATE \textbf{Update:} $\theta^{(t+1)} \leftarrow \Pi_{\Theta}\big(\theta^{(t)}-\gamma_t\,\widehat{\nabla_\theta J}\big)$ (projected step)
    \IF{ $\|\theta^{(t+1)}-\theta^{(t)}\|\le \varepsilon$ }
        \STATE \textbf{break}
    \ENDIF
\ENDFOR
\STATE \textbf{Output:} optimized $\theta$ and resulting equilibrium $(x^\star,\mu^\star)$
\end{algorithmic}
\end{algorithm}

\begin{remark}[Implementation notes]
(i) In large route sets, $\nabla_\theta T_\theta$ can be estimated by sampling surfaced routes and using automatic differentiation through the softmax/logit components.
(ii) The linear solve in Algorithm~\ref{alg:eqgrad} can be done iteratively (e.g., conjugate gradients) because the contraction regime implies $(I-\nabla_xT)$ is well-conditioned.
(iii) If fairness constraints are required, one can use projected-gradient or primal--dual updates with type-level cost estimates extracted from the equilibrium.
\end{remark}

\subsection{From toy instances to general networks}
The Pigou results in Section~\ref{sec:rbp} demonstrate a central point: the optimal amount of recall is generally \emph{interior}, not ``as much as possible.''
In general networks, the design variables $\theta$ (surfacing and forgetting) allow the system to trade off:
\begin{itemize}
    \item \emph{efficiency:} reduce congestion externalities by dampening herding onto low-latency routes,
    \item \emph{stability:} avoid oscillations driven by overreactive guidance,
    \item \emph{governance:} satisfy privacy and cost constraints inherent to memory-bounded AI systems.
\end{itemize}
The main technical agenda for a top AI/game-theory venue is to characterize when such policies can guarantee improvement over full recall and how close they can get to the social optimum under natural constraints (non-discrimination, limited control authority, and bounded memory).

\section{From Information to Recall: Modeling Imperfect Recall}
This section is the main modeling fork relative to classic routing games.
We define a recall model that is compatible with the informational Braess framework while enabling \emph{endogenous} recall under memory policies.

\subsection{Recall sets as a baseline abstraction}
We first recall the informational-Braess style baseline in which each traveler type has access only to a fixed subset of routes.
This subsection is used only as a point of contact with ICWE/IBP; our main model in Section~\ref{sec:dynamic-recall-model} replaces fixed recall sets with an endogenous memory process.

We assume a finite set of traveler types $i\in\{1,\dots,I\}$ with demands $d_i$ summing to $d$.
Each type $i$ has a \emph{recall set} of edges $E_i \subseteq E$ (or, equivalently, a recall set of feasible paths $\cP_i\subseteq\cP$ consisting only of edges in $E_i$).
Type $i$ can only route on paths in $\cP_i$.

A \emph{recall policy parameter} $\theta$ may index a family $\{\cP_i(\theta)\}$ (e.g., varying memory budget, recall suppression, or information exposure).
In the \emph{static recall} view, we treat $\cP_i$ as fixed and study the induced equilibrium.
In the \emph{dynamic recall} view, recall sets are generated endogenously by a stochastic memory process together with an AI surfacing policy (Section~\ref{sec:dynamic-recall-model}); here $\theta$ corresponds to controllable memory/guidance parameters rather than an exogenous menu.

\subsection{Information-constrained Wardrop equilibrium as a baseline}
\begin{definition}[Recall/Information-constrained Wardrop equilibrium (ICWE/RCWE)]
Given recall sets $\cP_i$, a flow profile $f=(f^{(i)}_p)$ is a recall-constrained Wardrop equilibrium if for each type $i$ and every path $p\in \cP_i$ with $f^{(i)}_p>0$,
\[
L_p(f) \le L_{p'}(f)\quad \text{for all } p'\in \cP_i.
\]
\end{definition}

When $\cP_i=\cP$ for all $i$, this reduces to Wardrop equilibrium.
This equilibrium notion is equivalent to the ICWE of~\citep{acemoglu2018ibp} under fixed information sets.

\begin{proposition}[Existence via potential minimization (standard)]
\label{prop:icwe-existence}
Assume each $\ell_e$ is continuous and nondecreasing.
For fixed recall sets $\{\cP_i\}_{i=1}^I$, an ICWE/RCWE exists.
Moreover, if there is a single origin--destination pair and each $\ell_e$ is strictly increasing on $[0,d]$, then the induced equilibrium edge loads are unique (though path flows need not be).
\end{proposition}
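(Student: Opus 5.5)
The plan is to recast the recall-constrained equilibrium as a single Beckmann-type convex program over type-indexed path flows, and then read both claims off its optimality conditions. Let the decision variables be $f=(f^{(i)}_p)_{i,\,p\in\cP_i}$ with $f^{(i)}_p\ge 0$ and $\sum_{p\in\cP_i} f^{(i)}_p=d_i$. Set $x_e(f)=\sum_i\sum_{p\ni e} f^{(i)}_p$ and consider
\[
\Phi(f)\;=\;\sum_{e\in E}\int_0^{x_e(f)}\ell_e(u)\,du .
\]
The feasible set is a nonempty product of scaled simplices, hence compact, provided each $\cP_i\neq\emptyset$. I would note this nonemptiness as a standing assumption, since every type must be able to route. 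Because each $\ell_e$ is continuous, $\Phi$ is continuous, so Weierstrass gives a minimizer $f^\star$. Because each $\ell_e$ is nondecreasing, each integral $\int_0^{x}\ell_e$ is convex in $x$, and $x_e$ is linear in $f$, so $\Phi$ is convex.

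Next I will show that any minimizer is an RCWE. The constraints are linear, so KKT conditions hold without any further qualification. Since $\Phi$ is $C^1$ with $\partial\Phi/\partial f^{(i)}_p=\sum_{e\in p}\ell_e(x_e)=L_p(f)$, there are multipliers $\lambda_i$ such that $L_p(f^\star)\ge\lambda_i$ for all $p\in\cP_i$, with equality whenever $f^{(i)}_p>0$. This is exactly the RCWE condition for each type $i$.

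I can also argue this directly as a fallback. Suppose some $p\in\cP_i$ has $f^{(i)}_p>0$ and $L_p>L_{p'}$ for some $p'\in\cP_i$. Shifting a small mass $\varepsilon$ of type-$i$ flow from $p$ to $p'$ then has directional derivative $L_{p'}-L_p<0$, which contradicts minimality.

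For uniqueness of edge loads, I will first establish the converse: every RCWE minimizes $\Phi$. By convexity, $f$ is a minimizer iff the variational inequality $\sum_{i,p}L_p(f)(g^{(i)}_p-f^{(i)}_p)\ge 0$ holds for all feasible $g$. The RCWE condition gives $\sum_p L_p(f)f^{(i)}_p=\lambda_i d_i\le\sum_p L_p(f)g^{(i)}_p$ for each type, and summing over types yields the variational inequality. So the RCWE set equals the argmin set.

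Now suppose $f$ and $g$ are two minimizers with $x(f)\neq x(g)$. If $\ell_e$ is strictly increasing on $[0,d]$, then $\int_0^x\ell_e$ is strictly convex there, and all loads lie in $[0,d]$. At the midpoint $(f+g)/2$, the load $x_e$ is the average of the two loads by linearity, so strict convexity on any coordinate $e$ where they differ gives $\Phi((f+g)/2)<\Phi(f)=\Phi(g)$, a contradiction. Hence edge loads are unique.

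The point I expect to need the most care is noticing that strict convexity holds only in $x$, not in $f$. This is why path flows can be non-unique: distinct path or type decompositions can produce the same edge loads, and I would mention a quick example of this. The single-OD hypothesis is not actually needed for this argument beyond ensuring that loads lie in $[0,d]$. I would simply remark that the argument uses only that total flow is bounded by $d$.
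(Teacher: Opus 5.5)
Your proposal is correct and takes essentially the paper's route: minimize the Beckmann potential over type-indexed path flows, read the recall-constrained Wardrop inequalities off the KKT conditions, and get uniqueness of edge loads from strict convexity in $x$ via the midpoint argument. Keep your variational-inequality step showing that every RCWE minimizes the potential. The paper's proof only shows that two \emph{minimizers} share edge loads and never proves this converse, which is what uniqueness of \emph{equilibrium} loads actually requires; your side remarks (recall sets must be nonempty, and the single-OD hypothesis is needed only to bound loads) are also accurate.
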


\begin{proof}[Proof idea.]
Define the Beckmann potential over feasible path flows restricted to the fixed recall sets and minimize it over a compact simplex. Convexity and continuity yield existence; KKT conditions recover the (information-constrained) Wardrop inequalities. For a single OD with strictly increasing latencies, strict convexity of the potential in edge loads yields uniqueness of equilibrium edge loads.
Full proof is deferred to Appendix~\ref{app:proof-prop-icwe-existence}.
\end{proof}

\begin{remark}
We state the proposition for completeness; it follows from standard convexity arguments (Beckmann-type potentials) and is established explicitly in the informational Braess framework~\citep{acemoglu2018ibp}.
\end{remark}

\subsection{Dynamic recall: a fully specified stochastic model}
\label{sec:dynamic-recall-model}
This subsection replaces the earlier sketch with a concrete, mathematically complete model.
The goal is to capture \emph{endogenous recall} under explicit memory budgets and eviction policies while retaining a clean non-atomic routing foundation.

\subsubsection{Primitives}
We consider a finite set of origin--destination (OD) commodities $\cK$.
Each commodity $k\in\cK$ has origin $s_k$, destination $t_k$, and a fixed demand (population mass) $d_k>0$.
Let $\cP_k$ denote a \emph{finite} set of feasible $s_k$--$t_k$ paths.\footnote{The finiteness assumption is standard when working with explicit path sets; in computational experiments one can take $\cP_k$ as a candidate route set generated by $k$-shortest paths, column generation, or sampling.}

The physical network is $G=(V,E)$ with continuous, nondecreasing edge latency functions $\ell_e:\R_{\ge 0}\to\R_{\ge 0}$.
Given an edge-load vector $x\in\R_{\ge 0}^{|E|}$, the latency of path $p$ is
\begin{equation}
\label{eq:path-latency}
L_p(x) \;=\; \sum_{e\in p}\ell_e(x_e).
\end{equation}

\subsubsection{Memory state space and recall}
Fix a memory budget $B_k\in\{1,2,\dots,|\cP_k|\}$ for each commodity $k$.

\begin{definition}[Memory state space]
For commodity $k$, a memory state is an \emph{ordered} list
\[
m=(p^{(1)},p^{(2)},\dots,p^{(B_k)}) \in \cM_k,
\]
where $p^{(j)}\in\cP_k$ are distinct paths.
The order encodes recency/priority, with $p^{(1)}$ interpreted as the most ``salient'' or most recently activated route.
The \emph{recalled set} induced by $m$ is
\[
\cS(m) \;=\; \{p^{(1)},\dots,p^{(B_k)}\}\subseteq \cP_k.
\]
\end{definition}

\paragraph{Exogenous discovery (route surfacing).}
To allow routes that are not currently recalled to become available, we assume a simple discovery/surfacing process.
For each commodity $k$, fix a distribution $\rho_k$ over $\cP_k$ with full support, i.e., $\rho_k(p)>0$ for all $p\in\cP_k$.
At each period (day) and for each traveler of commodity $k$, an independent ``surfaced'' route $q$ is drawn as
\[
q \sim \rho_k.
\]
The traveler can choose from the \emph{available} set
\begin{equation}
\label{eq:available-set}
\cA(m,q) \;=\; \cS(m)\cup\{q\}.
\end{equation}
This is a minimal, analytically convenient abstraction of (i) occasional exploration, or (ii) an interface (e.g., navigation app) that surfaces a candidate alternative outside the traveler's currently recalled set.

\subsubsection{Route choice given congestion}
Fix an inverse-temperature (rationality) parameter $\beta\ge 0$.
Given a memory state $m$, surfaced route $q$, and congestion $x$, the traveler's mixed strategy over available routes is the logit response
\begin{equation}
\label{eq:logit-choice}
\sigma(p\mid m,q,x)
\;=\;
\frac{\exp\!\big(-\beta\,L_p(x)\big)}{\sum_{r\in\cA(m,q)} \exp\!\big(-\beta\,L_r(x)\big)}
\qquad \text{for } p\in\cA(m,q),
\end{equation}
and $\sigma(p\mid m,q,x)=0$ otherwise.
The limit $\beta\to\infty$ recovers (tie-broken) best response on the available set.

\paragraph{Information structure.}
We treat $x$ (and hence the travel-time estimates $L_p(x)$) as a contemporaneous signal observed by the agent or provided by the guidance platform for the routes in the available set $\cA(m,q)$.
Thus, memory affects \emph{consideration} (which routes can be chosen) rather than beliefs about costs.
This aligns the model with ``random attention''/consideration-set foundations and isolates the externality created by policy-controlled recall.
Extensions in which memory also stores (possibly biased) cost estimates can be layered on top, but are not needed for the implementability and governance results proved in the stationary salience design layer.

\subsubsection{Eviction/forgetting policy: LRU as a canonical baseline}
A memory policy specifies how the state updates after a route is chosen.
We give a canonical policy that is (i) widely used in computer systems, (ii) cognitively interpretable, and (iii) yields a finite-state Markov chain: \emph{least-recently used} (LRU) eviction.

\begin{definition}[LRU update map]
\label{def:lru}
For commodity $k$, define the deterministic update map $U_k:\cM_k\times \cP_k \to \cM_k$ as follows.
Let $m=(p^{(1)},\dots,p^{(B_k)})$ and let $p\in\cP_k$ be the route \emph{chosen} in the current period.
\begin{enumerate}
    \item If $p\in \cS(m)$ and $p=p^{(j)}$ for some $j$, then $U_k(m,p)$ is obtained by moving $p$ to the front and shifting the earlier items back one position:
    \[
    U_k(m,p) \;=\; (p,p^{(1)},\dots,p^{(j-1)},p^{(j+1)},\dots,p^{(B_k)}).
    \]
    \item If $p\notin \cS(m)$, then $p$ is inserted at the front and the last element is dropped:
    \[
    U_k(m,p) \;=\; (p,p^{(1)},\dots,p^{(B_k-1)}).
    \]
\end{enumerate}
\end{definition}

\begin{remark}[Other policies]
The same framework accommodates FIFO, random replacement, score-based priority eviction, and summary/hybrid policies.
We use LRU to obtain a clean, fully discrete baseline; later sections can compare policies empirically and (where possible) analytically.
\end{remark}

\subsubsection{Population state, induced flows, and the within-period fixed point}
Travelers are non-atomic and persistent.
For each commodity $k$, let $\mu_k$ be a probability distribution over memory states $\cM_k$ (equivalently, a population share over memory states).
A \emph{population memory profile} is $\mu=(\mu_k)_{k\in\cK}$.

Given a memory profile $\mu$ and congestion $x$, the induced expected path flow on $p\in\cP_k$ is
\begin{equation}
\label{eq:path-flow-induced}
f_{k,p}(\mu,x)
\;=\;
d_k \sum_{m\in\cM_k}\mu_k(m)\sum_{q\in\cP_k}\rho_k(q)\,\sigma(p\mid m,q,x).
\end{equation}
Let $f(\mu,x)$ denote the concatenation over all commodities. The induced edge-load vector $x(f)\in\R_{\ge 0}^{|E|}$ is defined componentwise by
\begin{equation}
\label{eq:edge-loads}
x_e(f) \;=\; \sum_{k\in\cK}\;\sum_{p\in\cP_k} f_{k,p}\,\mathbf{1}\{e\in p\}
\qquad \text{for each } e\in E.
\end{equation}

\begin{definition}[Congestion consistency for a memory profile]
\label{def:congestion-consistency}
A congestion vector $x$ is \emph{consistent} with memory profile $\mu$ if
\begin{equation}
\label{eq:congestion-fixed-point}
x \;=\; x\big(f(\mu,x)\big).
\end{equation}
We write $x\in\Xi(\mu)$ for the (possibly set-valued) correspondence of solutions.
\end{definition}

\begin{remark}
Under logit choice (finite $\beta$) and standard regularity of latencies, \eqref{eq:congestion-fixed-point} is the equilibrium condition of a smooth stochastic user equilibrium on restricted choice sets; in many settings it admits a unique solution.
For the purposes of model definition, we take $\Xi(\mu)$ as the equilibrium correspondence that maps population memory to congestion.
\end{remark}

\subsubsection{Memory-state Markov kernel induced by congestion}
Fix a congestion vector $x$.
For each commodity $k$, the LRU update together with logit choice and route surfacing induces a Markov chain on $\cM_k$.
Its transition kernel is
\begin{equation}
\label{eq:memory-kernel}
P^{(k)}_{x}(m'\mid m)
\;=\;
\sum_{q\in\cP_k}\rho_k(q)\sum_{p\in\cA(m,q)} \sigma(p\mid m,q,x)\,\mathbf{1}\{m' = U_k(m,p)\}.
\end{equation}

Given a current population distribution $\mu_k$, the next-period distribution is the pushforward
\begin{equation}
\label{eq:mu-update}
\mu_k^{+}(m') \;=\; \sum_{m\in\cM_k}\mu_k(m)\,P^{(k)}_{x}(m'\mid m).
\end{equation}
We write $\mu^{+}=\Phi_x(\mu)$ for the full profile update across commodities.

\subsubsection{Forgetful Wardrop equilibrium (FWE)}
We can now define a stationary equilibrium that couples (i) congestion consistency and (ii) stationary memory dynamics.

\begin{definition}[Forgetful Wardrop equilibrium (FWE)]
\label{def:fwe}
A pair $(x^\star,\mu^\star)$ is a \emph{forgetful Wardrop equilibrium} if:
\begin{enumerate}
    \item \textbf{Congestion consistency:} $x^\star \in \Xi(\mu^\star)$, i.e., $x^\star$ satisfies \eqref{eq:congestion-fixed-point} for $\mu^\star$.
    \item \textbf{Memory stationarity:} for every commodity $k\in\cK$, $\mu_k^\star$ is stationary under the kernel induced by $x^\star$:
    \[
    \mu_k^\star \;=\; \sum_{m\in\cM_k}\mu_k^\star(m)\,P^{(k)}_{x^\star}(\cdot\mid m).
    \]
    Equivalently, $\mu^\star = \Phi_{x^\star}(\mu^\star)$.
\end{enumerate}
\end{definition}

\begin{remark}[Relation to classical models]
If $B_k=|\cP_k|$ and $\rho_k$ is irrelevant (or if $\cA(m,q)=\cP_k$ always), FWE reduces to a stochastic user equilibrium (and in the limit $\beta\to\infty$ to a Wardrop equilibrium).
If recall sets are fixed exogenously (no memory dynamics), FWE collapses to an information-constrained Wardrop equilibrium (ICWE/RCWE).
\end{remark}

\section{Existence, Characterization, and Computation of Forgetful Wardrop Equilibrium}
\label{sec:existence}

This section records baseline theoretical properties of the model in Section~\ref{sec:dynamic-recall-model}.
The main purpose is twofold: (i) to establish that the coupled flow--memory equilibrium is well-posed, and (ii) to provide a concrete computational template that will later support empirical results and algorithmic policy design.

\subsection{Feasible set and standing assumptions}
Let $D \triangleq \sum_{k\in\cK} d_k$ denote total demand.
Because each traveler selects exactly one path per period and paths are simple, every edge load satisfies $0\le x_e \le D$.
We therefore work on the compact convex set
\begin{equation}
\label{eq:feasible-x}
\cX \;\triangleq\; [0,D]^{|E|}.
\end{equation}

\begin{assumption}[Regularity and exploration]
\label{ass:regularity}
For each edge $e\in E$, the latency function $\ell_e(\cdot)$ is continuous and nondecreasing on $[0,D]$.
For each commodity $k$, the surfacing distribution $\rho_k$ has full support on $\cP_k$.
Finally, the logit parameter satisfies $\beta<\infty$.
\end{assumption}

Assumption~\ref{ass:regularity} implies that (i) every available route has strictly positive choice probability under \eqref{eq:logit-choice}, and (ii) every route can be surfaced with positive probability.

\subsection{Ergodicity of the memory Markov chain}
\begin{lemma}[Ergodicity and uniqueness of the stationary memory distribution]
\label{lem:ergodic}
Fix a commodity $k$ and a congestion vector $x\in\cX$.
Under Assumption~\ref{ass:regularity}, the Markov chain on $\cM_k$ with transition kernel $P^{(k)}_x$ defined in \eqref{eq:memory-kernel} is irreducible and aperiodic.
Consequently, it admits a \emph{unique} stationary distribution, denoted $\pi_k(x)\in\Delta(\cM_k)$, with full support on $\cM_k$.
\end{lemma}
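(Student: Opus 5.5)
The plan is to verify irreducibility and aperiodicity directly from the LRU update map (Definition~\ref{def:lru}) and the positivity guaranteed by Assumption~\ref{ass:regularity}. Then we invoke the standard Perron--Frobenius/ergodic theorem for finite Markov chains, which gives existence, uniqueness and full support of the stationary law. First we record a positivity fact. For any state $m$ and any route $p\in\cP_k$, the one-step probability that the chosen route is $p$ is at least
\[
\rho_k(p)\,\sigma(p\mid m,p,x)\;\ge\;\rho_k(p)\,\frac{e^{-\beta L_p(x)}}{\sum_{r\in\cP_k}e^{-\beta L_r(x)}}\;>\;0 .
\]
This holds because $\rho_k$ has full support, $p\in\cA(m,p)$, $\beta<\infty$, and latencies are finite on $\cX$. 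Consequently every transition $m\mapsto U_k(m,p)$ has positive probability under $P^{(k)}_x$, whatever $p$ is.

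For irreducibility, fix an arbitrary target state $m'=(p'^{(1)},\dots,p'^{(B_k)})$ and start from any $m$. We force the chosen routes over $B_k$ consecutive periods to be $p'^{(B_k)},p'^{(B_k-1)},\dots,p'^{(1)}$ in that order. Each of these choices has positive probability by the fact above. After each choice, the LRU map moves or inserts the chosen route to the front. Since the $p'^{(j)}$ are distinct, after $B_k$ steps the first $B_k$ positions of the list are exactly $(p'^{(1)},\dots,p'^{(B_k)})$. This holds whether each route was already present and merely moved to the front, or was absent and inserted with an eviction. The list has length exactly $B_k$, so the resulting state is $m'$. The main point to check carefully is this ``most-recent-$B_k$-distinct-choices determine the state'' property. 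We argue it by induction: after choosing $p'^{(j)}$, the prefix of the list is $(p'^{(j)},\dots,p'^{(B_k)})$, because moving or inserting at the front shifts the existing prefix back by one. Distinctness also ensures that the chosen route was not already in that prefix, so the prefix is not disturbed by removing it.

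For aperiodicity, we show every state has a self-loop. From $m$, let the chosen route be $p^{(1)}$, the current front. This happens with positive probability, for instance by surfacing $q=p^{(1)}$ and choosing it. Case~1 of Definition~\ref{def:lru} with $j=1$ then gives $U_k(m,p^{(1)})=m$. Hence $P^{(k)}_x(m\mid m)>0$ and the period of every state is one. Since $\cM_k$ is finite, irreducibility and aperiodicity together imply that the chain is ergodic and has a unique stationary distribution $\pi_k(x)$. Full support follows because, for an irreducible finite chain, $\pi_k(x)(m')\ge \pi_k(x)(m)\,(P^{(k)}_x)^{n}(m'\mid m)>0$ for some $m$ with positive mass and a suitable $n$.
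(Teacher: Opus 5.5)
Your proof is correct and takes essentially the same approach as the paper's: you force $B_k$ successive surfaced-and-chosen routes $p'^{(B_k)},\dots,p'^{(1)}$ to write the target LRU list, obtain aperiodicity from the self-loop when the front route $p^{(1)}$ is chosen, and conclude with the finite-state ergodic theorem. Your prefix induction is slightly more careful than the paper's, which spells out only the insertion case and omits the case where a target route is already in the list and is merely moved to the front.
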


\begin{proof}[Proof idea.]
Irreducibility is shown by constructing a positive-probability sequence of surfaced routes that (under logit) can be chosen to ``write'' any target LRU list into memory via repeated insertions. Aperiodicity follows from the positive self-loop obtained when the traveler selects the most recent recalled route, leaving the ordered list unchanged.
Full proof is deferred to Appendix~\ref{app:proof-lem-ergodic}.
\end{proof}

\begin{lemma}[Continuity of the stationary memory map]
\label{lem:pi-continuous}
Under Assumption~\ref{ass:regularity}, for each commodity $k$ the mapping $x\mapsto \pi_k(x)$ is continuous on $\cX$.
\end{lemma}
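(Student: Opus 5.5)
The plan is to derive continuity of $x\mapsto \pi_k(x)$ from two facts: the kernel $P^{(k)}_x$ depends continuously on $x$, and an irreducible finite chain has a unique stationary law (Lemma~\ref{lem:ergodic}).

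\textbf{Step 1: the kernel is continuous.} Each entry of $P^{(k)}_x$ in \eqref{eq:memory-kernel} is a finite sum over $q\in\cP_k$ and $p\in\cA(m,q)$. Each summand is $\rho_k(q)\,\sigma(p\mid m,q,x)$ times an indicator $\mathbf{1}\{m'=U_k(m,p)\}$, and that indicator does not depend on $x$. By \eqref{eq:path-latency}, $L_p(x)=\sum_{e\in p}\ell_e(x_e)$ is continuous on $\cX$, because each $\ell_e$ is continuous (Assumption~\ref{ass:regularity}). For finite $\beta$, the logit weights $\exp(-\beta L_p(x))$ are continuous and strictly positive. So the denominator in \eqref{eq:logit-choice} never vanishes, and $\sigma(p\mid m,q,x)$ is continuous in $x$. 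Hence $x\mapsto P^{(k)}_x$ is a continuous map from $\cX$ into the compact set of $|\cM_k|\times|\cM_k|$ stochastic matrices.

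\textbf{Step 2: a closed-graph argument.} Take $x_n\to x$ in $\cX$ and set $\pi_n\triangleq\pi_k(x_n)$. The sequence $(\pi_n)$ lies in the compact simplex $\Delta(\cM_k)$, so every subsequence has a further subsequence converging to some $\bar\pi\in\Delta(\cM_k)$. Passing to the limit in $\pi_n^\top P^{(k)}_{x_n}=\pi_n^\top$ uses Step 1 and the continuity of matrix multiplication. This gives $\bar\pi^\top P^{(k)}_x=\bar\pi^\top$, so $\bar\pi$ is stationary for $P^{(k)}_x$. Lemma~\ref{lem:ergodic} says $P^{(k)}_x$ is irreducible with a unique stationary distribution, so $\bar\pi=\pi_k(x)$. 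Every subsequence thus has a further subsequence converging to the same limit, and therefore $\pi_n\to\pi_k(x)$.

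\textbf{Main obstacle and alternative.} The only delicate point is that uniqueness of the stationary law must hold at the limit point $x$, not merely along the sequence. Lemma~\ref{lem:ergodic} supplies this at every $x\in\cX$, and it applies there because Assumption~\ref{ass:regularity} holds uniformly on $\cX$: full support of $\rho_k$ and $\beta<\infty$ do not depend on $x$. An explicit alternative would use the Markov chain tree theorem. It writes $\pi_k(x)(m)$ as a ratio of sums of products of kernel entries over spanning arborescences. The denominator is strictly positive by irreducibility, so $\pi_k(x)$ is a rational, hence continuous, function of the entries of $P^{(k)}_x$. Either route finishes the proof; I would present the compactness argument and mention the tree formula as giving continuity, and Lipschitz continuity if the $\ell_e$ are Lipschitz.
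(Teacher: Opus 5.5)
Your proof is correct, and its key step differs from the paper's. Both arguments use your Step~1 (entrywise continuity of $P^{(k)}_x$ in $x$) and the uniqueness supplied by Lemma~\ref{lem:ergodic}. They differ in how uniqueness is turned into continuity.

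The paper argues algebraically. It writes $\pi_k(x)=B_x^{-1}b$, where $B_x$ is $I-(P^{(k)}_x)^\top$ with its last row replaced by $\mathbf{1}^\top$ and $b$ is the last standard basis vector. It asserts that $B_x$ is invertible because the stationary distribution is unique, and concludes from continuity of matrix inversion. You argue topologically, via sequential compactness of $\Delta(\cM_k)$, closedness of the stationarity relation under limits, and uniqueness at the limit point.

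Your version uses exactly what Lemma~\ref{lem:ergodic} states, namely uniqueness among probability vectors. The paper's passage from that to invertibility of $B_x$ tacitly uses two further facts. First, the discarded row is redundant, since the rows of $I-P_x^\top$ sum to zero. Second, the left eigenspace of $P_x$ for eigenvalue $1$ is one-dimensional. Both hold here (e.g.\ by Perron--Frobenius for irreducible chains) but are not spelled out.

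What the paper's representation buys is quantitative regularity. By Cramer's rule, $B_x^{-1}b$ is a ratio of polynomials in the kernel entries with a nonvanishing denominator. So $\pi_k$ is immediately Lipschitz or continuously differentiable whenever the latencies are, which is the kind of smoothness presupposed in Theorem~\ref{thm:implicit}. Your compactness argument gives no modulus of continuity. Your Markov-chain-tree alternative recovers exactly that quantitative version, and it is essentially the same rational expression as the paper's formula.
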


\begin{proof}[Proof idea.]
The transition matrix entries depend continuously on $x$ through continuous latencies and the smooth logit map. The stationary distribution is the unique solution of a linear system whose coefficient matrix remains nonsingular under ergodicity; continuity then follows from continuity of matrix inversion on the set of nonsingular matrices.
Full proof is deferred to Appendix~\ref{app:proof-lem-pi-continuous}.
\end{proof}

Lemma~\ref{lem:ergodic} yields an important simplification: in a forgetful Wardrop equilibrium, the memory profile is pinned down by congestion via $\mu_k^\star=\pi_k(x^\star)$.
Thus the equilibrium can be characterized by a fixed point in edge-load space alone.

\subsection{Within-period congestion consistency}
\begin{proposition}[Existence of a congestion-consistent flow for fixed memory]
\label{prop:xi-nonempty}
Fix a memory profile $\mu$.
Under Assumption~\ref{ass:regularity}, the set $\Xi(\mu)$ of congestion vectors consistent with $\mu$ (Definition~\ref{def:congestion-consistency}) is nonempty.
\end{proposition}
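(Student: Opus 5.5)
The plan is to apply Brouwer's fixed-point theorem to the map $T_\mu:\cX\to\cX$, $T_\mu(x)\triangleq x\big(f(\mu,x)\big)$, on the compact convex box $\cX=[0,D]^{|E|}$ from \eqref{eq:feasible-x}. A fixed point of $T_\mu$ is, by Definition~\ref{def:congestion-consistency}, exactly an element of $\Xi(\mu)$, so nonemptiness follows once we check two things: that $T_\mu$ maps $\cX$ into itself, and that it is continuous.

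\emph{Self-map.} For each $m$ and $q$, the weights $\sigma(\cdot\mid m,q,x)$ in \eqref{eq:logit-choice} form a probability distribution on $\cA(m,q)\subseteq\cP_k$. Since $\mu_k$ and $\rho_k$ are probability distributions, \eqref{eq:path-flow-induced} gives $f_{k,p}(\mu,x)\ge 0$ and $\sum_{p\in\cP_k}f_{k,p}(\mu,x)=d_k$. Each path $p$ is simple, so it contains each edge at most once. Then \eqref{eq:edge-loads} yields
\[
0\le x_e\big(f(\mu,x)\big)\le \sum_k\sum_{p\in\cP_k} f_{k,p}(\mu,x)=\sum_k d_k=D,
\]
and hence $T_\mu(\cX)\subseteq\cX$.

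\emph{Continuity.} By Assumption~\ref{ass:regularity}, each $\ell_e$ is continuous on $[0,D]$, so each path latency $L_p(x)$ in \eqref{eq:path-latency} is continuous on $\cX$. Because $\beta<\infty$, every logit denominator $\sum_{r\in\cA(m,q)}e^{-\beta L_r(x)}$ is a finite sum of strictly positive continuous terms. Each $\sigma(p\mid m,q,x)$ is therefore a continuous function of $x$; note that the availability indicator depends only on $(m,q)$, not on $x$. The path flows \eqref{eq:path-flow-induced} are finite nonnegative combinations of these functions, since $\cM_k$ and $\cP_k$ are finite, and the edge-load map \eqref{eq:edge-loads} is linear. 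So $T_\mu$ is continuous, and Brouwer's theorem gives a fixed point $x^\star\in\cX$, i.e.\ $x^\star\in\Xi(\mu)$.

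The argument has no real obstacle. The only points needing care are the bound $x_e\le D$, which relies on the paths being simple and on the choice probabilities summing to one, and the fact that finiteness of $\beta$ keeps the logit denominators away from zero. We would also remark that only continuity of the latencies on $[0,D]$ is used here, not monotonicity. Monotonicity would become relevant only for uniqueness, via a convex-potential argument analogous to Proposition~\ref{prop:swsue-potential}, and that is not claimed in this statement.
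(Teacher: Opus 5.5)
Your proposal is correct and follows the paper's proof essentially step for step. The paper also applies Brouwer's theorem to $T_\mu(x)=x(f(\mu,x))$ on $\cX=[0,D]^{|E|}$. It checks the self-map property through mass conservation $\sum_{p} f_{k,p}(\mu,x)=d_k$, and continuity through continuous latencies, the continuous logit rule, and linearity of the edge-load map.
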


\begin{proof}[Proof idea.]
For fixed $\mu$, the induced flow map $x\mapsto f(\mu,x)$ is continuous, hence so is the induced edge-load map $T_\mu(x)=x(f(\mu,x))$. Since $T_\mu$ maps the compact convex set $\cX$ to itself, Brouwer's fixed point theorem yields $x=T_\mu(x)$.
Full proof is deferred to Appendix~\ref{app:proof-prop-xi-nonempty}.
\end{proof}

\subsection{Existence of Forgetful Wardrop equilibrium}
Define $\pi(x)\triangleq(\pi_k(x))_{k\in\cK}$ and the \emph{reduced} fixed-point map $T:\cX\to\cX$ by
\begin{equation}
\label{eq:T-reduced}
T(x)\;\triangleq\; x\!\left(f(\pi(x),x)\right).
\end{equation}

\begin{theorem}[Existence of FWE]
\label{thm:fwe-existence}
Under Assumption~\ref{ass:regularity}, there exists at least one forgetful Wardrop equilibrium $(x^\star,\mu^\star)$.
Moreover, any $x^\star\in\cX$ satisfying $x^\star=T(x^\star)$ together with $\mu_k^\star=\pi_k(x^\star)$ for each $k$ constitutes a FWE.
\end{theorem}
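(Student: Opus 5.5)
The plan is to reduce existence of an FWE to a Brouwer fixed point of the reduced map $T$ in \eqref{eq:T-reduced} on the compact convex set $\cX=[0,D]^{|E|}$. The second assertion of the theorem comes first, since it is purely definitional. Suppose $x^\star=T(x^\star)$ and set $\mu_k^\star=\pi_k(x^\star)$. By Lemma~\ref{lem:ergodic}, $\pi_k(x^\star)$ is the stationary distribution of $P^{(k)}_{x^\star}$, so memory stationarity $\mu^\star=\Phi_{x^\star}(\mu^\star)$ holds. Moreover, $x^\star=T(x^\star)=x(f(\pi(x^\star),x^\star))=x(f(\mu^\star,x^\star))$, which is exactly congestion consistency \eqref{eq:congestion-fixed-point}, i.e.\ $x^\star\in\Xi(\mu^\star)$. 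Hence $(x^\star,\mu^\star)$ satisfies Definition~\ref{def:fwe}.

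For existence, I would verify two properties of $T$.

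\emph{$T$ maps $\cX$ into itself.} For any $\mu$ and $x$, the induced path flows \eqref{eq:path-flow-induced} are nonnegative. Since $\sigma(\cdot\mid m,q,x)$ is a probability distribution on $\cA(m,q)\subseteq\cP_k$, and $\mu_k$ and $\rho_k$ are probability distributions, we have $\sum_p f_{k,p}=d_k$. Because paths are simple, each edge load satisfies $0\le x_e(f)\le\sum_k d_k=D$.

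\emph{$T$ is continuous.} Each $L_p(x)$ is continuous in $x$ because the latencies are continuous (Assumption~\ref{ass:regularity}). The logit probabilities $\sigma(p\mid m,q,x)$ are then continuous, since the denominator is strictly positive when $\beta<\infty$. The map $(\mu,x)\mapsto f(\mu,x)$ is a finite sum of products of $\mu_k(m)$, $\rho_k(q)$ and $\sigma$, so it is jointly continuous. By Lemma~\ref{lem:pi-continuous}, $x\mapsto\pi(x)$ is continuous. Composing, $x\mapsto f(\pi(x),x)$ is continuous, and the linear map $f\mapsto x(f)$ in \eqref{eq:edge-loads} preserves this. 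Brouwer's theorem then gives a fixed point $x^\star=T(x^\star)$, and the first paragraph converts it into an FWE.

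The main obstacle is the continuity of $\pi$, but this is already supplied by Lemma~\ref{lem:pi-continuous}. That lemma rests on the uniform ergodicity of Lemma~\ref{lem:ergodic} holding at every $x\in\cX$, and full support of $\rho_k$ together with finite $\beta$ delivers this for all $x$, not only generic ones. The one point to state explicitly is joint continuity in $(\mu,x)$ of $f$ (not merely continuity in $x$ for fixed $\mu$, as used in Proposition~\ref{prop:xi-nonempty}). This is immediate because $f$ is bilinear in $\mu$ and the continuous logit weights.
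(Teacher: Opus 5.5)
Your proposal is correct and follows the paper's proof closely. Both establish continuity of $T$ from Lemma~\ref{lem:pi-continuous} together with continuity of $(\mu,x)\mapsto f(\mu,x)$, show that $T$ maps $\cX=[0,D]^{|E|}$ into itself by mass conservation, apply Brouwer, and then read off stationarity and congestion consistency from $\mu^\star=\pi(x^\star)$; the only difference is that you check the ``moreover'' clause for an arbitrary fixed point first, which the paper does implicitly in its final step.
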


\begin{proof}[Proof idea.]
Combine Lemma~\ref{lem:ergodic} and Lemma~\ref{lem:pi-continuous} to obtain a continuous stationary-memory map $x\mapsto \pi(x)$. Substituting $\mu=\pi(x)$ into the congestion consistency map yields a continuous self-map $T$ on $\cX$, so Brouwer yields a fixed point $x^\star=T(x^\star)$ and hence an FWE.
Full proof is deferred to Appendix~\ref{app:proof-thm-fwe-existence}.
\end{proof}

\subsection{A practical fixed-point computation template}
Theorem~\ref{thm:fwe-existence} suggests a direct numerical approach: iterate the reduced map $T(\cdot)$ while recomputing the stationary memory distributions.
In finite state spaces, $\pi_k(x)$ can be computed via power iteration on $P^{(k)}_x$ (or more stable linear-algebra routines for the eigenvector corresponding to eigenvalue $1$).

\begin{algorithm}[t]
\caption{Fixed-point iteration for FWE (basic template)}
\label{alg:fwe}
\begin{algorithmic}[1]
\STATE \textbf{Input:} initial $x^{(0)}\in\cX$, damping $\eta\in(0,1]$, tolerance $\varepsilon>0$
\FOR{$t=0,1,2,\dots$ until convergence}
    \FOR{each commodity $k\in\cK$}
        \STATE Build transition matrix $P^{(k)}_{x^{(t)}}$ from \eqref{eq:memory-kernel}
        \STATE Compute stationary distribution $\pi_k(x^{(t)})$ (e.g., power iteration)
    \ENDFOR
    \STATE Form $\pi(x^{(t)})=(\pi_k(x^{(t)}))_{k\in\cK}$
    \STATE Compute induced path flows $f(\pi(x^{(t)}),x^{(t)})$ via \eqref{eq:path-flow-induced}
    \STATE Compute updated edge loads $\tilde x^{(t+1)} \leftarrow x(f(\pi(x^{(t)}),x^{(t)}))$ via \eqref{eq:edge-loads}
    \STATE Damped update: $x^{(t+1)} \leftarrow (1-\eta)x^{(t)}+\eta\,\tilde x^{(t+1)}$
    \IF{$\|x^{(t+1)}-x^{(t)}\|_\infty \le \varepsilon$}
        \STATE \textbf{break}
    \ENDIF
\ENDFOR
\STATE \textbf{Output:} $x^{(t+1)}$ and $\mu^\star=\pi(x^{(t+1)})$
\end{algorithmic}
\end{algorithm}

Algorithm~\ref{alg:fwe} is a baseline; later drafts can add acceleration, monotone VI solvers for the within-period subproblem, and policy-gradient-style updates when optimizing memory/guidance parameters.

\begin{remark}[State-space explosion and scalable surrogates]
\label{rem:state-space}
Algorithm~\ref{alg:fwe} is primarily a conceptual fixed-point template.
The memory state space for commodity $k$ has cardinality
$
|\cM_k| = |\cP_k|\cdot(|\cP_k|-1)\cdots(|\cP_k|-B_k+1),
$
which is factorial in $B_k$ and quickly becomes intractable even for moderate candidate route sets.
In large networks, one should therefore avoid explicit enumeration of $\cM_k$.
Two scalable alternatives are: (i) Monte Carlo estimation of stationary choice frequencies by simulating the memory chain directly (without storing the full transition matrix), and (ii) mean-field closures that track only per-route recall probabilities (e.g., the LRU$\rightarrow$TTL$\rightarrow$salience surrogate developed in Section~\ref{sec:micro2salience-B}), which reduce equilibrium computation to solving a strictly convex SW-SUE potential.
\end{remark}

\subsection{Uniqueness and global stability under a contraction condition}
Existence via Brouwer is not the end of the story: for algorithmic design and comparative statics we need \emph{uniqueness} and \emph{stability}.
This subsection provides a sufficient condition under which the reduced equilibrium map $T$ is a contraction on $\cX$, implying a unique FWE and global convergence of natural day-to-day dynamics.

\begin{assumption}[Lipschitz latencies]
\label{ass:lipschitz}
Each $\ell_e(\cdot)$ is continuously differentiable on $[0,D]$ with derivative bounded by $L_e$, and we define $L\triangleq \max_{e\in E} L_e$.
\end{assumption}

\begin{assumption}[Uniform mixing of memory dynamics (Doeblin condition)]
\label{ass:doeblin}
For each commodity $k$ there exist $\varepsilon_k\in(0,1]$ and a distribution $\nu_k\in\Delta(\cM_k)$ such that for all $x\in\cX$ and all $m\in\cM_k$,
\begin{equation}
\label{eq:doeblin}
P_x^{(k)}(\cdot\mid m)\;\ge\; \varepsilon_k\,\nu_k(\cdot)
\qquad\text{(componentwise).}
\end{equation}
\end{assumption}

\begin{remark}[How to enforce Assumption~\ref{ass:doeblin} by design]
A simple sufficient mechanism is an \emph{exogenous reset}: with probability $\varepsilon_k$ per period, the memory state is redrawn from $\nu_k$ independently of the chosen route.
This is natural in AI memory systems (e.g., periodic consolidation/summary refresh or privacy-driven deletion) and yields \eqref{eq:doeblin} directly.
\end{remark}

Define the maximum path length
\[
H \triangleq \max_{k\in\cK}\max_{p\in\cP_k} |p|
\]
(where $|p|$ is the number of edges on path $p$), and recall that total demand is $D=\sum_k d_k$.

\begin{lemma}[Logit sensitivity]
\label{lem:logit-lip}
Fix any finite action set $\cA$.
The logit map $c\mapsto \sigma(\cdot\mid c)$ defined by $\sigma(a\mid c)\propto \exp(-\beta c_a)$ is $\beta$-Lipschitz from $(\R^{|\cA|},\|\cdot\|_\infty)$ to $(\Delta(\cA),\|\cdot\|_1)$:
\[
\big\|\sigma(\cdot\mid c)-\sigma(\cdot\mid c')\big\|_1 \;\le\; \beta\,\|c-c'\|_\infty.
\]
\end{lemma}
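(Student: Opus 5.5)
The plan is to bound the Jacobian of the logit map in the induced $\ell_\infty\to\ell_1$ operator norm and then integrate along a segment. The map $c\mapsto\sigma(\cdot\mid c)$ is smooth on $\R^{|\cA|}$, so by the mean value theorem in integral form,
\[
\sigma(\cdot\mid c)-\sigma(\cdot\mid c')=\int_0^1 J(c'+t(c-c'))\,(c-c')\,dt,
\]
where $J(c)$ is the Jacobian at $c$. It therefore suffices to show $\|J(c)h\|_1\le\beta\|h\|_\infty$ for every $c$ and $h$.

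Writing $\sigma=\sigma(\cdot\mid c)$, direct differentiation of the softmax gives
\[
\frac{\partial\sigma(a\mid c)}{\partial c_b}=-\beta\,\sigma(a)\big(\mathbf{1}\{a=b\}-\sigma(b)\big).
\]
Hence $(J(c)h)_a=-\beta\,\sigma(a)\big(h_a-\bar h\big)$, where $\bar h\triangleq\sum_b\sigma(b)h_b$ is the $\sigma$-average of $h$.

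The step to watch is the resulting norm, $\|J(c)h\|_1=\beta\sum_a\sigma(a)|h_a-\bar h|$. The naive bound $|h_a-\bar h|\le 2\|h\|_\infty$ would only give the constant $2\beta$. To get $\beta$, use translation invariance instead: since $\sum_a(J(c)h)_a=0$ and $J(c)\mathbf{1}=0$, we may replace $h$ by $h-\tfrac{1}{2}(\max h+\min h)\mathbf{1}$ without changing $J(c)h$. Set $M\triangleq\max_a h_a$ and $m\triangleq\min_a h_a$. Because $\bar h\in[m,M]$ and each $h_a\in[m,M]$, we have $|h_a-\bar h|\le M-m$. A sharper route is to note that $\sum_a\sigma(a)|h_a-\bar h|$ is the mean absolute deviation of a random variable supported in $[m,M]$, which is at most $(M-m)/2$. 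Since $(M-m)/2\le\|h\|_\infty$, this gives $\|J(c)h\|_1\le\beta\|h\|_\infty$.

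The mean-absolute-deviation bound is the main obstacle, and it would be proved by the two-point extremal argument. The deviation $\E|X-\E X|$ equals $2\,\E(X-\E X)^+$. Moving mass of $X$ to the endpoints $m$ and $M$ while keeping the mean fixed can only increase this quantity. For a two-point law placing weight $\pi$ on $M$, the deviation equals $2\pi(1-\pi)(M-m)\le(M-m)/2$.

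Plugging this into the integral representation and applying the triangle inequality under the integral then completes the proof of the claimed $\beta$-Lipschitz bound.
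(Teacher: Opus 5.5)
Your proof is correct and follows the paper's argument essentially step for step: the integral mean-value representation, the softmax Jacobian $-\beta\,\sigma(a)\big(\mathbf{1}\{a=b\}-\sigma(b)\big)$, and the reduction to the $\sigma$-weighted mean absolute deviation $\sum_a\sigma(a)|h_a-\bar h|$. The only difference is the final bound: the paper gets it from Cauchy--Schwarz, $\sum_a\sigma(a)|h_a-\bar h|\le\sqrt{\mathrm{Var}}\le\|h\|_\infty$, rather than your two-point extremal argument. Your translation step is also unnecessary, since $(M-m)/2\le\|h\|_\infty$ already holds for the original $h$.
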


\begin{proof}[Proof idea.]
Differentiate the logit map: its Jacobian has entries bounded in magnitude by $\beta$ times a probability product. Bounding the induced operator norm from $\ell_\infty$ to $\ell_1$ yields a global Lipschitz constant $\beta$. The claim then follows from the mean value theorem.
Full proof is deferred to Appendix~\ref{app:proof-lem-logit-lip}.
\end{proof}

\begin{lemma}[Lipschitz continuity of $x\mapsto \pi_k(x)$ with explicit dependence on mixing]
\label{lem:pi-lip}
Under Assumptions~\ref{ass:regularity}--\ref{ass:doeblin} and \ref{ass:lipschitz}, for each commodity $k$ the stationary distribution $\pi_k(x)$ is unique and satisfies, for all $x,y\in\cX$,
\[
\|\pi_k(x)-\pi_k(y)\|_1 \;\le\; \frac{1}{\varepsilon_k}\,\sup_{m\in\cM_k}\big\|P_x^{(k)}(\cdot\mid m)-P_y^{(k)}(\cdot\mid m)\big\|_1.
\]
Moreover, using Lemma~\ref{lem:logit-lip} and the Lipschitz bound on path costs, one may take
\[
C_k \;=\; \frac{\beta H L}{\varepsilon_k},
\]
so that $\|\pi_k(x)-\pi_k(y)\|_1 \le C_k\,\|x-y\|_\infty$.
\end{lemma}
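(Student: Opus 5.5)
The plan is the standard Doeblin perturbation argument for stationary distributions, followed by a kernel-level Lipschitz estimate. Write $P=P^{(k)}_x$, $Q=P^{(k)}_y$, $\pi=\pi_k(x)$ and $\pi'=\pi_k(y)$. Uniqueness follows from Lemma~\ref{lem:ergodic}. It also follows directly from Assumption~\ref{ass:doeblin}, because the minorization makes $P$ a strict contraction in total variation. Concretely, I will first show that for any two probability vectors $\mu,\nu$ on $\cM_k$,
\[
\|\mu P-\nu P\|_1 \le (1-\varepsilon_k)\|\mu-\nu\|_1 .
\]
To prove this, write $P=\varepsilon_k\mathbf{1}\nu_k^\top+(1-\varepsilon_k)R$, where $R$ is a stochastic matrix; $R$ is well defined and stochastic precisely because $P(\cdot\mid m)\ge\varepsilon_k\nu_k$ componentwise. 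For a signed vector $w=\mu-\nu$ with total mass zero, the rank-one part annihilates $w$. Hence $wP=(1-\varepsilon_k)wR$, and since $R$ is stochastic, $\|wR\|_1\le\|w\|_1$.

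Next comes the perturbation identity. Using stationarity of both distributions,
\[
\pi-\pi'=\pi P-\pi' Q=(\pi-\pi')P+\pi'(P-Q).
\]
Taking $\ell_1$ norms and applying the contraction to the first term gives
\[
\|\pi-\pi'\|_1\le(1-\varepsilon_k)\|\pi-\pi'\|_1+\|\pi'(P-Q)\|_1 .
\]
For the second term, $\pi'(P-Q)=\sum_m\pi'(m)\,(P(\cdot\mid m)-Q(\cdot\mid m))$, so its norm is at most $\sup_m\|P(\cdot\mid m)-Q(\cdot\mid m)\|_1$. Rearranging yields the first displayed bound with constant $1/\varepsilon_k$.

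It remains to bound the kernel difference row by row. I will fix $m$ and couple the two transitions through the same surfaced $q$. By \eqref{eq:memory-kernel}, $P_x(\cdot\mid m)$ is the $\rho_k$-mixture over $q$ of the pushforward of $\sigma(\cdot\mid m,q,x)$ under the deterministic map $p\mapsto U_k(m,p)$. Two facts control this: a pushforward under a deterministic map does not increase $\ell_1$ distance, and an $\ell_1$ distance of mixtures is at most the mixture of the distances. Together they give
\[
\|P_x(\cdot\mid m)-P_y(\cdot\mid m)\|_1\le\max_q\|\sigma(\cdot\mid m,q,x)-\sigma(\cdot\mid m,q,y)\|_1 .
\]
Lemma~\ref{lem:logit-lip}, applied on the action set $\cA(m,q)$, bounds this by $\beta\max_{p}|L_p(x)-L_p(y)|$. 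By Assumption~\ref{ass:lipschitz} and the mean value theorem, each $|L_p(x)-L_p(y)|\le\sum_{e\in p}L_e|x_e-y_e|\le HL\|x-y\|_\infty$. Combining everything gives $C_k=\beta HL/\varepsilon_k$.

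I expect the main obstacle to be the contraction step. One has to make sure the residual kernel $R$ is genuinely stochastic (nonnegative, with unit row sums), and that the zero-mass property of $\mu-\nu$ is what kills the $\nu_k$ component. The remaining steps are routine bookkeeping.
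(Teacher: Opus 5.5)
Your proposal is correct and follows the paper's proof essentially step for step. Both arguments split the kernel as $\varepsilon_k\nu_k$ plus a residual stochastic kernel to obtain the $(1-\varepsilon_k)$ total-variation contraction. Both use the identity $\pi(x)-\pi(y)=(\pi(x)-\pi(y))P_x+\pi(y)(P_x-P_y)$ with the row-wise supremum bound. Both then bound the kernel difference via pushforward under $U_k$, averaging over the surfaced $q$, Lemma~\ref{lem:logit-lip}, and $|L_p(x)-L_p(y)|\le HL\|x-y\|_\infty$.
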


\begin{proof}[Proof idea.]
Uniform mixing (Doeblin) implies a contraction of total-variation distances under the kernel and yields standard perturbation bounds for stationary distributions. Bounding $\|P_x-P_y\|$ via logit sensitivity and Lipschitz path costs gives $\|\pi(x)-\pi(y)\|_1=O(\|x-y\|_\infty)$.
Full proof is deferred to Appendix~\ref{app:proof-lem-pi-lip}.
\end{proof}

\begin{proposition}[Lipschitz bound for the reduced equilibrium map]
\label{prop:T-lip}
Under Assumptions~\ref{ass:regularity}, \ref{ass:lipschitz}, and \ref{ass:doeblin}, the reduced map $T$ in \eqref{eq:T-reduced} is Lipschitz on $\cX$.
In particular, there exists $\kappa>0$ such that for all $x,y\in\cX$,
\[
\|T(x)-T(y)\|_\infty \;\le\; \kappa\,\|x-y\|_\infty,
\]
and one admissible (conservative) choice is
\begin{equation}
\label{eq:kappa}
\kappa \;=\; D\Big(\beta\,H\,L+\max_{k\in\cK} C_k\Big).
\end{equation}
\end{proposition}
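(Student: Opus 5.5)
We bound $T(x)-T(y)$ by a triangle-inequality split into two terms. The first isolates the direct dependence of choices on congestion with memory held fixed. The second isolates the indirect dependence through the stationary memory law. Write $T(x)=x(f(\pi(x),x))$ and insert the intermediate point $x(f(\pi(y),x))$:
\[
\|T(x)-T(y)\|_\infty \le \big\|x(f(\pi(x),x))-x(f(\pi(y),x))\big\|_\infty + \big\|x(f(\pi(y),x))-x(f(\pi(y),y))\big\|_\infty .
\]
Since $x_e(f)=\sum_k\sum_{p\ni e}f_{k,p}$, every edge load is bounded by $\sum_k\sum_p|f_{k,p}|$. Hence it suffices to bound the $\ell_1$ path-flow difference per commodity. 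Summing $d_k$ times a per-commodity $\ell_1$ distance over $k$ then gives at most $D$ times the largest per-commodity constant, which is where the overall factor $D$ in \eqref{eq:kappa} comes from.

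\textbf{Direct term.} Fix $\mu=\pi(y)$. By \eqref{eq:path-flow-induced}, $f_{k,\cdot}(\mu,x)/d_k$ is a mixture over $(m,q)\sim\mu_k\otimes\rho_k$ of the logit distributions $\sigma(\cdot\mid m,q,x)$ on $\cA(m,q)$. By convexity of the norm, the $\ell_1$ distance of the mixtures is at most the supremum over $(m,q)$ of $\|\sigma(\cdot\mid m,q,x)-\sigma(\cdot\mid m,q,y)\|_1$. Lemma~\ref{lem:logit-lip} bounds this by $\beta\max_{p}|L_p(x)-L_p(y)|$. Assumption~\ref{ass:lipschitz} and the path-length bound give $|L_p(x)-L_p(y)|\le \sum_{e\in p}L_e|x_e-y_e|\le HL\|x-y\|_\infty$. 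So the direct term is at most $D\beta HL\|x-y\|_\infty$.

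\textbf{Indirect term.} Fix congestion $x$ and vary only $\mu_k$ between $\pi_k(x)$ and $\pi_k(y)$. The map $\mu_k\mapsto f_{k,\cdot}(\mu_k,x)/d_k$ is linear and sends each point mass $\delta_m$ to a probability vector. It is therefore $1$-Lipschitz in $\ell_1$, because $\|\sum_m(\mu-\mu')(m)v_m\|_1\le\|\mu-\mu'\|_1$ when every $\|v_m\|_1=1$. Lemma~\ref{lem:pi-lip} gives $\|\pi_k(x)-\pi_k(y)\|_1\le C_k\|x-y\|_\infty$. So the indirect term is at most $D\max_kC_k\|x-y\|_\infty$. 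Adding the two terms yields \eqref{eq:kappa}.

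\textbf{Main obstacle.} The only nonroutine ingredient is Lemma~\ref{lem:pi-lip}, which is assumed here. The other point needing care is the edge-load-to-path-flow step: summing over paths through $e$ yields a factor $d_k$, not $d_k|\cP_k|$, because each commodity's path flows form a distribution of total mass $d_k$ and the bound uses their total $\ell_1$ variation. I would state that step explicitly so the constant is exactly $D$.
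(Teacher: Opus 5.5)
Your proposal is correct and follows essentially the same argument as the paper: bound $\|T(x)-T(y)\|_\infty$ by the $\ell_1$ path-flow difference, then split it into a congestion term and a memory term. The congestion term uses Lemma~\ref{lem:logit-lip} with $|L_p(x)-L_p(y)|\le HL\|x-y\|_\infty$, and the memory term uses the $1$-Lipschitz mixture bound together with Lemma~\ref{lem:pi-lip}, giving $\kappa=D(\beta HL+\max_k C_k)$; the only difference is cosmetic, namely your intermediate point $f(\pi(y),x)$ versus the paper's $f(\pi(x),y)$.
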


\begin{proof}[Proof idea.]
Bound changes in path costs by Lipschitz latencies and the fact that edge loads are linear in path flows. Use Lemma~\ref{lem:pi-lip} to control how the stationary memory distribution changes with $x$, and Lemma~\ref{lem:logit-lip} to control how logit probabilities change with costs. Aggregating over commodities and paths yields the stated Lipschitz factor $\kappa$.
Full proof is deferred to Appendix~\ref{app:proof-prop-T-lip}.
\end{proof}

\begin{theorem}[Uniqueness and global convergence]
\label{thm:unique}
Suppose Assumptions~\ref{ass:regularity}, \ref{ass:lipschitz}, and \ref{ass:doeblin} hold and the Lipschitz factor $\kappa$ in Proposition~\ref{prop:T-lip} satisfies $\kappa<1$.
Then:
\begin{enumerate}
    \item The reduced map $T$ has a \emph{unique} fixed point $x^\star$, hence the FWE $(x^\star,\mu^\star)$ is unique with $\mu^\star=\pi(x^\star)$.
    \item The fixed-point iteration $x^{(t+1)}\leftarrow T(x^{(t)})$ converges to $x^\star$ from any initialization, at a linear rate bounded by $\kappa$.
    \item The coupled day-to-day dynamics
    \[
    x_{t+1}=x\!\left(f(\mu_t,x_t)\right),\qquad
    \mu_{t+1}=\Phi_{x_t}(\mu_t)
    \]
    are globally convergent to $(x^\star,\mu^\star)$ under mild damping, i.e., for sufficiently small step size in the $x$-update.
\end{enumerate}
\end{theorem}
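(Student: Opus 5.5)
Parts 1 and 2 follow from the Banach fixed-point theorem; part 3 needs a joint-contraction argument that handles the lag in the memory variable.

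\emph{Parts 1 and 2.} By Proposition~\ref{prop:T-lip}, $T$ is $\kappa$-Lipschitz on the complete metric space $(\cX,\|\cdot\|_\infty)$. Since $T$ maps $\cX$ into itself, $\kappa<1$ makes $T$ a contraction. Banach's theorem then gives a unique fixed point $x^\star$, and the Picard iterates $x^{(t+1)}=T(x^{(t)})$ satisfy $\|x^{(t)}-x^\star\|_\infty\le\kappa^t\|x^{(0)}-x^\star\|_\infty$.

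To lift uniqueness from $x$ to the pair $(x,\mu)$, I would argue as follows. Take any FWE $(\hat x,\hat\mu)$. Memory stationarity, together with uniqueness of the stationary law (Lemma~\ref{lem:ergodic}), forces $\hat\mu=\pi(\hat x)$. Congestion consistency then says $\hat x=x(f(\pi(\hat x),\hat x))=T(\hat x)$. Hence $\hat x=x^\star$ and $\hat\mu=\pi(x^\star)$. Existence of this pair as an FWE is Theorem~\ref{thm:fwe-existence}.

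\emph{Part 3.} The state is $z_t=(x_t,\mu_t)$, with the damped update $x_{t+1}=(1-\eta)x_t+\eta\,x(f(\mu_t,x_t))$ and $\mu_{t+1}=\Phi_{x_t}(\mu_t)$. Two mismatch terms need controlling.

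\begin{itemize}
\item \emph{Memory lag.} Doeblin (Assumption~\ref{ass:doeblin}) makes $\Phi_x$ a $(1-\varepsilon_k)$-contraction in $\ell_1$ with fixed point $\pi_k(x)$. Hence
\[
\|\mu_{t+1}-\pi(x_{t+1})\|_1\le(1-\varepsilon)\|\mu_t-\pi(x_t)\|_1+C\|x_{t+1}-x_t\|_\infty,
\]
where $\varepsilon=\min_k\varepsilon_k$ and $C=\max_k C_k$. This uses Lemma~\ref{lem:pi-lip} for $\|\pi(x_t)-\pi(x_{t+1})\|_1$.
\item \emph{Flow perturbation.} The map $\mu\mapsto x(f(\mu,x))$ is linear in $\mu$, so $\|x(f(\mu,x))-T(x)\|_\infty\le D\|\mu-\pi(x)\|_1$.
\end{itemize}

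With $e_t=\|x_t-x^\star\|_\infty$ and $g_t=\|\mu_t-\pi(x_t)\|_1$, these bounds give
\[
e_{t+1}\le(1-\eta(1-\kappa))e_t+\eta D g_t,
\]
\[
g_{t+1}\le(1-\varepsilon)g_t+C\eta\,(e_t+\kappa e_t+Dg_t).
\]
The second line uses $\|x_{t+1}-x_t\|=\eta\|x(f(\mu_t,x_t))-x_t\|$, which is at most $\eta\big((1+\kappa)e_t+Dg_t\big)$.

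This is a $2\times2$ nonnegative linear recursion. Its matrix has diagonal entries near $1-\eta(1-\kappa)$ and $1-\varepsilon+O(\eta)$, and off-diagonal entries $O(\eta)$. For small $\eta$ the spectral radius is below $1$: the product of the off-diagonals is $O(\eta^2)$, while the gaps $\eta(1-\kappa)$ and $\varepsilon$ give a determinant-type margin of order $\eta(1-\kappa)\varepsilon$. So a weighted sum $e_t+\lambda g_t$ decreases geometrically.

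I expect the main obstacle to be choosing the weight $\lambda$ and the threshold $\eta_0$ explicitly. I would take $\lambda=\eta D/(\varepsilon/2)$ and verify both coordinates contract for $\eta\le\eta_0(\varepsilon,\kappa,C,D)$. Once that holds, $e_t,g_t\to0$, and then $\mu_t\to\pi(x^\star)=\mu^\star$ by continuity of $\pi$.
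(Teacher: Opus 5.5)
Your Parts 1 and 2 are the paper's argument: Banach on $(\cX,\|\cdot\|_\infty)$ via Proposition~\ref{prop:T-lip}. You also lift uniqueness from $x$ to the pair explicitly: any FWE has $\hat\mu=\pi(\hat x)$ by Lemma~\ref{lem:ergodic}, hence $\hat x=T(\hat x)$. The paper only asserts that step.

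Part 3 is where you genuinely diverge. The paper disposes of it with one remark: the flows $f(\pi(x^{(t)}),x^{(t)})$ along the Picard iterates converge by continuity. That remark treats memory as if it were always at its stationary law. It never engages the lag $\mu_{t+1}=\Phi_{x_t}(\mu_t)$ or the role of damping, so it does not really address the coupled dynamics in the statement.

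Your two-error recursion is what actually proves the claim. It combines three ingredients:
\begin{enumerate}
\item the Doeblin contraction of $\Phi_x$ around $\pi(x)$, controlling the tracking error $g_t$;
\item linearity of $\mu\mapsto x(f(\mu,x))$, controlling the flow perturbation;
\item Lemma~\ref{lem:pi-lip}, controlling the drift of $\pi(x_t)$.
\end{enumerate}
Your weight works. With $\lambda=2\eta D/\varepsilon$, one checks
\[
e_{t+1}+\lambda g_{t+1}\le \max\{1-\eta(1-\kappa)/2,\;1-\varepsilon/4\}\,(e_t+\lambda g_t)
\]
whenever $\eta\le\min\{1,\;\varepsilon(1-\kappa)/(4CD(1+\kappa))\}$. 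This gives an explicit step-size threshold and a geometric rate, which the paper's sketch does not provide; the rate necessarily degrades linearly in $\eta$.

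Two small bookkeeping points should be stated. First, take $g_t=\max_k\|\mu_{k,t}-\pi_k(x_t)\|_1$, or carry an extra factor $|\cK|$, so that the $D$-bound and the $C=\max_k C_k$ bound hold simultaneously. Second, note that the damped iterates stay in $\cX$ because each update is a convex combination, so the Lipschitz estimates on $\cX$ apply throughout.
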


\begin{proof}[Proof idea.]
Under $\kappa<1$, Proposition~\ref{prop:T-lip} implies $T$ is a contraction on the complete metric space $(\cX,\|\cdot\|_\infty)$. Banach's fixed point theorem yields a unique fixed point $x^\star$ and global convergence of the fixed-point iteration; mapping back via $\mu^\star=\pi(x^\star)$ yields uniqueness of the FWE.
Full proof is deferred to Appendix~\ref{app:proof-thm-unique}.
\end{proof}

\begin{remark}[Interpretation of the contraction regime]
The contraction condition $\kappa<1$ is satisfied when congestion costs are not overly sensitive (small $L$), agents are not overly deterministic (moderate $\beta$), or the memory dynamics mix rapidly (large $\varepsilon_k$).
This regime is operationally meaningful for AI guidance: injected randomness and periodic reset/consolidation are standard tools, and they simultaneously ensure both privacy and equilibrium stability.
\end{remark}

\subsection{Comparative statics and differentiability for policy optimization}
For equilibrium-aware design we require that the equilibrium depends smoothly on policy parameters.
Let $\theta\in\Theta$ parameterize a family of surfacing distributions and memory policies (e.g., reset rate, eviction rule, or summary frequency), inducing kernels $P^{(k)}_{x,\theta}$, stationary distributions $\pi_{k,\theta}(x)$, and a reduced map
\[
T_\theta(x)\triangleq x\!\left(f(\pi_\theta(x),x;\theta)\right),
\qquad \pi_\theta(x)=(\pi_{k,\theta}(x))_{k\in\cK}.
\]

\begin{theorem}[Implicit differentiation of an isolated FWE]
\label{thm:implicit}
Assume $\Theta$ is an open set and that $T_\theta$ is continuously differentiable in $(x,\theta)$.
Fix $\theta\in\Theta$ and suppose $x^\star$ is a fixed point of $T_\theta$ such that
\begin{equation}
\label{eq:regular-fixed-point}
\det\!\Big(I-\nabla_x T_\theta(x^\star)\Big)\neq 0.
\end{equation}
Then there exists a neighborhood $U$ of $\theta$ and a unique continuously differentiable map $\theta'\mapsto x^\star(\theta')$ on $U$ such that $x^\star(\theta)=x^\star$ and $x^\star(\theta')=T_{\theta'}(x^\star(\theta'))$ for all $\theta'\in U$.
Moreover,
\begin{equation}
\label{eq:implicit-derivative}
\frac{d x^\star}{d\theta}
\;=\;
\Big(I-\nabla_x T_\theta(x^\star)\Big)^{-1}\nabla_\theta T_\theta(x^\star).
\end{equation}
\end{theorem}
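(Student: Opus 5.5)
The plan is to apply the classical implicit function theorem to the map
\[
F:\cX^\circ\times\Theta\to\R^{|E|},\qquad F(x,\theta)\triangleq x-T_\theta(x).
\]
Because $T_\theta$ is continuously differentiable in $(x,\theta)$ by hypothesis, $F$ is $C^1$. Moreover, $F(x^\star,\theta)=0$ is exactly the fixed-point condition. The partial Jacobian is $\nabla_x F(x^\star,\theta)=I-\nabla_x T_\theta(x^\star)$, and it is invertible by the regularity condition \eqref{eq:regular-fixed-point}. The one point to check first is that $x^\star$ sits in an open set on which $T_\theta$ is defined and differentiable. If $x^\star$ lies on the boundary of $\cX=[0,D]^{|E|}$, I would use the fact that $T_\theta(x)$ is given by the same smooth formula (logit probabilities, stationary distributions via matrix inversion) for $x$ in an open neighbourhood in $\R^{|E|}$. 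I will therefore take the $C^1$ hypothesis to refer to such an open extension and apply the theorem there.

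The implicit function theorem then yields open neighbourhoods $U\ni\theta$ and $W\ni x^\star$ and a unique $C^1$ map $\theta'\mapsto x^\star(\theta')$ from $U$ into $W$ with the following properties: $x^\star(\theta)=x^\star$, and $F(x^\star(\theta'),\theta')=0$, i.e.\ $x^\star(\theta')=T_{\theta'}(x^\star(\theta'))$, for all $\theta'\in U$. The uniqueness here is local: it holds among maps with values in $W$, which is the sense in which the statement's ``unique'' should be read. I would note this explicitly.

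For the derivative formula, differentiate the identity $x^\star(\theta')-T_{\theta'}(x^\star(\theta'))=0$ with the chain rule. This gives
\[
\Big(I-\nabla_x T_{\theta'}(x^\star(\theta'))\Big)\frac{dx^\star}{d\theta'}=\nabla_\theta T_{\theta'}(x^\star(\theta')).
\]
At $\theta'=\theta$ the matrix on the left is invertible, so solving gives \eqref{eq:implicit-derivative}. Invertibility also persists on a possibly smaller neighbourhood, by continuity of the determinant, so the formula holds throughout that neighbourhood.

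The only step that is not routine is the domain issue (openness and differentiability of $T_\theta$ near $x^\star$). For a genuinely smooth parameterization, I would justify the $C^1$ hypothesis itself as follows. The logit weights are smooth in $x$ whenever the latencies are $C^1$ (Assumption~\ref{ass:lipschitz}). The stationary distribution $\pi_{k,\theta}(x)$ solves a linear system, and this system stays nonsingular by ergodicity (Lemma~\ref{lem:ergodic}), so $\pi_{k,\theta}(x)$ is a rational, hence $C^1$, function of the kernel entries, as in the proof of Lemma~\ref{lem:pi-continuous}. Under the statement as given, however, this is assumed, and the proof reduces to the implicit function theorem.
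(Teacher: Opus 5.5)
Your proposal is correct and follows the same route as the paper. Both apply the implicit function theorem to the fixed-point residual (the paper uses $F(x,\theta)=T_\theta(x)-x$, you use its negative), get invertibility of the partial Jacobian from \eqref{eq:regular-fixed-point}, and differentiate the identity $F(x^\star(\theta'),\theta')=0$ to obtain \eqref{eq:implicit-derivative}. Your extra points are sensible refinements the paper's proof glosses over: $x^\star$ may lie on the boundary of $\cX$, so the $C^1$ hypothesis should be read on an open extension, and uniqueness is only local, as the paper also notes.
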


\begin{proof}[Proof idea.]
Apply the implicit function theorem to $F(x,\theta)=T_\theta(x)-x$. The Jacobian with respect to $x$ is $\nabla_xT_\theta(x)-I$, which is invertible at $x^\star$ by \eqref{eq:regular-fixed-point}. This yields a locally unique differentiable selection $x^\star(\theta)$ and the derivative formula.
Full proof is deferred to Appendix~\ref{app:proof-thm-implicit}.
\end{proof}

Equation~\eqref{eq:implicit-derivative} is the key technical enabler for gradient-based policy optimization in Section~\ref{sec:design}.

\section{Micro-to-salience surrogates for richer memories}\label{sec:bridge}
The stationary salience layer is exact for $B=1$ last-choice memory (Theorem~\ref{thm:micro2salience}),
but realistic recall involves larger memory budgets and more complex update rules.
This section develops a mean-field approximation pipeline that maps explicit LRU memory to an
endogenized TTL model and then to stationary salience parameters, yielding a practical surrogate for
stationary behavior without simulating the full memory Markov chain.

\subsection{Beyond B=1: a mean-field micro-to-salience approximation for larger memories}
\label{sec:micro2salience-B}
The exact reduction in Theorem~\ref{thm:micro2salience} relies on the special structure of $B_k=1$ (``last choice'').
For larger memory budgets $B_k>1$ and realistic eviction rules such as LRU, the stationary choice probabilities no longer admit a closed form.
Nevertheless, the salience abstraction remains useful: it can arise as a controlled approximation that compresses rich memory dynamics into \emph{route availability frequencies}.

\begin{center}
\begin{minipage}{0.95\linewidth}
\small
\textbf{Approximation assumptions for the micro$\rightarrow$salience surrogate (for $B_k>1$).}
Our $B_k>1$ bridge relies on four explicit approximation ingredients:
(i) Poissonization of discrete repeated choice into independent Poisson request streams (Assumption~\ref{ass:poissonized-requests} and Lemma~\ref{lem:poisson-thinning}),
(ii) an LRU$\rightarrow$TTL (characteristic-time) approximation in the large-cache/large-catalog regime (Theorem~\ref{thm:lru-ttl}),
(iii) an independent-availability menu surrogate with a nonempty baseline (Assumption~\ref{assump:independent-menu}), and
(iv) random-denominator concentration for ``large'' available sets (Proposition~\ref{prop:awl-approx} and Corollary~\ref{cor:awl-rate}).
These assumptions are \emph{not} implied by the fully specified discrete-time micro Markov model; rather, they are standard mean-field/caching approximations that become accurate in diffused-popularity and large-menu regimes, and we validate them empirically in Section~\ref{sec:experiments}.
\end{minipage}
\end{center}

\paragraph{Setup: LRU recall sets.}
Fix commodity $k$ and congestion vector $x$.
Let the memory state be an ordered list of the $B_k$ most recently chosen routes (LRU stack), so the recalled set is $S(m)\subseteq \cP_k$ with $|S(m)|=B_k$.
Each period, a candidate route $q\sim\rho_k$ is surfaced, and the traveler chooses from the available set $A=S(m)\cup\{q\}$ via the logit rule \eqref{eq:logit-choice}, after which the chosen route moves to the top of the LRU stack.

Let $\pi_{k,x}(\cdot)$ denote the stationary marginal distribution of the \emph{chosen} route for the resulting Markov chain (when it exists and is unique).

\paragraph{Step 0: from nonatomic repeated choice to a Poisson request stream.}
The cache-theoretic results we invoke (LRU$\rightarrow$TTL approximations) are stated for \emph{request processes} in continuous time.
To connect them to repeated route-choice, we use a standard \emph{Poissonization} device that is exact in a nonatomic mean-field limit and convenient for analysis.

\begin{assumption}[Poissonized stationary request model]
\label{ass:poissonized-requests}
Fix a commodity $k$ and congestion vector $x$ and consider a stationary repeated-choice regime.
Each infinitesimal traveler generates decision epochs according to an independent Poisson process of rate $1$.
At each epoch, the traveler selects a route label $p\in\cP_k$ with probability $\pi_{k,x}(p)$, independently across epochs and travelers.
\end{assumption}

\begin{lemma}[Thinning and superposition yield independent Poisson requests]
\label{lem:poisson-thinning}
Under Assumption~\ref{ass:poissonized-requests}, the aggregate request process for each route $p\in\cP_k$ is a Poisson process with intensity
\[
\lambda_{k,p}(x) \;=\; d_k\,\pi_{k,x}(p),
\]
and these route-specific request processes are mutually independent across $p$.
\end{lemma}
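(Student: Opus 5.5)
The plan is to apply the two standard Poisson process operations, superposition and independent marking (thinning), in that order. There is one modeling subtlety: the population is nonatomic with mass $d_k$. I will first treat a finite population of $N$ travelers, each of mass $d_k/N$, and then scale so that epochs occur at rate $N$ per unit mass. Alternatively, I will read Assumption~\ref{ass:poissonized-requests} directly as saying that the aggregate epoch process is the mean-field superposition with total intensity $d_k$ (demand mass times the per-traveler rate $1$). Either way, the first step shows that the aggregate stream of decision epochs for commodity $k$ is a Poisson process of rate $d_k$. For the finite version, this uses the fact that a superposition of independent Poisson processes is Poisson with the summed rate. I would verify this through independent increments and the Laplace functional, which factorizes over travelers.

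The second step marks each aggregate epoch with a route label $p\in\cP_k$. By assumption these labels are i.i.d.\ with law $\pi_{k,x}$, and they are independent of the epoch times. This holds because labels are independent across epochs and travelers, and the label law does not depend on which traveler fired. I will then invoke the marking (coloring) theorem. A Poisson process of rate $\lambda$ with i.i.d.\ marks taking values in a finite set, with probabilities $\pi(p)$, splits into mutually independent Poisson processes of rates $\lambda\pi(p)$. Taking $\lambda=d_k$ gives intensities $\lambda_{k,p}(x)=d_k\pi_{k,x}(p)$ and independence across $p$.

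If a self-contained argument is needed for the marking theorem, I would proceed as follows. Fix disjoint intervals and compute the joint probability generating function of the counts $(N_p(I_j))_{p,j}$. Condition on the total count $N(I_j)\sim\mathrm{Poisson}(d_k|I_j|)$. The conditional law of the per-route counts is multinomial, so the generating function factorizes as $\prod_{p,j}\exp\big(d_k\pi_{k,x}(p)|I_j|(z_{p,j}-1)\big)$. This factorization shows both that each route's counts are Poisson and that the counts are independent across routes and intervals.

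The main obstacle is not the computation. It is making precise which object is Poisson in the nonatomic limit, since a continuum of rate-1 processes is not literally a superposition of countably many processes. I would address this by stating the finite-$N$ result exactly, noting that it holds for every $N$ with the total rate fixed at $d_k$, and then identifying the nonatomic model with this consistent family. On this reading, Assumption~\ref{ass:poissonized-requests} is interpreted as specifying the aggregate rate-$d_k$ epoch process with i.i.d.\ $\pi_{k,x}$ labels.
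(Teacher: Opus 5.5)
Your proposal is correct and takes the same approach as the paper: superpose the per-traveler clocks into a rate-$d_k$ Poisson epoch process, then split it by i.i.d.\ $\pi_{k,x}$-labels (thinning/marking) into mutually independent Poisson streams of rate $d_k\pi_{k,x}(p)$. Your finite-$N$ device and the generating-function proof of the marking theorem make rigorous what the paper's two-line proof takes as standard; one small fix is that the normalization should read that each of the $N$ travelers fires at rate $d_k/N$ (i.e.\ rate $1$ per unit mass), not ``rate $N$ per unit mass.''
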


\begin{proof}
By superposition, the union of independent Poisson clocks (one per infinitesimal traveler) is a Poisson process of rate $d_k$.
Thinning this process by independently labeling each event with route $p$ with probability $\pi_{k,x}(p)$ yields independent Poisson processes with rates $d_k\pi_{k,x}(p)$ for each label $p$.
\end{proof}

\paragraph{Step 0a: Discrete-time departures and the accuracy of Poissonization.}
Many day-to-day route-choice models are discrete: each agent departs once per period and draws a route label i.i.d.\ with probabilities $\pi_p$.
In that setting, each route's request stream is Bernoulli on a grid and inter-request times are geometric.
The Poissonization device above replaces geometric inter-request times by exponential ones; the next lemma quantifies the induced error at the level of the TTL/working-set ``hit'' probability that drives our micro$\rightarrow$salience surrogate.

\begin{lemma}[Working-set vs.\ Poisson TTL hit probabilities]
\label{lem:poissonization-accuracy}
Fix a route with per-period request probability $\pi\in(0,1)$ in a discrete-time i.i.d.\ request stream, and let $W\in\mathbb{N}$ be a window length.
The discrete-time working-set hit probability is
$H^{\mathrm{WS}}(W)=1-(1-\pi)^W$.
The continuous-time Poisson TTL approximation with rate $\lambda=\pi$ and horizon $T=W$ is
$H^{\mathrm{TTL}}(W)=1-e^{-\pi W}$.
Then
\[
0 \le H^{\mathrm{WS}}(W)-H^{\mathrm{TTL}}(W)
= e^{-\pi W}-(1-\pi)^W
\le \frac{\pi^2 W}{1-\pi}\,e^{-\pi W}
\le \frac{\pi^2 W}{1-\pi}.
\]
\end{lemma}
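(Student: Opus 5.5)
The identity is immediate: $H^{\mathrm{WS}}(W)-H^{\mathrm{TTL}}(W)=\bigl(1-(1-\pi)^W\bigr)-\bigl(1-e^{-\pi W}\bigr)=e^{-\pi W}-(1-\pi)^W$. So the whole task is a two-sided comparison of $(1-\pi)^W$ with $e^{-\pi W}$. I would carry it out in logarithms: write $(1-\pi)^W=e^{W\log(1-\pi)}$ and compare the exponents $W\log(1-\pi)$ and $-\pi W$.

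\emph{Nonnegativity.} The elementary inequality $\log(1-\pi)\le -\pi$ for $\pi\in(0,1)$ gives $(1-\pi)^W\le e^{-\pi W}$, hence the difference is $\ge 0$.

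\emph{Upper bound.} Factor the difference as
\[
e^{-\pi W}-(1-\pi)^W = e^{-\pi W}\Bigl(1-e^{-W\delta}\Bigr),
\qquad \delta\triangleq -\log(1-\pi)-\pi\ge 0 .
\]
Then apply $1-e^{-u}\le u$ for $u\ge 0$, which reduces the claim to $\delta\le \pi^2/(1-\pi)$. For this I would use the series
\[
-\log(1-\pi)-\pi=\sum_{j\ge 2}\frac{\pi^j}{j}\le \frac12\sum_{j\ge2}\pi^j=\frac{\pi^2}{2(1-\pi)}\le \frac{\pi^2}{1-\pi}.
\]
An alternative is the integral $\int_0^\pi \frac{t}{1-t}\,dt\le \frac{\pi^2}{1-\pi}$. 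Either route gives $e^{-\pi W}-(1-\pi)^W\le W\delta\, e^{-\pi W}\le \frac{\pi^2 W}{1-\pi}e^{-\pi W}$. The last stated inequality then follows from $e^{-\pi W}\le 1$.

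\emph{Main obstacle.} The only delicate point is keeping the prefactor $e^{-\pi W}$ rather than the weaker $(1-\pi)^W$. A naive mean-value argument on $y\mapsto y^W$ between $(1-\pi)$ and $e^{-\pi}$ would produce $W e^{-\pi(W-1)}$ together with a bound on $e^{-\pi}-(1-\pi)\le \pi^2/2$, which gives a slightly different constant. The factorization above, which pulls out the larger term $e^{-\pi W}$, avoids this issue cleanly. I therefore expect that step to be the one requiring care; the rest is routine.
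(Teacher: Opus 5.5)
Your proof is correct and follows essentially the same route as the paper: both factor out $e^{-\pi W}$ and apply $1-e^{-y}\le y$. The only difference is the bound on $-\log(1-\pi)-\pi$: the paper uses the standard inequality $\log(1-\pi)\ge -\pi/(1-\pi)$, i.e.\ $(1-\pi)^W\ge e^{-\pi W/(1-\pi)}$, where you use the power series, which gives the sharper constant $\pi^2/(2(1-\pi))$ (not needed for the stated bound).
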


\begin{proof}
The bound $(1-\pi)^W\le e^{-\pi W}$ follows from $\log(1-\pi)\le -\pi$.
For the other direction, $\log(1-\pi)\ge -\pi/(1-\pi)$ implies $(1-\pi)^W\ge e^{-\pi W/(1-\pi)}$.
Therefore
\[
e^{-\pi W}-(1-\pi)^W \le e^{-\pi W}-e^{-\pi W/(1-\pi)}
= e^{-\pi W}\Big(1-e^{-\pi^2 W/(1-\pi)}\Big)
\le \frac{\pi^2 W}{1-\pi}e^{-\pi W},
\]
where the last step uses $1-e^{-y}\le y$.
\end{proof}

\begin{corollary}[Diffused popularity justifies Poissonization]
\label{cor:diffuse-poisson}
Let $\pi_{\max}\triangleq\max_p \pi_p$ and assume a regime in which $\pi_{\max}\to 0$ (e.g., a large choice set with no dominant route).
If the relevant window scales so that $\sup_p \pi_p W = O(1)$, then
\[
\sup_p \big|H^{\mathrm{WS}}_p(W)-H^{\mathrm{TTL}}_p(W)\big| = O(\pi_{\max}).
\]
\end{corollary}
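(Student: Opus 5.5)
The plan is to apply Lemma~\ref{lem:poissonization-accuracy} route by route and then take a supremum. For each route $p$, the lemma gives
\[
0\le H^{\mathrm{WS}}_p(W)-H^{\mathrm{TTL}}_p(W)\le \frac{\pi_p^2W}{1-\pi_p}\,e^{-\pi_pW}\le \frac{\pi_p^2 W}{1-\pi_p}.
\]
Since the difference is nonnegative, the absolute value in the statement is just the difference itself, so no separate lower bound is needed.

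Next we factor the bound as $\pi_p\cdot(\pi_pW)\cdot(1-\pi_p)^{-1}$ and control each factor uniformly in $p$:
\begin{itemize}
\item The middle factor satisfies $\pi_pW\le \sup_r \pi_r W\le M$ for some constant $M$, by the scaling hypothesis $\sup_p\pi_pW=O(1)$.
\item Because $\pi_{\max}\to0$, we eventually have $\pi_{\max}\le 1/2$, so $(1-\pi_p)^{-1}\le 2$ for every $p$.
\item The first factor satisfies $\pi_p\le\pi_{\max}$.
\end{itemize}
Multiplying these gives $\sup_p|H^{\mathrm{WS}}_p-H^{\mathrm{TTL}}_p|\le 2M\pi_{\max}=O(\pi_{\max})$. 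If desired, the sharper intermediate bound with the factor $e^{-\pi_pW}\le1$ can also be kept; it only tightens the constant.

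The only point that needs care is making the asymptotic regime explicit, since the ``$O(\cdot)$'' is along a sequence of instances. Along that sequence, the number of routes, the probabilities $\pi_p$, and $W$ may all vary. The constant $M$ must be uniform along the sequence, which is exactly what the hypothesis $\sup_p \pi_p W = O(1)$ provides. The bound is also uniform over the (possibly growing) route set because every estimate above is expressed through $\pi_{\max}$ and $M$ alone, and never depends on how many routes there are.
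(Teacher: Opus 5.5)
Your proposal is correct and follows the paper's proof: both apply Lemma~\ref{lem:poissonization-accuracy} route by route and bound $\pi_p^2W=(\pi_pW)\,\pi_p\le(\sup_r\pi_rW)\,\pi_{\max}$. The only cosmetic difference is the factor $(1-\pi_p)^{-1}$: you bound it by $2$ once $\pi_{\max}\le 1/2$, whereas the paper writes $\pi_{\max}/(1-\pi_{\max})=O(\pi_{\max})$.
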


\begin{proof}[Proof idea.]
Apply Lemma~\ref{lem:poissonization-accuracy} to each route $p$ and use $\pi_p^2 W\le (\pi_p W)\pi_p$. Under the regime $\sup_p \pi_p W=O(1)$ and $\pi_{\max}\to 0$, the uniform difference between working-set and Poisson TTL hit probabilities is $O(\pi_{\max})$.
Full proof is deferred to Appendix~\ref{app:proof-cor-diffuse-poisson}.
\end{proof}

\begin{remark}[Normalization]
Under the Poissonized model, the TTL in-cache probability is $H^{\mathrm{TTL}}_{k,p}(t)=1-e^{-\lambda_{k,p}(x)t}$.
Equivalently, by rescaling time by $d_k$ one can write $H^{\mathrm{TTL}}_{k,p}(t/d_k)=1-e^{-\pi_{k,x}(p)t}$, which matches the form in \eqref{eq:ttl-hit}.
\end{remark}

\paragraph{Step 1: approximating recall frequencies by TTL/Che approximation.}
Under i.i.d.\ ``request'' models, LRU caches admit accurate and in some regimes asymptotically exact approximations in terms of a characteristic time (often called the TTL or Che approximation).
In our setting, the ``request'' process is the stationary sequence of chosen routes, whose marginal is $\pi_{k,x}$.
Motivated by the TTL approximation for LRU caches \citep{fricker2012lru,jiang2018ttl,gast2017ttl}, we define the \emph{approximate recall probability} for each route $p$ by
\begin{equation}
\label{eq:ttl-hit}
h_{k,p}(x) \;\approx\; 1-\exp\!\big(-\pi_{k,x}(p)\,T_k(x)\big),
\end{equation}
where the \emph{characteristic time} $T_k(x)\ge 0$ is chosen to satisfy the cache-size constraint
\begin{equation}
\label{eq:ttl-size}
\sum_{p\in\cP_k} h_{k,p}(x) \;=\; B_k.
\end{equation}
Intuitively, \eqref{eq:ttl-hit} says a route is recalled if it was requested at least once in the ``recent'' time window of length $T_k(x)$.

\paragraph{Step 1b: LRU$\rightarrow$TTL accuracy and an explicit rate under Poissonized requests.}
Step~1 replaces the LRU recall list by a TTL cache with a single \emph{characteristic time} $T_k(x)$.
This approximation is not merely heuristic: for large caches and large catalogs, an LRU cache becomes asymptotically equivalent to a TTL cache with an appropriately chosen characteristic time.
We record a specialization of the convergence and rate results of~\citep{jiang2018ttl}.

\begin{theorem}[LRU$\rightarrow$TTL accuracy for individual hit probabilities]
\label{thm:lru-ttl}
Fix a commodity $k$ and let $n\triangleq|\cP_k|$.
Assume that route requests for distinct $p\in\cP_k$ are described by mutually independent stationary and ergodic point processes with intensities $\{\lambda_{k,p}\}_{p\in\cP_k}$, and let the LRU cache size satisfy $B_k=B_k(n)\to\infty$ as $n\to\infty$.
Let $T_{k,n}$ denote the \emph{LRU characteristic time} (the unique TTL timer value) defined by the occupancy equation
\begin{equation}
\label{eq:lru-characteristic-time}
B_k \;=\;\sum_{p\in\cP_k} H^{\mathrm{TTL}}_{k,p}(T_{k,n}),
\end{equation}
where $H^{\mathrm{TTL}}_{k,p}(t)$ is the TTL ``in-cache'' probability of route $p$ at timer $t$ (for Poisson requests, $H^{\mathrm{TTL}}_{k,p}(t)=1-e^{-\lambda_{k,p}t}$).
Let $H^{\mathrm{LRU}}_{k,p}(n)$ be the stationary LRU hit probability of route $p$.

Under the regularity conditions of~\citep[Prop.~4.4]{jiang2018ttl},
\begin{equation}
\label{eq:lru-ttl-conv}
\max_{p\in\cP_k}\big|H^{\mathrm{LRU}}_{k,p}(n)-H^{\mathrm{TTL}}_{k,p}(T_{k,n})\big|\;\longrightarrow\;0
\qquad\text{as }n\to\infty.
\end{equation}
Moreover, under the Poisson request model,~\citet[Prop.~5.2 and Ex.~5.3]{jiang2018ttl} give the explicit bound
\begin{equation}
\label{eq:lru-ttl-rate}
\max_{p\in\cP_k}\big|H^{\mathrm{LRU}}_{k,p}(n)-H^{\mathrm{TTL}}_{k,p}(T_{k,n})\big|
\;=\;
O\!\left(\sqrt{\frac{\log B_k}{B_k}}\right).
\end{equation}
\end{theorem}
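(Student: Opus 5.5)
The statement is essentially a specialization of results in \citep{jiang2018ttl}. My plan is therefore a careful reduction: verify that our route-request setting meets their hypotheses, and then read off the conclusions. I would not re-derive the cache asymptotics.

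\textbf{Step 1: fix the TTL timer.} For fixed $k$ and $x$, the map $t\mapsto\sum_p H^{\mathrm{TTL}}_{k,p}(t)$ is continuous and strictly increasing from $0$ to $n$ whenever all $\lambda_{k,p}>0$. This positivity holds because $\pi_{k,x}$ has full support, by Lemma~\ref{lem:ergodic} together with full support of $\rho_k$ and finite $\beta$. Hence for $1\le B_k<n$ the occupancy equation \eqref{eq:lru-characteristic-time} has a unique root $T_{k,n}$, so the characteristic time is well defined.

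\textbf{Step 2: match the request model.} I would identify the route labels with cache items and the chosen-route streams with request processes. Under Assumption~\ref{ass:poissonized-requests}, Lemma~\ref{lem:poisson-thinning} gives mutually independent Poisson streams with rates $d_k\pi_{k,x}(p)$. These streams are stationary and ergodic, which is exactly the independent-stream setting of \citep[Prop.~4.4]{jiang2018ttl}. The LRU dynamics in Definition~\ref{def:lru} are move-to-front with eviction of the last element. This coincides with the standard LRU cache driven by these requests, with stack size $B_k$.

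\textbf{Step 3: invoke the cited results.} Assuming the regularity conditions of Prop.~4.4, namely uniform non-degeneracy of the normalized timer and no single item dominating the request mass, their theorem yields the uniform convergence \eqref{eq:lru-ttl-conv}. For the rate, I would specialize Prop.~5.2 to exponential inter-request times as in their Ex.~5.3. That gives a uniform deviation bound between the LRU hit probability and $1-e^{-\lambda_p T}$. The bound comes from concentration of the number of distinct items requested within a window around $T_{k,n}$: this count is a sum of independent Bernoulli indicators, and Bernstein/Chernoff bounds at deviation scale $\sqrt{B_k\log B_k}$ give the $O(\sqrt{\log B_k/B_k})$ term in \eqref{eq:lru-ttl-rate}.

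\textbf{Main obstacle.} The delicate point is the scaling. Jiang et al.\ state their results along a sequence of systems, so I would need to embed our fixed-$x$ commodity in a sequence indexed by $n$ with $B_k(n)\to\infty$. Their hypotheses also require the rates $\lambda_{k,p}$, or their normalized versions, to satisfy uniform bounds along that sequence. The rates here depend on $\pi_{k,x}$, which is itself an unknown stationary law, so I would state these conditions as explicit assumptions (the "regularity conditions" in the theorem) rather than derive them. Finally, I would check that time rescaling by $d_k$ leaves the hit probabilities unchanged, as in the Normalization remark, so that the rate is unaffected.
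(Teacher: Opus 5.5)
Your proposal follows the paper's proof: both treat the theorem as a direct specialization of Jiang et al. They identify routes with cache items, $B_k$ with the cache size $C_n$, and route intensities with item request rates, then read off the uniform convergence from their Prop.~4.4 and the $O(\sqrt{\log B_k/B_k})$ rate from Prop.~5.2/Ex.~5.3. Your additional bookkeeping is absent from the paper's proof and makes explicit what it leaves implicit: well-definedness of the Poisson characteristic time, embedding the fixed commodity in a sequence with $B_k(n)\to\infty$, and invariance of hit probabilities under the $d_k$ time rescaling.
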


\begin{proof}[Proof idea.]
This is a direct translation of the LRU$\rightarrow$TTL approximation results of~\citet{jiang2018ttl}: the occupancy equation defines the characteristic time, and their propositions bound the uniform gap between LRU hit probabilities and the corresponding TTL in-cache probabilities. We verify that our request-process hypotheses match theirs and map notation.
Full proof is deferred to Appendix~\ref{app:proof-thm-lru-ttl}.
\end{proof}

\paragraph{Implication for the micro$\rightarrow$salience surrogate.}
Under the TTL approximation, availability indicators are independent across routes whenever the underlying request point processes are independent, because membership in the TTL cache is ``route-local'' (it depends only on the route's own request history).
Thus, Assumption~\ref{assump:independent-menu} is \emph{exact} for TTL and becomes asymptotically accurate for LRU by Theorem~\ref{thm:lru-ttl}.
In the Poissonized repeated-choice regime (i.i.d.\ route labels with intensities proportional to stationary choice probabilities), the additional approximation error incurred by replacing LRU recall probabilities with TTL probabilities is of order $\sqrt{\log B_k/B_k}$.

\paragraph{Step 2: from recall frequencies to an effective salience.}
Define the stationary \emph{availability probability} of route $p$ as
\begin{equation}
\label{eq:availability}
\eta_{k,p}(x)
\;\triangleq\;
\Pr\{p\in A\}
\;=\;
h_{k,p}(x) + \big(1-h_{k,p}(x)\big)\rho_k(p),
\end{equation}
since $p$ is available either because it is recalled, or because it is surfaced when not recalled.

The unconditional stationary choice probability under random availability is
\begin{equation}
\label{eq:random-attention-logit}
\pi_{k,x}(p)
\;=\;
\E\left[\frac{\mathbf{1}\{p\in A\}\exp(-\beta L_p(x))}
{\sum_{r\in A}\exp(-\beta L_r(x))}\right].
\end{equation}
This is a ``random attention'' logit: logit applied to a random menu $A$.

\begin{proposition}[Availability-weighted logit approximation]
\label{prop:awl-approx}
Fix $k$ and $x$ and assume all routes have strictly positive availability $\eta_{k,p}(x)>0$.
Let $w_p\triangleq \exp(-\beta L_p(x))$ and let $Z\triangleq \sum_{r\in A} w_r$ denote the random denominator in \eqref{eq:random-attention-logit}. Assume $A$ is almost surely nonempty (equivalently, $Z>0$ almost surely).
If $Z$ concentrates around its mean $\mu\triangleq \E[Z]=\sum_r \eta_{k,r}(x)w_r$ in the sense that the coefficient of variation $\mathrm{cv}(Z)\triangleq \sqrt{\mathrm{Var}(Z)}/\mu$ is small and $\max_p w_p/\mu$ is small, then
\begin{equation}
\label{eq:awl}
\pi_{k,x}(p)
\;=\;
\frac{\eta_{k,p}(x)\exp(-\beta L_p(x))}
{\sum_{r\in\cP_k}\eta_{k,r}(x)\exp(-\beta L_r(x))}
\;+\;
O\!\left(\mathrm{cv}(Z)^2+\max_{r}\frac{w_r}{\mu}\right).
\end{equation}
In particular, when the available set contains many routes with non-negligible weight (e.g., large $B_k$), the error term in \eqref{eq:awl} is small.
\end{proposition}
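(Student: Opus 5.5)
The plan is to write $\pi_{k,x}(p)=w_p\,\E[\mathbf{1}\{p\in A\}/Z]$ and compare the random denominator $Z$ with the deterministic $\mu$. First I will pass from the unconditional expectation to one conditioned on $p\in A$:
\[
\E[\mathbf{1}\{p\in A\}/Z]=\eta_{k,p}(x)\,\E[1/Z\mid p\in A].
\]
Under the independent-availability surrogate (Assumption~\ref{assump:independent-menu}), conditioning on $p\in A$ only changes the law of the $p$-th summand of $Z$. So conditionally $Z=Z_{-p}+w_p$, where $Z_{-p}\triangleq\sum_{r\neq p}\mathbf{1}\{r\in A\}w_r$ has the same law as unconditionally. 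Its mean is $\mu_p^{+}\triangleq\E[Z\mid p\in A]=\mu+(1-\eta_{k,p})w_p$, so $|\mu_p^{+}-\mu|\le w_p$ and $\mu_p^{+}/\mu=1+O(\max_r w_r/\mu)$. Its variance is at most $\mathrm{Var}(Z)$, since it drops a nonnegative Bernoulli-variance term.

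The core step is a second-order expansion of $1/Z$ around the conditional mean $\mu_p^+$. I will use the exact identity
\[
\frac1Z=\frac1{\mu_p^+}-\frac{Z-\mu_p^+}{(\mu_p^+)^2}+\frac{(Z-\mu_p^+)^2}{(\mu_p^+)^2 Z}.
\]
Taking conditional expectations kills the linear term. This gives $\E[1/Z\mid p\in A]=\frac{1}{\mu_p^+}\big(1+R_p\big)$ with $0\le R_p=\E[(Z-\mu_p^+)^2/(\mu_p^+ Z)\mid p\in A]$. Combining the two steps yields
\[
\pi_{k,x}(p)=\frac{\eta_{k,p}w_p}{\mu}\big(1+O(\max_r w_r/\mu)\big)(1+R_p),
\]
and the leading factor is exactly the availability-weighted logit in \eqref{eq:awl}. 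Since these leading terms are at most $1$ and sum to $1$ over $p$, relative errors become additive errors of the same order.

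The main obstacle is bounding $R_p$ by $O(\mathrm{cv}(Z)^2)$, because $1/Z$ can be large when $Z$ is small. The nonempty-baseline part of Assumption~\ref{assump:independent-menu} gives a deterministic floor $Z\ge z_{\min}>0$, e.g. the weight of an always-available route.

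1. I will split on the event $\{Z\ge\mu_p^+/2\}$.
2. On that event, $R_p\le 2\,\mathrm{Var}(Z\mid p\in A)/(\mu_p^+)^2\le 2\,\mathrm{cv}(Z)^2(1+O(\max_r w_r/\mu))$.
3. On the complement, I will bound $(Z-\mu_p^+)^2\le(\mu_p^+)^2$ and $1/Z\le 1/z_{\min}$. Chebyshev gives $\Pr\{|Z-\mu_p^+|\ge\mu_p^+/2\}\le 4\,\mathrm{cv}^2$, so this part contributes $O\big(\mathrm{cv}(Z)^2\,\mu/z_{\min}\big)$.

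This is $O(\mathrm{cv}^2)$ with a constant depending on the baseline ratio $\mu/z_{\min}$, which I will absorb into the $O(\cdot)$. If a sharper statement is wanted, I would instead invoke a Bernstein-type tail for the sum of independent bounded terms, making the bad event exponentially small.

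For the closing remark I will note when the error is small. When many routes have comparable, non-negligible weight (large $B_k$), independence gives $\mathrm{Var}(Z)\le\sum_r\eta_r w_r^2\le\max_r w_r\cdot\mu$. Hence $\mathrm{cv}(Z)^2\le\max_r w_r/\mu$, so both error terms are controlled by $\max_r w_r/\mu$, which vanishes as the menu grows.
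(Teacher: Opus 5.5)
\textbf{There is a genuine gap in step~3.} Your good-event estimate is fine, but the bad-event estimate does not give the claimed rate. You bound $1/Z\le 1/z_{\min}$ and absorb $\mu/z_{\min}$ into the $O(\cdot)$, but this ratio is not a constant. When the baseline is the only always-available route, the floor is $z_{\min}=w_{p_0}\le\max_r w_r$. Hence $\mu/z_{\min}\ge \mu/\max_r w_r$, which is the reciprocal of one of the quantities the proposition requires to be small.

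Since $\mathrm{cv}(Z)^2$ is typically of order $\max_r w_r/\mu$ (your own closing remark), your term $\mathrm{cv}(Z)^2\mu/z_{\min}$ is of order $\max_r w_r/z_{\min}\ge 1$. Concretely, take a baseline of weight $1$ plus $n$ routes of weight $1$, each available independently with probability $1/2$. Then $\mathrm{cv}(Z)^2\mu/z_{\min}=\mathrm{Var}(Z)/(\mu z_{\min})=(n/4)/(1+n/2)\to 1/2$. So your bound on $R_p$ stays of order one for every $n$, although the true relative error is $O(1/n)$.

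The loss comes from converting to a relative error, i.e.\ discarding the prefactor $\eta_{k,p}w_p/\mu$, before treating the bad event. The missing idea is self-normalization: on $\{p\in A\}$ one has $Z\ge w_p$, hence $\mathbf{1}\{p\in A\}\,w_p/Z\le 1$. The bad event therefore contributes additively at most its probability, which Chebyshev bounds by $4\,\mathrm{cv}(Z)^2$.

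This is how the paper argues. It writes $\pi_{k,x}(p)-\eta_{k,p}w_p/\mu=\E\big[\mathbf{1}\{p\in A\}\,w_p(\mu-Z)/(\mu Z)\big]$ and splits on $\{Z\ge\mu/2\}$. It bounds the good part by $2(\max_r w_r/\mu)\,\mathrm{cv}(Z)$, bounds the bad part by $4\,\mathrm{cv}(Z)^2$ via that inequality, and finishes with $2ab\le a^2+b^2$. Your Bernstein fallback would repair the rate under independence, but it still carries a factor $\max_r w_r/z_{\min}$ unless you keep the prefactor $w_p$.

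\textbf{Second issue: you use an assumption the proposition does not make.} Your conditioning step relies on Assumption~\ref{assump:independent-menu}, whereas the statement only assumes $A\neq\emptyset$ almost surely plus the concentration conditions. Without independence, neither $\E[Z\mid p\in A]=\mu+(1-\eta_{k,p})w_p$ nor $\mathrm{Var}(Z\mid p\in A)\le\mathrm{Var}(Z)$ holds in general. For example, menus whose recalled part has fixed size $B_k$, as under LRU, have negatively correlated inclusion indicators, and conditioning on a mixture of rich and poor menus can inflate the conditional variance arbitrarily. The paper's unconditional decomposition uses no independence at all, so it proves the proposition in the stated generality. Your route, even after the repair above, covers only independent menus.
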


\begin{proof}[Proof idea.]
Write $\pi(p)=\E[\mathbf{1}\{p\in A\}w_p/Z]$ with $Z=\sum_{r\in A}w_r$ and expand $1/Z$ around $1/\mu$ on the high-probability event $|Z-\mu|\le \mu/2$. Control the remainder using the coefficient of variation of $Z$ and bound the effect of including a single term $w_p$ via $w_p/\mu$. This yields the stated error bound.
Full proof is deferred to Appendix~\ref{app:proof-prop-awl-approx}.
\end{proof}

\begin{assumption}[Independent-availability menu model (with nonempty menu)]
\label{assump:independent-menu}
Fix a commodity $k$ and an aggregate load state $x$.
To avoid degenerate empty-menu events (which make the logit denominator undefined), assume there exists a designated \emph{baseline} route $p_0\in\cP_k$ that is always available, i.e., $p_0\in A$ almost surely (equivalently $\eta_{k,p_0}(x)=1$).
For each other route $p\in\cP_k\setminus\{p_0\}$, the availability indicators
$\{\1\{p\in A\}\}$ are independent Bernoulli random variables with
$\Pr(p\in A)=\eta_{k,p}(x)$.
Equivalently, the menu is generated as $A=\{p_0\}\cup S$ where $S$ includes each $p\neq p_0$ independently with probability $\eta_{k,p}(x)$.

This surrogate is exact for TTL-type caches with independent request processes \emph{conditional} on an always-available baseline (e.g., the last chosen route), and by Theorem~\ref{thm:lru-ttl} it becomes asymptotically accurate for LRU recall in the large-cache regime (with an explicit $\sqrt{\log B/B}$ rate under Poissonized requests).
When one does not want to single out a baseline route, an equivalent workaround is to sample all routes independently and condition on $A\neq\emptyset$; in the large-menu regime of interest, $\Pr(A=\emptyset)$ is negligible and the conditioning has vanishing effect on our concentration bounds.
\end{assumption}

\begin{lemma}[A simple concentration proxy under independent availability]
\label{lem:Z-concentration-bern}
Under Assumption~\ref{assump:independent-menu}, let $w_p\triangleq \exp(-\beta L_p(x))$, $w_{\max}\triangleq \max_{p} w_p$, and $Z\triangleq \sum_{r\in A} w_r$.
Then with $\mu\triangleq \E[Z]=\sum_{p}\eta_{k,p}(x)w_p$,
\begin{equation}
\label{eq:bern-var}
\mathrm{Var}(Z)\;\le\; w_{\max}\,\mu,
\qquad\text{and hence}\qquad
\mathrm{cv}(Z)^2 \;\le\; \frac{w_{\max}}{\mu}.
\end{equation}
Moreover, $\max_p w_p/\mu = w_{\max}/\mu$.
\end{lemma}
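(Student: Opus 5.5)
The plan is to compute the variance of $Z$ directly from the independent-Bernoulli structure of Assumption~\ref{assump:independent-menu}, then bound it term by term. Write
\[
Z \;=\; w_{p_0} + \sum_{p\neq p_0} w_p\,\1\{p\in A\},
\]
where the first summand is deterministic because the baseline route is always available, and the remaining indicators are independent Bernoulli$(\eta_{k,p}(x))$. Taking expectations gives $\mu=\E[Z]=w_{p_0}+\sum_{p\neq p_0}\eta_{k,p}(x)w_p=\sum_p \eta_{k,p}(x)w_p$, using the convention $\eta_{k,p_0}(x)=1$. This matches the stated definition of $\mu$.

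Next, independence lets the variance split across routes. The deterministic baseline term contributes nothing, so
\[
\mathrm{Var}(Z)=\sum_{p\neq p_0} w_p^2\,\eta_{k,p}(x)\big(1-\eta_{k,p}(x)\big).
\]
Each summand is nonnegative. We then use $1-\eta\le 1$ and $w_p\le w_{\max}$ to bound each summand by $w_{\max}\,\eta_{k,p}(x)w_p$. Summing, and adding back the nonnegative baseline contribution $w_{\max}w_{p_0}$ to the right-hand side, gives $\mathrm{Var}(Z)\le w_{\max}\sum_p\eta_{k,p}(x)w_p=w_{\max}\mu$.

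The coefficient-of-variation bound follows by dividing by $\mu^2$, which is legitimate since $\mu\ge w_{p_0}>0$. This yields $\mathrm{cv}(Z)^2=\mathrm{Var}(Z)/\mu^2\le w_{\max}/\mu$. The final claim, $\max_p w_p/\mu=w_{\max}/\mu$, is just the definition of $w_{\max}$.

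There is no real obstacle; the only point requiring care is the baseline route. We must remember that it is deterministic, so it contributes zero variance. We must also use it to guarantee $\mu>0$, so that the division is well defined.
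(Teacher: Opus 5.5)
Your proposal is correct and follows essentially the same argument as the paper: use independence to write $\mathrm{Var}(Z)=\sum_p w_p^2\eta_{k,p}(x)(1-\eta_{k,p}(x))$, then bound it via $1-\eta\le 1$ and $w_p^2\le w_{\max}w_p$. Your separate treatment of the baseline route is a useful extra detail the paper leaves implicit: its variance term vanishes because $\eta_{k,p_0}=1$, and it guarantees $\mu>0$ so the division is well defined.
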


\begin{proof}
Write $Z=\sum_{p} w_p\,\1\{p\in A\}$.
Independence gives
$\mathrm{Var}(Z)=\sum_{p} w_p^2\,\eta_{k,p}(x)\big(1-\eta_{k,p}(x)\big)\le \sum_{p} w_p^2\,\eta_{k,p}(x)$.
Since $w_p^2\le w_{\max}\,w_p$, we obtain
$\mathrm{Var}(Z)\le w_{\max}\sum_{p}\eta_{k,p}(x)w_p = w_{\max}\mu$.
The remaining statements are immediate from definitions.
\end{proof}

\begin{corollary}[An explicit $1/B$-type rate under bounded costs and large menus]
\label{cor:awl-rate}
Fix $k$ and assume the feasible load region $\cX$ is compact so that path costs are uniformly bounded:
$L_p(x)\in[L_{\min},L_{\max}]$ for all $x\in\cX$ and all $p\in\cP_k$.
Then $w_p\in[w_{\min},w_{\max}]$ with $w_{\min}=e^{-\beta L_{\max}}$ and $w_{\max}=e^{-\beta L_{\min}}$.

If the policy ensures that the available set has size at least $B$ almost surely (e.g., an LRU recall list of size $B$ with an additional surfaced candidate), then
$\mu=\E[Z]\ge B\,w_{\min}$ and hence
\begin{equation}
\label{eq:dominance-B}
\frac{w_{\max}}{\mu}\;\le\;\frac{w_{\max}}{B\,w_{\min}}
\;=\;\frac{e^{\beta(L_{\max}-L_{\min})}}{B}.
\end{equation}
Combining Proposition~\ref{prop:awl-approx} with Lemma~\ref{lem:Z-concentration-bern} yields an explicit large-$B$ regime in which the availability-weighted logit surrogate \eqref{eq:awl} is accurate:
the approximation error decays at least on the order of $e^{\beta(L_{\max}-L_{\min})}/B$ whenever $L_{\max}-L_{\min}$ is bounded.
\end{corollary}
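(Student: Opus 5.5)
The plan is to verify the two displayed inequalities directly, using only the deterministic lower bound on the menu and the concentration proxy of Lemma~\ref{lem:Z-concentration-bern}. Then we substitute into the error term of Proposition~\ref{prop:awl-approx}.

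\textbf{Bounds on weights.} Since $\cX$ is compact and each $\ell_e$ is continuous, each $L_p$ is continuous on $\cX$. There are finitely many paths in $\cP_k$, so $L_p(x)\in[L_{\min},L_{\max}]$ uniformly. The map $L\mapsto e^{-\beta L}$ is decreasing for $\beta>0$, which gives $w_p\in[e^{-\beta L_{\max}},e^{-\beta L_{\min}}]$ and identifies $w_{\min}$ and $w_{\max}$.

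\textbf{Lower bound on $\mu$.} If $|A|\ge B$ almost surely, then pointwise
\[
Z=\sum_{r\in A}w_r\ge |A|\,w_{\min}\ge B\,w_{\min}.
\]
Taking expectations gives $\mu=\E[Z]\ge B w_{\min}$, and dividing yields \eqref{eq:dominance-B}, namely $w_{\max}/\mu\le e^{\beta(L_{\max}-L_{\min})}/B$.

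\textbf{Combining with the earlier results.} Lemma~\ref{lem:Z-concentration-bern} gives $\mathrm{cv}(Z)^2\le w_{\max}/\mu$ and $\max_r w_r/\mu=w_{\max}/\mu$. The error term $O(\mathrm{cv}(Z)^2+\max_r w_r/\mu)$ in \eqref{eq:awl} is therefore $O(2w_{\max}/\mu)=O(e^{\beta(L_{\max}-L_{\min})}/B)$. When $L_{\max}-L_{\min}$ is bounded, this is $O(1/B)$. The smallness hypotheses of Proposition~\ref{prop:awl-approx} then hold for $B$ large, so the proposition can legitimately be invoked in that regime.

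\textbf{Main obstacle.} The delicate point is consistency of the hypotheses. Lemma~\ref{lem:Z-concentration-bern} is stated under Assumption~\ref{assump:independent-menu}, whereas "$|A|\ge B$ almost surely" is a different kind of menu condition. An LRU list is not literally independent-Bernoulli, so I would state the corollary under the independent-availability surrogate.

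In that surrogate, the almost-sure size bound cannot hold exactly. I would therefore use only the weaker consequence $\sum_p\eta_{k,p}\ge B$, which suffices for $\mu\ge B w_{\min}$ by the same computation:
\[
\mu=\sum_p\eta_{k,p}w_p\ge w_{\min}\sum_p\eta_{k,p}.
\]
This condition is compatible with independence and is ensured by the TTL occupancy equation \eqref{eq:ttl-size}, since $\eta_{k,p}\ge h_{k,p}$ and $\sum_p h_{k,p}=B_k$.

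I would also note that the constants hidden in the $O(\cdot)$ of Proposition~\ref{prop:awl-approx} are absolute once $\mathrm{cv}(Z)$ and $w_{\max}/\mu$ are below a fixed threshold such as $1/4$. This keeps the rate statement explicit.
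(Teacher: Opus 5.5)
Your argument is the same as the paper's proof: the pointwise bound $Z\ge |A|\,w_{\min}\ge B\,w_{\min}$, then taking expectations to get $\mu\ge B\,w_{\min}$, then inserting $\mathrm{cv}(Z)^2\le w_{\max}/\mu$ from Lemma~\ref{lem:Z-concentration-bern} into the error term of Proposition~\ref{prop:awl-approx}. Your refinement is also sound and slightly more general: the paper silently assumes both the almost-sure size bound and Assumption~\ref{assump:independent-menu}, whereas you note that $\mu=\sum_p\eta_{k,p}w_p$ holds by linearity of expectation, so $\sum_p\eta_{k,p}\ge B$ (guaranteed by \eqref{eq:ttl-size}) already suffices. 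One small correction to your remark: the almost-sure bound is not outright impossible under independence; it holds in the degenerate case where at least $B-1$ non-baseline routes have $\eta_{k,p}=1$.
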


\begin{proof}[Proof idea.]
Bounded costs imply $w_p\in[w_{\min},w_{\max}]$. If $|A|\ge B$ a.s., then $\mu=\E[Z]\ge Bw_{\min}$, hence $w_{\max}/\mu\le e^{\beta(L_{\max}-L_{\min})}/B$. Lemma~\ref{lem:Z-concentration-bern} gives $\mathrm{cv}(Z)^2\le w_{\max}/\mu$, and substituting into Proposition~\ref{prop:awl-approx} yields an $O(1/B)$ error.
Full proof is deferred to Appendix~\ref{app:proof-cor-awl-rate}.
\end{proof}

\begin{corollary}[A combined large-$B$ rate under LRU recall]
\label{cor:lru-combined-rate}
Fix $k$ and assume the bounded-cost conditions of Corollary~\ref{cor:awl-rate}.
Assume further that the LRU recall list of size $B_k$ is generated by a Poissonized request stream satisfying the hypotheses of Theorem~\ref{thm:lru-ttl}.
Then, in the large-cache regime,
the stationary micro choice probabilities are well-approximated by the TTL--salience fixed-point surrogate obtained by coupling
\eqref{eq:ttl-hit}--\eqref{eq:ttl-size} with the availability-weighted logit map~\eqref{eq:awl}, with a total approximation error of the form
\begin{equation}
\label{eq:lru-total-rate}
\big\|\pi^{\mathrm{micro}}_{k,x}-\pi^{\mathrm{TTL\text{-}salience}}_{k,x}\big\|_1
\;\le\;
O\!\left(\frac{e^{\beta(L_{\max}-L_{\min})}}{B_k}\right)
\;+\;
O\!\left(\sqrt{\frac{\log B_k}{B_k}}\right).
\end{equation}
The first term is the random-attention (random-denominator) error controlled by menu size (Proposition~\ref{prop:awl-approx} and Lemma~\ref{lem:Z-concentration-bern});
the second term is the LRU$\rightarrow$TTL cache approximation error (Theorem~\ref{thm:lru-ttl}).
\end{corollary}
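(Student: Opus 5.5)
The bound will come from a triangle inequality through an intermediate object: the availability-weighted logit evaluated at the \emph{exact} LRU recall probabilities. Fix $k$ and $x$. Let $h^{\mathrm{LRU}}_{k,p}$ be the stationary LRU hit probabilities, and set $\eta^{\mathrm{LRU}}_{k,p}=h^{\mathrm{LRU}}_{k,p}+(1-h^{\mathrm{LRU}}_{k,p})\rho_k(p)$ as in \eqref{eq:availability}. Similarly, let $\eta^{\mathrm{TTL}}_{k,p}$ be built from $H^{\mathrm{TTL}}_{k,p}(T_{k,n})$ with $T_{k,n}$ solving \eqref{eq:lru-characteristic-time}. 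Write $\Lambda(\eta)_p\triangleq \eta_p w_p/\sum_r \eta_r w_r$ for the availability-weighted logit map \eqref{eq:awl}. Then
\[
\|\pi^{\mathrm{micro}}-\pi^{\mathrm{TTL\text{-}salience}}\|_1\le \|\pi^{\mathrm{micro}}-\Lambda(\eta^{\mathrm{LRU}})\|_1+\|\Lambda(\eta^{\mathrm{LRU}})-\Lambda(\eta^{\mathrm{TTL}})\|_1 .
\]
The first term is the random-denominator error and the second is the cache error.

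\textbf{Random-denominator term.} By \eqref{eq:random-attention-logit}, $\pi^{\mathrm{micro}}$ is the random-attention logit over the menu $A$. I would invoke Proposition~\ref{prop:awl-approx} with the constant-free bound $\mathrm{cv}(Z)^2+\max_r w_r/\mu$. Lemma~\ref{lem:Z-concentration-bern} bounds $\mathrm{cv}(Z)^2\le w_{\max}/\mu$, and Corollary~\ref{cor:awl-rate} gives $w_{\max}/\mu\le e^{\beta(L_{\max}-L_{\min})}/B_k$ because $|A|\ge B_k$. The proposition bounds each coordinate, so summing over $p$ would cost a factor $n$. To avoid this, I would redo the expansion of $1/Z$ around $1/\mu$ at the vector level. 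The first-order term then contributes $\sum_p \E[\1\{p\in A\}w_p|Z-\mu|]/\mu^2$, which is at most $\E|Z-\mu|\cdot\mu/\mu^2\lesssim \mathrm{cv}(Z)$ after a Cauchy--Schwarz-type regrouping. I expect this to be \emph{the main obstacle}. Lemma~\ref{lem:Z-concentration-bern} requires independent availability, which is exact only for TTL (Assumption~\ref{assump:independent-menu}), whereas the exact LRU menu is a fixed-size set with dependent indicators. My first attempt would be to note that fixed-size sampling is negatively associated, so the variance bound $\mathrm{Var}(Z)\le w_{\max}\mu$ still holds. Failing that, I would route the first term through the TTL menu as well, where independence is exact, and absorb the LRU/TTL menu-law discrepancy into the cache term. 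Stating the result with $O(\cdot)$ leaves room for this rearrangement.

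\textbf{Cache term.} The map $\eta\mapsto\Lambda(\eta)$ is Lipschitz in $\ell_\infty\to\ell_1$ on the region where $\sum_r\eta_r w_r\ge B_k w_{\min}$. Its derivative gives $\|\Lambda(\eta)-\Lambda(\eta')\|_1\le 2\sum_p w_p|\eta_p-\eta'_p|/\sum_r\eta_r w_r$. Moreover $|\eta^{\mathrm{LRU}}_p-\eta^{\mathrm{TTL}}_p|\le |h^{\mathrm{LRU}}_p-H^{\mathrm{TTL}}_p|$, and Theorem~\ref{thm:lru-ttl}, \eqref{eq:lru-ttl-rate}, bounds this uniformly by $O(\sqrt{\log B_k/B_k})$. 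The remaining ratio $\sum_p w_p/\sum_r\eta_r w_r$ is $O(1)$ in the regime where the number of routes with non-negligible availability is comparable to $B_k$, i.e.\ the large-cache regime of the theorem. This is the second place I would need a regime assumption, folding the ratio $n/B_k$ into the constant. Adding the two terms then gives \eqref{eq:lru-total-rate}.
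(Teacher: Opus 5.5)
Your decomposition differs from the paper's. The paper inserts $\pi^{\mathrm{TTL\text{-}menu}}_{k,x}$, the exact random-attention logit on the TTL menu. It bounds $\|\pi^{\mathrm{micro}}_{k,x}-\pi^{\mathrm{TTL\text{-}menu}}_{k,x}\|_1$ by Theorem~\ref{thm:lru-ttl} and $\|\pi^{\mathrm{TTL\text{-}menu}}_{k,x}-\pi^{\mathrm{TTL\text{-}salience}}_{k,x}\|_1$ by Corollary~\ref{cor:awl-rate}. Its random-denominator step therefore runs on the independent TTL menu, where Lemma~\ref{lem:Z-concentration-bern} actually applies. You run the random-denominator step on the LRU menu and then compare two availability-weighted logits.

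Your ordering buys something real. $\Lambda$ sees the menu only through its marginals $\eta$, so the uniform marginal bound \eqref{eq:lru-ttl-rate} is exactly the right input. Your estimate $\|\Lambda(\eta)-\Lambda(\eta')\|_1\le 2\sum_p w_p|\eta_p-\eta'_p|/\sum_r\eta_r w_r$ is correct. The paper's first term instead compares random-attention logits under two different \emph{joint} menu laws: a dependent LRU stack of exactly $B_k$ routes versus independent TTL indicators. Theorem~\ref{thm:lru-ttl} is a statement about individual hit probabilities and does not by itself control that comparison. Your fallback of routing through the TTL menu is exactly the paper's proof and inherits that unargued step.

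The price of your ordering is losing Lemma~\ref{lem:Z-concentration-bern}. Fixed menu size alone does not give negative association. For this particular law (under Poissonized requests the stationary LRU stack is weighted sampling without replacement, together with an independent surfaced route), you assert negative association rather than prove it, so that step is still open. Likewise, converting the uniform bound into an $\ell_1$ statement costs your factor $\sum_p w_p/\mu\le (n/B_k)\,e^{\beta(L_{\max}-L_{\min})}$. The paper's one-line appeal to Theorem~\ref{thm:lru-ttl} does not address this issue either.

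There is also a concrete misstep in your vector-level expansion. Bounding the first-order term by $\sum_p\E[\1\{p\in A\}w_p|Z-\mu|]/\mu^2$ gives only something of order $\mathrm{cv}(Z)\le\sqrt{w_{\max}/\mu}$, i.e.\ $O\big(\sqrt{e^{\beta(L_{\max}-L_{\min})}/B_k}\big)$. That is not the first term of \eqref{eq:lru-total-rate}; it is absorbed into the $\sqrt{\log B_k/B_k}$ term only if $\beta(L_{\max}-L_{\min})$ is treated as a constant. Keep that term signed instead. From $1/Z-1/\mu=-(Z-\mu)/\mu^2+(Z-\mu)^2/(\mu^2 Z)$ and $\sum_p\1\{p\in A\}w_p=Z$ one gets
\[
\sum_{p}\Big|\pi(p)-\frac{\eta_p w_p}{\mu}\Big|
\;\le\;
\frac{1}{\mu^2}\sum_{p} w_p\,\big|\mathrm{Cov}\big(\1\{p\in A\},Z\big)\big|
\;+\;\mathrm{cv}(Z)^2 .
\]
For independent indicators, $\mathrm{Cov}(\1\{p\in A\},Z)=w_p\eta_p(1-\eta_p)$, so the whole $\ell_1$ error is at most $2\,\mathrm{cv}(Z)^2\le 2w_{\max}/\mu$, with no factor of $n$. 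This is the rigorous $\ell_1$ form of the step the paper takes coordinatewise. On the dependent LRU menu you would instead need $\sum_p w_p|\mathrm{Cov}(\1\{p\in A\},Z)|=O(w_{\max}\mu)$, which requires more than a variance bound.

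Finally, both your argument and the paper's evaluate the TTL step at the micro intensities $d_k\pi^{\mathrm{micro}}_{k,x}$. What is actually bounded is therefore the residual $\|\pi^{\mathrm{micro}}_{k,x}-F(\pi^{\mathrm{micro}}_{k,x})\|_1$ of the TTL--salience map $F$, not the distance to its fixed point. Closing that requires a stability (e.g.\ contraction) property of $F$ that neither proof supplies.
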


\begin{proof}[Proof idea.]
Decompose the micro-to-surrogate error into (i) a random-menu (availability-weighted logit) approximation term and (ii) an LRU$\rightarrow$TTL cache approximation term. Apply Corollary~\ref{cor:awl-rate} to bound (i) and Theorem~\ref{thm:lru-ttl} to bound (ii), then combine via the triangle inequality.
Full proof is deferred to Appendix~\ref{app:proof-cor-lru-combined-rate}.
\end{proof}

\begin{remark}[Interpretation]
Equation \eqref{eq:awl} shows that, beyond the exact $B_k=1$ case, salience-weighted logit remains a principled approximation: the effective salience weight is the stationary availability probability $\eta_{k,p}(x)$ induced by the memory policy.
Combining \eqref{eq:ttl-hit}--\eqref{eq:ttl-size} with \eqref{eq:awl} yields an explicit fixed-point surrogate for LRU-type memory with $B_k>1$ that can be used for equilibrium computation and for policy design.
\end{remark}

\begin{algorithm}[t]
\caption{Scalable TTL--salience surrogate for $B_k>1$ (mean-field equilibrium solver)}
\label{alg:ttl-salience}
\begin{algorithmic}[1]
\STATE \textbf{Input:} initial $x^{(0)}\in\cX$, damping $\eta\in(0,1]$, tolerances $\varepsilon_x,\varepsilon_\pi$
\STATE Initialize $\pi_k^{(0)}$ on $\cP_k$ for each commodity $k$ (e.g., uniform)
\FOR{$t=0,1,2,\dots$ until convergence}
    \FOR{each commodity $k\in\cK$}
        \STATE Find $T_k^{(t)}\ge 0$ such that $\sum_{p\in\cP_k}\big(1-e^{-\pi_k^{(t)}(p)\,T_k^{(t)}}\big)=B_k$ (bisection)
        \STATE $h_{k,p}^{(t)} \leftarrow 1-e^{-\pi_k^{(t)}(p)\,T_k^{(t)}}$ for all $p\in\cP_k$ \hfill (TTL recall)
        \STATE $\eta_{k,p}^{(t)} \leftarrow h_{k,p}^{(t)} + (1-h_{k,p}^{(t)})\,\rho_k(p)$ \hfill (availability)
        \STATE $\pi_k^{(t+1)}(p) \propto \eta_{k,p}^{(t)} \exp\!\big(-\beta L_p(x^{(t)})\big)$ for all $p\in\cP_k$ \hfill (AWL update)
    \ENDFOR
    \STATE Form path flows $f^{(t+1)}$ from $\pi^{(t+1)}$ and compute $\tilde x^{(t+1)} \leftarrow x(f^{(t+1)})$
    \STATE Damped update: $x^{(t+1)} \leftarrow (1-\eta)x^{(t)}+\eta\,\tilde x^{(t+1)}$
    \IF{$\|x^{(t+1)}-x^{(t)}\|_\infty \le \varepsilon_x$ \textbf{and} $\max_{k}\|\pi_k^{(t+1)}-\pi_k^{(t)}\|_1 \le \varepsilon_\pi$}
        \STATE \textbf{break}
    \ENDIF
\ENDFOR
\STATE \textbf{Output:} $(x^{(t+1)},\pi^{(t+1)},h^{(t+1)},\eta^{(t+1)})$
\end{algorithmic}
\end{algorithm}

\noindent
Algorithm~\ref{alg:ttl-salience} avoids enumerating the factorial-sized memory state space (Remark~\ref{rem:state-space}) by tracking only per-route frequencies.
It is the computational primitive used in our synthetic validations (Section~\ref{sec:experiments}) and is the natural workhorse if one wants to optimize governed salience policies while retaining a micro-founded interpretation for $B_k>1$.

\begin{theorem}[Uniqueness and global stability beyond contraction]
\label{thm:swsue-unique-stable}
Under the conditions of Proposition~\ref{prop:swsue-potential}, SW-SUE exists and is unique.
Moreover, any algorithm that globally minimizes $\Phi_s$ over $\cF$ (e.g., projected gradient descent, mirror descent, or Frank--Wolfe on the Beckmann term with entropic regularization) converges to the unique equilibrium.
\end{theorem}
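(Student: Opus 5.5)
The plan is to reduce everything to Proposition~\ref{prop:swsue-potential} and then add a short argument that turns "minimizes $\Phi_s$" into "converges to the equilibrium."

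\textbf{Existence and uniqueness.} These follow directly from Proposition~\ref{prop:swsue-potential}.
\begin{itemize}
\item $\cF$ is a product of scaled simplices, so it is nonempty, compact and convex.
\item $\Phi_s$ is continuous on $\cF$. The Beckmann term $\sum_e\int_0^{x_e(f)}\ell_e$ is continuous because each $\ell_e$ is continuous and $x(f)$ is linear. The entropy term is continuous under the convention $0\log 0=0$. So by Weierstrass a minimizer exists.
\item The Beckmann term is convex, since each $\int_0^{x}\ell_e$ has nondecreasing derivative and is composed with a linear map. The term $\tfrac1\beta\sum f\log f$ is strictly convex and the shift $-\tfrac1\beta\sum f\log s$ is linear. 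Hence $\Phi_s$ is strictly convex and the minimizer $f^\star$ is unique.
\item Proposition~\ref{prop:swsue-potential} identifies this minimizer as the unique SW-SUE.
\end{itemize}
The one point I would check explicitly is interiority of the minimizer, because the KKT-to-logit step needs it. The derivative of $f\log f$ is $-\infty$ at $0^+$, while the Beckmann gradient is bounded on $\cF$. So shifting a small mass $\epsilon$ onto any zero-flow path from a positive-flow path strictly decreases $\Phi_s$. Hence $f^\star_{k,p}>0$ for all $k,p$, and the stationarity conditions give exactly \eqref{eq:swsue}.

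\textbf{Algorithmic claim.} I would read "any algorithm that globally minimizes $\Phi_s$" as any method whose iterates $f^{(t)}\in\cF$ satisfy $\Phi_s(f^{(t)})\to\min_{\cF}\Phi_s$. The convergence statement $f^{(t)}\to f^\star$ then follows by compactness plus uniqueness:
\begin{itemize}
\item Suppose some subsequence stays at distance at least $\delta$ from $f^\star$.
\item By compactness it has a further subsequence converging to some $\tilde f\in\cF$.
\item By continuity, $\Phi_s(\tilde f)=\min\Phi_s$.
\item Uniqueness forces $\tilde f=f^\star$, a contradiction.
\end{itemize}
No contraction or Lipschitz constant of a fixed-point map is needed, which is the point of "beyond contraction." For the named algorithms, the hypothesis that the objective values converge is the standard guarantee for convex minimization over a compact convex set:
\begin{itemize}
\item Mirror descent with the entropic mirror map is natural here, since the entropy is built into $\Phi_s$.
\item Frank--Wolfe gives $O(1/t)$ when the gradient is Lipschitz on the region visited.
\item Projected gradient likewise, with a suitable step size.
\end{itemize}
I would cite these rather than reprove them.

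\textbf{Main obstacle.} The difficulty is the non-smoothness of the entropy at the boundary of $\cF$: $\nabla\Phi_s$ is unbounded as any $f_{k,p}\to 0$. This affects the rate guarantees for projected gradient and Frank--Wolfe, not the convergence argument above. My first fix is the interiority bound. At $f^\star$, each $f^\star_{k,p}$ is bounded below by
\[
d_k\,\frac{\min_p s_{k,p}\,e^{-\beta L_{\max}}}{\sum_r s_{k,r}},
\]
where $L_{\max}$ bounds path costs on $\cF$. One can therefore restrict to a compact set $\cF_\epsilon=\{f\in\cF: f\ge\epsilon\}$ that contains $f^\star$. There $\Phi_s$ is smooth and strongly convex, and standard rates apply. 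Mirror descent needs no such restriction.
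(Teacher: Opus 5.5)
Your proposal is correct and follows the paper's proof: existence and uniqueness come from Proposition~\ref{prop:swsue-potential} via compactness, continuity and strict convexity, and iterate convergence comes from uniqueness of the minimizer. Your subsequence argument makes explicit the compactness step the paper leaves implicit, and it derives ``every limit point is a minimizer'' from value convergence plus continuity, whereas the paper builds that property into its definition of a globally minimizing method. Your remarks on rates over $\cF_\epsilon$ go beyond what the theorem claims; note that Lipschitz-gradient rates there would also require Lipschitz latencies, which Proposition~\ref{prop:swsue-potential} does not assume.
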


\begin{proof}[Proof idea.]
Proposition~\ref{prop:swsue-potential} shows SW-SUE coincides with the unique minimizer of a strictly convex potential $\Phi_s$. Existence and uniqueness are therefore immediate. Any method that globally minimizes $\Phi_s$ converges to this unique minimizer by definition of global convergence for convex optimization.
Full proof is deferred to Appendix~\ref{app:proof-thm-swsue-unique-stable}.
\end{proof}

\begin{remark}[Connection to the explicit memory model]
The explicit memory model in Section~\ref{sec:dynamic-recall-model} induces \emph{random} consideration sets.
In regimes with fast mixing (Assumption~\ref{ass:doeblin}) and high-frequency AI surfacing, these random consideration effects can often be summarized as persistent route-specific salience biases (e.g., via stationary retrieval frequencies).
Theorems~\ref{prop:swsue-potential}--\ref{thm:swsue-unique-stable} therefore provide a tractable ``design layer'' for memory/guidance optimization even when the underlying microdynamics are more complex.
\end{remark}

\subsection{Limitations of the salience abstraction and the $B>1$ micro$\rightarrow$salience bridge}
\label{sec:limitations}
The stationary salience model is deliberately a reduced form.
It is the right ``design layer'' if the platform can directly manipulate exposure/utility biases (ranking, recommendations, interface architecture), and it is \emph{exactly} micro-founded for $B=1$ (Theorem~\ref{thm:micro2salience} and Corollary~\ref{cor:micro-realizability}).
For richer memories ($B>1$), the LRU$\rightarrow$TTL$\rightarrow$availability$\rightarrow$salience bridge (Section~\ref{sec:micro2salience-B}) should be interpreted as a mean-field approximation and may be inaccurate outside the regimes that make its assumptions plausible.

\paragraph{When the bridge can fail.}
The technically vulnerable points are:
(i) \emph{non-i.i.d.\ requests / slow mixing}: the Poissonization device and caching approximations assume the chosen-route sequence is close to an i.i.d.\ label stream over the time scale relevant for recall; this can fail under strong day-to-day nonstationarity or path-dependent learning;
(ii) \emph{correlated availability}: LRU recall indicators are generally correlated across routes, and the independent-menu model (Assumption~\ref{assump:independent-menu}) is only justified asymptotically (Theorem~\ref{thm:lru-ttl});
(iii) \emph{small menus or steep utilities}: the random-denominator bounds in Proposition~\ref{prop:awl-approx} can have large constants when $\beta(L_{\max}-L_{\min})$ is large, so ``$O(1/B)$'' rates can be numerically loose unless costs are well behaved; and
(iv) \emph{beliefs vs.\ consideration}: we model memory as shaping consideration sets but not beliefs about $L_p(x)$ (Section~\ref{sec:dynamic-recall-model}); settings in which memory stores cost estimates or induces systematic belief biases require an additional information layer.

\paragraph{Interpretation for design claims.}
Accordingly, the strongest implementability and governance theorems in Sections~\ref{sec:implementability} and~\ref{sec:design} are stated for the stationary salience layer.
They can be operationalized directly as interface/ranking interventions, and (in the micro model) they are guaranteed to be realizable without loss only in the $B=1$ regime.
Characterizing the \emph{micro-realizable} subset of salience vectors for richer eviction/retention policies, and proving micro-level implementability beyond $B=1$, are important directions for future work.

\section{Value of Recall for Routing}
\label{sec:vor}
We now propose a routing analogue of the Value of Recall (VoR) metric used in imperfect-recall games and memory-aware network dynamics~\citep{alqithami2026snam}.

Let $J(\delta)$ denote a welfare objective under equilibrium at recall factor $\delta$; for example:
\[
J(\delta) \;=\; C\big(f^{\star}(\delta)\big),
\]
the total latency at the (appropriate) equilibrium $f^\star(\delta)$.

\begin{definition}[Value of Recall for routing (VoR-R)]
Define
\[
\mathrm{VoR}_{\text{routing}}(\delta)
\;=\;
\frac{J(\delta)}{J(1)}.
\]
Values $\mathrm{VoR}_{\text{routing}}(\delta) < 1$ indicate that imperfect recall (at $\delta$) \emph{improves} welfare relative to perfect recall.
\end{definition}

\begin{remark}
If one prefers a ``higher is better'' utility convention, take $U(\delta)=-J(\delta)$ and compute $U(1)/U(\delta)$, matching the ratio form in~\citep{alqithami2026snam}.
\end{remark}

\section{Recall Braess Paradox: Definition and Sharp Instances}
\label{sec:rbp}
This section isolates the core paradoxical effect we care about: \emph{equilibrium delay can increase when recall improves}, even though the physical network is unchanged.
Unlike classic Braess' paradox, the mechanism here is informational and policy-driven.

\subsection{Welfare and the Recall Braess Paradox (RBP)}
For any edge-load vector $x\in\cX$, define the standard total travel time (social cost)
\begin{equation}
\label{eq:social-cost}
\mathrm{SC}(x)\;\triangleq\;\sum_{e\in E} x_e\,\ell_e(x_e).
\end{equation}

Let $\theta$ denote a recall-policy parameter (e.g., memory budget, reset rate, summary frequency, or a parameter of the surfacing distribution).
Assume for each $\theta$ the FWE is unique, yielding $x^\star(\theta)$.

For each $\theta$, let $A_{k,\theta}$ denote the (random) available set presented to a representative infinitesimal traveler of commodity $k$ at stationarity (the realized menu $\cA(m,q)$ in the micro model, or the deterministic set $\cP_k(\theta)$ in the static recall baseline).
Define the \emph{menu-inclusion probabilities}
\[
h_{k,\theta}(p)\;\triangleq\;\Pr\{p\in A_{k,\theta}\},\qquad p\in\cP_k.
\]
We use the following canonical recall-richness order.

\begin{definition}[Recall-richness order (menu inclusion)]
\label{def:recall-richness}
We say $\theta_2$ is \emph{(weakly) more recall-rich} than $\theta_1$, written $\theta_2\succeq \theta_1$, if for every commodity $k$ and every path $p\in\cP_k$,
\[
h_{k,\theta_2}(p)\;\ge\; h_{k,\theta_1}(p).
\]
We write $\theta_2\succ \theta_1$ if $\theta_2\succeq \theta_1$ and strict inequality holds for at least one pair $(k,p)$.
A stronger sufficient condition is \emph{almost sure menu inclusion}: $A_{k,\theta_1}\subseteq A_{k,\theta_2}$ almost surely for all $k$, which implies $\theta_2\succeq \theta_1$.
\end{definition}

\begin{definition}[Recall Braess Paradox (social form)]
\label{def:rbp}
A \emph{Recall Braess Paradox} occurs if there exist $\theta_1,\theta_2$ such that $\theta_2\succ \theta_1$ in the recall-richness order of Definition~\ref{def:recall-richness}, yet
\[
\mathrm{SC}\!\left(x^\star(\theta_2)\right)\;>\;\mathrm{SC}\!\left(x^\star(\theta_1)\right).
\]
\end{definition}

The next results show that RBP is not a pathological artifact of exotic topologies: it can arise on the simplest possible series-parallel network.

\subsection{A sharp analytic instance: Pigouvian forgetting on the Pigou network}\label{sec:rbp-pigou}
Consider the two-link Pigou network: a single OD pair with total demand $1$ and two parallel edges.
Edge $a$ has constant latency $\ell_a(x)=1$ and edge $b$ has latency $\ell_b(x)=x$.

Under full recall (standard Wardrop), all flow routes to edge $b$ and the equilibrium social cost is $\mathrm{SC}=1$.
The system optimum splits flow evenly ($x_a=x_b=\tfrac12$) and achieves $\mathrm{SC}=\tfrac34$.

We now introduce a \emph{uniform recall-suppression} policy parameterized by $\alpha\in[0,1]$:
in each period, an $\alpha$ fraction of agents do not recall (or are not shown) edge $b$ and therefore must choose edge $a$; the remaining $1-\alpha$ fraction have full recall and choose selfishly.
This is implementable in our framework by a memory reset/surfacing policy that, with probability $\alpha$, makes $\cA(m,q)=\{a\}$ and otherwise $\cA(m,q)=\{a,b\}$.

\begin{theorem}[Pigouvian forgetting implements the system optimum]
\label{thm:pigou}
In the Pigou network with $\ell_a(x)=1$ and $\ell_b(x)=x$, the induced (information/recall constrained) equilibrium under recall-suppression level $\alpha$ routes $x_a=\alpha$ and $x_b=1-\alpha$, and the resulting social cost is
\[
\mathrm{SC}(\alpha)\;=\;\alpha + (1-\alpha)^2 \;=\; 1-\alpha+\alpha^2.
\]
The minimizer is $\alpha^\star=\tfrac12$, yielding $\mathrm{SC}(\alpha^\star)=\tfrac34$ (the system optimum).
Moreover, $\alpha=0\succ \alpha=\tfrac12$ in the sense of Definition~\ref{def:recall-richness}, yet improving recall from $\alpha=\tfrac12$ to $\alpha=0$ \emph{increases} equilibrium social cost from $\tfrac34$ to $1$; hence a Recall Braess Paradox occurs even on a series-parallel network.
\end{theorem}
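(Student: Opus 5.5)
The plan is to treat the recall-suppression policy as a static recall-constrained game with two traveler types and solve it directly through the ICWE/RCWE definition. I split the population into a type of mass $\alpha$ with recall set $\{a\}$ and a type of mass $1-\alpha$ with recall set $\{a,b\}$. Proposition~\ref{prop:icwe-existence} guarantees that an equilibrium exists, and since both latencies are nondecreasing and $\ell_b$ is strictly increasing, the edge loads are unique. This makes $x^\star(\alpha)$ well defined, as Definition~\ref{def:rbp} requires.

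\textbf{Step 1 (equilibrium loads).} The restricted type has no choice, so it contributes $\alpha$ to $x_a$. The full-recall type has mass $1-\alpha$. Suppose it routes $y$ on $b$, so $x_b=y\le 1-\alpha$. If some of this type used $a$ while $y<1$, the equilibrium condition would require $\ell_a=1\le \ell_b(y)=y$, which is impossible. Hence the whole type uses $b$, because $\ell_b(x_b)=x_b\le 1=\ell_a$ for every feasible load; the tie at $x_b=1$ occurs only when $\alpha=0$, and then the load is still $x_b=1$. This gives $x_a=\alpha$ and $x_b=1-\alpha$.

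\textbf{Step 2 (cost and optimum).} The social cost is $\mathrm{SC}(\alpha)=\alpha\cdot 1+(1-\alpha)^2=1-\alpha+\alpha^2$. This is a strictly convex quadratic with derivative $2\alpha-1$, so its unique minimizer is $\alpha^\star=\tfrac12$ with value $\tfrac34$. This matches the system-optimal value $\min_{x_b}\{(1-x_b)+x_b^2\}=\tfrac34$ recalled in the Pigou discussion.

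\textbf{Step 3 (recall-richness and the paradox).} I use the menu model stated before the theorem. With probability $\alpha$ the menu is $\{a\}$, and otherwise it is $\{a,b\}$. So $h_\alpha(a)=1$ and $h_\alpha(b)=1-\alpha$. Comparing $\alpha=0$ with $\alpha=\tfrac12$ gives $h_0(a)=h_{1/2}(a)$ and $h_0(b)=1>\tfrac12=h_{1/2}(b)$. Therefore $\alpha=0\succ\alpha=\tfrac12$ in the sense of Definition~\ref{def:recall-richness}, and almost-sure menu inclusion holds as well. Meanwhile $\mathrm{SC}$ rises from $\tfrac34$ to $1$, which satisfies Definition~\ref{def:rbp}. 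Finally, the two-parallel-edge network is a parallel composition of two edges, so it is series-parallel.

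The only step needing care is Step~1 at the boundary $\alpha=0$, where the tie $\ell_a=\ell_b=1$ could allow path-flow multiplicity. Uniqueness of edge loads from Proposition~\ref{prop:icwe-existence} resolves it. Equivalently, any flow on $a$ would make $x_b<1=\ell_a$, so $b$ would be strictly cheaper and the Wardrop condition would fail.
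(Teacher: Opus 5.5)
Your proposal is correct and matches the paper's proof: you fix the $\alpha$ mass on $a$, show the unconstrained mass all uses $b$ because $\ell_b(1-\alpha)=1-\alpha\le 1=\ell_a$, and minimize $1-\alpha+\alpha^2$; your check that $h_\alpha(b)=1-\alpha$ also spells out the recall-richness claim, which the paper leaves implicit. One small caveat: Proposition~\ref{prop:icwe-existence} requires every $\ell_e$ to be strictly increasing, which fails for the constant $\ell_a$, so edge-load uniqueness should rest on your direct argument in Step~1, which already suffices, rather than on that proposition.
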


\begin{proof}
Given $\alpha$, an $\alpha$ mass must use edge $a$.
All remaining mass has access to both edges.
Since $\ell_b(1-\alpha)=1-\alpha\le 1=\ell_a(\alpha)$, all unconstrained agents strictly prefer $b$ (or are indifferent at $\alpha=0$).
Thus $(x_a,x_b)=(\alpha,1-\alpha)$.
Substituting into \eqref{eq:social-cost} yields $\mathrm{SC}(\alpha)=\alpha\cdot 1 + (1-\alpha)\cdot(1-\alpha)$.
The minimizer follows by differentiation.
\end{proof}

\begin{remark}[Why this is a ``Braess'' phenomenon]
Classic Braess' paradox requires a non-series-parallel structure for two-terminal networks, whereas the Pigou network is series-parallel.
The inefficiency here is not caused by adding physical capacity; it is caused by \emph{improving recall}, which shifts behavior toward the selfish equilibrium and away from the system optimum.
\end{remark}

\subsection{Generalization: one constant link and one increasing link}
The Pigou example is not an isolated curiosity.
It illustrates a general mechanism: when private and social marginal costs differ, calibrated information friction can mimic a corrective Pigouvian toll.

\begin{proposition}[Optimal recall suppression on a generalized Pigou network]
\label{prop:general-pigou}
Consider a two-link parallel network with demand $1$, where $\ell_a(x)=c$ is constant and $\ell_b(x)=g(x)$ is continuously differentiable, strictly increasing, and satisfies $g(0)=0$.
Suppose further that $g(1)\le c$ so that the full-recall Wardrop equilibrium routes all flow on $b$.
Let $y^\mathrm{opt}\in(0,1)$ denote the system-optimal flow on edge $b$, i.e., a minimizer of
\[
\min_{y\in[0,1]} \; c(1-y) + y\,g(y).
\]
Then the recall-suppression policy with level $\alpha^\star = 1-y^\mathrm{opt}$ implements the system optimum as an equilibrium and strictly improves over full recall whenever $y^\mathrm{opt}<1$.
\end{proposition}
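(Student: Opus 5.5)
The plan mirrors the proof of Theorem~\ref{thm:pigou}, with the linear latency replaced by a general $g$. Fix a suppression level $\alpha\in[0,1]$: an $\alpha$ mass is forced onto edge $a$, and the remaining $1-\alpha$ mass faces both edges. I will first show that the recall-constrained equilibrium (in the sense of Proposition~\ref{prop:icwe-existence}) is $(x_a,x_b)=(\alpha,1-\alpha)$. Because $g$ is strictly increasing and $g(1)\le c$, we have $g(1-\alpha)\le g(1)\le c=\ell_a(\alpha)$. Hence every unconstrained traveler weakly prefers $b$, and the ICWE inequalities hold at this split. For uniqueness of edge loads, suppose some unconstrained mass $z>0$ used $a$. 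Then $x_b=1-\alpha-z<1$, so $g(x_b)<g(1)\le c$, and those travelers would strictly prefer $b$, a contradiction. Therefore $x_b(\alpha)=1-\alpha$ and $\mathrm{SC}(\alpha)=c\,\alpha+(1-\alpha)g(1-\alpha)$.

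Next, I reparametrize by $y=1-\alpha\in[0,1]$. The equilibrium social cost becomes $c(1-y)+y\,g(y)$, which is exactly the system-optimal objective in the statement. As $\alpha$ ranges over $[0,1]$, the equilibrium flow on $b$ ranges over all of $[0,1]$. So choosing $\alpha^\star=1-y^{\mathrm{opt}}$ makes the equilibrium load on $b$ equal to $y^{\mathrm{opt}}$, and this load attains the minimum social cost over all feasible flows. This requires one observation: every feasible flow on a two-link network is determined by $y=x_b\in[0,1]$. Hence the restricted minimization in the statement is the true system optimum, and the policy implements it as an equilibrium.

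For strict improvement, full recall corresponds to $\alpha=0$, that is $y=1$, with cost $g(1)$. Since $y^{\mathrm{opt}}$ minimizes the objective, $\mathrm{SC}(\alpha^\star)\le g(1)$ automatically. The main obstacle is strictness. The statement's hypothesis ``$y^{\mathrm{opt}}<1$'' alone does not exclude a tie between $y^{\mathrm{opt}}$ and $y=1$ when the minimizer is non-unique. I would handle this by a derivative argument at $y=1$. Let $\phi(y)=c(1-y)+y\,g(y)$, so $\phi'(1)=-c+g(1)+g'(1)$. Suppose $g'(1)>c-g(1)\ge 0$, which holds whenever $g(1)=c$ or, more generally, when $g$ is steep enough. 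Then $\phi'(1)>0$, so $y=1$ is not a minimizer, any minimizer has $y^{\mathrm{opt}}<1$, and $\phi(y^{\mathrm{opt}})<\phi(1)$ strictly. Otherwise, I will read the statement as ``whenever $1$ is not a minimizer of $\phi$,'' in which case strictness is immediate from $\phi(y^{\mathrm{opt}})\le\phi(y)$ for all $y$, with $\phi(1)$ strictly larger. I would state this clarification explicitly, possibly adding the convexity of $y\,g(y)$ (e.g.\ $g$ convex) to obtain a unique minimizer, so that $y^{\mathrm{opt}}<1$ is equivalent to strict improvement.

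Finally, I will note, as in Theorem~\ref{thm:pigou}, that the full-recall level $\alpha=0$ is strictly more recall-rich than $\alpha^\star>0$. Here $h(b)=1>1-\alpha^\star$ and $h(a)=1$ in both cases, by Definition~\ref{def:recall-richness}. So whenever strict improvement holds, the comparison exhibits a Recall Braess Paradox.
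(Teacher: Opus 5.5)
Your proof has the same skeleton as the paper's: force an $\alpha$ mass onto $a$, show all remaining mass takes $b$, match $1-\alpha^\star$ with $y^{\mathrm{opt}}$, and compare with $\alpha=0$. It is tighter than the paper's in two places.

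\begin{itemize}
\item The paper gets $g(y^{\mathrm{opt}})<c$ from the first-order condition $g(y^{\mathrm{opt}})=c-y^{\mathrm{opt}}g'(y^{\mathrm{opt}})$ plus the claim $g'(y^{\mathrm{opt}})>0$, which strict monotonicity does not supply by itself. Your bound $g(1-\alpha)\le g(1)\le c$ needs no calculus and covers every $\alpha$, so you get the whole curve $\mathrm{SC}(\alpha)=\phi(1-\alpha)$.
\item Your perturbation argument actually proves uniqueness of the equilibrium load. The paper only asserts it after checking the target profile. This step is needed, because the uniqueness clause of Proposition~\ref{prop:icwe-existence} does not apply when $\ell_a$ is constant.
\end{itemize}

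Your concern about strictness is correct. It is a defect of the statement and of the paper's proof, which simply infers $\mathrm{SC}(y^{\mathrm{opt}})<\mathrm{SC}(1)$ from minimality. Ties really occur. Take $c=4$, with $g(y)=4y$ on $[0,\tfrac12]$ and $g(y)=4-1/y$ on $[\tfrac12,1]$.
\begin{itemize}
\item This $g$ is $C^1$ and strictly increasing, with $g(0)=0$ and $g(1)=3\le c$.
\item Then $\phi(y)=3+4(y-\tfrac12)^2$ on $[0,\tfrac12]$, while $\phi\equiv 3$ on $[\tfrac12,1]$.
\item So $y^{\mathrm{opt}}=\tfrac34$ is an admissible minimizer, and there is no improvement over full recall.
\end{itemize}
The statement needs your repair. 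Either require that $1$ not minimize $\phi$, or assume $g$ convex; together with strict monotonicity this makes $y\,g(y)$ strictly convex and the minimizer unique.

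One small slip: $g(1)=c$ does not force $g'(1)>0$. For example, $g(y)=c\bigl(1-(1-y)^3\bigr)$ has $g'(1)=0$, so that case escapes your derivative test. The case is still fine: $g(y)<c$ for $y<1$ gives $\phi(y)<c=\phi(1)$ for every $y\in(0,1)$.
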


\begin{proof}[Proof idea.]
Under suppression level $\alpha$, equilibrium sends $\alpha$ mass on the constant link and $1-\alpha$ on the increasing link. The resulting social cost is an explicit convex quadratic in $\alpha$, minimized at $\alpha^\star=1/2$ (the system-optimal split), which proves both optimality and the non-monotonicity in recall.
Full proof is deferred to Appendix~\ref{app:proof-prop-general-pigou}.
\end{proof}

Proposition~\ref{prop:general-pigou} shows that calibrated recall friction can act as a Pigouvian correction in the simplest canonical networks.
We next (i) identify a network class theorem showing that social-cost RBP is essentially ubiquitous whenever users have \emph{any} nontrivial route choice, and (ii) formalize \emph{implementability} as an inverse-equilibrium problem for memory/guidance policies.

\subsection{Network classes: ubiquity and immunity for social RBP}
\label{sec:rbp-network-class}
We now show that, under the social-cost notion in Definition~\ref{def:rbp}, paradoxical non-monotonicity in recall is not confined to exotic graph topologies.

\begin{definition}[Choiceful vs.\ series two-terminal networks]
A two-terminal directed network $(G,s,t)$ is \emph{series} if there is a unique simple $s$--$t$ path.
Otherwise, it is \emph{choiceful} (there exist at least two distinct simple $s$--$t$ paths).
\end{definition}

\begin{theorem}[Ubiquity of social Recall Braess Paradox]
\label{thm:rbp-ubiquity}
Let $(G,s,t)$ be a choiceful two-terminal network.
Then there exist continuous nondecreasing edge latencies $(\ell_e)_{e\in E}$, total demand $d=1$, and two recall policies $\theta_1,\theta_2$ such that $\theta_2\succ \theta_1$ in the menu-inclusion order of Definition~\ref{def:recall-richness} and yet $\mathrm{SC}\!\left(x^\star(\theta_2)\right)>\mathrm{SC}\!\left(x^\star(\theta_1)\right)$.
Equivalently, social-cost RBP can occur on \emph{every} choiceful two-terminal network.

Conversely, if $(G,s,t)$ is series (unique $s$--$t$ path), then for any latencies and any recall policy, the induced flow is unique and invariant; social-cost RBP cannot occur.
\end{theorem}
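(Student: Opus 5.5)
The plan is to embed the Pigou instance of Theorem~\ref{thm:pigou} into an arbitrary choiceful network. Since $(G,s,t)$ is choiceful, fix two distinct simple $s$--$t$ paths $P_a\neq P_b$. Because they are distinct, there is an edge $e_b\in P_b\setminus P_a$; by symmetry or relabeling we may also pick $e_a\in P_a\setminus P_b$ if it exists. If $P_a\subsetneq P_b$ as edge sets were possible, both would be simple $s$--$t$ paths with one a proper subset of the other, which cannot happen; hence such an $e_a$ always exists. The latencies are then set as follows: every edge of $P_a\cup P_b$ other than $e_a,e_b$ gets latency $\ell_e\equiv 0$, $\ell_{e_a}\equiv 1$, $\ell_{e_b}(x)=x$, and every edge outside $P_a\cup P_b$ gets a large constant latency $M$ (say $M=10$), which is continuous and nondecreasing. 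Then $L_{P_a}\equiv 1$, $L_{P_b}(x)=x_{e_b}$, and any other path uses either an edge of latency $M$ or both $e_a$ and $e_b$. In the latter case its latency is at least $1$, and it is weakly dominated by $P_a$ only when $e_b$ carries no flow.

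To avoid equilibrium-selection issues from such paths, I will work in the static recall baseline, where recall sets play the role of policies. Under $\theta_1$, a mass $\tfrac12$ of travelers has menu $\{P_a\}$ and mass $\tfrac12$ has $\{P_a,P_b\}$. Under $\theta_2$, everyone has $\{P_a,P_b\}$. Equivalently, this is the reset/surfacing implementation described before Theorem~\ref{thm:pigou}, which makes $\cA=\{P_a\}$ with probability $\alpha$. Then $h_{\theta_2}(P_b)=1>\tfrac12=h_{\theta_1}(P_b)$ and $h(P_a)=1$ in both, so $\theta_2\succ\theta_1$ with a.s.\ menu inclusion. Restricting menus to $\{P_a,P_b\}$ means only these two paths carry flow, and the edges $e_a$ and $e_b$ are used only by $P_a$ and $P_b$ respectively. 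The computation therefore reduces verbatim to the proof of Theorem~\ref{thm:pigou}: under $\theta_2$ all flow goes to $P_b$ with $\mathrm{SC}=1$, and under $\theta_1$ the flows are $(x_a,x_b)=(\tfrac12,\tfrac12)$ with $\mathrm{SC}=\tfrac34$. Zero-latency shared edges contribute nothing to $\mathrm{SC}=\sum_e x_e\ell_e(x_e)$. Uniqueness of the equilibrium edge loads follows from Proposition~\ref{prop:icwe-existence}, or directly here, since $\ell_{e_b}$ is strictly increasing and all other used latencies are constant. If one prefers menus equal to all of $\cP$ under $\theta_2$, the constant $M$ and the argument above show the extra paths are never strictly preferred. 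Indifference at zero flow is harmless because those paths contain an edge with positive constant latency, and I would instead raise $\ell_{e_a}$ slightly on paths sharing both, though keeping menus restricted is cleaner.

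The main obstacle is making sure the definition of $x^\star(\theta)$ matches. Definition~\ref{def:rbp} presumes a unique FWE, while the Pigou argument uses the deterministic recall-constrained Wardrop equilibrium. I would handle this as Theorem~\ref{thm:pigou} does, treating the recall-suppression policy as the static/limit case. Alternatively, one can note that with logit and $\beta$ large the SW-SUE values converge to $1$ and $\tfrac34$, so the strict inequality persists for finite large $\beta$.

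For the converse, if there is a unique simple $s$--$t$ path $p$, every nonempty menu equals $\{p\}$. Hence every recall policy yields $f_p=1$ and the same edge loads $x_e=\mathbf{1}\{e\in p\}$. $\mathrm{SC}$ is therefore constant in $\theta$, and no strict increase is possible.
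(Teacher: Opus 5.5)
Your proof is correct and is essentially the paper's argument. You embed the Pigou instance on two distinct $s$--$t$ paths, compare the suppression levels $\alpha=\tfrac12$ and $\alpha=0$ with menus $\{P_a\}$ versus $\{P_a,P_b\}$, and dismiss the series case trivially. Your choice of private edges $e_a\in P_a\setminus P_b$ and $e_b\in P_b\setminus P_a$ is cleaner than the paper's prefix/$u$--$v$/suffix decomposition, which tacitly assumes the two paths share a common suffix after $v$; that fails when they re-diverge, whereas your choice gives $x_{e_a}=f_{P_a}$ and $x_{e_b}=f_{P_b}$ directly.

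The only error is in your optional full-menu aside: another path inside $P_a\cup P_b$ can avoid both $e_a$ and $e_b$. For example, $P_a=s\to1\to2\to t$, $P_b=s\to2\to1\to t$ with $e_a=(1,2)$, $e_b=(2,1)$ leaves $s\to1\to t$ at cost $0$. So that variant fails, and you should keep the menus restricted, as your main argument already does.
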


\begin{proof}[Proof idea.]
Any choiceful two-terminal network contains two distinct $s$--$t$ paths; by penalizing all other edges, one can embed a Pigou subinstance where recall-suppression changes the equilibrium split. This yields an instance with higher cost under increased recall; conversely, series networks admit no choices so recall policies cannot affect flows.
Full proof is deferred to Appendix~\ref{app:proof-thm-rbp-ubiquity}.
\end{proof}

\begin{remark}[Interpretation]
Theorem~\ref{thm:rbp-ubiquity} separates our social-cost notion from topology-based immunity results for classic Braess' paradox (which does not occur on series-parallel networks) and from group-harm informational paradox notions.
Under social cost, \emph{any} nontrivial route choice admits a Pigou-type externality that can be mitigated by calibrated information friction.
\end{remark}

\section{Experiments and Evaluation}\label{sec:experiments}

\subsection{Goals and overview}\label{sec:exp_goals}
Top-tier venues typically expect (i) clear implementation details, (ii) stress tests of the modeling
assumptions, (iii) comparisons against strong baselines, and (iv) evidence of computational scale.
Our experiments are organized around six questions that map directly to the paper's technical claims:

\begin{enumerate}
  \item \textbf{Micro-to-salience validity under endogenous congestion.}
  When edge costs depend on flow, how accurately do the salience surrogates from
  Section~\ref{sec:micro2salience-B} predict the stationary outcomes of the explicit memory model
  in Section~\ref{sec:dynamic-recall-model}?

  \item \textbf{Assumption validation for the LRU$\to$TTL$\to$salience bridge.}
  Empirically quantify (a) discrete-time vs.~Poissonization error (Lemma~\ref{lem:poissonization-accuracy})
  and (b) LRU vs.~TTL (Che) hit-probability error (Theorem~\ref{thm:lru-ttl})
  across popularity regimes.

  \item \textbf{Recall Braess phenomena beyond toy instances.}
  Measure the \emph{incidence} and \emph{magnitude} of social-cost Recall Braess Paradox (RBP)
  under nested recall expansions on Pigou and Braess-like families (Section~\ref{sec:rbp}).

  \item \textbf{Governed design tradeoffs.}
  Quantify the welfare--governance frontier: how equilibrium social cost changes with influence
  budgets (Definition~\ref{def:influence-budget}) and with tying/fairness constraints
  (Theorem~\ref{thm:tied-implementability}).

  \item \textbf{Implementability prediction accuracy.}
  Validate that the implementability tests (Theorems~\ref{thm:implementability-salience}--\ref{thm:governed-implementability})
  correctly predict which target behaviors can be induced, and quantify the slack required when
  exact feasibility fails.

  \item \textbf{Scalability.}
  Demonstrate that the proposed solvers (e.g., the SP split-flow program of
  Theorem~\ref{thm:sp-split-convex}) scale to exponentially large path sets, and quantify when
  micro-level simulation becomes infeasible without surrogates.
\end{enumerate}

\paragraph{Reporting.}
Unless otherwise noted, we report mean $\pm$ standard deviation over random seeds/instances, and
we report runtimes for both equilibrium computation and (when applicable) micro simulation.

\subsection{Experimental setup and implementation details}\label{sec:exp_setup}

\paragraph{Instance families.}
We evaluate on three tiers of instances:
(i) canonical small networks (Pigou and Braess-type constructions) used for controlled diagnostics;
(ii) synthetic families with controlled structure (parallel networks; two-terminal series-parallel (SP)
networks); and (iii) ``stress'' instances generated by scaling path catalogs and influence budgets.
In this draft we focus on synthetic families to provide clean stress tests that map directly to the
paper's theoretical claims; realistic multi-OD case studies are an important next step.

\paragraph{Latency models.}
For synthetic instances we use affine latencies
$\ell_e(x_e)=a_e x_e + b_e$ with $(a_e,b_e)$ drawn from instance-specific ranges (described in each
experiment). For the SP scaling experiment we use affine latencies on each edge.

\paragraph{Micro model implementation.}
We simulate the explicit memory process from Section~\ref{sec:dynamic-recall-model} under LRU
(Definition~\ref{def:lru}) by agent-based simulation. Each ``day'' $t$:
(i) given edge loads $x^t$, each agent draws a surfaced route $q\sim \rho_k$ and chooses from
$A(m,q)$ via the logit rule~\eqref{eq:logit-choice};
(ii) we aggregate chosen paths into path flows $f^t$ and update edge loads $x^{t+1}=x(f^t)$; and
(iii) each agent updates memory via LRU. Stationary quantities are estimated by time averaging after
a burn-in.

\paragraph{Surrogates and SW-SUE solver.}
We evaluate two surrogates from Section~\ref{sec:micro2salience-B}:
(i) an \emph{oracle-availability} surrogate that plugs micro-estimated availability probabilities
$\eta$ into the availability-weighted logit (AWL) approximation~\eqref{eq:awl}; and
(ii) the fully endogenized \emph{TTL--salience fixed point} (Algorithm~\ref{alg:ttl-salience}).
We compute SW-SUE (Definition~\ref{def:swsue}) by minimizing the strictly convex potential
$\Phi_s$ (Proposition~\ref{prop:swsue-potential}) via projected gradient/mirror descent until the
relative KKT residual falls below $10^{-6}$.

\paragraph{Governed design.}
For influence budgets and tying constraints we use the single-level reductions from
Theorems~\ref{thm:constrained-implementability} and~\ref{thm:tied-implementability} when applicable, and otherwise
we optimize over policy parameters using gradient-based methods enabled by strict convexity
(Proposition~\ref{prop:swsue-potential}) and implicit differentiation (Theorem~\ref{thm:implicit}).


\subsection{Exp-1: Micro vs.\ salience surrogates under endogenous congestion}\label{sec:exp_micro_vs_salience_congestion}

\paragraph{Purpose.}
A central question is whether the reduced-form TTL--salience surrogate from
Section~\ref{sec:micro2salience-B} predicts stationary outcomes of the explicit memory process
(Section~\ref{sec:dynamic-recall-model}) when costs are \emph{endogenously coupled} to flows.

\paragraph{Instances and protocol.}
We report two tiers.
\emph{Canonical diagnostic (Pigou).}
We simulate the micro model on a Pigou network (two parallel routes) with demand $1$,
latencies $\ell_a(x)=x$ and $\ell_b(x)=1$ (config: $\beta=5$, $B=2$, $8000$ agents, $T=3000$ periods,
burn-in $1000$).
\emph{Benchmark multi-OD instance (Sioux Falls).}
We use a standard transportation benchmark (Sioux Falls) \citep{leblanc1975,bstablerTransportationNetworks}
with multi-OD demand and BPR-type edge latencies \citep{bpr1964}, restricting to the top-$50$ OD pairs by demand and generating $K_{\mathrm{paths}}=8$
candidate routes per OD via $k$-shortest paths on free-flow time \citep{yen1971} (config: $\beta=5$, $B=2$,
$800$ agents per OD, $T=2000$, burn-in $500$).

\paragraph{Comparators and metrics.}
We compare stationary micro outcomes to:
(i) the \emph{oracle-availability} AWL approximation that plugs micro-estimated availability
probabilities into~\eqref{eq:awl},
and (ii) the fully endogenized TTL--salience fixed point (Section~\ref{sec:micro2salience-B}).
We report social cost $\mathrm{SC}(x)=\sum_e x_e\ell_e(x_e)$, $\ell_1$ errors in stationary
path-share vectors, and wall-clock time for micro simulation.

\paragraph{Results.}
Table~\ref{tab:exp1_micro_vs_surrogate} summarizes outcomes.
On Pigou, the TTL--salience surrogate matches micro outcomes essentially exactly (as expected once
$B\ge |P|$).
On Sioux Falls (top-$50$ OD, $K_{\mathrm{paths}}=8$), the surrogate underestimates micro social cost
by about $6.6\%$ at $(\beta,B)=(5,2)$ and has nontrivial path-share error, consistent with the paper's
caution that the $B>1$ bridge relies on approximations whose constants can degrade with overlap,
heterogeneity, and near-deterministic choice.

\begin{table}[t]
  \centering \scriptsize
  \caption{Exp-1: micro simulation vs.\ salience surrogates under endogenous congestion.
  ``Rel.\ gap'' reports $(\mathrm{SC}_{\mathrm{sur}}-\mathrm{SC}_{\mathrm{micro}})/\mathrm{SC}_{\mathrm{micro}}$.}
  \label{tab:exp1_micro_vs_surrogate}
  \begin{tabular}{lcccccccc}
    \hline
    Instance & $\beta$ & $B$ & $\mathrm{SC}_{\mathrm{micro}}$ & $\mathrm{SC}_{\mathrm{TTL}}$ & Rel.\ gap & $\ell_1$ (oracle) & $\ell_1$ (TTL-eq) & Time (s) \\
    \hline
    Pigou (canonical) & 5.0 & 2 & 0.8199 & 0.8200 & +0.001\% & $7.34e-05$ & $3.86e-05$ & 266 \\
    SiouxFalls (top-50 OD) & 5.0 & 2 & $9.52\times 10^{5}$ & $8.89\times 10^{5}$ & -6.63\% & 0.295 & 0.272 & 905 \\
    \hline
  \end{tabular}
\end{table}

\begin{figure}[t]
  \centering
  \includegraphics[width=0.32\linewidth]{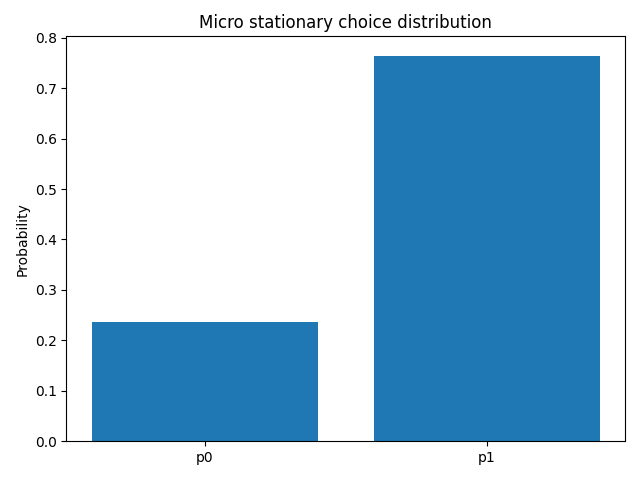}\hfill
  \includegraphics[width=0.32\linewidth]{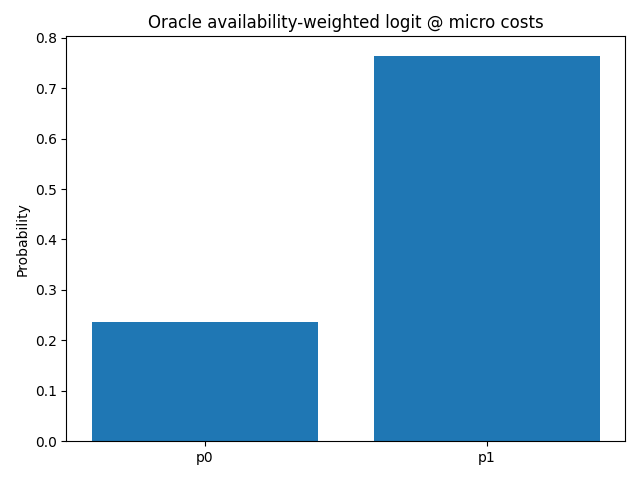}\hfill
  \includegraphics[width=0.32\linewidth]{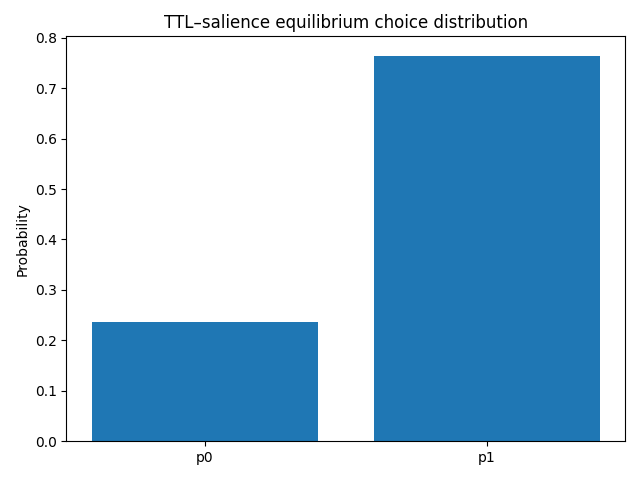}
  \caption{Exp-1 (Pigou): stationary choice shares under the micro model (left), oracle-availability
  AWL approximation (middle), and TTL--salience equilibrium prediction (right).}
  \label{fig:exp1_choice_distributions}
\end{figure}

\subsection{Exp-2: Validating Poissonization and LRU$\to$TTL accuracy regimes}\label{sec:exp_assumptions}

\paragraph{Purpose.}
Section~\ref{sec:micro2salience-B} uses two approximation layers: (i) a Poissonized request model
to connect routing decisions to cache-theoretic approximations, and (ii) the TTL (Che) approximation
for LRU recall probabilities (Theorem~\ref{thm:lru-ttl}). Here we quantify the induced discrepancies
between micro-observed recall statistics and the corresponding TTL predictions in a controlled setting.

\paragraph{Instance and protocol.}
We consider a single-commodity parallel network with $m=6$ routes and constant latencies
$b\in\{0.5,0.7,0.9,1.1,1.3,1.5\}$ (config: $\beta=4.0$, $B=3$,
$N=12000$ agents, $T=4000$, burn-in $1000$). We simulate the micro model with LRU memory and estimate:
(i) stationary path shares $\pi^{\mathrm{micro}}$ and (ii) stationary recall probabilities
$h^{\mathrm{micro}}$. We then compute the TTL characteristic time $T$ and the corresponding TTL
predictions $(h^{\mathrm{TTL}},\pi^{\mathrm{TTL}})$.

\paragraph{Results.}
In this regime, TTL predicts stationary \emph{choice shares} accurately
($\|\pi^{\mathrm{TTL}}-\pi^{\mathrm{micro}}\|_1 \approx 0.026$),
while the corresponding \emph{hit-probability vector} has larger $\ell_1$ discrepancy
($\|h^{\mathrm{TTL}}-h^{\mathrm{micro}}\|_1 \approx 0.253$).
Figure~\ref{fig:exp2_assumption_validation} plots micro vs.\ TTL recall and choice statistics.

\begin{figure}[t]
  \centering
  \includegraphics[width=0.49\linewidth]{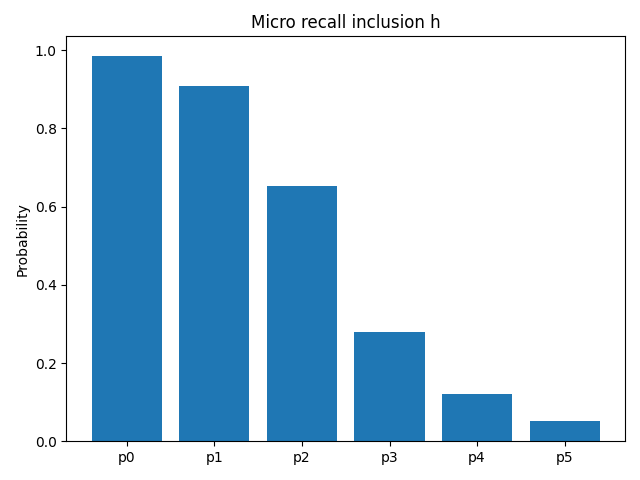}\hfill
  \includegraphics[width=0.49\linewidth]{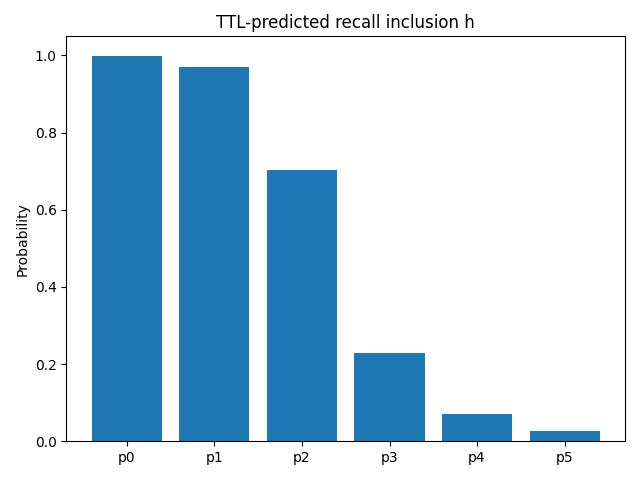}\\[2mm]
  \includegraphics[width=0.49\linewidth]{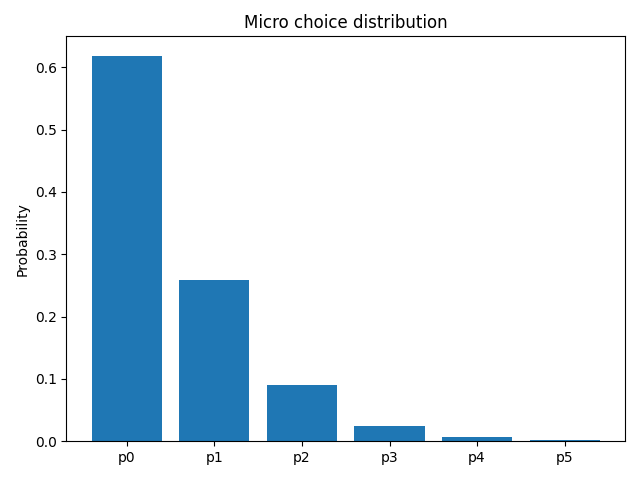}\hfill
  \includegraphics[width=0.49\linewidth]{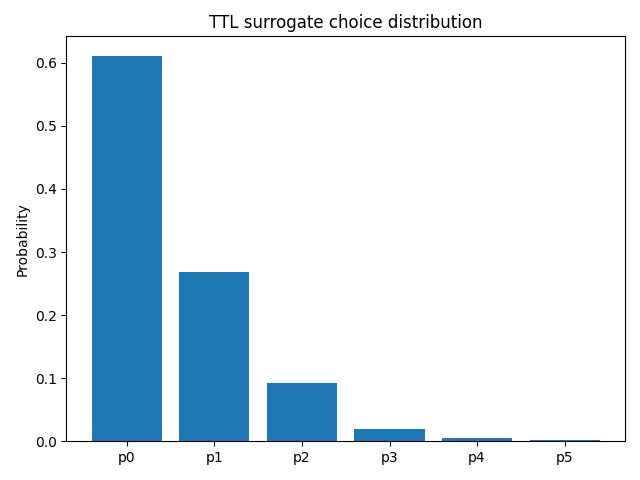}
  \caption{Exp-2: micro vs.\ TTL recall statistics on a controlled parallel instance.
  Top: recall/hit probabilities $h$ (micro vs.\ TTL). Bottom: stationary choice shares $\pi$
  (micro vs.\ TTL).}
  \label{fig:exp2_assumption_validation}
\end{figure}

\subsection{Exp-3: Recall Braess Paradox on the canonical Braess network}\label{sec:exp_rbp}

\paragraph{Purpose.}
Theorems~\ref{thm:pigou}--\ref{thm:rbp-ubiquity} establish that increasing recall can increase
equilibrium social cost (Recall Braess Paradox, RBP). Here we provide a direct micro-simulation
illustration on the canonical Braess network by varying the memory budget $B$.

\paragraph{Instance and protocol.}
We simulate the micro model on the standard Braess network with demand $1$ (config: $\beta=5$,
$N=8000$ agents, $T=3000$, burn-in $1000$) for memory budgets $B\in\{1,2,3\}$ (note that the
candidate set has $|P|=3$ simple $s\to t$ paths, hence $B\le 3$).

\paragraph{Results.}
Table~\ref{tab:exp3_braess_rbp} and Figure~\ref{fig:exp3_braess_sc_vs_B} report the stationary
micro social cost as a function of $B$. Social cost \emph{increases} when recall expands from
$B=1$ to $B=2$ (about $+1.2\%$ relative to $B=1$), illustrating RBP in a minimal overlapping-path
instance; increasing to $B=3$ then reduces social cost in this configuration.

\begin{table}[t]
  \centering \scriptsize
  \caption{Exp-3 (Braess): stationary micro social cost vs.\ memory budget $B$.
  ``Rel.\ vs.\ $B=1$'' reports $(\mathrm{SC}(B)-\mathrm{SC}(1))/\mathrm{SC}(1)$.}
  \label{tab:exp3_braess_rbp}
  \begin{tabular}{cccc}
    \toprule
    $B$ & $\mathrm{SC}$ & Rel.\ vs.\ $B=1$ & Time (s) \\
    \midrule
    1 & 1.6712 & +0.000\% & 266 \\
    2 & 1.6912 & +1.20\% & 262 \\
    3 & 1.6610 & -0.61\% & 245 \\
    \bottomrule
  \end{tabular}
\end{table}

\begin{figure}[t]
  \centering
  \includegraphics[width=0.5\linewidth]{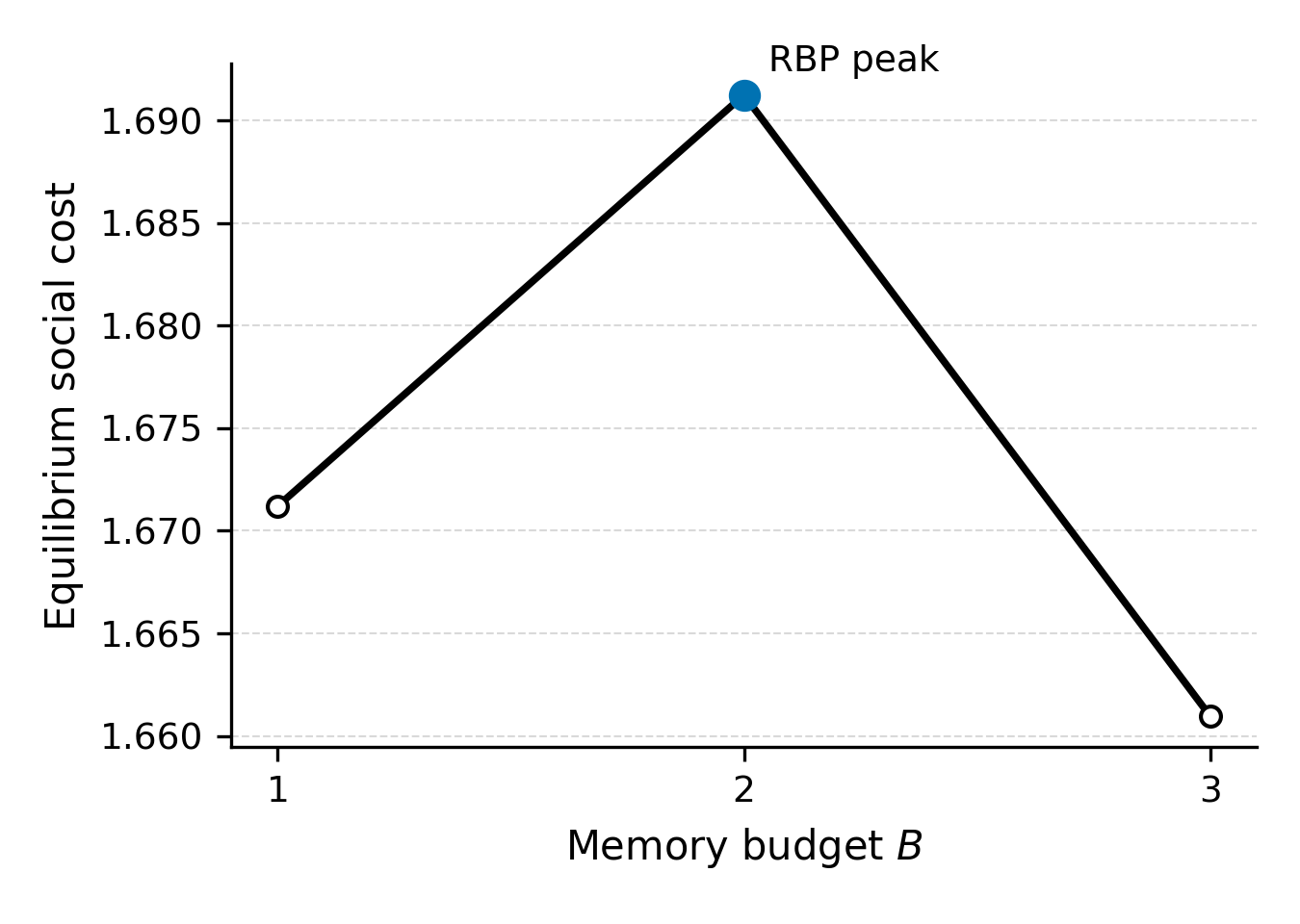}
  \caption{Exp-3 (Braess): stationary micro social cost vs.\ memory budget $B$ (one seed).}
  \label{fig:exp3_braess_sc_vs_B}
\end{figure}

\subsection{Exp-4: Governed design---welfare vs.\ influence budgets}\label{sec:exp_governed_design}

\paragraph{Purpose.}
Section~\ref{sec:implementability} characterizes implementability under influence budgets
(Definition~\ref{def:influence-budget}) and Section~\ref{sec:design} develops equilibrium-aware
design methods. Here we trace the welfare--governance frontier on a controlled parallel instance.

\paragraph{Instance and protocol.}
We consider a single-commodity parallel network with $m=5$ routes and affine latencies
$\ell_p(x_p)=a_p x_p + b_p$ with $(a_p,b_p)$ given in the config file (demand $1$, $\beta=5$).
For each influence budget $R$ we solve the governed design problem and record the optimal equilibrium
social cost.

\paragraph{Results.}
Figure~\ref{fig:exp4_frontier} plots the frontier $\mathrm{SC}^*(R)$ and Table~\ref{tab:e4_frontier}
reports the values. Social cost decreases monotonically in $R$ and plateaus once $R$ exceeds the
minimal budget required by the target (here $\approx 3.87$), consistent with
Theorem~\ref{thm:constrained-implementability}.

\begin{figure}[t]
  \centering
  \includegraphics[width=0.5\linewidth]{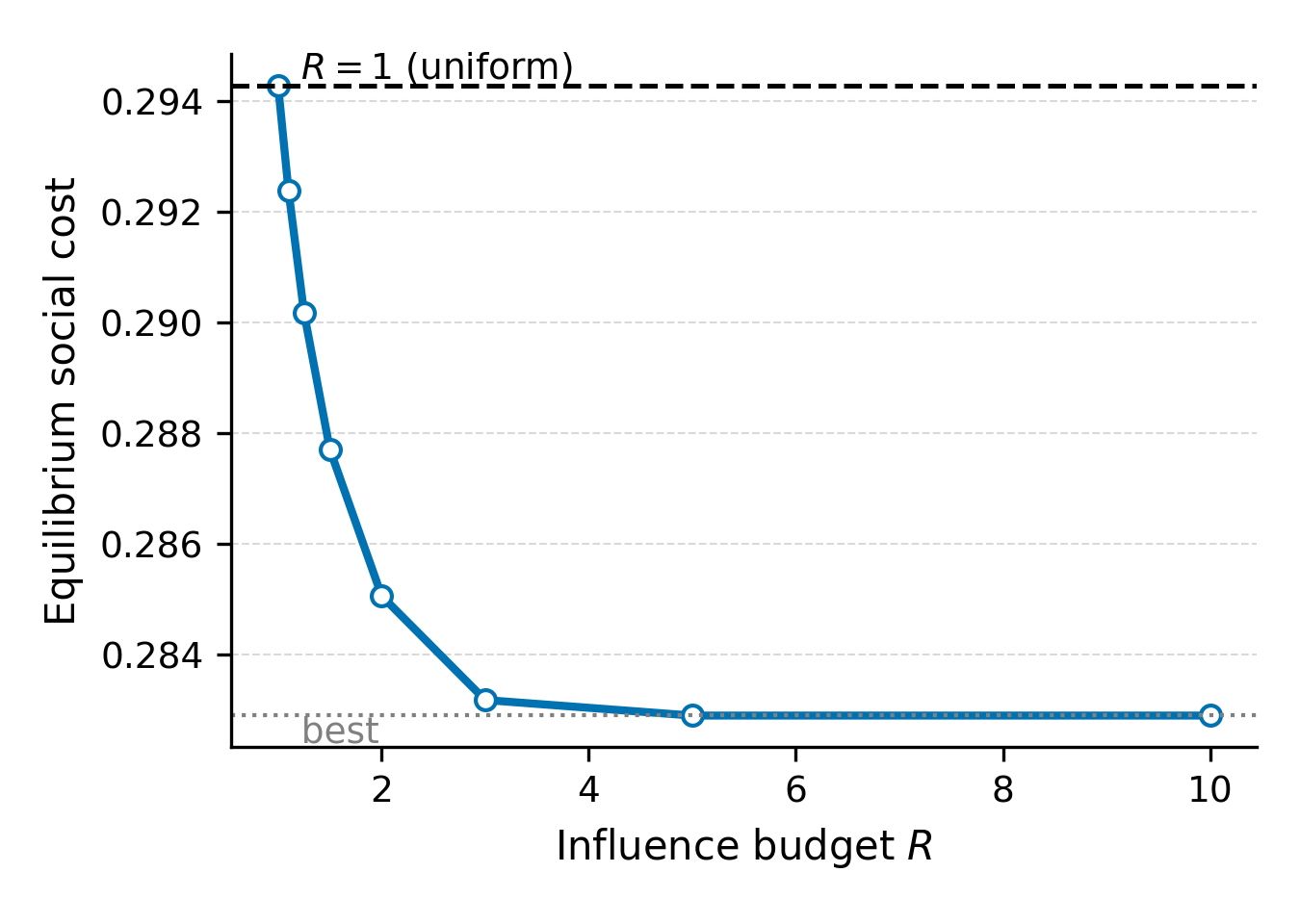}
  \caption{Exp-4: governed design frontier on a parallel network.}
  \label{fig:exp4_frontier}
\end{figure}

\begin{table}[t]
  \centering \scriptsize
  \caption{Exp-4: social cost at the optimized governed equilibrium as a function of influence budget $R$.
  ``Req.\ $R$'' reports the minimal budget required by the optimal solution as predicted by
  Theorem~\ref{thm:constrained-implementability}.}
  \label{tab:e4_frontier}
  \begin{tabular}{ccc}
    \toprule
    $R$ & $\mathrm{SC}^*(R)$ & Req.\ $R$ \\
    \midrule
    1.00 & 0.2943 & 1.000 \\
    1.10 & 0.2924 & 1.100 \\
    1.25 & 0.2902 & 1.250 \\
    1.50 & 0.2877 & 1.500 \\
    2.00 & 0.2851 & 2.000 \\
    3.00 & 0.2832 & 3.000 \\
    5.00 & 0.2829 & 3.874 \\
    10.00 & 0.2829 & 3.874 \\
    \bottomrule
  \end{tabular}
\end{table}

\subsection{Exp-5: Implementability prediction and required influence budgets}\label{sec:exp_implementability}

\paragraph{Purpose.}
Theorem~\ref{thm:constrained-implementability} yields an explicit minimal influence budget
$R^\{\min\}(\bar f)$ required to implement a target behavior $\bar f$ (under the salience model).
Here we sample random interior target flows on the Braess instance and evaluate the resulting
required budgets under an edge-additive tying class.

\paragraph{Protocol.}
We sample $n=30$ interior target path-flow vectors on the Braess network, compute the implied
log-salience differences $a(\bar f)$, and evaluate the minimal ratio budget $R^\{\min\}(\bar f)$.
We then classify each target as feasible/infeasible under a fixed available budget $R=5$.

\paragraph{Results.}
At budget $R=5$, only $4/30$ targets are predicted feasible.
The required budgets are heavy-tailed: median $38.54$, 90th percentile
$$2.7\times 10^{3}$$, and maximum $1.2\times 10^{5}$.
Figure~\ref{fig:exp5_required_R_cdf} plots the empirical CDF of $R^\{\min\}(\bar f)$.

\begin{figure}[t]
  \centering
  \includegraphics[width=0.5\linewidth]{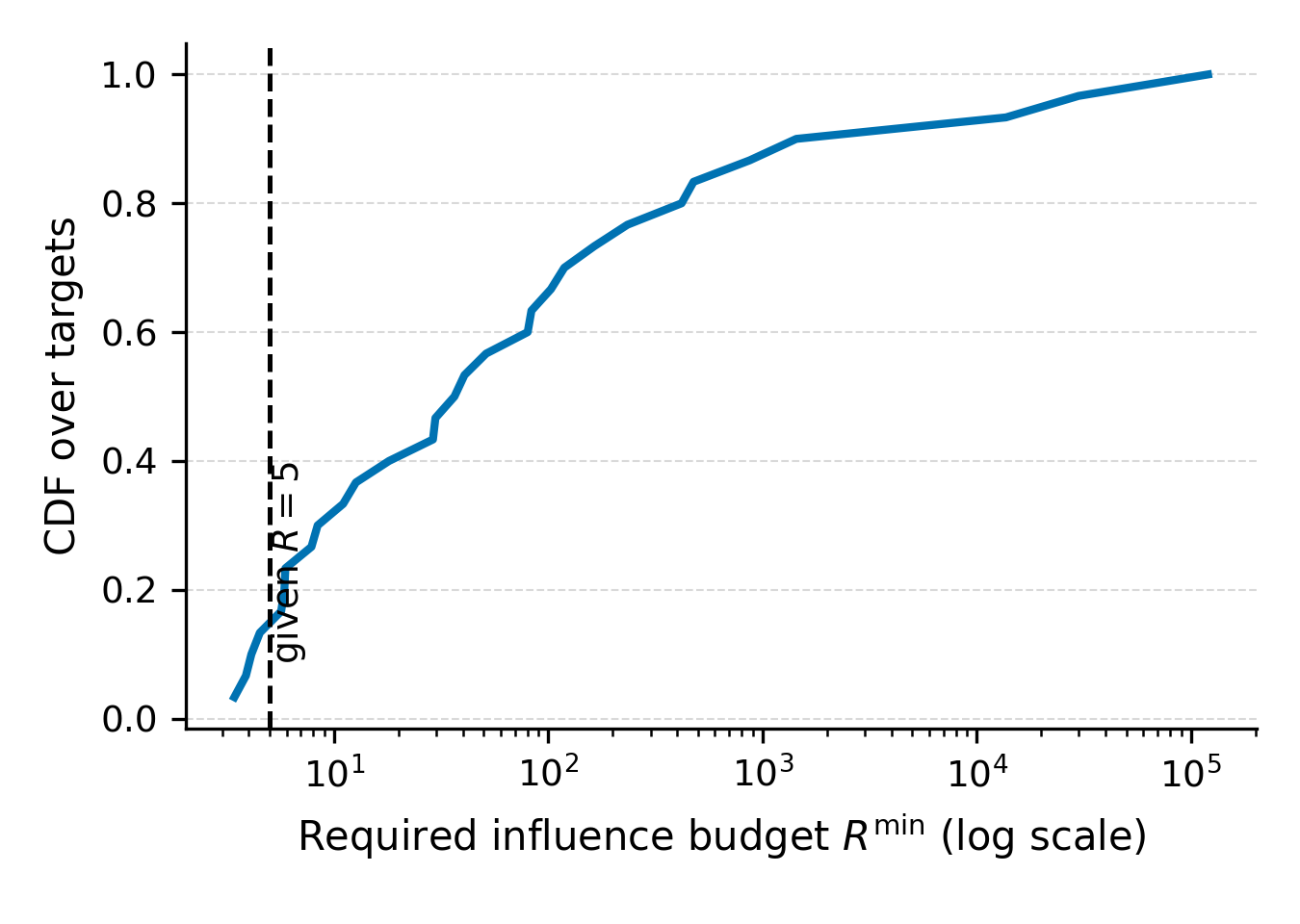}
  \caption{Exp-5: distribution of required influence budgets $R^\{\min\}(\bar f)$ over random targets
  (log-scale x-axis).}
  \label{fig:exp5_required_R_cdf}
\end{figure}

\begin{table}[t]
  \centering \scriptsize
  \caption{Exp-5: summary statistics for required influence budgets on random target behaviors.}
  \label{tab:exp5_summary}
  \begin{tabular}{lcc}
    \toprule
    Quantity & Value & Notes \\
    \midrule
    Samples $n$ & 30 & random interior targets \\
    Feasible at $R=5$ & 4 / 30 & predicted by Theorem~\ref{thm:constrained-implementability} \\
    Median $R^\{\min\}$ & 38.54 & \\
    90th percentile $R^\{\min\}$ & $2.7\times 10^{3}$ & \\
    Maximum $R^\{\min\}$ & $1.2\times 10^{5}$ & \\
    \bottomrule
  \end{tabular}
\end{table}

\subsection{Exp-6: Scalability on series-parallel networks}\label{sec:exp_scalability}

\paragraph{Purpose.}
We provide runtime evidence for the series-parallel (SP) split-flow approach (Theorem~\ref{thm:sp-split-convex})
compared to naive path enumeration, using the ``diamond chain'' SP family where the number of paths grows
exponentially with network size.

\paragraph{Instance family and protocol.}
We consider a chain of $k$ diamond gadgets in series, yielding $|E|=2k$ edges and $|\mathcal{P}|=2^k$
$s\to t$ paths. For each $k\in\{4,6,8,10,12,14,16,18\}$ we compute the SW-SUE equilibrium using:
(i) the SP split-flow solver, and (ii) a path-enumeration solver on the full path set.

\paragraph{Results.}
Figure~\ref{fig:exp6_scaling} and Table~\ref{tab:exp6_scaling} show that the split-flow solver runs in
sub-millisecond time across the sweep, while path enumeration becomes rapidly infeasible as $|\mathcal{P}|$
grows (hours by $k=18$), even though both methods agree on the computed social cost.

\begin{figure}[t]
  \centering
  \includegraphics[width=0.5\linewidth]{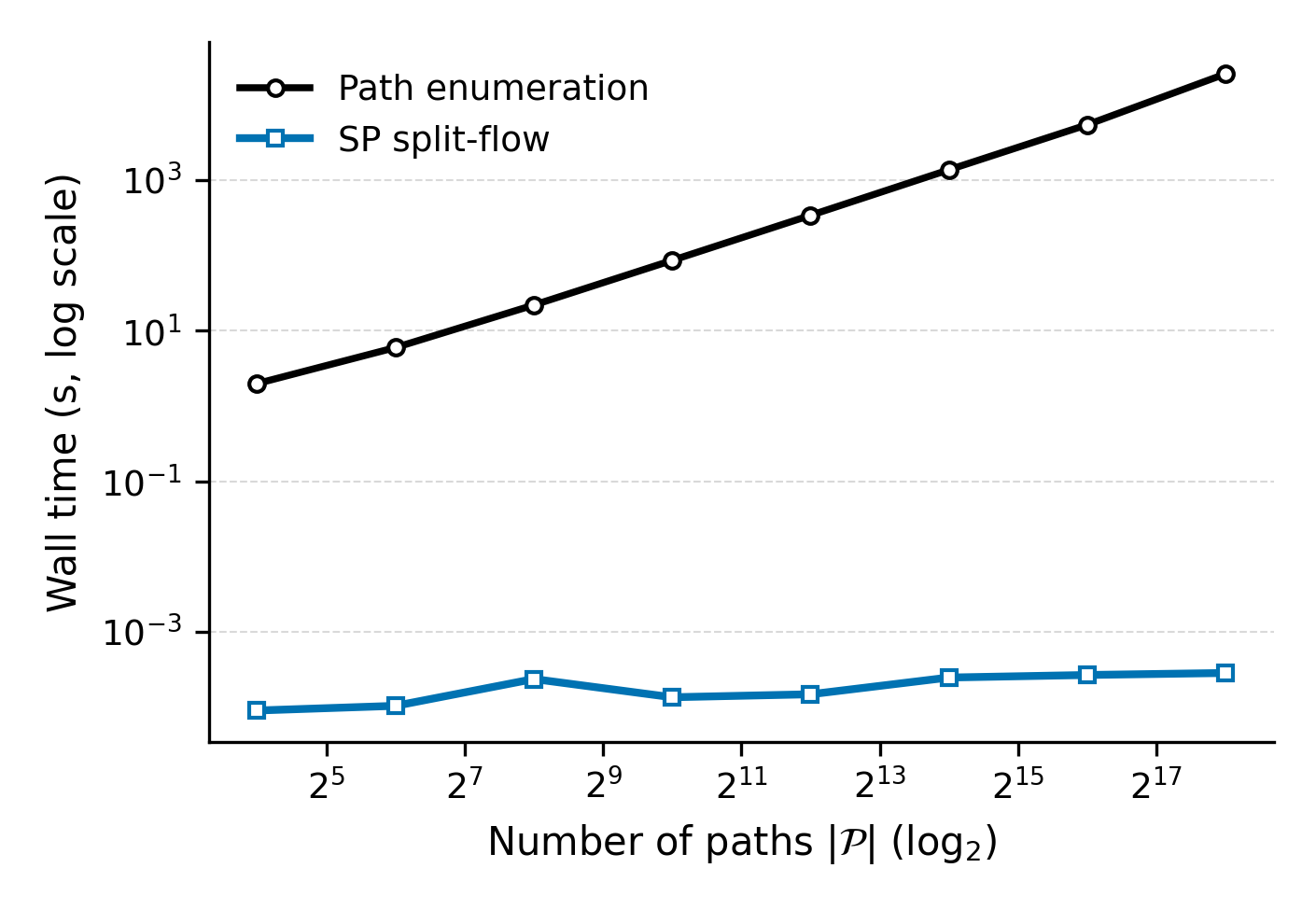}
  \caption{Exp-6: runtime scaling on the diamond-chain SP family (split-flow vs.\ path enumeration).}
  \label{fig:exp6_scaling}
\end{figure}

\begin{table}[t]
  \centering \scriptsize
  \caption{Exp-6: runtime scaling (milliseconds for split-flow; seconds for path enumeration).}
  \label{tab:exp6_scaling}
  \begin{tabular}{cccc}
    \toprule
    $k$ & $|\mathcal{P}|=2^k$ & Split-flow (ms) & Path-enum (s) \\
    \midrule
    4 & 16 & 0.090 & 2.0 \\
    6 & 64 & 0.104 & 5.9 \\
    8 & 256 & 0.236 & 21.8 \\
    10 & 1024 & 0.135 & 85.4 \\
    12 & 4096 & 0.148 & 338.7 \\
    14 & 16384 & 0.247 & 1362.0 \\
    16 & 65536 & 0.267 & 5401.4 \\
    18 & 262144 & 0.284 & 25625.9 \\
    \bottomrule
  \end{tabular}
\end{table}
\section{Discussion and limitations}\label{sec:discussion}
The paper deliberately separates \emph{exact} design-layer results from \emph{approximate} bridges
back to micro-level memory dynamics (Section~\ref{sec:layer-map}). Two limitations are worth
highlighting.

\paragraph{Behavioral scope of the salience abstraction.}
Stationary salience policies capture persistent, average effects of recall and guidance on menus,
but do not directly model transient learning, endogenous formation of attention, or rich correlation
structures in route availability. The approximation pipeline in Section~\ref{sec:micro2salience-B}
partially addresses this by endogenizing availability through TTL equations; nevertheless, regimes
with highly concentrated popularity, near-deterministic choice (large $\beta$), or strong overlap
across alternatives can require micro simulation or richer surrogates.

\paragraph{From synthetic to real networks.}
Our experiments focus on synthetic instance families to provide controlled stress tests that align
tightly with the theory (micro-to-salience accuracy, implementability frontiers, and SP scaling).
A natural next step is to evaluate governed design and surrogate accuracy on benchmark transportation
networks with multiple OD pairs and standard BPR-type latencies.

\section{Conclusion}
This working paper develops a policy-controlled notion of \emph{endogenous information} in routing games, bridging micro-level memory dynamics (LRU recall and surfacing) with macro-level equilibrium analysis and design.
The key technical idea is to expose an \emph{inverse map} from target flows to required salience (log-salience equals $\log f+\beta L$ up to scale), which yields implementability tests and transforms governed policy design into optimization over implementable flows.
On a nontrivial network class (parallel networks), bounded-influence optimal design reduces to a one-dimensional search plus convex subproblems.

Near-term next steps are: (i) extending the tractable design results beyond parallel and series-parallel networks (e.g., graphs with low treewidth, decomposable route-set representations, or column-generation over path features); (ii) incorporating richer, feature-based tying constraints into efficient solvers; and (iii) validating the model on real route-choice datasets with controlled information interventions.


\appendix

\section{Deferred and extended proofs}
\label{app:proofs}
This appendix collects complete proofs for results whose proofs are deferred from the main text.
We retain the notation introduced in the main body.
To minimize repetition across proofs, we make explicit a set of standing assumptions and conventions used throughout Appendix~\ref{app:proofs}; individual proofs only call out additional assumptions when needed.

\subsection{Standing assumptions and conventions for Appendix~\ref{app:proofs}}
\label{app:standing-assumptions}
Unless explicitly stated otherwise, the arguments in this appendix use the following standing assumptions.
\begin{enumerate}
    \item[\textbf{A1}] \textbf{Finite path sets.} For each commodity $k\in\cK$, the path set $\cP_k$ is finite; hence the induced memory state space $\cM_k$ is finite.
    \item[\textbf{A2}] \textbf{Latency regularity.} Each edge latency $\ell_e(\cdot)$ is continuous and nondecreasing on $[0,D]$, where $D=\sum_{k\in\cK} d_k$ is the total demand. When strict monotonicity or Lipschitz regularity is needed, it is stated explicitly.
    \item[\textbf{A3}] \textbf{Finite logit temperature.} In the dynamic-recall model, $\beta<\infty$ so every available route has strictly positive choice probability. In salience-weighted SUE results we assume $\beta>0$.
    \item[\textbf{A4}] \textbf{Nonempty availability.} Random menus/available sets are almost surely nonempty, so denominators such as $Z=\sum_{r\in A} w_r$ are well-defined. (This is automatic in the endogenous-recall model because the surfaced route is always available.)
    \item[\textbf{A5}] \textbf{Full-support surfacing.} In the endogenous-recall Markov chain, each surfacing distribution $\rho_k$ has full support on $\cP_k$.
\end{enumerate}

\subsection{Proof of Lemma~\ref{lem:ergodic}}
\label{app:proof-lem-ergodic}
\begin{proof}
Fix a commodity $k$ and a congestion vector $x\in\cX$.
The state space $\cM_k$ is finite.
It therefore suffices to show that the Markov chain with kernel $P_x^{(k)}$ is (i) irreducible and (ii) aperiodic.

\paragraph{Irreducibility.}
Let $m,m'\in\cM_k$ be arbitrary.
Write $m'=(p'^{(1)},\dots,p'^{(B_k)})$.
Consider the event that for $j=1,2,\dots,B_k$ the surfaced route equals
\[
q^{(j)} \;=\; p'^{(B_k-j+1)}.
\]
Because $\rho_k$ has full support on $\cP_k$, this surfacing sequence occurs with strictly positive probability
$\prod_{j=1}^{B_k}\rho_k(q^{(j)})>0$.
Conditional on this surfacing sequence, at each step $j$ the route $q^{(j)}$ belongs to the available set $\cA(m^{(j)},q^{(j)})$ by \eqref{eq:available-set}.
Since $\beta<\infty$, the logit rule \eqref{eq:logit-choice} assigns every available route strictly positive probability, hence
\[
\sigma\!\big(q^{(j)}\mid m^{(j)},q^{(j)},x\big) \;>\;0.
\]
Therefore the joint event that the traveler chooses the surfaced route at every step also has strictly positive probability (the product of these positive terms).

Under the LRU update map $U_k$ in Definition~\ref{def:lru}, whenever the chosen route is not already in the current recalled list it is inserted at the front (and the last element is dropped).
Starting from an arbitrary initial state $m$, after choosing $q^{(1)}=p'^{(B_k)}$ the first element becomes $p'^{(B_k)}$; after choosing $q^{(2)}=p'^{(B_k-1)}$ the first two elements become $(p'^{(B_k-1)},p'^{(B_k)})$; continuing for $B_k$ steps yields exactly
\[
m^{(B_k+1)} \;=\; (p'^{(1)},p'^{(2)},\dots,p'^{(B_k)}) \;=\; m'.
\]
Hence $m'$ is reachable from $m$ with positive probability in $B_k$ steps, implying irreducibility.

\paragraph{Aperiodicity.}
Fix any state $m=(p^{(1)},\dots,p^{(B_k)})$.
For any surfaced route $q$, the route $p^{(1)}$ is always available because $p^{(1)}\in\cS(m)\subseteq \cA(m,q)$.
Under logit with finite $\beta$, $\sigma(p^{(1)}\mid m,q,x)>0$.
Moreover, under LRU, choosing the most-recent route leaves the ordered list unchanged: $U_k(m,p^{(1)})=m$.
Therefore
\[
P_x^{(k)}(m\mid m) \;\ge\; \sum_{q\in\cP_k}\rho_k(q)\,\sigma(p^{(1)}\mid m,q,x) \;>\;0,
\]
so the chain admits a self-loop at every state and is aperiodic.

Since $\cM_k$ is finite and the chain is irreducible and aperiodic, it is ergodic and admits a unique stationary distribution $\pi_k(x)$ with full support.
\end{proof}

\subsection{Proof of Lemma~\ref{lem:pi-continuous}}
\label{app:proof-lem-pi-continuous}
\begin{proof}
Fix a commodity $k$ and let $n\triangleq |\cM_k|$.
For each $x\in\cX$, let $P_x$ denote the $n\times n$ transition matrix of the chain (with rows indexed by $m\in\cM_k$).
Under Assumption~\ref{ass:regularity}, each entry of $P_x$ is a continuous function of $x$ because it is a finite sum of terms of the form $\rho_k(q)\sigma(p\mid m,q,x)$ and $x\mapsto \sigma(\cdot\mid m,q,x)$ is continuous (indeed smooth) whenever $\beta<\infty$ and $x\mapsto L_p(x)$ is continuous.

By Lemma~\ref{lem:ergodic}, for every $x\in\cX$ the chain is ergodic and therefore has a unique stationary distribution.
Let $\pi(x)\in\R^n$ denote the stationary distribution written as a column vector.
It is the unique solution to the linear system
\begin{equation}
\label{eq:stationary-linear-system}
P_x^\top \pi(x)=\pi(x), \qquad \mathbf{1}^\top \pi(x)=1.
\end{equation}
Define the matrix $B_x\in\R^{n\times n}$ by taking $I-P_x^\top$ and replacing its last row by $\mathbf{1}^\top$, and define $b\in\R^n$ by $b_n=1$ and $b_i=0$ for $i<n$.
Then \eqref{eq:stationary-linear-system} is equivalent to
\begin{equation}
\label{eq:Bx}
B_x \pi(x)=b.
\end{equation}
Because the stationary distribution is unique, \eqref{eq:Bx} has a unique solution, hence $B_x$ is invertible.
The mapping $x\mapsto B_x$ is continuous, and matrix inversion is continuous on the set of invertible matrices.
Therefore $x\mapsto \pi(x)=B_x^{-1}b$ is continuous on $\cX$.
\end{proof}

\subsection{Proof of Proposition~\ref{prop:xi-nonempty}}
\label{app:proof-prop-xi-nonempty}
\begin{proof}
Fix a memory profile $\mu$.
Define the map $T_\mu:\cX\to\cX$ by
\[
T_\mu(x)\;\triangleq\; x\!\left(f(\mu,x)\right),
\]
where $f(\mu,x)$ is defined in \eqref{eq:path-flow-induced} and $x(f)$ is defined in \eqref{eq:edge-loads}.

\paragraph{Step 1: $T_\mu$ maps $\cX$ into itself.}
For each commodity $k$, the induced path flows satisfy $\sum_{p\in\cP_k} f_{k,p}(\mu,x)=d_k$ because $\sigma(\cdot\mid m,q,x)$ is a probability distribution over $\cA(m,q)$ and the outer sums in \eqref{eq:path-flow-induced} preserve total mass.
Therefore, for any edge $e$,
\[
0 \;\le\; x_e\!\left(f(\mu,x)\right)
\;=\;
\sum_{k\in\cK}\sum_{p\in\cP_k} f_{k,p}(\mu,x)\,\mathbf{1}\{e\in p\}
\;\le\;
\sum_{k\in\cK}\sum_{p\in\cP_k} f_{k,p}(\mu,x)
\;=\;
\sum_{k\in\cK} d_k
\;=\; D.
\]
Hence $T_\mu(x)\in [0,D]^{|E|}=\cX$.

\paragraph{Step 2: continuity.}
Under Assumption~\ref{ass:regularity}, the latency functions are continuous and the logit rule is continuous in the cost vector.
Thus $(\mu,x)\mapsto f(\mu,x)$ is continuous because it is a finite sum of continuous functions.
The edge-load map $f\mapsto x(f)$ is linear, hence continuous.
Therefore $T_\mu$ is continuous.

\paragraph{Step 3: apply Brouwer.}
The set $\cX$ is compact and convex.
By Brouwer's fixed point theorem, the continuous map $T_\mu:\cX\to\cX$ admits a fixed point $x\in\cX$ satisfying $x=T_\mu(x)$, which is exactly the congestion consistency condition \eqref{eq:congestion-fixed-point}.
Therefore $\Xi(\mu)$ is nonempty.
\end{proof}

\subsection{Proof of Theorem~\ref{thm:fwe-existence}}
\label{app:proof-thm-fwe-existence}
\begin{proof}
Define $\pi(x)\triangleq(\pi_k(x))_{k\in\cK}$ and recall the reduced map
\[
T(x)\;\triangleq\; x\!\left(f(\pi(x),x)\right).
\]

\paragraph{Step 1: continuity of $T$.}
By Lemma~\ref{lem:pi-continuous}, each $x\mapsto \pi_k(x)$ is continuous; thus $x\mapsto \pi(x)$ is continuous.
As in the proof of Proposition~\ref{prop:xi-nonempty}, the map $(\mu,x)\mapsto f(\mu,x)$ is continuous and $f\mapsto x(f)$ is linear.
Therefore $T$ is continuous on $\cX$.

\paragraph{Step 2: $T$ maps $\cX$ into itself.}
The same argument as in Proposition~\ref{prop:xi-nonempty} shows that for any $x\in\cX$, all edge loads in $T(x)$ lie in $[0,D]$, hence $T(x)\in\cX$.

\paragraph{Step 3: apply Brouwer and verify the FWE conditions.}
By Brouwer's fixed point theorem, there exists $x^\star\in\cX$ such that $x^\star=T(x^\star)$.
Define $\mu^\star\triangleq \pi(x^\star)$.
By definition of $\pi_k(x^\star)$, for each commodity $k$ the distribution $\mu_k^\star$ is stationary for the memory kernel induced by $x^\star$, i.e., it satisfies the stationarity requirement in Definition~\ref{def:fwe}.
Moreover, $x^\star=T(x^\star)$ is precisely the within-period congestion consistency condition with $\mu^\star$.
Hence $(x^\star,\mu^\star)$ is a forgetful Wardrop equilibrium.
\end{proof}

\subsection{Proof of Proposition~\ref{prop:swsue-potential}}
\label{app:proof-prop-swsue-potential}
\begin{proof}
Fix a salience policy $s$ and recall the feasible set
\[
\cF \;=\; \Big\{f\ge 0:\; \sum_{p\in\cP_k} f_{k,p}=d_k\ \ \forall k\in\cK\Big\}.
\]

\paragraph{Step 1: strict convexity and existence of a unique minimizer.}
The Beckmann term $f\mapsto \sum_{e\in E}\int_0^{x_e(f)}\ell_e(u)\,du$ is convex because each $\ell_e$ is nondecreasing and $x_e(f)$ is linear in $f$.
The entropic term
\[
f \;\mapsto\; \frac{1}{\beta}\sum_{k,p} f_{k,p}\big(\log f_{k,p}-\log s_{k,p}\big)
\]
is strictly convex on $\cF$ when $\beta>0$ because $z\mapsto z\log z$ is strictly convex on $\R_{>0}$ and $s_{k,p}>0$.
Therefore $\Phi_s$ is strictly convex on the convex set $\cF$ and can have at most one minimizer.
Moreover, $\cF$ is a product of simplices and is compact, and $\Phi_s$ is continuous on $\cF$ (taking $0\log 0=0$), so a minimizer exists.
Hence $\Phi_s$ admits a unique minimizer $f^\star\in\cF$.

\paragraph{Step 2: first-order conditions yield SW-SUE.}
We show that the minimizer is interior, i.e., $f_{k,p}^\star>0$ for all $k,p$.
Fix any commodity $k$ and any path $p$.
If $f_{k,p}^\star=0$, then for any $\epsilon>0$ and any $r\in\cP_k$ with $f_{k,r}^\star>0$ (which must exist because $\sum_p f_{k,p}^\star=d_k>0$), consider shifting $\epsilon$ mass from $r$ to $p$ while preserving feasibility.
The directional derivative of the entropic term in the $p$-direction diverges to $-\infty$ as $f_{k,p}\downarrow 0$, implying that such a perturbation strictly decreases $\Phi_s$ for sufficiently small $\epsilon$, contradicting optimality.
Therefore $f^\star$ is interior.

For interior points, the KKT conditions are necessary and sufficient.
Let $x^\star=x(f^\star)$.
The derivative of the Beckmann term with respect to $f_{k,p}$ equals the path latency:
\[
\frac{\partial}{\partial f_{k,p}} \sum_{e\in E}\int_0^{x_e(f)}\ell_e(u)\,du
\;=\;
\sum_{e\in p}\ell_e(x^\star_e)
\;=\; L_p(x^\star).
\]
The derivative of the entropic term is $\frac{1}{\beta}(\log f_{k,p}^\star + 1 - \log s_{k,p})$.
Introducing Lagrange multipliers $\lambda_k$ for the equality constraints $\sum_{p\in\cP_k} f_{k,p}=d_k$, the first-order conditions read
\[
L_p(x^\star) + \frac{1}{\beta}\big(\log f_{k,p}^\star + 1 - \log s_{k,p}\big) + \lambda_k = 0
\qquad \forall k\in\cK,\ \forall p\in\cP_k.
\]
Rearranging yields
\[
f_{k,p}^\star \;=\; s_{k,p}\,\exp\!\big(-\beta(L_p(x^\star)+\lambda_k)\big)\,e^{-1}.
\]
Using $\sum_{p\in\cP_k} f_{k,p}^\star=d_k$ to eliminate $\lambda_k$ shows that for each commodity $k$,
\[
f_{k,p}^\star
\;=\;
d_k\,
\frac{s_{k,p}\exp(-\beta L_p(x^\star))}
{\sum_{r\in\cP_k} s_{k,r}\exp(-\beta L_r(x^\star))},
\]
which is precisely the SW-SUE condition in Definition~\ref{def:swsue}.

\paragraph{Step 3: equivalence.}
Conversely, any flow satisfying the SW-SUE fixed point satisfies the KKT system above and therefore minimizes $\Phi_s$.
By uniqueness of the minimizer, the SW-SUE is unique and equals $f^\star$.
\end{proof}

\subsection{Proof of Lemma~\ref{lem:pi-lip}}
\label{app:proof-lem-pi-lip}
\begin{proof}
Fix a commodity $k$ and write $P_x$ for $P_x^{(k)}$ and $\pi(x)$ for $\pi_k(x)$ to lighten notation.
Assumption~\ref{ass:doeblin} implies the minorization
\[
P_x(\cdot\mid m) \;=\; \varepsilon_k \nu_k(\cdot) + (1-\varepsilon_k)\,\widetilde P_x(\cdot\mid m)
\qquad \forall m\in\cM_k,
\]
for some stochastic kernel $\widetilde P_x$ (obtained by renormalizing the residual probabilities).

\paragraph{Step 1: a contraction inequality under Doeblin.}
Let $\alpha,\beta$ be any two distributions on $\cM_k$.
Then
\[
\alpha P_x - \beta P_x
\;=\;
(1-\varepsilon_k)\big(\alpha \widetilde P_x - \beta \widetilde P_x\big),
\]
because the common term $\varepsilon_k\nu_k$ cancels.
Since multiplication by a stochastic matrix is nonexpansive in $\ell_1$,
$\|\alpha \widetilde P_x - \beta \widetilde P_x\|_1 \le \|\alpha-\beta\|_1$, hence
\begin{equation}
\label{eq:doeblin-contraction}
\|\alpha P_x - \beta P_x\|_1 \;\le\; (1-\varepsilon_k)\,\|\alpha-\beta\|_1.
\end{equation}

\paragraph{Step 2: perturbation bound for stationary distributions.}
Let $x,y\in\cX$.
Using stationarity, $\pi(x)=\pi(x)P_x$ and $\pi(y)=\pi(y)P_y$, we write
\[
\pi(x)-\pi(y)
\;=\;
\pi(x)P_x - \pi(y)P_y
\;=\;
(\pi(x)-\pi(y))P_x \;+\; \pi(y)(P_x-P_y).
\]
Taking $\ell_1$ norms and applying \eqref{eq:doeblin-contraction} gives
\[
\|\pi(x)-\pi(y)\|_1
\;\le\;
\|(\pi(x)-\pi(y))P_x\|_1 + \|\pi(y)(P_x-P_y)\|_1
\;\le\;
(1-\varepsilon_k)\|\pi(x)-\pi(y)\|_1 + \|\pi(y)(P_x-P_y)\|_1.
\]
Rearranging yields
\[
\|\pi(x)-\pi(y)\|_1 \;\le\; \frac{1}{\varepsilon_k}\,\|\pi(y)(P_x-P_y)\|_1.
\]
Finally, $\pi(y)$ is a convex combination of rows, hence
\[
\|\pi(y)(P_x-P_y)\|_1
\;=\;
\Big\|\sum_{m\in\cM_k}\pi(y)(m)\big(P_x(\cdot\mid m)-P_y(\cdot\mid m)\big)\Big\|_1
\;\le\;
\sup_{m\in\cM_k}\big\|P_x(\cdot\mid m)-P_y(\cdot\mid m)\big\|_1,
\]
which proves the first inequality in Lemma~\ref{lem:pi-lip}.

\paragraph{Step 3: an explicit Lipschitz constant in $\|x-y\|_\infty$.}
Fix $m\in\cM_k$.
Using the definition \eqref{eq:memory-kernel} and the fact that the update map $U_k$ is deterministic, the distribution $P_x(\cdot\mid m)$ is a pushforward of the logit choice distribution on the available set.
Therefore, for each surfaced route $q$,
\[
\big\|P_x(\cdot\mid m,q)-P_y(\cdot\mid m,q)\big\|_1
\;\le\;
\big\|\sigma(\cdot\mid m,q,x)-\sigma(\cdot\mid m,q,y)\big\|_1,
\]
and after averaging over $q\sim\rho_k$ we obtain
\[
\big\|P_x(\cdot\mid m)-P_y(\cdot\mid m)\big\|_1
\;\le\;
\sup_{q\in\cP_k}\big\|\sigma(\cdot\mid m,q,x)-\sigma(\cdot\mid m,q,y)\big\|_1.
\]
By Lemma~\ref{lem:logit-lip}, the logit map is $\beta$-Lipschitz from $\|\cdot\|_\infty$ costs to $\|\cdot\|_1$ probabilities.
For any route $p$, the path cost is $L_p(x)=\sum_{e\in p}\ell_e(x_e)$.
Under Assumption~\ref{ass:lipschitz},
\[
|L_p(x)-L_p(y)|
\;\le\;
\sum_{e\in p} |\ell_e(x_e)-\ell_e(y_e)|
\;\le\;
\sum_{e\in p} L\,|x_e-y_e|
\;\le\;
|p|\,L\,\|x-y\|_\infty
\;\le\;
H L\,\|x-y\|_\infty.
\]
Combining these bounds yields
\[
\sup_{m\in\cM_k}\big\|P_x(\cdot\mid m)-P_y(\cdot\mid m)\big\|_1
\;\le\;
\beta\,H L\,\|x-y\|_\infty.
\]
Substituting into the first inequality proves the second inequality with $C_k=\beta H L/\varepsilon_k$.
\end{proof}

\subsection{Proof of Proposition~\ref{prop:general-pigou}}
\label{app:proof-prop-general-pigou}
\begin{proof}
Let $y\in[0,1]$ denote the flow routed on edge $b$ (so $1-y$ is routed on edge $a$).
The total travel time (social cost) is
\[
\mathrm{SC}(y) \;=\; c(1-y) + y\,g(y).
\]
Since $g$ is continuously differentiable and strictly increasing, any interior minimizer $y^\mathrm{opt}\in(0,1)$ satisfies the first-order condition
\begin{equation}
\label{eq:pigou-opt-kkt}
\frac{d}{dy}\mathrm{SC}(y) \;=\; -c + g(y) + y g'(y) \;=\; 0.
\end{equation}
Because $g'(y^\mathrm{opt})>0$ and $y^\mathrm{opt}>0$, \eqref{eq:pigou-opt-kkt} implies
\[
g(y^\mathrm{opt}) \;=\; c - y^\mathrm{opt} g'(y^\mathrm{opt}) \;<\; c.
\]

Now consider the recall-suppression policy with $\alpha^\star=1-y^\mathrm{opt}$.
By definition, an $\alpha^\star$ fraction of agents cannot access edge $b$ and must choose $a$, so $x_a=\alpha^\star$.
The remaining mass $1-\alpha^\star=y^\mathrm{opt}$ has access to both edges.
At the flow profile $(x_a,x_b)=(\alpha^\star,y^\mathrm{opt})$, the latency on $b$ equals $g(y^\mathrm{opt})<c=\ell_a$, so every agent with access to both edges strictly prefers $b$.
Hence the unique equilibrium under this policy routes exactly $y^\mathrm{opt}$ flow on $b$, matching the system optimum.

Finally, if $y^\mathrm{opt}<1$, then $\alpha^\star>0$ and the equilibrium under full recall ($\alpha=0$) routes all flow on $b$ by assumption $g(1)\le c$.
Since $y^\mathrm{opt}$ minimizes $\mathrm{SC}(y)$, it follows that $\mathrm{SC}(y^\mathrm{opt})<\mathrm{SC}(1)$ whenever $y^\mathrm{opt}\neq 1$, proving strict improvement over full recall.
\end{proof}

\subsection{Proof of Theorem~\ref{thm:rbp-ubiquity}}
\label{app:proof-thm-rbp-ubiquity}
\begin{proof}
We prove the two claims separately.

\paragraph{Part I: choiceful networks admit social-cost RBP.}
Assume $(G,s,t)$ is choiceful, so there exist at least two distinct simple $s$--$t$ paths.
Let $P$ and $Q$ be two such paths.
Let $u$ be the last common vertex on the shared prefix of $P$ and $Q$ (starting from $s$), and let $v$ be the first vertex after $u$ at which the paths meet again (possibly $v=t$).
By construction, the subpaths $P[u\to v]$ and $Q[u\to v]$ are internally vertex-disjoint directed $u$--$v$ paths.
Moreover, $P$ and $Q$ share the same prefix from $s$ to $u$ and the same suffix from $v$ to $t$.

We now assign latencies so that the effective routing problem reduces to the Pigou network on the two $u$--$v$ subpaths.
Let $M>2$ be a large constant.
Set the latency of every edge not belonging to $P\cup Q$ equal to the constant $M$ (independent of flow).
Set the latency of every edge on the common prefix $s\to u$ and common suffix $v\to t$ equal to $0$.
On the subpath $P[u\to v]$, set all edges to have zero latency except for one designated edge $e_a$ on that subpath, for which we set a constant latency $\ell_{e_a}(x)=1$.
On the subpath $Q[u\to v]$, set all edges to have zero latency except for one designated edge $e_b$ on that subpath, for which we set $\ell_{e_b}(x)=x$.

Because $P[u\to v]$ and $Q[u\to v]$ are internally vertex-disjoint, every unit of flow that chooses the $Q$ route must traverse $e_b$, so the flow on $e_b$ equals the total flow routed along $Q$.
Similarly, the flow on $e_a$ equals the total flow routed along $P$.
Therefore, within the subgraph induced by $P\cup Q$, the two $s$--$t$ path costs equal
\[
L_P(x) = 1,\qquad L_Q(x)=x_Q,
\]
where $x_Q$ is the flow on route $Q$.
Any alternative $s$--$t$ path that uses an edge outside $P\cup Q$ incurs cost at least $M>2$, which is strictly dominated by both $P$ and $Q$ for all feasible flows ($L_P\le 1$ and $L_Q\le 1$).
Thus in any Wardrop equilibrium (with or without recall suppression as defined below), no flow uses edges outside $P\cup Q$, and the effective game reduces to the Pigou network with latencies $\ell_a(x)=1$ and $\ell_b(x)=x$.

Finally, define a nested recall family indexed by $\alpha\in[0,1]$ as follows:
an $\alpha$ fraction of agents can access only route $P$ (equivalently, their feasible set is $\{P\}$), while the remaining $1-\alpha$ fraction can access both routes $\{P,Q\}$.
This is a special case of the information/recall constrained model used in Theorem~\ref{thm:pigou}.
By the Pigou calculation (Theorem~\ref{thm:pigou}), the induced equilibrium routes $x_P=\alpha$ and $x_Q=1-\alpha$ and the social cost is $\mathrm{SC}(\alpha)=\alpha+(1-\alpha)^2$, minimized at $\alpha=1/2$.
In particular, $\mathrm{SC}(0)=1>\mathrm{SC}(1/2)=3/4$, so social-cost RBP occurs on $(G,s,t)$.

\paragraph{Part II: series networks are immune.}
If $(G,s,t)$ is series, there is a unique simple $s$--$t$ path.
Hence every feasible flow (and every equilibrium under any recall policy) routes all demand along that path, yielding a unique and policy-invariant congestion vector and social cost.
Therefore recall policies cannot change the equilibrium outcome, and social-cost RBP is impossible.
\end{proof}

\subsection{Proof of Theorem~\ref{thm:sp-partition}}
\label{app:proof-thm-sp-partition}
\begin{proof}
Let $(G,s,t)$ be a two-terminal series-parallel (SP) network and fix generalized edge costs $\{c_e\}$.
For any two-terminal subnetwork $H$ with terminals $(s_H,t_H)$, define its partition function
\[
Z_H \;\triangleq\; \sum_{p\in\cP(H)} \exp\Big(-\beta\sum_{e\in p} c_e\Big),
\]
where $\cP(H)$ is the set of all $s_H$--$t_H$ paths in $H$.
At the root $H=G$, this coincides with \eqref{eq:partition}.

\paragraph{Bottom-up recursion for $Z_H$.}
We proceed by structural induction along the SP decomposition tree.

\emph{Leaf edge.}
If $H$ is a single edge $e$, then $\cP(H)=\{e\}$ and $Z_H=\exp(-\beta c_e)$.

\emph{Series composition.}
Suppose $H=H_1\otimes H_2$, meaning that $t_{H_1}$ is identified with $s_{H_2}$ and every $s_H$--$t_H$ path is the concatenation of a path in $H_1$ and a path in $H_2$.
Then
\[
Z_H
=
\sum_{p_1\in\cP(H_1)}\sum_{p_2\in\cP(H_2)}
\exp\Big(-\beta\sum_{e\in p_1}c_e\Big)\exp\Big(-\beta\sum_{e\in p_2}c_e\Big)
=
Z_{H_1}Z_{H_2}.
\]

\emph{Parallel composition.}
Suppose $H=H_1\oplus H_2$, meaning that $s_{H_1}=s_{H_2}=s_H$ and $t_{H_1}=t_{H_2}=t_H$ and $\cP(H)=\cP(H_1)\cup\cP(H_2)$ (disjoint union).
Then
\[
Z_H
=
\sum_{p\in\cP(H_1)}\exp\Big(-\beta\sum_{e\in p}c_e\Big)
+
\sum_{p\in\cP(H_2)}\exp\Big(-\beta\sum_{e\in p}c_e\Big)
=
Z_{H_1}+Z_{H_2}.
\]

Thus, one can compute $Z_H$ for every node $H$ in one bottom-up traversal, in total time linear in the number of nodes, i.e., $O(|E|)$.

\paragraph{Edge marginals via a top-down probability pass.}
Let $P_G$ denote the Gibbs distribution on $s$--$t$ paths in $G$ with weight proportional to $\exp(-\beta\sum_{e\in p}c_e)$.
For each node $H$ in the SP tree, define $w_H$ as the probability that a random path $p\sim P_G$ traverses the subnetwork $H$ (equivalently, that $p$ lies in the set of $s_H$--$t_H$ paths of $H$ once restricted to $H$).
At the root, $w_G=1$.

We propagate $w_H$ down the tree as follows.

\emph{Series node $H=H_1\otimes H_2$.}
Every $s_H$--$t_H$ path is a concatenation of an $H_1$ path and an $H_2$ path.
Therefore, conditional on traversing $H$, the random path necessarily traverses both children.
Hence $w_{H_1}=w_{H_2}=w_H$.

\emph{Parallel node $H=H_1\oplus H_2$.}
Conditional on traversing $H$, the random path chooses either $H_1$ or $H_2$.
The probability of choosing $H_1$ equals the total Gibbs weight of paths in $H_1$ divided by that in $H$, i.e., $Z_{H_1}/Z_H$; similarly for $H_2$.
Hence
\[
w_{H_1}=w_H\frac{Z_{H_1}}{Z_H},\qquad
w_{H_2}=w_H\frac{Z_{H_2}}{Z_H}.
\]

By induction on the tree, these recursions compute $w_H$ for every node in time $O(|E|)$.

Finally, for a leaf edge $e$ (viewed as a leaf subnetwork), the event that the random path uses edge $e$ is exactly the event that it traverses the leaf node corresponding to $e$.
Therefore the edge marginal in \eqref{eq:edge-marginal} satisfies $\pi_e=w_e$.
This yields all edge marginals in linear time.
\end{proof}

\subsection{Proof of Proposition~\ref{prop:icwe-existence}}
\label{app:proof-prop-icwe-existence}
\begin{proof}
Fix recall sets $\{\cP_i\}_{i=1}^I$ and demands $\{d_i\}_{i=1}^I$.
Let $\cF_R$ denote the feasible set of (type-indexed) path flows
\[
\cF_R
\;\triangleq\;
\Big\{(f^{(i)}_p)_{i,p}:\ f^{(i)}_p\ge 0,\ \sum_{p\in \cP_i} f^{(i)}_p=d_i\ \text{for each }i\Big\}.
\]
This set is nonempty, compact, and convex.

For $f\in\cF_R$, define edge loads $x(f)$ by \eqref{eq:edge-loads} and the Beckmann potential
\[
\Psi(f)\;\triangleq\;\sum_{e\in E}\int_{0}^{x_e(f)} \ell_e(u)\,du.
\]
Because each $\ell_e$ is continuous, $\Psi$ is continuous.
Because each $\ell_e$ is nondecreasing, $x_e\mapsto \int_0^{x_e}\ell_e(u)\,du$ is convex, and since $f\mapsto x(f)$ is linear, $\Psi$ is convex on $\cF_R$.
Hence $\Psi$ attains its minimum over $\cF_R$ at some $f^\star$.

We now show that $f^\star$ is an ICWE/RCWE in the sense of Definition~1.
The partial derivative of $\Psi$ with respect to a path flow coordinate $f^{(i)}_p$ is
\[
\frac{\partial \Psi}{\partial f^{(i)}_p}(f)
\;=\;
\sum_{e\in p}\ell_e\!\big(x_e(f)\big)
\;=\;
L_p\!\big(x(f)\big),
\]
because increasing $f^{(i)}_p$ by an infinitesimal amount increases $x_e$ by the same amount on every edge $e\in p$.
Applying the KKT conditions to the convex program $\min_{f\in \cF_R}\Psi(f)$ yields multipliers $\{\lambda_i\}_{i=1}^I$ such that for each type $i$ and each $p\in \cP_i$,
\[
L_p\!\big(x(f^\star)\big)\ \ge\ \lambda_i,
\qquad\text{with equality whenever }f^{\star(i)}_p>0.
\]
Equivalently, every used path in $\cP_i$ has minimum cost within $\cP_i$, which is exactly the ICWE/RCWE condition.

Finally, assume there is a single origin--destination pair and each $\ell_e$ is strictly increasing on $[0,d]$.
Then each map $x_e\mapsto \int_0^{x_e}\ell_e(u)\,du$ is strictly convex, so the aggregate potential is strictly convex in the edge-load vector $x$.
If $f^\star$ and $\tilde f^\star$ are two minimizers, then their edge loads must coincide: otherwise strict convexity would imply
\[
\Psi\!\big(\tfrac12 f^\star+\tfrac12 \tilde f^\star\big)\ <\ \tfrac12\Psi(f^\star)+\tfrac12\Psi(\tilde f^\star),
\]
contradicting optimality.
Thus equilibrium edge loads are unique (though path flows may not be).
\end{proof}

\subsection{Proof of Corollary~\ref{cor:micro2salience-equilibrium}}
\label{app:proof-cor-micro2salience-equilibrium}
\begin{proof}
Fix a congestion vector $x$.
For $B_k=1$, Theorem~\ref{thm:micro2salience} shows that the stationary per-period route-choice probability for commodity $k$ is
\[
\pi_{k,x}(p)
\;=\;
\frac{\rho_k(p)\exp(-\beta L_p(x))}
{\sum_{r\in\cP_k}\rho_k(r)\exp(-\beta L_r(x))}.
\]
In the coupled routing model, the induced stationary flow therefore satisfies
\[
f_{k,p}
\;=\;
d_k\,\pi_{k,x}(p)
\;=\;
d_k\,\frac{\rho_k(p)\exp(-\beta L_p(x))}
{\sum_{r\in\cP_k}\rho_k(r)\exp(-\beta L_r(x))}.
\]
Since the scaling of salience weights cancels in the logit rule, taking $s_{k,p}\propto \rho_k(p)$ yields exactly the SW-SUE fixed-point condition \eqref{eq:swsue}.
Thus the flow component of any coupled equilibrium is an SW-SUE with salience weights $s_{k,p}\propto \rho_k(p)$.
Uniqueness (and algorithmic stability) follows from Proposition~\ref{prop:swsue-potential}, which shows the SW-SUE is the unique minimizer of a strictly convex potential.
\end{proof}

\subsection{Proof of Corollary~\ref{cor:micro-realizability}}
\label{app:proof-cor-micro-realizability}
\begin{proof}
Fix strictly positive salience weights $\{s_{k,p}\}$ and define $\rho_k(p)=s_{k,p}/\sum_{r\in\cP_k}s_{k,r}$.
For each commodity $k$ with $B_k=1$, Theorem~\ref{thm:micro2salience} shows that the stationary within-period choice law induced by the micro model is a salience-weighted logit model with salience proportional to $\rho_k$.
Because salience is defined only up to a per-commodity multiplicative constant, the choice law coincides exactly with the salience model with weights $s_k$.
Applying Corollary~\ref{cor:micro2salience-equilibrium} then yields that the induced network equilibrium is the unique SW-SUE for salience $s$.

For the final statement, let $\bar x$ be any interior flow that is implementable by stationary salience in the sense of Theorem~\ref{thm:implementability-salience}, and let $s$ be a salience vector that implements $\bar x$.
Choosing $\rho$ proportional to $s$ as above makes the $B_k=1$ micro model induce the same SW-SUE, hence implement $\bar x$ without monetary tolls.
\end{proof}

\subsection{Proof of Corollary~\ref{cor:diffuse-poisson}}
\label{app:proof-cor-diffuse-poisson}
\begin{proof}
For each route $p$, Lemma~\ref{lem:poissonization-accuracy} gives
\[
0\le H^{\mathrm{WS}}_p(W)-H^{\mathrm{TTL}}_p(W)
\le \frac{\pi_p^2 W}{1-\pi_p}.
\]
Therefore,
\[
\sup_p \big|H^{\mathrm{WS}}_p(W)-H^{\mathrm{TTL}}_p(W)\big|
\le
\frac{1}{1-\pi_{\max}}\sup_p \pi_p^2 W.
\]
Using $\pi_p^2W = (\pi_p W)\pi_p\le \big(\sup_r \pi_r W\big)\pi_{\max}$ and the assumption $\sup_p \pi_p W=O(1)$ yields
\[
\sup_p \big|H^{\mathrm{WS}}_p(W)-H^{\mathrm{TTL}}_p(W)\big|
\le
O(1)\cdot \frac{\pi_{\max}}{1-\pi_{\max}}
=
O(\pi_{\max}),
\]
as claimed.
\end{proof}

\subsection{Proof of Theorem~\ref{thm:lru-ttl}}
\label{app:proof-thm-lru-ttl}
\begin{proof}
The statement is a direct specialization of the asymptotic LRU$\,\rightarrow\,$TTL equivalence results in~\citet{jiang2018ttl}.
In their notation, objects are indexed by $i\in\{1,\dots,n\}$ with request intensities $\lambda_i$, cache size $C_n$, and characteristic time $T_n$ defined as the unique solution of the occupancy equation
$C_n=\sum_{i=1}^n H^{\mathrm{TTL}}_i(T_n)$.
Identifying objects with routes $p\in\cP_k$, setting $C_n=B_k$, and matching intensities $\lambda_i=\lambda_{k,p}$ yields exactly \eqref{eq:lru-characteristic-time}.

Under the regularity hypotheses of~\citet[Prop.~4.4]{jiang2018ttl}, they show that the LRU hit probability of each object converges uniformly to the TTL in-cache probability at the characteristic time, i.e.,
$\max_i |H^{\mathrm{LRU}}_i-H^{\mathrm{TTL}}_i(T_n)|\to 0$ as $n\to\infty$.
This gives \eqref{eq:lru-ttl-conv}.

Under Poisson requests,~\citet[Prop.~5.2 and Ex.~5.3]{jiang2018ttl} provide the explicit rate
$\max_i |H^{\mathrm{LRU}}_i-H^{\mathrm{TTL}}_i(T_n)| = O(\sqrt{\log C_n/C_n})$.
Substituting $C_n=B_k$ yields \eqref{eq:lru-ttl-rate}.
\end{proof}

\subsection{Proof of Proposition~\ref{prop:awl-approx}}
\label{app:proof-prop-awl-approx}
\begin{proof}
Fix $k$ and $x$, and write $w_p=\exp(-\beta L_p(x))$.
Let $I_p\triangleq \mathbf{1}\{p\in A\}$, let $Z\triangleq \sum_{r\in A} w_r$ denote the random denominator in \eqref{eq:random-attention-logit}, and let $\mu\triangleq \E[Z]=\sum_r \eta_{k,r}(x)w_r$.
By Assumption~\textbf{A4} (nonempty availability), $Z>0$ almost surely.

For a fixed route $p\in\cP_k$,
\[
\pi_{k,x}(p)-\frac{\eta_{k,p}(x)w_p}{\mu}
=
\E\!\left[I_p w_p\left(\frac{1}{Z}-\frac{1}{\mu}\right)\right]
=
\E\!\left[I_p w_p\frac{\mu-Z}{\mu Z}\right].
\]
Let $\mathcal{E}\triangleq \{Z\ge \mu/2\}$.

\paragraph{On $\mathcal{E}$.}
Since $Z\ge \mu/2$, we have $1/Z\le 2/\mu$ and therefore
\[
\left|I_p w_p\frac{\mu-Z}{\mu Z}\right|
\le
\frac{2w_p}{\mu^2}|Z-\mu|.
\]
Taking expectations and using $\E|Z-\mu|\le \sqrt{\mathrm{Var}(Z)}$ yields
\[
\E\!\left[\left|I_p w_p\frac{\mu-Z}{\mu Z}\right|\mathbf{1}_{\mathcal{E}}\right]
\le
\frac{2w_p}{\mu}\,\mathrm{cv}(Z)
\le
2\,\frac{w_{\max}}{\mu}\,\mathrm{cv}(Z),
\]
where $w_{\max}\triangleq \max_{r\in\cP_k} w_r$.

\paragraph{On $\mathcal{E}^c$.}
On $\{Z<\mu/2\}$ we use the simple bound
\[
\left|I_p w_p\frac{\mu-Z}{\mu Z}\right|
\le
I_p w_p\frac{\mu}{\mu Z}
=
\frac{I_p w_p}{Z}
\le 1,
\]
because whenever $I_p=1$ the denominator satisfies $Z\ge w_p$.
Hence
\[
\E\!\left[\left|I_p w_p\frac{\mu-Z}{\mu Z}\right|\mathbf{1}_{\mathcal{E}^c}\right]
\le
\Pr(Z<\mu/2)
\le
\Pr(|Z-\mu|\ge \mu/2)
\le
4\,\mathrm{cv}(Z)^2,
\]
by Chebyshev's inequality.

Combining the two bounds yields
\begin{equation}
\label{eq:awl-proof-intermediate}
\left|\pi_{k,x}(p)-\frac{\eta_{k,p}(x)w_p}{\mu}\right|
\le
2\,\frac{w_{\max}}{\mu}\,\mathrm{cv}(Z)
+
4\,\mathrm{cv}(Z)^2.
\end{equation}

Finally, if $w_{\max}/\mu\ge 1$, then the claimed error bound
$O(\mathrm{cv}(Z)^2+w_{\max}/\mu)$ holds trivially because the left-hand side is at most $1$.
If instead $w_{\max}/\mu\le 1$, then
$2(w_{\max}/\mu)\mathrm{cv}(Z)\le (w_{\max}/\mu)^2+\mathrm{cv}(Z)^2 \le (w_{\max}/\mu)+\mathrm{cv}(Z)^2$,
so \eqref{eq:awl-proof-intermediate} implies
\[
\pi_{k,x}(p)
=
\frac{\eta_{k,p}(x)w_p}{\mu}
+
O\!\left(\mathrm{cv}(Z)^2+\frac{w_{\max}}{\mu}\right).
\]
Substituting $\mu=\sum_r \eta_{k,r}(x)w_r$ yields \eqref{eq:awl}.
\end{proof}

\subsection{Proof of Corollary~\ref{cor:awl-rate}}

\label{app:proof-cor-awl-rate}
\begin{proof}
Bounded costs imply $w_p\in[w_{\min},w_{\max}]$ for all $p$.
If $|A|\ge B$ almost surely, then $Z=\sum_{r\in A}w_r \ge B w_{\min}$ almost surely, and hence also $\mu=\E[Z]\ge B w_{\min}$.
Therefore
\[
\max_p\frac{w_p}{\mu}
\le
\frac{w_{\max}}{B w_{\min}}
=
\frac{e^{\beta(L_{\max}-L_{\min})}}{B}.
\]
Under Assumption~\ref{assump:independent-menu}, Lemma~\ref{lem:Z-concentration-bern} gives $\mathrm{cv}(Z)^2\le w_{\max}/\mu$, hence $\mathrm{cv}(Z)^2=O(e^{\beta(L_{\max}-L_{\min})}/B)$ as well.
Substituting these bounds into Proposition~\ref{prop:awl-approx} yields an approximation error of order $e^{\beta(L_{\max}-L_{\min})}/B$.
\end{proof}

\subsection{Proof of Corollary~\ref{cor:lru-combined-rate}}
\label{app:proof-cor-lru-combined-rate}
\begin{proof}
Let $\pi^{\mathrm{micro}}_{k,x}$ denote the stationary micro choice probabilities under LRU recall of size $B_k$, and let $\pi^{\mathrm{TTL\text{-}salience}}_{k,x}$ denote the surrogate obtained by (i) replacing LRU by the TTL characteristic-time approximation and (ii) replacing the random-attention logit by the availability-weighted logit map \eqref{eq:awl}.
Insert an intermediate surrogate $\pi^{\mathrm{TTL\text{-}menu}}_{k,x}$ that uses the TTL approximation for availability but still uses the exact random-attention logit \eqref{eq:random-attention-logit}.
Then, by the triangle inequality,
\[
\big\|\pi^{\mathrm{micro}}_{k,x}-\pi^{\mathrm{TTL\text{-}salience}}_{k,x}\big\|_1
\le
\big\|\pi^{\mathrm{micro}}_{k,x}-\pi^{\mathrm{TTL\text{-}menu}}_{k,x}\big\|_1
+
\big\|\pi^{\mathrm{TTL\text{-}menu}}_{k,x}-\pi^{\mathrm{TTL\text{-}salience}}_{k,x}\big\|_1.
\]
The first term is precisely the cache approximation error, which is $O(\sqrt{\log B_k/B_k})$ by Theorem~\ref{thm:lru-ttl}.
The second term is the random-menu (random-denominator) approximation error, which is $O(e^{\beta(L_{\max}-L_{\min})}/B_k)$ under the bounded-cost and large-menu conditions of Corollary~\ref{cor:awl-rate}.
Combining yields \eqref{eq:lru-total-rate}.
\end{proof}

\subsection{Proof of Theorem~\ref{thm:swsue-unique-stable}}
\label{app:proof-thm-swsue-unique-stable}
\begin{proof}
Proposition~\ref{prop:swsue-potential} shows that an SW-SUE is exactly a minimizer of the strictly convex potential $\Phi_s$ over the compact convex feasible set $\cF$.
Since $\Phi_s$ is continuous, a minimizer exists; since $\Phi_s$ is strictly convex, the minimizer is unique.
Therefore the SW-SUE exists and is unique.

For the algorithmic statement, let $\{f^{(t)}\}$ be the iterates of any method that globally minimizes $\Phi_s$ over $\cF$ (in the sense that $\Phi_s(f^{(t)})\downarrow \min_{f\in\cF}\Phi_s(f)$ and every limit point is a minimizer).
Because the minimizer is unique, every limit point equals the unique minimizer $f^\star$, and hence $f^{(t)}\to f^\star$.
\end{proof}

\subsection{Proof of Lemma~\ref{lem:logit-lip}}
\label{app:proof-lem-logit-lip}
\begin{proof}
Let $\cA$ be a finite action set and define $\sigma(\cdot\mid c)\in\Delta(\cA)$ by
$\sigma(a\mid c)=\exp(-\beta c_a)/\sum_{b\in\cA}\exp(-\beta c_b)$.
Fix $c,c'\in\R^{|\cA|}$ and set $h=c'-c$.
By the mean value theorem,
\[
\sigma(\cdot\mid c')-\sigma(\cdot\mid c)
=
\int_0^1 J(c+t h)\,h\,dt,
\]
where $J(\cdot)$ is the Jacobian of $\sigma$.
A direct calculation gives
\[
J_{ab}(c)
=
\frac{\partial \sigma(a\mid c)}{\partial c_b}
=
-\beta\,\sigma(a\mid c)\big(\mathbf{1}\{a=b\}-\sigma(b\mid c)\big).
\]
Fix any vector $v$ with $\|v\|_\infty\le 1$ and let $m\triangleq \sum_{b}\sigma(b\mid c)\,v_b$.
Then
\[
(J(c)v)_a
=
-\beta\,\sigma(a\mid c)\big(v_a-m\big),
\]
so
\[
\|J(c)v\|_1
=
\beta\sum_a \sigma(a\mid c)\,|v_a-m|
=
\beta\,\E\big[|V-m|\big],
\]
where $V$ is a random variable supported on $\{v_a\}$ with law $\sigma(\cdot\mid c)$.
Since $V\in[-1,1]$ and $m=\E[V]$, Cauchy--Schwarz yields
$\E[|V-m|]\le \sqrt{\E[(V-m)^2]}=\sqrt{\mathrm{Var}(V)}\le 1$.
Therefore $\|J(c)v\|_1\le \beta$ for all $\|v\|_\infty\le 1$, i.e., the operator norm $\|J(c)\|_{\infty\to 1}\le \beta$ uniformly in $c$.

Finally,
\[
\|\sigma(\cdot\mid c')-\sigma(\cdot\mid c)\|_1
\le
\int_0^1 \|J(c+t h)\|_{\infty\to 1}\,dt\ \|h\|_\infty
\le
\beta\,\|c'-c\|_\infty,
\]
as claimed.
\end{proof}

\subsection{Proof of Proposition~\ref{prop:T-lip}}
\label{app:proof-prop-T-lip}
\begin{proof}
Fix $x,y\in\cX$.
Recall $T(x)=x(f(\pi(x),x))$, where edge loads are linear in path flows \eqref{eq:edge-loads}.
Let $f_x\triangleq f(\pi(x),x)$ and $f_y\triangleq f(\pi(y),y)$.
Then for each edge $e$,
\[
|T_e(x)-T_e(y)|
=
|x_e(f_x)-x_e(f_y)|
\le
\sum_{k}\sum_{p\in\cP_k} |(f_x)_{k,p}-(f_y)_{k,p}|
=
\|f_x-f_y\|_1,
\]
so $\|T(x)-T(y)\|_\infty\le \|f_x-f_y\|_1$.

We decompose
\[
\|f_x-f_y\|_1
\le
\|f(\pi(x),x)-f(\pi(x),y)\|_1
+
\|f(\pi(x),y)-f(\pi(y),y)\|_1.
\]

\paragraph{Step 1: sensitivity to congestion for fixed memory.}
Fix $\mu$ and consider the map $x\mapsto f(\mu,x)$.
For each commodity $k$, each memory state $m$, and each surfaced candidate $q$, the logit choice probabilities $\sigma(\cdot\mid m,q,x)$ depend on $x$ only through the menu path costs.
Under Assumption~\ref{ass:lipschitz}, each edge latency is $L$-Lipschitz on $\cX$ and each path contains at most $H$ edges, hence
\[
\max_{p\in\cP_k}|L_p(x)-L_p(y)|\le H L\,\|x-y\|_\infty.
\]
Applying Lemma~\ref{lem:logit-lip} on the menu action set yields that, for each $(m,q)$,
\[
\big\|\sigma(\cdot\mid m,q,x)-\sigma(\cdot\mid m,q,y)\big\|_1
\le
\beta\,H\,L\,\|x-y\|_\infty.
\]
Averaging over $m\sim \mu_k$ and $q\sim \rho_k$ and multiplying by demand $d_k$ in \eqref{eq:path-flow-induced} gives
\[
\|f_k(\mu,x)-f_k(\mu,y)\|_1
\le
d_k\,\beta\,H\,L\,\|x-y\|_\infty.
\]
Summing over $k$ yields
\begin{equation}
\label{eq:f-cong-lip}
\|f(\mu,x)-f(\mu,y)\|_1
\le
D\,\beta\,H\,L\,\|x-y\|_\infty.
\end{equation}

\paragraph{Step 2: sensitivity to memory for fixed congestion.}
Fix $y$ and two memory profiles $\mu,\mu'$.
For each commodity $k$, the induced flow $f_k(\mu,y)$ is a convex combination of probability vectors indexed by memory states $m\in\cM_k$.
Therefore
\[
\|f_k(\mu,y)-f_k(\mu',y)\|_1
\le
d_k\,\|\mu_k-\mu_k'\|_1.
\]
Taking $\mu=\pi(x)$ and $\mu'=\pi(y)$ and using Lemma~\ref{lem:pi-lip} yields
\begin{equation}
\label{eq:f-mem-lip}
\|f(\pi(x),y)-f(\pi(y),y)\|_1
\le
\sum_k d_k\,\|\pi_k(x)-\pi_k(y)\|_1
\le
D\Big(\max_k C_k\Big)\|x-y\|_\infty.
\end{equation}

Combining \eqref{eq:f-cong-lip} and \eqref{eq:f-mem-lip} yields
\[
\|T(x)-T(y)\|_\infty
\le
D\Big(\beta\,H\,L+\max_k C_k\Big)\|x-y\|_\infty,
\]
which proves the claim with $\kappa$ as in \eqref{eq:kappa}.
\end{proof}

\subsection{Proof of Theorem~\ref{thm:unique}}
\label{app:proof-thm-unique}
\begin{proof}
By Proposition~\ref{prop:T-lip}, $T$ is $\kappa$-Lipschitz on $\cX$ with $\kappa<1$.
Because $\cX$ is compact (hence complete) under $\|\cdot\|_\infty$, $T$ is a contraction mapping on a complete metric space.
Banach's fixed point theorem therefore implies that $T$ has a unique fixed point $x^\star\in\cX$.

Moreover, for any initialization $x^{(0)}\in\cX$, the iterates $x^{(t+1)}=T(x^{(t)})$ converge to $x^\star$ at a geometric rate:
$\|x^{(t)}-x^\star\|_\infty\le \kappa^t\|x^{(0)}-x^\star\|_\infty$.
Setting $\mu_k^\star=\pi_k(x^\star)$ then yields a unique FWE $(x^\star,\mu^\star)$.

Finally, the flow iteration induced by $x^{(t)}$ converges as well because $f(\pi(x),x)$ is continuous in $x$ under Assumptions~\ref{ass:regularity}, \ref{ass:lipschitz}, and Lemma~\ref{lem:pi-continuous}.
\end{proof}

\subsection{Proof of Theorem~\ref{thm:implicit}}
\label{app:proof-thm-implicit}
\begin{proof}
Define $F:\cX\times \Theta\to\R^{|E|}$ by
\[
F(x,\theta)\triangleq T_\theta(x)-x.
\]
By assumption, $F$ is continuously differentiable.
Moreover,
\[
\nabla_x F(x,\theta)=\nabla_x T_\theta(x)-I.
\]
At $(x^\star,\theta)$, the nondegeneracy condition \eqref{eq:regular-fixed-point} is exactly $\det(\nabla_x F(x^\star,\theta))\neq 0$, so $\nabla_x F(x^\star,\theta)$ is invertible.
Therefore, by the implicit function theorem, there exists a neighborhood $U$ of $\theta$ and a unique continuously differentiable map $\theta'\mapsto x^\star(\theta')$ on $U$ such that $F(x^\star(\theta'),\theta')=0$ for all $\theta'\in U$ and $x^\star(\theta)=x^\star$.
Equivalently, $x^\star(\theta')$ is the unique fixed point of $T_{\theta'}$ in a neighborhood of $x^\star$.

Differentiating the identity $F(x^\star(\theta),\theta)=0$ with respect to $\theta$ and rearranging yields
\[
\nabla_x F(x^\star,\theta)\,\frac{dx^\star}{d\theta}
+
\nabla_\theta F(x^\star,\theta)
=
0,
\]
so
\[
\frac{dx^\star}{d\theta}
=
-\big(\nabla_x F(x^\star,\theta)\big)^{-1}\nabla_\theta F(x^\star,\theta)
=
\big(I-\nabla_x T_\theta(x^\star)\big)^{-1}\nabla_\theta T_\theta(x^\star),
\]
which is \eqref{eq:implicit-derivative}.
\end{proof}

\subsection{Proof of Corollary~\ref{cor:group-blind}}
\label{app:proof-cor-group-blind}
\begin{proof}
At an SW-SUE with salience utilities $u$, the route-share vector for group $k$ is
\[
\frac{f_{k,p}}{d_k}
=
\frac{\exp(u_{k,p})\exp(-\beta L_p(x))}
{\sum_{r\in \cP}\exp(u_{k,r})\exp(-\beta L_r(x))}.
\]
If salience is group-blind, then $u_{k,p}=u_{k',p}$ for all $k,k'$ and all $p$, and the right-hand side is independent of $k$.
Therefore $f_{k,p}/d_k=f_{k',p}/d_{k'}$ for all groups and routes, proving the first claim.
The non-implementability of targets with heterogeneous route shares across groups follows immediately.
\end{proof}

\subsection{Proof of Corollary~\ref{cor:sp-local-budgets}}
\label{app:proof-cor-sp-local-budgets}
\begin{proof}
Theorem~\ref{thm:sp-implementability} characterizes implementability on an SP network via the node-wise inverse-bias recursion \eqref{eq:sp-inverse-delta}.
At any parallel node $H$, the required bias is
\[
\delta_H
=
\log\frac{\bar d_{H_L}}{\bar d_{H_R}}
-\big(V_{H_L}(\bar x,\delta)-V_{H_R}(\bar x,\delta)\big).
\]
Imposing the per-node ratio budget $|\delta_H|\le \log R_H$ is therefore equivalent to the local inequality in the statement.

Conversely, if the local inequalities hold for every parallel node, then the recursively defined $\delta$ is feasible under the budgets.
Applying Theorem~\ref{thm:sp-implementability} with this feasible $\delta$ yields a decomposition-tied salience vector that implements $\bar x$.
The linear-time claim follows because both the forward computation of $(V_H)$ and the backward computation of $(\delta_H)$ traverse the SP decomposition tree once, i.e., in $O(|E|)$ time.
\end{proof}

\subsection{Proof of Proposition~\ref{prop:sp-gradient}}
\label{app:proof-prop-sp-gradient}
\begin{proof}
Consider the split-variable formulation on a fixed SP decomposition tree $\mathcal{T}$ and the objective $\Psi_u$ in \eqref{eq:sp-split-objective}.
On an interior domain where every split variable satisfies $\epsilon\le y_H\le d_H-\epsilon$, the Hessian of each parallel-node entropy term is uniformly bounded above and below: the second derivative of $y\mapsto y\log y+(d_H-y)\log(d_H-y)$ is $1/y+1/(d_H-y)$, which lies in $[2/d_H,\ 2/\epsilon]$ on this domain.
Series nodes contribute smooth convex terms inherited from the Beckmann integrals; Lipschitzness of $\ell_e$ on $[0,d]$ implies that these terms have Lipschitz gradients with constants bounded by a function of the edge Lipschitz constant and the tree structure.
Summing contributions over nodes yields global constants $L,\mu>0$ such that $\Psi_u$ is $L$-smooth and $\mu$-strongly convex on the domain.

Projected gradient descent on a closed convex set with step size $1/L$ for an $L$-smooth, $\mu$-strongly convex objective satisfies the standard linear convergence bound
\[
\Psi_u(y^{(t)})-\Psi_u(y^\star)\ \le\ (1-\mu/L)^t\big(\Psi_u(y^{(0)})-\Psi_u(y^\star)\big),
\]
where $y^\star$ is the unique minimizer.

Finally, a gradient evaluation requires computing the induced edge flows from the split variables (a forward pass on $\mathcal{T}$) and then backpropagating marginal costs to obtain partial derivatives with respect to splits (a reverse/adjoint pass).
Both passes visit each node/edge a constant number of times, so the cost is $O(|E|)$.
\end{proof}

\subsection{Proof of Corollary~\ref{cor:implement-so}}
\label{app:proof-cor-implement-so}
\begin{proof}
If $f^{\mathrm{SO}}$ is interior, Theorem~\ref{thm:implementability-salience} applies directly and yields a stationary salience policy (via \eqref{eq:inverse-salience}) whose SW-SUE coincides with $f^{\mathrm{SO}}$.

If $f^{\mathrm{SO}}$ is not interior, fix any $\varepsilon>0$ and choose an interior flow $\tilde f$ with $\|\tilde f-f^{\mathrm{SO}}\|_1\le \varepsilon$ (e.g., a convex combination of $f^{\mathrm{SO}}$ with the uniform interior flow on each commodity).
Applying Theorem~\ref{thm:implementability-salience} to $\tilde f$ yields a salience policy that implements $\tilde f$ exactly.
Since latencies are continuous and social cost is continuous in the flow, $\tilde f$ is an $\varepsilon$-approximation of $f^{\mathrm{SO}}$ in induced performance, establishing $\varepsilon$-implementability.
\end{proof}

\bibliographystyle{plainnat}
\bibliography{references}

@article{acemoglu2018ibp,
  author  = {Acemoglu, Daron and Makhdoumi, Ali and Malekian, Azarakhsh and Ozdaglar, Asuman},
  title   = {Informational {Braess}' Paradox: The Effect of Information on Traffic Congestion},
  journal = {Operations Research},
  year    = {2018},
  volume  = {66},
  number  = {4},
  pages   = {893--917},
  doi     = {10.1287/opre.2017.1712}
}

@article{alqithami2025fifa,
  author  = {Alqithami, Saad},
  title   = {Forgetful but Faithful: A Cognitive Memory Architecture and Benchmark for Privacy-Aware Generative Agents},
  journal = {arXiv preprint arXiv:2512.12856},
  year    = {2025},
  url     = {https://arxiv.org/abs/2512.12856}
}

@article{alqithami2026snam,
  author  = {Alqithami, Saad},
  title   = {Dynamic Homophily with Imperfect Recall: Modeling Resilience in Adversarial Networks},
  journal = {Social Network Analysis and Mining},
  year    = {2025},
  volume  = {16},
  number  = {5},
  pages   = {1--27},
  doi     = {10.1007/s13278-025-01483-2}
}

@book{beckmann1956,
  author    = {Beckmann, Martin and McGuire, C. B. and Winsten, C. B.},
  title     = {Studies in the Economics of Transportation},
  publisher = {Yale University Press},
  year      = {1956}
}

@book{benakiva1985,
  author    = {Ben-Akiva, Moshe and Lerman, Steven R.},
  title     = {Discrete Choice Analysis: Theory and Application to Travel Demand},
  publisher = {MIT Press},
  year      = {1985}
}

@article{bergemann2019,
  author  = {Bergemann, Dirk and Morris, Stephen},
  title   = {Information Design: A Unified Perspective},
  journal = {Journal of Economic Literature},
  year    = {2019},
  volume  = {57},
  number  = {1},
  pages   = {44--95},
  doi     = {10.1257/jel.20181489}
}

@inproceedings{biega2018,
  author    = {Biega, Asia J. and Gummadi, Krishna P. and Weikum, Gerhard},
  title     = {Equity of Attention: Amortizing Individual Fairness in Rankings},
  booktitle = {Proceedings of the 41st International ACM SIGIR Conference on Research and Development in Information Retrieval (SIGIR)},
  year      = {2018},
  publisher = {ACM}
}

@techreport{bpr1964,
  author      = {{Bureau of Public Roads}},
  title       = {Traffic Assignment Manual},
  institution = {U.S. Department of Commerce, Urban Planning Division},
  year        = {1964}
}

@article{braess1968,
  author  = {Braess, Dietrich},
  title   = {{\"U}ber ein Paradoxon aus der Verkehrsplanung},
  journal = {Unternehmensforschung},
  year    = {1968},
  volume  = {12},
  pages   = {258--268},
  doi     = {10.1007/BF01918335}
}

@misc{bstablerTransportationNetworks,
  author       = {{Transportation Networks for Research Core Team}},
  title        = {TransportationNetworks: A Repository of Transportation Network Datasets},
  howpublished = {GitHub repository},
  url          = {https://github.com/bstabler/TransportationNetworks},
  note         = {Accessed: 2026-01-25}
}

@article{caplin2015,
  author  = {Caplin, Andrew and Dean, Mark},
  title   = {Revealed Preference, Rational Inattention, and Costly Information Acquisition},
  journal = {American Economic Review},
  year    = {2015},
  volume  = {105},
  number  = {7},
  pages   = {2183--2203},
  doi     = {10.1257/aer.20140117}
}

@article{cascetta1991,
  author  = {Cascetta, Ennio and Cantarella, Giuseppe E.},
  title   = {A Day-to-Day and Within-Day Dynamic Stochastic Assignment Model},
  journal = {Transportation Research Part A: General},
  year    = {1991},
  doi     = {10.1016/0191-2607(91)90144-F} 
}

@article{cattaneo2020ram,
  author  = {Cattaneo, Matias D. and Ma, Xinwei and Masatlioglu, Yusufcan and Suleymanov, Emre},
  title   = {Random Attention Models},
  journal = {Journal of Political Economy},
  year    = {2020},
  volume  = {128},
  number  = {7},
  pages   = {2796--2836}
}

@InProceedings{celis2018,
  author =	{Celis, L. Elisa and Straszak, Damian and Vishnoi, Nisheeth K.},
  title =	{{Ranking with Fairness Constraints}},
  booktitle =	{45th International Colloquium on Automata, Languages, and Programming (ICALP 2018)},
  pages =	{28:1--28:15},
  series =	{Leibniz International Proceedings in Informatics (LIPIcs)},
  ISBN =	{978-3-95977-076-7},
  ISSN =	{1868-8969},
  year =	{2018},
  volume =	{107},
  editor =	{Chatzigiannakis, Ioannis and Kaklamanis, Christos and Marx, D\'{a}niel and Sannella, Donald},
  publisher =	{Schloss Dagstuhl -- Leibniz-Zentrum f{\"u}r Informatik},
  address =	{Dagstuhl, Germany},
  URL =		{https://drops.dagstuhl.de/entities/document/10.4230/LIPIcs.ICALP.2018.28},
  URN =		{urn:nbn:de:0030-drops-90329},
  doi =		{10.4230/LIPIcs.ICALP.2018.28}
}

@article{che2002,
  author  = {Che, Hao and Tung, Ye and Wang, Zhijun},
  title   = {Hierarchical Web Caching Systems: Modeling, Design and Experimental Results},
  journal = {IEEE Journal on Selected Areas in Communications},
  year    = {2002},
  volume  = {20},
  number  = {7},
  pages   = {1305--1314},
  doi     = {10.1109/JSAC.2002.801752}
}

@inproceedings{christodoulou2005,
  author    = {Christodoulou, George and Koutsoupias, Elias},
  title     = {The Price of Anarchy of Finite Congestion Games},
  booktitle = {Proceedings of the Thirty-Seventh Annual ACM Symposium on Theory of Computing (STOC)},
  year      = {2005},
  pages     = {67--73},
  doi       = {10.1145/1060590.1060600},
  publisher = {ACM}
}

@article{correa2004,
  author  = {Correa, Jos{\'e} R. and Schulz, Andreas S. and Stier-Moses, Nicol{\'a}s E.},
  title   = {Selfish Routing in Capacitated Networks},
  journal = {Mathematics of Operations Research},
  year    = {2004},
  volume  = {29},
  number  = {4},
  pages   = {961--976},
  doi     = {10.1287/moor.1040.0098}
}

@article{correa2008,
  author  = {Correa, Jos{\'e} R. and Schulz, Andreas S. and Stier-Moses, Nicol{\'a}s E.},
  title   = {A Geometric Approach to the Price of Anarchy in Nonatomic Congestion Games},
  journal = {Games and Economic Behavior},
  year    = {2008},
  volume  = {64},
  number  = {2},
  pages   = {457--469}
}

@article{dafermos1969,
  author  = {Dafermos, Stella C. and Sparrow, Frederick T.},
  title   = {The Traffic Assignment Problem for a General Network},
  journal = {Journal of Research of the National Bureau of Standards, Section B: Mathematical Sciences},
  year    = {1969},
  volume  = {73B},
  number  = {2},
  pages   = {91--118},
  doi     = {10.6028/jres.073b.010}
}

@article{dafermos1984,
  author  = {Dafermos, Stella and Nagurney, Anna},
  title   = {On Some Traffic Equilibrium Theory Paradoxes},
  journal = {Transportation Research Part B: Methodological},
  year    = {1984},
  volume  = {18},
  number  = {2},
  pages   = {101--110},
  doi     = {10.1016/0191-2615(84)90023-7}
}

@article{dial1971,
  author  = {Dial, Robert B.},
  title   = {A Probabilistic Multipath Traffic Assignment Model Which Obviates Path Enumeration},
  journal = {Transportation Research},
  year    = {1971},
  volume  = {5},
  number  = {2},
  pages   = {83--111},
  doi     = {10.1016/0041-1647(71)90012-8}
}

@article{duffin1965,
  author  = {Duffin, R. J.},
  title   = {Topology of Series-Parallel Networks},
  journal = {Journal of Mathematical Analysis and Applications},
  year    = {1965},
  volume  = {10},
  pages   = {303--318},
  doi     = {10.1016/0022-247X(65)90125-3}
}

@article{dughmi2017,
    author = {Dughmi, Shaddin},
    title  = {Algorithmic information structure design: a survey},
    year   = {2017},
    issue_date = {January 2017},
    publisher = {Association for Computing Machinery},
    address = {New York, NY, USA},
    volume = {15},
    number = {2},
    url = {https://doi.org/10.1145/3055589.3055591},
    doi = {10.1145/3055589.3055591},
    journal = {SIGecom Exch.},
    month = feb,
    pages = {2–24},
    numpages = {23}
}

@article{fagin1977,
    title = {Asymptotic miss ratios over independent references},
    journal = {Journal of Computer and System Sciences},
    volume = {14},
    number = {2},
    pages = {222-250},
    year = {1977},
    issn = {0022-0000},
    doi = {https://doi.org/10.1016/S0022-0000(77)80014-7},
    url = {https://www.sciencedirect.com/science/article/pii/S0022000077800147},
    author = {Ronald Fagin},
}

@INPROCEEDINGS{fricker2012lru,
  author={Fricker, Christine and Robert, Philippe and Roberts, James},
  booktitle={2012 24th International Teletraffic Congress (ITC 24)}, 
  title={A versatile and accurate approximation for LRU cache performance}, 
  year={2012},
  volume={},
  number={},
  pages={1-8},
  doi={}
}

@article{gast2017ttl,
  author  = {Gast, Nicolas and Van Houdt, Benny},
  title   = {{TTL} Approximations of the Cache Replacement Algorithms {LRU(m)} and {h-LRU}},
  journal = {Performance Evaluation},
  year    = {2017},
  volume  = {117},
  pages   = {1--17},
  doi     = {10.1016/j.peva.2017.09.002}
}

@article{jiang2018ttl,
    author = {Jiang, Bo and Nain, Philippe and Towsley, Don},
    title  = {On the Convergence of the TTL Approximation for an LRU Cache under Independent Stationary Request Processes},
    year   = {2018},
    issue_date = {December 2018},
    publisher = {Association for Computing Machinery},
    address = {New York, NY, USA},
    volume = {3},
    number = {4},
    issn   = {2376-3639},
    url    = {https://doi.org/10.1145/3239164},
    doi    = {10.1145/3239164},
    journal = {ACM Trans. Model. Perform. Eval. Comput. Syst.},
    month  = sep,
    articleno = {20},
    numpages = {31}
}

@article{kamenica2011,
  author  = {Kamenica, Emir and Gentzkow, Matthew},
  title   = {Bayesian Persuasion},
  journal = {American Economic Review},
  year    = {2011},
  volume  = {101},
  number  = {6},
  pages   = {2590--2615},
  doi     = {10.1257/aer.101.6.2590}
}

@InProceedings{koutsoupias1999,
    author      ="Koutsoupias, Elias and Papadimitriou, Christos",
    editor      ="Meinel, Christoph and Tison, Sophie",
    title       ="Worst-Case Equilibria",
    booktitle   ="Proceedings of the 16th Annual Symposium on Theoretical Aspects of Computer Science (STACS)",
    year        ="1999",
    publisher   ="Springer Berlin Heidelberg",
    address     ="Berlin, Heidelberg",
    pages       ="404--413",
    isbn        ="978-3-540-49116-3"
}

@article{leblanc1975,
  author  = {LeBlanc, Larry J. and Morlok, Edward K. and Pierskalla, William P.},
  title   = {An Efficient Approach to Solving the Road Network Equilibrium Traffic Assignment Problem},
  journal = {Transportation Research},
  year    = {1975},
  volume  = {9},
  number  = {5},
  pages   = {309--318},
  doi     = {10.1016/0041-1647(75)90030-1}
}

@article{manzini2014,
  author  = {Manzini, Paola and Mariotti, Marco},
  title   = {Stochastic Choice and Consideration Sets},
  journal = {Econometrica},
  year    = {2014},
  volume  = {82},
  number  = {3},
  pages   = {1153--1176},
  doi     = {10.3982/ECTA10156}
}

@article{masatlioglu2012revealed,
  author  = {Masatlioglu, Yusufcan and Nakajima, Daisuke and Ozbay, Erkut},
  title   = {Revealed Attention},
  journal = {American Economic Review},
  year    = {2012},
  volume  = {102},
  number  = {5},
  pages   = {2183--2205},
  doi     = {10.1257/aer.102.5.2183}
}

@article{matejka2015,
  author  = {Matejka, Filip and McKay, Alisdair},
  title   = {Rational Inattention to Discrete Choice: A New Foundation for the Multinomial Logit Model},
  journal = {American Economic Review},
  year    = {2015},
  volume  = {105},
  number  = {1},
  pages   = {272--298},
  doi     = {10.1257/aer.20130047}
}

@article{milchtaich2006,
  author  = {Milchtaich, Igal},
  title   = {Network Topology and the Efficiency of Equilibrium},
  journal = {Games and Economic Behavior},
  year    = {2006},
  volume  = {57},
  number  = {2},
  pages   = {321--346},
  doi     = {10.1016/j.geb.2005.09.005}
}

@article{monderer1996,
  author  = {Monderer, Dov and Shapley, Lloyd S.},
  title   = {Potential Games},
  journal = {Games and Economic Behavior},
  year    = {1996},
  volume  = {14},
  number  = {1},
  pages   = {124--143},
  doi     = {10.1006/game.1996.0044}
}

@article{murchland1970,
  title   = {Braess's paradox of traffic flow},
  author  = {J. D. Murchland},
  journal = {Transportation Research},
  year    = {1970},
  volume  = {4},
  pages   = {391-394},
  url     = {https://api.semanticscholar.org/CorpusID:154755145}
}

@article{pas1997,
  title   ={Braess' paradox: Some new insights},
  author  ={Eric I. Pas and Shari L. Principio},
  journal ={Transportation Research Part B-methodological},
  year    ={1997},
  volume  ={31},
  pages   ={265-276},
  url     ={https://api.semanticscholar.org/CorpusID:14659553}
}

@book{patriksson1994,
  title     ={The Traffic Assignment Problem: Models and Methods},
  author    ={Patriksson, M.},
  isbn      ={9780486802275},
  url       ={https://books.google.com.sa/books?id=PDhkBgAAQBAJ},
  year      ={2015},
  publisher ={Dover Publications}
}

@article{rosenthal1973,
  author  = {Rosenthal, Robert W.},
  title   = {A Class of Games Possessing Pure-Strategy Nash Equilibria},
  journal = {International Journal of Game Theory},
  year    = {1973},
  volume  = {2},
  pages   = {65--67},
  doi     = {10.1007/BF01737559}
}

@article{roughgarden2002,
  author  = {Roughgarden, Tim and Tardos, {\'E}va},
  title   = {How Bad Is Selfish Routing?},
  journal = {Journal of the ACM},
  year    = {2002},
  volume  = {49},
  number  = {2},
  pages   = {236--259},
  doi     = {10.1145/506147.506153}
}

@book{roughgarden2005book,
  author    = {Roughgarden, Tim},
  title     = {Selfish Routing and the Price of Anarchy},
  publisher = {MIT Press},
  year      = {2005}
}

@book{sheffi1985,
  author    = {Sheffi, Yochi},
  title     = {Urban Transportation Networks: Equilibrium Analysis with Mathematical Programming Methods},
  publisher = {Prentice-Hall},
  year      = {1985}
}

@inproceedings{singh2018,
  author    = {Singh, Ashudeep and Joachims, Thorsten},
  title     = {Fairness of Exposure in Rankings},
  booktitle = {Proceedings of the 24th ACM SIGKDD International Conference on Knowledge Discovery \& Data Mining (KDD)},
  year      = {2018},
  publisher = {ACM}
}

@article{smith1979,
  author  = {Smith, Michael J.},
  title   = {The Existence, Uniqueness and Stability of Traffic Equilibria},
  journal = {Transportation Research Part B: Methodological},
  year    = {1979},
  volume  = {13},
  number  = {4},
  pages   = {295--304},
  doi     = {10.1016/0191-2615(79)90022-5}
}

@article{steinberg1983,
    author = {Steinberg, Richard and Zangwill, Willard I.},
    title = {The Prevalence of Braess' Paradox},
    journal = {Transportation Science},
    volume = {17},
    number = {3},
    pages = {301-318},
    year = {1983},
    doi = {10.1287/trsc.17.3.301},
}

@article{Wang_2025,
    title = {Agentic Large Language Models for day-to-day route choices},
    journal = {Transportation Research Part C: Emerging Technologies},
    volume = {180},
    pages = {105307},
    year = {2025},
    issn = {0968-090X},
    doi = {https://doi.org/10.1016/j.trc.2025.105307},
    url = {https://www.sciencedirect.com/science/article/pii/S0968090X25003110},
    author = {Leizhen Wang and Peibo Duan and Zhengbing He and Cheng Lyu and Xin Chen and Nan Zheng and Li Yao and Zhenliang Ma}
}

@article{wardrop1952,
  author  = {Wardrop, John Glen},
  title   = {Some Theoretical Aspects of Road Traffic Research},
  journal = {Proceedings of the Institution of Civil Engineers},
  year    = {1952},
  volume  = {1},
  number  = {3},
  pages   = {325--378},
  doi     = {10.1680/ipeds.1952.11362}
}

@article{wardropbr2024,
    author = {Li, Jiayang and Wang, Zhaoran and Nie, Yu (Marco)},
    title = {Wardrop Equilibrium Can Be Boundedly Rational: A New Behavioral Theory of Route Choice},
    journal = {Transportation Science},
    volume = {58},
    number = {5},
    pages = {973-994},
    year = {2024},
    doi = {10.1287/trsc.2023.0132}
}

@article{yen1971,
  author  = {Yen, Jin Y.},
  title   = {Finding the $K$ Shortest Loopless Paths in a Network},
  journal = {Management Science},
  year    = {1971},
  volume  = {17},
  number  = {11},
  pages   = {712--716},
  doi     = {10.1287/mnsc.17.11.712}
}

@inproceedings{zehlike2017,
    author = {Zehlike, Meike and Bonchi, Francesco and Castillo, Carlos and Hajian, Sara and Megahed, Mohamed and Baeza-Yates, Ricardo},
    title = {FA*IR: A Fair Top-k Ranking Algorithm},
    year = {2017},
    isbn = {9781450349185},
    publisher = {Association for Computing Machinery},
    address = {New York, NY, USA},
    url = {https://doi.org/10.1145/3132847.3132938},
    doi = {10.1145/3132847.3132938},
    booktitle = {Proceedings of the 2017 ACM on Conference on Information and Knowledge Management},
    pages = {1569–1578},
    numpages = {10},
    location = {Singapore, Singapore},
    series = {CIKM '17}
}

@inproceedings{zhou2022aid,
    author = {Zhou, Chenghan and Nguyen, Thanh H. and Xu, Haifeng},
    title = {Algorithmic Information Design in Multi-Player Games: Possibilities and Limits in Singleton Congestion},
    year = {2022},
    isbn = {9781450391504},
    publisher = {Association for Computing Machinery},
    address = {New York, NY, USA},
    url = {https://doi.org/10.1145/3490486.3538238},
    doi = {10.1145/3490486.3538238},
    booktitle = {Proceedings of the 23rd ACM Conference on Economics and Computation},
    pages = {869},
    numpages = {1},
    location = {Boulder, CO, USA},
    series = {EC '22}
}

\end{document}